\documentclass[a4paper,12pt]{amsart}

\usepackage{kantlipsum} 
\setlength{\textwidth}{\paperwidth}
\addtolength{\textwidth}{-2in}
\calclayout{}

\usepackage[ruled,vlined]{algorithm2e}
\usepackage{amsfonts}
\usepackage{amsmath}    
\usepackage{amssymb}    
\usepackage{bm}         
\usepackage{bbm}
\usepackage{color}
\usepackage{enumitem}
\usepackage{graphicx}   
\usepackage{natbib}    
\usepackage{epsf}
\usepackage{lscape}
\usepackage{enumitem}
\usepackage{longtable}
\bibpunct{(}{)}{;}{a}{,}{,}


\newtheorem{thm}{Theorem}
\newtheorem{lemma}{Lemma}
\newtheorem{cor}{Corollary}
\newtheorem{prop}{Proposition}

\newtheorem{defn}{Definition}

\newcommand{\beginsupplement}{%
        \setcounter{lemma}{0}
        \renewcommand{\thelemma}{S\arabic{lemma}}%
        \setcounter{cor}{0}
        \renewcommand{\thecor}{S\arabic{cor}}%
        \setcounter{table}{0}
        \renewcommand{\thetable}{S\arabic{table}}%
        \setcounter{figure}{0}
        \renewcommand{\thefigure}{S\arabic{figure}}%
        \setcounter{section}{0}
        \renewcommand{\thesection}{S\arabic{section}}%
        \setcounter{equation}{0}
        \renewcommand{\theequation}{S.\arabic{equation}}%
     }

\newcommand{\bbeta}{{\bm{\beta}}}

\newcommand{\tbeta}{{\tilde{\beta}}}

\newcommand{\tgamma}{{\tilde{\gamma}}}

\newcommand{\msf}[1]{\mathsf{#1}}

\usepackage{blindtext}
\usepackage[textwidth=22mm,textsize=tiny]{todonotes}

\pagestyle{empty} 

\thanks{DR was supported by Spanish Government grants Europa Excelencia EUR2020-112096, RYC-2015-18544, PGC2018-101643-B-I00. DR and AB acknowledge NIH grant R01 CA158113-01}

\begin{document}

\title{Approximate Laplace approximations for scalable model selection}
\author{David Rossell, Oriol Abril, Anirban Bhattacharya}
\date{}  

\keywords{Approximate inference; model selection; model misspecification; group constraints; hierarchical constraints; non-parametric regression; non-local priors}

\begin{abstract}
We propose the approximate Laplace approximation (ALA) to evaluate integrated likelihoods, a bottleneck in Bayesian model selection. The Laplace approximation (LA) is a popular tool that speeds up such computation and equips strong model selection properties. However, when the sample size is large or one considers many models the cost of the required optimizations becomes impractical. ALA reduces the cost to that of solving a least-squares problem for each model. Further, it enables efficient computation across models such as sharing pre-computed sufficient statistics and certain operations in matrix decompositions.
We prove that in generalized (possibly non-linear) models ALA achieves a strong form of model selection consistency for a suitably-defined optimal model, at the same functional rates as exact computation.  We consider fixed- and high-dimensional problems, group and hierarchical constraints, and the possibility that all models are misspecified. We also obtain ALA rates for Gaussian regression under non-local priors, an important example where the LA can be costly and does not consistently estimate the integrated likelihood. Our examples include non-linear regression, logistic, Poisson and survival models. We implement the methodology in the R package \texttt{mombf}.
\end{abstract}

\maketitle

A main computational bottleneck in Bayesian model selection is evaluating integrated likelihoods, either when the sample size $n$ is large or there are many models to consider.
If said integrals can be obtained quickly, one can often use relatively simple algorithms to explore effectively the model space. For example, one may rely on the fast convergence of 
Metropolis--Hastings moves when posterior model probabilities concentrate \citep{yangyun:2016},
 sequential Monte Carlo methods to lower the cost of model search \citep{schafer:2013}, 
tempering strategies to explore model spaces with strong multi-modalities \citep{zanella:2019}, or adaptive Markov Chain Monte Carlo to reduce the effort in exploring low posterior probability models  \citep{griffin:2020}.
Unfortunately, except for very specific settings such as Gaussian regression under conjugate priors, the integrated likelihood has no closed-form, which seriously hampers scaling computations to even moderate dimensions.

Our main contribution is proposing a simple yet powerful approximate inference technique, the Approximate Laplace Approximation (ALA). Analogously to the classical Laplace approximation (LA) to an integral, ALA uses a second-order Taylor expansion, the difference being that the expansion is done at a point that simplifies calculations.
Also, there is a particular version of the ALA for which, within the exponential family, one may pre-compute statistics to obtain the ALA in all models. After said pre-computation, the computational cost does not depend on $n$.

We outline the idea.
Let $y=(y_1,\ldots,y_n)$ be an observed outcome of interest and suppose that one considers several models $\gamma \in \Gamma$ within some set of models $\Gamma$.
Given prior model probabilities $p(\gamma)$, Bayesian model selection assigns posterior probabilities $p(\gamma \mid y)= p(y \mid \gamma) p(\gamma)/p(y)$, where
\begin{align}
p(y \mid \gamma)= \int p(y \mid \eta_\gamma, \gamma) p(\eta_\gamma \mid \gamma) d\eta_\gamma
\label{eq:intlhood}
\end{align}
is the integrated likelihood, $p(y \mid \eta_\gamma, \gamma)$ the likelihood-function under model $\gamma$,
$\eta_\gamma \in \mathbb{R}^{p_\gamma}$ the model parameters, $p(\eta_\gamma \mid \gamma)$ their prior density, and $p(y)= \sum_{\gamma \in \Gamma} p(y \mid \gamma) p(\gamma)$.
The LA provides an approximation $\hat{p}(y \mid \gamma)$ using a Taylor expansion of the log-integrand in \eqref{eq:intlhood}
at the posterior mode $\hat{\eta}_\gamma$, giving
\begin{align}
\hat{p}(y \mid \gamma)=  p(y \mid \hat{\eta}_\gamma, \gamma) p(\hat{\eta}_\gamma \mid \gamma) (2 \pi)^{p_\gamma/2} |\hat{H}_\gamma|^{-\frac{1}{2}},
\label{eq:intlhood_la}
\end{align}
where $\hat{H}_\gamma$ is the log-integrand's negative hessian at $\hat{\eta}_\gamma$.
Although $\hat{p}(y \mid \gamma)$ is typically accurate, the optimization to obtain $\hat{\eta}_\gamma$ can be costly when $p_\gamma= \mbox{dim}(\eta_\gamma)$ is large, especially when one repeats such a calculation for many models. It is also costly when the sample size $n$ is large, since for most common models evaluating the likelihood and derivatives has a linear cost in $n$,  and sometimes higher (e.g. high-dimensional models where $p_\gamma$ grows with $n$).

ALA avoids the need to obtain $\hat{\eta}_\gamma$ by expanding the log-likelihood at a suitably-chosen initial value $\eta_{\gamma 0}$,  saving the associated optimization time to compute $\hat{\eta}_\gamma$. 
Let $\tilde{\eta}_\gamma= \eta_{\gamma 0} - H_{\gamma 0}^{-1} g_{\gamma 0}$ be a guess at $\hat{\eta}_\gamma$ given by a Newton--Raphson iteration from $\eta_{\gamma 0}$, where $g_{\gamma 0}$ and $H_{\gamma 0}$ are the gradient and hessian of the negative log-likelihood at $\eta_{\gamma 0}$.
A quadratic log-likelihood expansion at $\eta_{\gamma 0}$ (see Section \ref{sec:derivation_ala}) gives the ALA to the integrated likelihood
\begin{align}
\tilde{p}(y \mid \gamma)=  p(y \mid \eta_{\gamma 0}, \gamma) p(\tilde{\eta}_\gamma \mid \gamma) (2 \pi)^{p_\gamma/2} |H_{\gamma 0}|^{-\frac{1}{2}}
\exp\{ \frac{1}{2} g_{\gamma 0}^T H_{\gamma 0}^{-1} g_{\gamma 0} \},
\label{eq:intlhood_ala}
\end{align}
leading to ALA posterior probabilities $\tilde{p}(\gamma \mid y)= \tilde{p}(y \mid \gamma) p(\gamma)/\sum_{\gamma' \in \Gamma} \tilde{p}(y \mid \gamma') p(\gamma')$.
Figure \ref{fig:logistic} offers a simple illustration in a univariate logistic regression example.
See Section \ref{ssec:derivation_ala_general} for an alternative ALA based on expanding the full integrand, which attains the same rates as \eqref{eq:intlhood_ala} under mild conditions and performed similarly in our examples.

We focus attention in regression problems where setting coefficients in $\eta_{\gamma 0}$ to 0  results in further simplifications, particularly in exponential family models where it allows pre-computing sufficient statistics. 
The computational savings are substantial, see Figure \ref{fig:logistic_sims} and Figure \ref{fig:simpoisson} for logistic and Poisson regression examples.
 Even when sufficient statistics are not available the savings from avoiding the optimization exercise can still be significant, see our survival model examples in Table \ref{tab:cputime_survival}. 


A caveat is that, unlike LA, in general $\tilde{p}(y \mid \gamma)$ does not consistently estimate $p(y \mid \gamma)$ as $n \rightarrow \infty$.
For concave log-likelihoods the LA has relative error converging to 1 in probability, under the minimal condition that $\hat{H}_\gamma/n$ converges in probability to a positive-definite matrix (\cite{rossell:2019t}, Proposition 8). Under further conditions, the LA estimates Bayes factors with relative error of order $1/n^2$ \citep{kass:1990}, see also \cite{ruli:2016} for higher-order approximations to high-dimensional integrals.
The ALA does not equip such properties.
 Figure \ref{fig:logistic} (right) illustrates a situation where, due to the posterior distribution concentrating far from the expansion point, ALA significantly underestimates the integral.  
Nevertheless ALA attains a strong type of model selection consistency, even
when (inevitably) models are misspecified, that is the data are truly generated by a distribution $F^*$ outside the considered models.
Specifically, we prove that the ALA posterior probability $\tilde{p}(\tgamma^* \mid y)$ converges to 1 in the $L_1$ sense for a suitably-defined optimal model $\tgamma^*$. 
Said $\tgamma^*$ is in general different from the model $\gamma^*$ asymptotically recovered by exact calculations. 
Under misspecification, neither $\gamma^*$ nor $\tgamma^*$ in general recover the set of 
covariates associated to the mean of $y$ under the data-generating $F^*$,
but the optimal covariates under two different implicit loss functions. 
However we show via examples that $\tgamma^*$ and $\gamma^*$ often coincide, and provide a sufficient condition for $\tgamma^*$ to discard truly spurious parameters.
 Intuitively, the reason why situations like that in Figure \ref{fig:logistic} (right) need not be problematic is that Bayes factors target ratios of integrated likelihoods. Despite the integral being under-estimated, it is still very large relative to the likelihood at 0 is very large, signaling that the parameter should be included.

Relative to approximate inference methods primarily designed for estimation or prediction such as variational Bayes \citep{jordan:1999} or expectation propagation \citep{minka:2001}, ALA focuses on model selection problems where the goal is structural learning; see, however, \cite{carbonetto:2012} and \cite{huang_xichen:2016} for variational Bayes approaches to variable selection in Gaussian regression with conjugate priors. 
We focus our study on a wide model class within the exponential family, which includes generalized linear, generalized additive models and other (possibly non-additive) generalized structured regression.
We also illustrate the use of ALA with concave log-likelihoods outside the exponential family, via a non-linear additive accelerated failure time model \citep{rossell:2019t}.
We incorporate two aspects where state-of-the-art methods encounter difficulties.
First, we consider that the model selection exercise may combine group and hierarchical constraints, a case where penalized likelihood and shrinkage prior methods can face difficulties in terms of computational complexity.
Said constraints are relevant when one considers categorical covariates, interaction terms, and semi- and non-parametric covariate effects, for example.
Second, we use ALA to facilitate computation for non-local priors \citep{johnson:2010,johnson:2012}.
Non-local priors attain some of the strongest theoretical properties among Bayesian methods in high dimensions,
see \cite{shin:2018} and \cite{rossell:2018}.
However, exact calculations are unfeasible, the LA is costly and it does not consistently estimate $p(y \mid \gamma)$ for any model $\gamma$ that includes truly spurious parameters \citep{rossell:2017}.
Interestingly, although we generally view the ALA as fast approximate inference that may perform slightly worse than the LA, for non-local priors the ALA often attains better inference.

The paper is structured as follows.
Section \ref{sec:likelihood} reviews exponential family models and discusses computational savings associated to the ALA.
Although the ALA applies to a wide set of priors, for concreteness Section \ref{sec:prior} outlines specific local and a non-local priors on parameters that we use in our examples, and a group hierarchical prior on models $p(\gamma)$.
The development of the non-local prior is in fact a secondary contribution of this paper:
it is a novel class combining additive penalties \citep{johnson:2010} suitable for group constraints with product-type penalties required for high-dimensional consistency \citep{johnson:2012}.
Section \ref{sec:ala} gives specific ALA expressions for local priors, and subsequently for the more challenging non-local prior case.
Section \ref{sec:theory} gives model selection consistency results for ALA, specifically rates that hold for fixed $p$ under minimal conditions and high-dimensional rates where $p$ grows with $n$, under slightly stronger conditions.
We distinguish cases where the exponential family has a known dispersion parameter (e.g. logistic and Poisson regression) and cases where it is unknown. In particular our high-dimensional theory focuses on the known case, to alleviate the technical exposition, but our results also apply to Gaussian outcomes with unknown error variance. 
Section \ref{sec:results} shows examples assessing the numerical accuracy of ALA, the computational time, and the quality of its associated model selection.
We consider logistic, Poisson and survival examples, as well as non-linear Gaussian regression under non-local priors where $p(y \mid \gamma)$ are hard to approximate.  We also briefly illustrate the use of ALA in combination with importance sampling, variable screening and adding optimization iterations to improve the expansion point $\eta_{\gamma 0}$ in \eqref{eq:intlhood_ala}. 
Section \ref{sec:discussion} concludes.
The supplementary material contains proofs, derivations and supplementary results.
 R code and data to reproduce our examples are available at 
\small{https://github.com/davidrusi/paper\_examples/tree/main/2020\_Rossell\_Abril\_Bhattacharya\_ALA}. 


\begin{figure}
\begin{center}
\begin{tabular}{cc}
 $n=100$, true $\beta_\gamma^*=0.405$ &  $n=200$, true $\beta_\gamma^*=1.099$ \\
\includegraphics[width=0.5\textwidth]{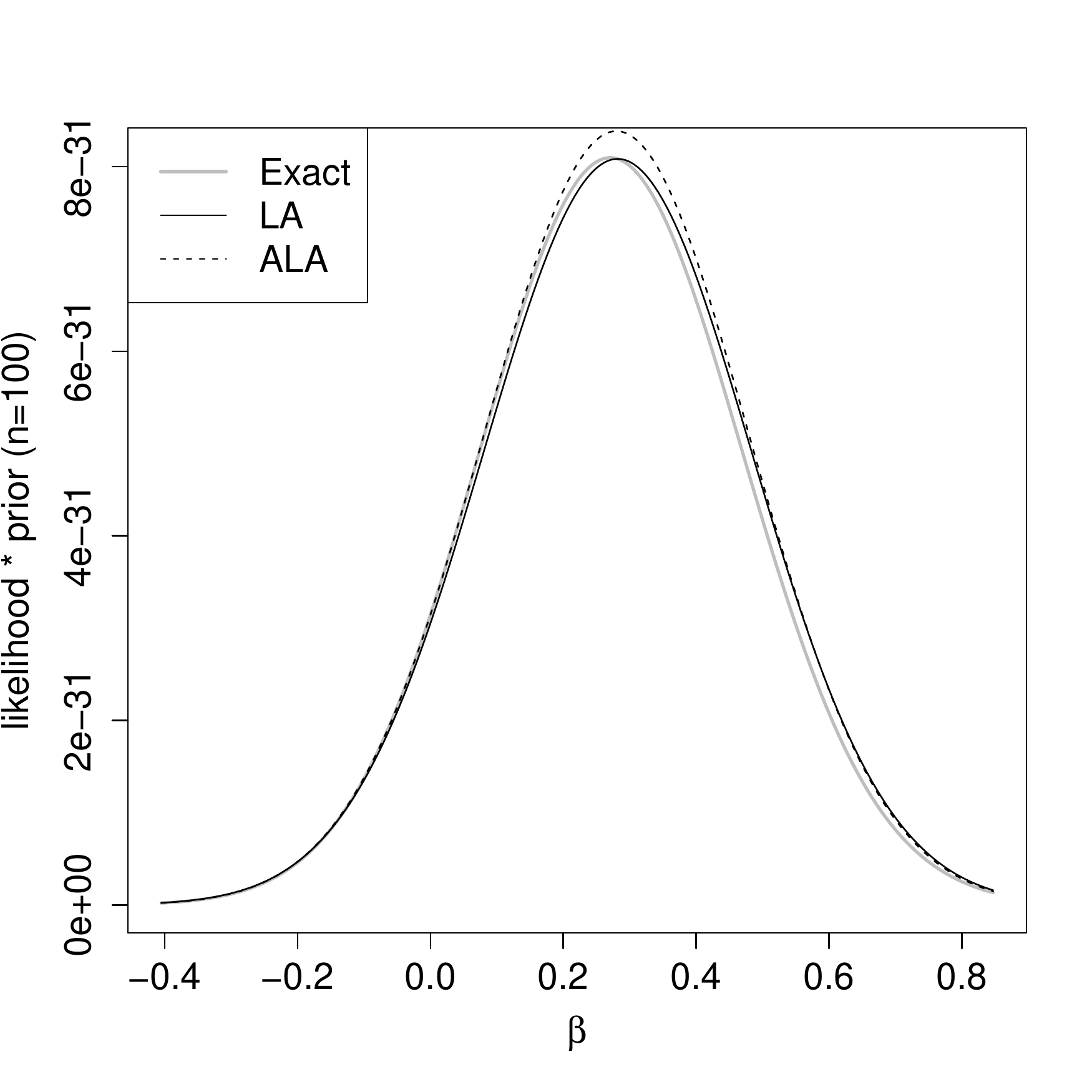}  &
\includegraphics[width=0.5\textwidth]{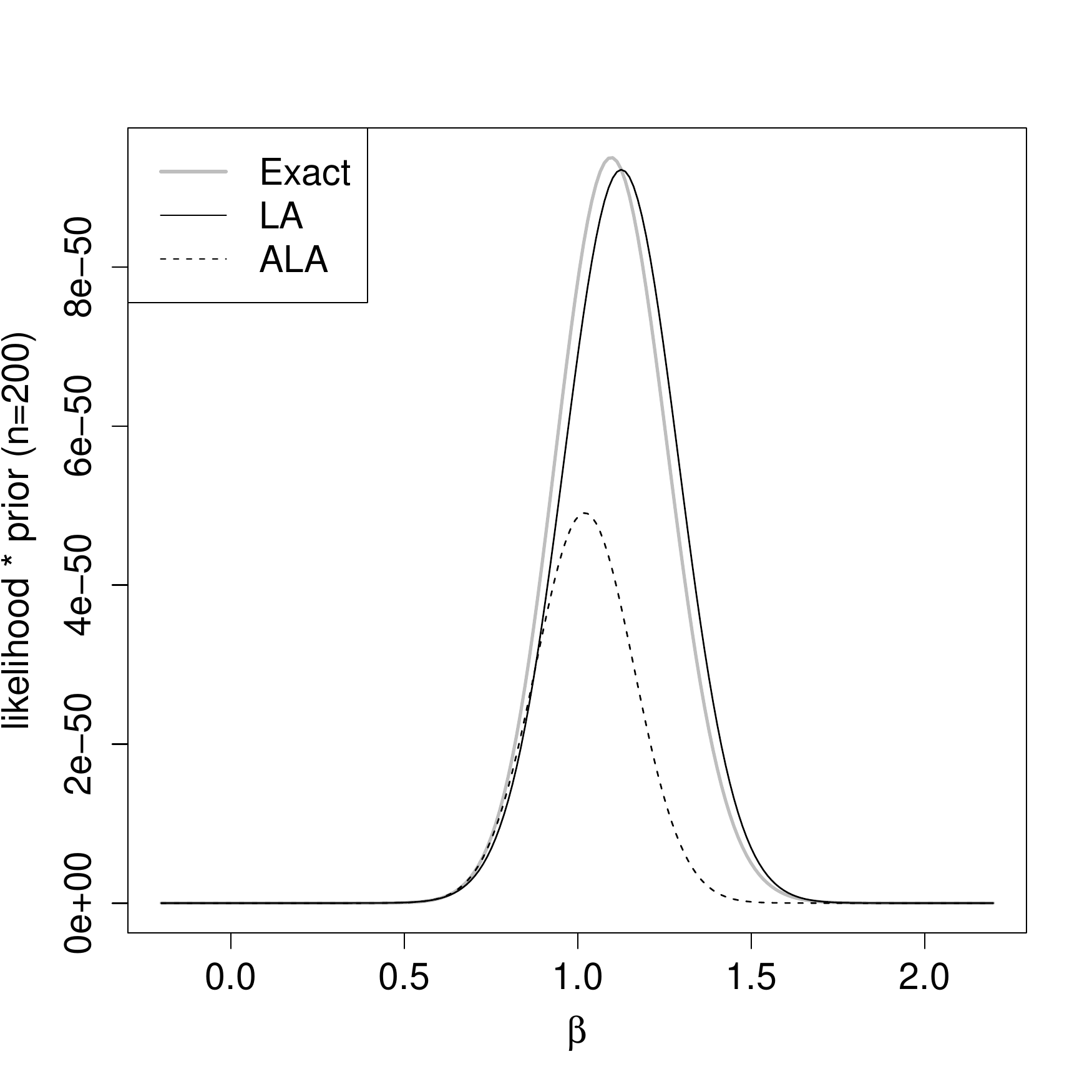}
\end{tabular}
\end{center}
\caption{Logistic regression simulation in one dimension with a standard Gaussian prior on the coefficient. 
The likelihood multiplied by the prior $p(y \mid \beta_\gamma,\gamma) N(\beta_\gamma;0,1)$ is plotted in grey (Exact). The solid black line (LA) plots an approximation by replacing the log-likelihood with a second-order Taylor expansion at the MLE. The dashed line (ALA) does the same with a quadratic expansion around zero.
}
\label{fig:logistic}
\end{figure}

\section{Likelihood}\label{sec:likelihood}

We lay out notation. Let $x_i \in \mathcal{X}$ be covariates taking values in some domain $\mathcal{X}$.
Consider a generalized structured regression with predictor
\begin{align}
h(E(y_i \mid x_i)) = \sum_{j=1}^J z_{ij}^T \beta_j,
\label{eq:glm_mean}
\end{align}
where $h()$ is the canonical link function and
$z_i= (z_{i1}^T,\ldots,z_{iJ}^T)^T$ a basis for the effect of $x_i$ with coefficients $\beta=(\beta_1^T,\ldots,\beta_J^T)^T$.
 For example a standard generalized linear model corresponds to $z_i=x_i$.
We also consider situations where  each $z_{ij} \in \mathbb{R}^{p_j}$ defines a group with $p_j$ elements,
e.g. multiple binary indicators for a categorical covariate or a non-linear basis expansion for a continuous covariate.
That is, \eqref{eq:glm_mean} includes 
additive regression on functions of $x_i$, non-linear interactions between elements of $x_i$, for example. 
Let $p= \sum_{j=1}^J p_j$ be the total number of parameters and $Z=(z_1^T,\ldots,z_n^T)^T$ the $n \times p$ design matrix.

Our goal is to determine which $\beta_j \in \mathbb{R}^{p_j}$ should be set to zero. 
Let $\gamma_j= \mbox{I}(\beta_j \neq 0)$ for $j=1,\ldots,J$ be group inclusion indicators, so that $\gamma= (\gamma_1,\ldots,\gamma_J)$ indexes the model.
We denote by $Z_\gamma$ the $n \times p_\gamma$ submatrix of $Z$ with (blocks of) columns selected by $\gamma$ where $p_\gamma = \sum_{j: \gamma_j = 1} p_j$, and by $z_{\gamma i}$ its $i^{th}$ row.
For any given model $\gamma$, the distribution of $y$ is assumed to be in the exponential family with canonical link and likelihood function
\begin{align}
  p(y \mid \beta, \phi, \gamma)= \exp \left\{ [y^T Z_\gamma \beta_\gamma - \sum_{i=1}^n b(z_{\gamma i}^T \beta_\gamma)]/\phi + \sum_{i=1}^n c(y_i, \phi) \right\},
\label{eq:glm_pdf}
\end{align}
where $\phi>0$ is an optional dispersion parameter and $b()$ an infinitely differentiable function.
The gradient and hessian of the negative log-likelihood $-\log p(y \mid \beta, \phi,\gamma)$ are
\begin{align}
g_\gamma(\beta_\gamma, \phi)
  &= -\frac{1}{\phi }\begin{pmatrix}
    Z_\gamma^T y - \sum_{i=1}^n b'(z_{\gamma i}^T\beta_\gamma)  z_{\gamma i} \\
    - [y^T Z_\gamma \beta_\gamma - \sum_{i=1}^n b(z_{\gamma i}^T \beta_\gamma)]/\phi + \phi \sum_{i=1}^n \nabla_\phi c(y_i,\phi)
  \end{pmatrix}
\nonumber \\
H_\gamma(\beta_\gamma, \phi)
&= \frac{1}{\phi} \begin{pmatrix}
    Z_\gamma^T D_\gamma Z_\gamma   & g_\beta(\beta_\gamma,\phi) \\
    g_\beta(\beta_\gamma,\phi)^T & -2[y^T Z_\gamma\beta_\gamma - \sum_{i=1}^n b(z_{\gamma i}^T \beta_\gamma)]/\phi^2 - \phi \sum_{i=1}^n \nabla^2_{\phi\phi} c(y_i,\phi)
 \end{pmatrix}
\nonumber
\end{align}
where $D_\gamma$ is an $n \times n$ diagonal matrix with diagonal entry $b''(z_{\gamma i}^T \beta_\gamma)>0$.
For completeness, Section \ref{sec:logl_popularmodels} provides expressions for logistic and Poisson models.

A computationally-convenient choice for the ALA in \eqref{eq:intlhood_ala} is to set 
 a global $\beta_0 \in \mathbb{R}^p$ and let $\eta_{\gamma 0}=(\beta_{\gamma 0},\phi_0)$,
where $\beta_{\gamma 0}$ contains the entries of $\beta_0$ selected by $\gamma$ 
and, if $\phi$ is a unknown parameter, 
\begin{align}
\phi_0= \arg\max_\phi p(y \mid \beta=\beta_0, \phi)
\end{align}
is the maximum likelihood estimator conditional on $\beta=\beta_0$. 
Since $\phi_0$ does not depend on $\gamma$ it can be computed upfront and shared across all models. 
 By basing ALA on such a global choice, one avoids the model-specific optimization costs that would be required by a LA. 

 The choice $\beta_0=0$ gives further computational simplifications (one may also set the intercept to a non-zero value at essentially no cost). 
To ease notation let $\tilde{y}= (y - b'(0) \mathbbm{1})/ b''(0)$ denote a shifted and scaled version of $y$, $\mathbbm{1}=(1,\ldots,1)^T$ being the $n \times 1$ unit vector. The gradient and hessian at $(\beta_{\gamma 0},\phi)=(0,\phi_0)$ are
\begin{align}
  g_{\gamma 0}&=
-\frac{b''(0)}{\phi_0} \begin{pmatrix} Z_\gamma^T \tilde{y} \\ 0 \end{pmatrix} \nonumber \\
  H_{\gamma 0}&= \frac{b''(0)}{\phi_0} \begin{pmatrix} Z_\gamma^TZ_\gamma & -Z_\gamma^T \tilde{y}/\phi_0 \\
-\tilde{y}^T Z_\gamma/\phi_0 & s(\phi_0) \end{pmatrix},
\label{eq:logl_gradhess0}
\end{align}
where $s(\phi_0)= [2n b(0)/\phi_0^2 + \phi_0 \sum_{i=1}^n \nabla_{\phi \phi}^2 c(y_i,\phi_0)]/b''(0)$.
To interpret these expressions, the exponential family predicted variance for $\beta_{\gamma 0}=0$ is
$V(y_i \mid z_{\gamma i},\beta_{\gamma 0}=0, \phi)= \phi b''(0)$, hence $V(\tilde{y}_i \mid z_{\gamma i},\beta_{\gamma 0}=0,\phi)= \phi/b''(0)$.
Thus, $(g_{\gamma 0},H_{\gamma 0})$ are analogous to the gradient and hessian in a least-squares regression of $\tilde{y}$ on $Z$, with model-based variance $\phi/b''(0)$.

Sections \ref{sec:ala}-\ref{sec:theory} show that $\eta_{\gamma 0}=(0,\phi_0)^T$ leads to desirable model selection rates.
The ALA then basically requires least-squares type computations
where $(Z^T\tilde{y}, Z^TZ)$ play the role of sufficient statistics that can be computed upfront and shared across all models.
To further save memory and computational requirements,
in our implementation we store $Z^TZ$ in a sparse matrix that is incrementally filled the first time that any given entry is required.
 That is, when searching models typically many elements in $Z^TZ$ are never used, hence there is no need to compute nor to allocate them to memory beforehand. 
One may also consider alternative $\eta_{\gamma 0}$, say obtained after a few Newton--Raphson iterations, in an attempt to obtain an ALA that is closer to the LA in \eqref{eq:intlhood_la}.  See Figure \ref{fig:logistic_sims} and Sections \ref{ssec:binary_example} and \ref{ssec:ala_importancesampling} for some examples. Such alternatives can lead to improved inference, at a higher computational cost. Their theoretical study requires a separate treatment, however, and is left for future work.  

\section{Prior}\label{sec:prior}

Most of our results apply to  a wide class of priors  $p(\gamma)$. 
For concreteness we outline a structure that assigns the same probability to all models with the same number of active groups $|\gamma|=\sum_{j=1}^J \gamma_j$, and an arbitrary distribution $p(|\gamma|)$ on $|\gamma|$.

 Although unnecessary in canonical regression problems, we also consider 
that in certain situations one may want to  impose hierarchical constraints, in the sense that $\beta_l=0$ implies $\beta_j=0$ for some $l \neq j$.
For instance, one may exclude interaction terms unless the corresponding main effects are present,
or decompose non-linear effects as a linear plus a non-linear term, and only include the latter if the linear term is present
\citep{scheipl:2012,rossell:2019t}.
Such (optional) constraints can be added as follows. 

Let $C \subseteq \Gamma$ be the models satisfying the constraints.
These are easily incorporated by assigning $\pi(\gamma)=0$ to any $\gamma \not\in C$. 
Specifically,
\begin{align}
  p(\gamma)=
\begin{cases} 
 K p(|\gamma|) {J \choose |\gamma|}^{-1} \mbox{ , if } \gamma \in C \mbox{ and }|\gamma| \leq \bar{J}
\\
 0 \mbox{ , otherwise}
\end{cases}
\label{eq:prior_models}
\end{align}
where $\bar{J}$ is the maximum model size one wishes to consider
and $K$ a prior normalization constant that does not need to be evaluated explicitly.
The formulation allows both a number of parameters $p \gg n$ and groups $J \gg n$, but restricts the model space to using combinations of at most $\bar{J}$ groups. 
Given that models with $p_\gamma \geq n$ parameters result in data interpolation, typically one sets both $\bar{J} \ll n$ and $p_\gamma \ll n$ for any allowed $\gamma \in C$, see Section \ref{ssec:bf_highdim} for further discussion.
 In a standard generalized linear model without groups nor constraints; $J=p$, the constraint $\gamma \in C$ is removed, and $K=1$. 

In Section \ref{sec:theory} we provide pairwise Bayes factor rates for general $p(|\gamma|)$, whereas to ease exposition for posterior model probability rates we focus on
\begin{align}
 p(|\gamma|) \propto p^{-c |\gamma|},
\nonumber
\end{align}
where $c \geq 0$ is a user-specified constant and $\propto$ denotes ``proportional to''. 
For $c=0$ one obtains uniform $p(|\gamma|)= 1/(\bar{J}+1)$, 
which generalizes the Beta-Binomial(1,1) distribution advocated by \cite{scott:2006} to a setting where there may be groups and hierarchical constraints. 
In our experience $c=0$ strikes a good balance between sparsity and retaining power to detect truly non-zero coefficients, hence in all our examples we used $c=0$.
One may also set $c>0$, which is motivated by the so-called Complexity priors of \cite{castillo:2015}. 
These set a stronger prior penalty on the model size that leads to faster rates to discard spurious parameters, at the cost of slower rates to detect active parameters. See Section \ref{sec:theory} for further details.

We remark that adding hierarchical constraints to penalized likelihood and Bayesian shrinkage frameworks lead to computational difficulties. For instance, hierarchical constraints for LASSO penalties lead to a challenging optimization problem, and while one can devise relaxed constraints \citep{bien:2013}, our examples indicate the computation can be prohibitive.

\subsection{Group product priors}\label{ssec:prior_parameters}

Regarding the prior on parameters, our examples use Normal priors and a novel group moment (gMOM) prior family factorizing over groups
\begin{align}
  p^L(\beta_\gamma \mid \phi, \gamma)&= \prod_{\gamma_j=1} N\left(\beta_j; 0, \frac{\phi g_L n}{p_j} (Z_j^T Z_j)^{-1} \right)
                               \nonumber \\
p^N(\beta_\gamma \mid \phi, \gamma)&= \prod_{\gamma_j=1} \frac{\beta_j^T Z_j^T Z_j
\beta_j}{\phi g_N  n p_j/(p_j+2)} N\left(\beta_j; 0, \frac{\phi g_N n}{p_j+2} (Z_j^T Z_j)^{-1} \right)
\label{eq:prior_parameters}
\end{align}
and, for models where $\phi$ is unknown, we set $p(\phi)= \mbox{IG}(\phi; a,b)$ for given prior parameters $g_L,g_N,a,b > 0$.
 All other parameters ($\beta_j$ such that $\gamma_j=0$) are zero with probability 1. 
Both priors feature a Normal kernel with a group-Zellner precision matrix given by $Z_j^T Z_j$.
Other covariances may be used, but our choice leads to inference that is robust to affine
within-group reparameterizations of $\beta_j$ (for example, changing the reference category for discrete predictors), and to simple default parameter values $g_L=g_N=1$ (Section~\ref{ssec:prior_elicitation}).

 The group Zellner $p^L()$ is a local prior, in the nomenclature of \cite{johnson:2010}, whereas the gMOM $p^N()$ is a non-local prior. The defining property of non-local priors is that the density vanishes as $\beta_\gamma$ approaches any value that lies in the parameter space of a submodel of $\gamma$, i.e. $\beta_j=0$ in our setting. Their interest is that they help discard spurious parameters, by inducing a data-dependent penalty that has little asymptotic effect on power \citep{rossell:2017}.
Earlier proposals \citep{johnson:2010,johnson:2012} did not account for group structure, however. 
The intuition is simple, the gMOM penalizes groups with small contributions $\beta_j^T Z_j^T Z_j \beta_j$ relative to its size $p_j=\mbox{dim}(\beta_j)$, which helps induce sparsity.

\subsection{Prior elicitation}\label{ssec:prior_elicitation}

Although our focus is computational and our theory applies to any prior parameters $(g_L,g_N)$ (under minimal conditions), 
we outline a simple strategy to obtain default $(g_L,g_N)$ that we used in our examples.
The strategy builds upon the unit information prior, a popular default leading to the Bayesian information criterion \citep{schwarz:1978},
the difference being that we account for the presence of groups in $Z_\gamma$.

Suppose that there were no groups in $Z_\gamma$.
The unit information prior can be interpreted as containing as much information as a single observation.
Another (perhaps more natural) interpretation is its specifying the prior belief that
$E(\beta_\gamma^T Z_\gamma^T Z_\gamma \beta_\gamma/[n\phi])= p_\gamma$.
The expected contribution
$\beta_\gamma^T Z_\gamma^T Z_\gamma \beta_\gamma/n$
relative to the dispersion $\phi$, which is a measure of the predictive ability contained in $Z_\gamma$, is given by the number of variables $p_\gamma$.

Suppose now that a variable defines a group of columns in $Z_\gamma$, for instance a non-linear basis expansion. Then $p_\gamma$ depends on the basis dimension, which is often chosen arbitrarily.
The unit information prior would imply the belief that the predictive power of $Z_\gamma$
increases with the arbitrary basis dimension, rather than the number of variables $\sum_{j=1}^J \gamma_j$.
Instead, we set prior parameters
such that the prior expected predictive power depends on $\sum_{j=1}^J \gamma_j$ and is unaffected by the basis dimension.
That is, we set $(g_L,g_N)$ such that
$$
E\left( \frac{\beta_\gamma^T Z_\gamma^T Z_\gamma \beta_\gamma}{n\phi} \right)=
E\left( \sum_{\gamma_j=1} \frac{\beta_j^T Z_j^T Z_j \beta_j}{n\phi} \right)=
\sum_{j=1}^J \gamma_j.
$$

For the Normal and gMOM priors in~\eqref{eq:prior_parameters} this rule gives $g_L=g_N=1$, see Section \ref{sec:default_priordispersion}.

\section{Approximate Laplace approximation}\label{sec:ala}

We first discuss the ALA under a local prior $p^L()$, and subsequently that for the gMOM prior $p^N()$.
Recall that for the latter the integrated likelihood has no computationally-convenient closed-form, even in Gaussian regression.

\subsection{Local priors}
\label{sec:ala_localprior}

The ALA to the Bayes factor between any pair of models $(\gamma,\gamma')$ is
\begin{align}
\tilde{B}^L_{\gamma \gamma'}=
  \frac{\tilde{p}^L(y \mid \gamma)}{\tilde{p}^L(y \mid \gamma')}=
\exp \left\{\frac{1}{2} (g_{\gamma 0}^T H_{\gamma 0}^{-1} g_{\gamma 0} -  g_{\gamma' 0}^T H_{\gamma' 0}^{-1} g_{\gamma' 0}) \right\} (2 \pi)^{\frac{p_\gamma - p_{\gamma'}}{2}}
\frac{|H_{\gamma' 0}|^{\frac{1}{2}} p^L(\tilde{\eta}_\gamma \mid \gamma)}{|H_{\gamma 0}|^{\frac{1}{2}} p^L(\tilde{\eta}_{\gamma'} \mid \gamma')}.
\label{eq:bfala}
\end{align}

Expression \eqref{eq:bfala} can be used beyond the exponential family, provided the log-likelihood is concave.
As an example, Section \ref{ssec:survival_example} illustrates the Gaussian accelerated failure time model; see Section \ref{ssec:logl_aft} for the corresponding log-likelihood, gradient and derivatives, and the conditions for log-likelihood concavity.

We now provide specific expressions for exponential family models \eqref{eq:glm_pdf}, and discuss a curvature adjustment designed to improve finite $n$ performance.
Consider first the case where $\phi$ is a known constant, as in logistic and Poisson models.
Taking $\eta_{\gamma 0}=\beta_{\gamma 0}= 0$ gives
\begin{align}
  \tilde{B}^L_{\gamma \gamma'}=
    \exp \left\{\frac{b''(0)}{2\phi} (\tbeta_\gamma^T Z_\gamma^T Z_\gamma \tbeta_\gamma - \tbeta_{\gamma'}^T Z_{\gamma'}^T Z_{\gamma'} \tbeta_{\gamma'}) \right\}
  \left( \frac{2 \pi \phi}{b''(0)} \right)^{\frac{p_\gamma - p_{\gamma'}}{2}}
  \frac{|Z_{\gamma'}^T Z_{\gamma'}|^{\frac{1}{2}} p^L(\tbeta_\gamma \mid \phi, \gamma)}
  {|Z_{\gamma}^T Z_{\gamma}|^{\frac{1}{2}} p^L(\tbeta_{\gamma'} \mid \phi, \gamma')}
  \label{eq:bfala_expfamily_knownphi}
\end{align}
where $\tbeta_\gamma= (Z_\gamma^T Z_\gamma)^{-1} Z_\gamma^T \tilde{y}$ and $\tilde{y}= (y - b'(0) \mathbbm{1})/ b''(0)$.
See Section \ref{ssec:derivation_ala_knownphi} for the derivation.

Consider now the case where $\phi$ is an unknown model parameter. Then, taking $\eta_{\gamma 0}= (0,\phi_0)$ as in Section \ref{sec:likelihood}, one obtains
\begin{align}
\tilde{B}^L_{\gamma,\gamma'}=
\exp \left\{ \frac{b''(0)}{2\phi_0}[t_\gamma \tbeta_\gamma^T Z_\gamma^T Z_\gamma \tbeta_\gamma- t_{\gamma'} \tbeta_{\gamma'}^T Z_{\gamma'}^T Z_{\gamma'} \tbeta_{\gamma'} ] \right\}
(2 \pi)^{\frac{p_\gamma-p_{\gamma'}}{2}}
\frac{|H_{\gamma' 0}|^{\frac{1}{2}} p^L(\tbeta_\gamma, \tilde{\phi}_\gamma \mid \gamma)}{|H_{\gamma 0}|^{\frac{1}{2}} p^L(\tbeta_{\gamma'}, \tilde{\phi}_{\gamma'} \mid \gamma')}
\label{eq:bfala_expfamily_unknownphi}
\end{align}
where
$$
t_\gamma= 1 + \frac{\tbeta_\gamma^T Z_\gamma^TZ_\gamma \tbeta_\gamma}{\phi_0^2 (s(\phi_0) - \tbeta_\gamma^T Z_\gamma^TZ_\gamma \tbeta_\gamma)},
$$
and $s(\phi_0)$ is as in \eqref{eq:logl_gradhess0},
see Section \ref{ssec:derivation_ala_unknownphi} for the derivation.

\subsection{Curvature adjustment}
\label{ssec:curvature_adjustment}

The Bayes factor in \eqref{eq:bfala_expfamily_knownphi} for models where $\phi$ is known attains desirable theoretical properties as $n \rightarrow \infty$, see Section \ref{sec:theory}.
There is however an important practical remark, which makes us recommend a curvature-adjusted ALA to improve finite $n$ performance.
We outline the idea and refer the reader to Section \ref{ssec:ala_curvature_adjustment} for a full description.
Expression \eqref{eq:bfala} can be given in terms of the model-predicted covariance $\textnormal{Cov}(y \mid Z_\gamma, \beta=\beta_{\gamma 0},\phi)=E((y-\mu_{\gamma 0})^T(y-\mu_{\gamma 0}) \mid Z_\gamma, \beta=\beta_{\gamma 0},\phi)$, where $\mu_{\gamma 0}= E(y \mid Z_\gamma, \beta=\beta_{\gamma 0}, \phi)$.
Even when the data are truly generated from a distribution $F^*$ included in the assumed model \eqref{eq:glm_pdf} for some $\beta_\gamma^*$, there is a mismatch between said covariance and $E_{F^*}((y-\mu_{\gamma 0})^T(y-\mu_{\gamma 0}) \mid Z_\gamma )$, due to $\mu_{\gamma 0}$ being different from the true mean $E(y \mid Z_\gamma, \beta=\beta_\gamma^*, \phi)$.
That is, the data may be either over- or under-dispersed relative to the model prediction at $\beta=\beta_{\gamma 0}$, which can adversely affect inference.

The curvature-adjusted ALA is obtained by replacing $\textnormal{Cov}(y \mid Z_\gamma, \beta=\beta_{\gamma 0},\phi)$ for
$\hat{\rho} \, \textnormal{Cov}(y \mid Z_\gamma, \beta=\beta_{\gamma 0},\phi)= \hat{\rho} \, \phi \, \textnormal{diag}(b''(Z_\gamma\beta_{\gamma 0}))$,
where 
$
\hat{\rho}= \sum_{i=1}^n (y_i - \bar{y})^2/[\phi b''(h(\bar{y})) (n-1)]
$
is a Pearson residual-based estimate of over-dispersion, $h()$ is the link function in \eqref{eq:glm_mean} and $\bar{y}= \sum_{i=1}^n y_i /n$ the sample mean.
The curvature-adjusted Bayes factor is
\begin{align}
\tilde{B}_{\gamma \gamma'}=
\frac{p(\tbeta_\gamma \mid \phi, \gamma) |Z_{\gamma'}^T Z_{\gamma'}|^{1/2}}{p(\tbeta_{\gamma'} \mid \phi, \gamma') |Z_\gamma^T Z_\gamma|^{1/2}}
\left(\frac{2\pi \phi}{\hat{\rho} b''(h(\bar{y}))}\right)^{(p_\gamma-p_{\gamma'})/2}
e^{\frac{b''(h(\bar{y}))}{2 \hat{\rho} \phi}  [\tbeta_\gamma^T  (Z_\gamma^T Z_\gamma) \tbeta_\gamma - \tbeta_{\gamma'}^T (Z_{\gamma'}^T Z_{\gamma'}) \tbeta_{\gamma'}]},
 \label{eq:bfala_adj_expfamily}
\end{align}
where $\tbeta_\gamma= (Z_\gamma^T Z_\gamma)^{-1} Z_\gamma^T [y - b'(h(\bar{y})) \mathbbm{1}]/b''(h(\bar{y}))$. Note that $\tbeta_\gamma$ here denotes the parameter estimate after one Newton--Raphson iteration from the maximum likelihood estimator under the intercept-only model. It is possible to use alternative over-dispersion estimators that are specific for each model, at a slightly higher computational cost, see Section \ref{ssec:ala_curvature_adjustment} for a discussion. In all our logistic and Poisson regression examples we used $\hat{\rho}$ as outlined above, since in our experience this simple choice performs fairly well in practice.
See Section \ref{ssec:binary_example} and Figure \ref{fig:simpoisson} for a Poisson example where the curvature adjustment significantly improves inference.

\subsection{Non-local priors}
\label{sec:ala_nonlocalprior}

The ALA Bayes factors $\tilde{B}^N_{\gamma \gamma'}= \tilde{p}^N(y \mid \gamma) / \tilde{p}^N(y \mid \gamma')$ for the gMOM prior in \eqref{eq:prior_parameters} require an alternative strategy. 
Let $\pi(\beta_\gamma,\phi \mid \gamma)= \mathcal{N}(\beta_\gamma;0,\phi V_j^{-1})$
where $V_j= Z_j^T Z_j (p_j+2)/ (n p_j g_N)$, so that the gMOM prior equals
$$p^N(\beta_\gamma \mid \phi, \gamma)= \pi(\beta_\gamma \mid \phi, \gamma) \prod_{\gamma_j=1} \frac{\beta_j^T V_j \beta_j}{\phi p_j}.$$

Denote by $\pi(\beta_\gamma,\phi \mid y,\gamma)$ the posterior density and $\pi(y \mid \gamma)$ the integrated likelihood associated to the prior $\pi(\beta_\gamma,\phi \mid \gamma)$.
By Proposition 1 in \cite{rossell:2017}, the identity
\begin{align}
p^N(y \mid \gamma)= \pi(y \mid \gamma)
\int \left( \prod_{\gamma_j=1} \frac{\beta_j^T V_j \beta_j}{\phi p_j} \right)
\pi(\beta_\gamma,\phi \mid y,\gamma) d \beta_\gamma d\phi
\label{eq:intlhood_nlp}
\end{align}
holds exactly.
$\pi(y \mid \gamma)$ is the integrated likelihood under a local prior, hence one may obtain an ALA $\tilde{\pi}(y \mid \gamma)$ as described in Section \ref{sec:ala_localprior}.
The second term in \eqref{eq:intlhood_nlp} is the posterior expectation of a product,
and its computation requires a number of operations that grow exponentially with the model dimension $p_\gamma$.
As an alternative, in \eqref{eq:intlhood_nlp} we replace $\pi(\beta_\gamma \mid \phi,y)$ by its ALA-based normal approximation
and we also replace the integral by a product of expectations.
Specifically,
\begin{align}
 \tilde{p}^N(y \mid \gamma)= \tilde{\pi}(y \mid \gamma)
\left[ \prod_{\gamma_j=1} \int \frac{\beta_j^T V_j \beta_j}{\phi p_j}
\mathcal{N}(\beta_j; m_j, \phi S_j)
\tilde{\pi}(\phi \mid y,\gamma) d \beta_j d\phi \right]
\label{eq:ala_nlp}
\end{align}
where
$m_j$ and $S_j$ are the sub-vector of $\tbeta_\gamma= (Z_\gamma^T Z_\gamma)^{-1} Z_\gamma^T \tilde{y}$
and sub-matrix of $(Z_\gamma^T D_\gamma Z_\gamma)^{-1}$ associated to $\beta_j$,
and recall that $D_\gamma$ is diagonal with $(i,i)$ entry $z_{i \gamma}^T \tbeta_\gamma$.
The following lemma is useful.

\begin{lemma}
  Let $A$ and $S$ be $l \times l$ full-rank matrices and $a,b>0$ be constants.
  Then
  \begin{align}
  \int \frac{\xi^T A \xi}{\phi}  \mathcal{N}(\xi; m, \phi S) d\xi
  &=   \mbox{tr}(A S) + \frac{m^T A m}{\phi}
    \nonumber \\
  \int \int \frac{\xi^T A \xi}{\phi}  \mathcal{N}(\xi; m, \phi S) \mathrm{IG}(\phi; a,b) d\xi d\phi
  &=
  \mbox{tr}(A S) + \frac{a}{b} m^T A m.
    \nonumber
  \end{align}
  \label{lem:mean_pmompenalty}
\end{lemma}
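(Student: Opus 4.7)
The plan is to recognize both integrals as elementary expectations of quadratic forms and then invoke two standard facts: the formula for the mean of a Gaussian quadratic form, and the first inverse moment of an inverse-gamma variable.

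For the first identity, I would fix $\phi>0$ and view the integral as $E[\xi^T A \xi /\phi]$ where $\xi \sim \mathcal{N}(m,\phi S)$. Using the well-known identity $E[\xi^T A \xi] = \mathrm{tr}(A\,\mathrm{Cov}(\xi)) + E[\xi]^T A\, E[\xi]$, we get
\begin{align*}
\int \frac{\xi^T A \xi}{\phi} \mathcal{N}(\xi;m,\phi S)\, d\xi
 = \frac{1}{\phi}\bigl[\mathrm{tr}(A \cdot \phi S) + m^T A m\bigr]
 = \mathrm{tr}(AS) + \frac{m^T A m}{\phi}.
\end{align*}
A one-line reminder of why $E[\xi^T A\xi] = \mathrm{tr}(A\,\mathrm{Cov}(\xi)) + \mu^T A \mu$ (expand $\xi = m + (\xi-m)$, use linearity of trace, and $E[(\xi-m)^T A (\xi-m)] = \mathrm{tr}(A\,\mathrm{Cov}(\xi))$) should suffice to make the step self-contained.

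For the second identity, I would apply Fubini to perform the $\xi$ integral first, which by the first part yields $\mathrm{tr}(AS) + m^T A m/\phi$, and then integrate against $\mathrm{IG}(\phi;a,b)$. The constant term integrates to $\mathrm{tr}(AS)$, and $m^T A m$ factors out of the remaining integral $\int \phi^{-1} \mathrm{IG}(\phi;a,b)\, d\phi = E_{\phi \sim \mathrm{IG}(a,b)}[\phi^{-1}] = a/b$, using that $1/\phi \sim \mathrm{Gamma}(a,b)$ under the shape/scale parametrization assumed throughout the paper (consistent with the prior $p(\phi) = \mathrm{IG}(\phi;a,b)$ in Section \ref{ssec:prior_parameters}). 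Combining the two pieces gives the claimed $\mathrm{tr}(AS) + (a/b) m^T A m$.

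There is no real obstacle: the full-rank hypothesis on $A$ and $S$ is only invoked implicitly to ensure the Gaussian density is well-defined (so $S$ must be positive definite; if the lemma is stated for general full-rank $A$ then no further assumption on $A$ beyond what is used in the trace/quadratic form is needed), and the condition $a>0$ ensures $E[1/\phi]$ exists. The only thing to be careful about is the inverse-gamma convention, which should be stated explicitly to match the paper's prior elicitation in Section \ref{ssec:prior_parameters}; once this is fixed, both identities follow from standard moment formulas in one to two lines each.
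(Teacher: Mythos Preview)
Your proof is correct and essentially coincides with the paper's: both reduce the first integral to the expectation of a Gaussian quadratic form and then integrate $1/\phi$ against $\mathrm{IG}(a,b)$ for the second. The only cosmetic difference is that the paper establishes $E[\xi^T A\xi]=\mathrm{tr}(AS)\phi+m^TAm$ by substituting $\tilde\xi=A^{1/2}\xi$ and summing componentwise variances and squared means, whereas you invoke the standard trace identity directly; your route is slightly cleaner in that it does not implicitly require $A$ to admit a square root.
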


By Lemma \ref{lem:mean_pmompenalty} when $\phi$ is known the integral in \eqref{eq:ala_nlp} has the simple expression
\begin{align}
\tilde{p}^N(y \mid \phi,\gamma)= \tilde{\pi}(y \mid \phi,\gamma)
\left[ \prod_{\gamma_j=1} \mbox{tr}(V_j S_j) / p_j + \frac{m_j^T V_j m_j}{\phi p_j} \right].
\label{eq:ala_nlp_knownphi}
\end{align}
As a remark, in linear regression with known error variance $\phi$ where the groups $\beta_j$ are independent a posteriori ($Z^TZ$ is block-diagonal), then $p^N(y \mid \phi, \gamma)= \tilde{p}^N(y \mid \phi, \gamma)$, i.e. Expression~\eqref{eq:ala_nlp_knownphi} is exact.
In contrast, Laplace approximations $\hat{p}^N(y \mid \phi, \gamma)$ are not consistent even in this simplest setting.
See Section \ref{ssec:alavsla_nlps} for examples.

Lemma \ref{lem:mean_pmompenalty} is also useful in Gaussian regression with unknown error variance $\phi$. 
Suppose that one sets the prior $\phi \sim \mbox{IG}(\phi; a', b')$, then
\begin{align}
\tilde{p}^N(y \mid \phi,\gamma)=
   \pi(y \mid \phi,\gamma)
\left[ \prod_{\gamma_j=1} \mbox{tr}(V_j S_j) / p_j +
m_j^T V_j m_j E(1/\phi \mid y,\gamma) / p_j \right],
\nonumber
\end{align}
where $E(1/\phi \mid y, \gamma)= (a' + n) / (b' + y^T y - \tbeta_\gamma^T Z_\gamma^T Z_\gamma \tbeta_\gamma)$ and $\pi(y \mid \phi,\gamma)$ has closed-form expression.
Finally, for non-Gaussian regression case and unknown $\phi$ we propose
\begin{align}
\tilde{p}^N(y \mid \gamma)=
   \tilde{\pi}(y \mid \gamma)
\left[ \prod_{\gamma_j=1} \mbox{tr}(V_j S_j)/p_j + \frac{m_j^T V_j m_j}{\tilde{\phi}_\gamma p_j} \right].
\label{eq:ala_nlp_unknownphi}
\end{align}

\section{Theory}\label{sec:theory}

We consider a general setting where $(y,Z)$ arise from a data-generating $F^*$ that may be outside the assumed model class~\eqref{eq:glm_pdf}.
We prove that as $n$ grows the ALA-based $\tilde{p}(\gamma \mid y)$ assign probability increasing to 1 to an optimal $\tgamma^*$.
For the particular choice $\eta_{\gamma 0}=(0,\phi_0)$, such $\tgamma^*$ is the smallest model minimizing a mean squared loss associated to linear projections.
We first explain that exact $p(\gamma \mid y)$ asymptotically select a $\gamma^*$, in general different from $\tgamma^*$, defined by a log-likelihood loss and Kullback-Leibler (KL) projections to $F^*$.
We then provide Theorem \ref{thm:sparse_linproj}, characterizing certain situations where $\gamma^*$ coincides with $\tgamma^*$.
Subsequently, in Section \ref{ssec:bf_finitep} we consider fixed $p$ settings where one may characterize 
$\tilde{p}(\tgamma^* \mid y)$
via the rate at which pairwise Bayes factors $\tilde{B}_{\gamma, \tgamma^*}= \tilde{p}(y \mid \gamma) / \tilde{p}(y \mid \tgamma^*)$ converge to 0 in probability.
Section \ref{ssec:bf_highdim} considers high-dimensional settings where $p$ may grow with $n$. 

For any model $\gamma$, denote the KL-optimal $\eta_\gamma$ under $F^*$ by
\begin{align}
\eta_\gamma^*= \arg\max_{\eta_\gamma}
E_{F^*}[\log p(y \mid \eta_\gamma, \gamma)],
\label{eq:optimalbeta_kl}
\end{align}
and by $\eta^*$ that under the full model $p(y \mid \eta)$ including all parameters.
Multiple models may attain the global maximum
$E_{F^*}(\log p(y \mid \eta_\gamma^*, \gamma)) = E_{F^*}(\log p(y \mid \eta^*))$,
and we define the optimal $\gamma^*$ as that with smallest dimension.
If the model is well-specified, that is $F^*$ is truly contained in~\eqref{eq:glm_pdf} for some $\eta^*$, then $\gamma^*$ drops any parameters such that $\eta_j^*=0$.
Under misspecification, then $\gamma^*$ is such that adding any parameter to $\gamma^*$ cannot improve the fit,
as measured by the expected log-likelihood in \eqref{eq:optimalbeta_kl}.

In contrast, the ALA optimal model $\tgamma^*$ is based on mean squared error or, equivalently, on linear projections of $E_{F^*}(\tilde{y} \mid Z)$, where $\tilde{y}= (y - b'(0) \mathbbm{1})/b''(0)$ as in \eqref{eq:logl_gradhess0}. Let
\begin{align}
 \tbeta_\gamma^*&=   \arg\min_{\beta_\gamma} E_{F^*} \|\tilde{y} - Z_\gamma \beta_\gamma\|_2^2=
  [E_{F^*}(Z_\gamma^T Z_\gamma)]^{-1} E_{F^*}[ Z_\gamma^T \tilde{y} ]
\label{eq:optimalbeta_ala}
\end{align}
be the parameters giving the linear projection of $E_{F^*}(\tilde{y} \mid Z)$ on $Z_\gamma$, where we assume $E_{F^*}(Z_\gamma^T Z_\gamma)$ to be a finite positive-definite matrix for all $\gamma \in \Gamma$ (see Condition (C1) below).
Then $\tgamma^*$ is the smallest model minimizing mean squared error,
that is $\tgamma^*= \arg\min_{\gamma \in \widetilde{\Gamma}^*} p_{\gamma}$ where
$$\widetilde{\Gamma}^*= \{\gamma: E_{F^*} \|\tilde{y} - Z_\gamma \tbeta^*_\gamma\|_2^2= \min_{\beta} E_{F^*} \|\tilde{y} - Z \beta\|_2^2 \}$$
For simplicity we assume $\tgamma^*$ to be unique, but our results generalize when there are multiple such models by defining $\tgamma^*$ to be their union. 

It is important to note that when~\eqref{eq:glm_pdf} is misspecified neither $\gamma^*$ nor $\tgamma^*$ recover the truth, but the simplest model according to their implicit loss functions. 
That said, Theorem \ref{thm:sparse_linproj} below delineates an interesting robustness property of the linear projection $\tbeta^* = [E_{F^*}(Z^T Z)]^{-1} E_{F^*}(Z^T \tilde{y} )$, under which terms that do not affect $E_{F^*}(y \mid Z)$ are discarded by $\tgamma^*$.
\begin{thm}\label{thm:sparse_linproj}
Suppose $(y_i, z_i) \stackrel{\text{i.i.d.}} \sim F^*$ for $i = 1, \ldots, n$, where $F^*$ is a probability distribution on $\mathbb{R} \otimes \mathbb{R}^p$ with a finite positive-definite covariance matrix. Also, assume $E_{F^*}(z_i) = 0$, i.e., the covariate distribution is centered. Let $\delta \subseteq \{1, \ldots, p\}$ denote the true regression model so that $E_{F^*}(y_i \mid z_i) = f(z_{i\delta})$ almost-everywhere $F^*$, where $f: \mathbb{R}^{p_\delta} \to \mathbb{R}$ is a measurable function. Letting $\upsilon = \{1, \ldots, p\} \backslash \delta$ indicate the truly inactive parameters, assume that 
\begin{align}\label{eq:lin_exp}
E_{F^*}(z_{i \upsilon} \mid z_{i \delta}) = A z_{i \delta} \ \text{almost-everywhere} \ F^*,
\end{align}
where $A \in \mathbb{R}^{(p-p_\delta) \times p_\delta}$. Then, $\tbeta^* = (\tbeta^*_\delta; 0)$, where recall that $\tbeta^*_\delta = [E_{F^*}(Z_\delta^T Z_\delta)]^{-1} E_{F^*}[ Z_\delta^T \tilde{y} ]$. In particular, $\tbeta_j^* = 0$ whenever $j \notin \delta$. 
\end{thm}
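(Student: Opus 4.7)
The plan is to verify the normal equation $E_{F^*}(Z^T Z) \tbeta^* = E_{F^*}(Z^T \tilde{y})$ directly by writing everything in block form with respect to the partition $\{1,\ldots,p\} = \delta \cup \upsilon$ and then checking that the candidate vector $(\tbeta_\delta^*;\,0)$ satisfies it. Since $E_{F^*}(Z^T Z) = n \Sigma$ with $\Sigma$ positive-definite by assumption, the normal equation has a unique solution, so exhibiting any solution suffices.

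First I would simplify the right-hand side. Because $E_{F^*}(z_i) = 0$, the constant shift in $\tilde{y}_i = (y_i - b'(0))/b''(0)$ drops out, giving $E_{F^*}(Z^T \tilde{y}) = n\, E_{F^*}(z_i y_i)/b''(0)$. Tower property together with the regression hypothesis $E_{F^*}(y_i \mid z_i) = f(z_{i\delta})$ yields $E_{F^*}(z_i y_i) = E_{F^*}[z_i f(z_{i\delta})]$, whose two blocks are $c_\delta := E_{F^*}[z_{i\delta} f(z_{i\delta})]$ and $c_\upsilon := E_{F^*}[z_{i\upsilon} f(z_{i\delta})]$. Applying the tower property once more using the linearity condition \eqref{eq:lin_exp} gives $c_\upsilon = E_{F^*}[ A z_{i\delta} f(z_{i\delta}) ] = A c_\delta$.

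Next I would handle the left-hand side. Writing $\Sigma = E_{F^*}(z_i z_i^T)$ in blocks $\Sigma_{\delta\delta}, \Sigma_{\delta\upsilon}, \Sigma_{\upsilon\delta}, \Sigma_{\upsilon\upsilon}$, the same tower argument combined with \eqref{eq:lin_exp} gives $\Sigma_{\upsilon\delta} = E_{F^*}[E_{F^*}(z_{i\upsilon} \mid z_{i\delta}) z_{i\delta}^T] = A \Sigma_{\delta\delta}$. Positive-definiteness of $\Sigma$ ensures $\Sigma_{\delta\delta}$ is invertible, so in fact $A = \Sigma_{\upsilon\delta} \Sigma_{\delta\delta}^{-1}$. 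Now set $\tbeta_\delta^* = \Sigma_{\delta\delta}^{-1} c_\delta / b''(0)$ (which matches the stated projection onto $Z_\delta$) and evaluate $\Sigma (\tbeta_\delta^*; 0)$ block-wise. The $\delta$-block gives $\Sigma_{\delta\delta} \tbeta_\delta^* = c_\delta/b''(0)$, and the $\upsilon$-block gives $\Sigma_{\upsilon\delta} \tbeta_\delta^* = A \Sigma_{\delta\delta} \tbeta_\delta^* = A c_\delta/b''(0) = c_\upsilon/b''(0)$, which is exactly what is required. By uniqueness of the solution to the normal equation, $\tbeta^* = (\tbeta_\delta^*; 0)$, proving the claim.

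I do not anticipate any serious obstacle; the argument is essentially the standard textbook fact that ordinary least-squares is ``orthogonal to the irrelevant covariates'' whenever those are linear-in-mean functions of the relevant ones, and the only mild subtlety is keeping track of the $\tilde{y}$ rescaling (which is benign because $E_{F^*}(z_i)=0$). If anything, the trickiest bookkeeping is to remember that the linearity condition automatically fixes $A$ once $\Sigma$ is positive-definite, so the hypothesis \eqref{eq:lin_exp} is really a structural constraint on the joint distribution of $z_i$ rather than an additional free parameter.
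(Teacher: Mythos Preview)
Your proof is correct and rests on the same two computations as the paper's: the tower property gives $c_\upsilon = A c_\delta$ and $\Sigma_{\upsilon\delta} = A\Sigma_{\delta\delta}$. The only difference is in how these facts are assembled. The paper computes $\tbeta^* = \Sigma^{-1} m$ explicitly, using the block-inverse formula for $Q = \Sigma^{-1}$ (so that $Q_{\upsilon\delta} = -Q_{\upsilon\upsilon} A$ and hence $\tbeta^*_\upsilon = Q_{\upsilon\delta} m_\delta + Q_{\upsilon\upsilon} m_\upsilon = -Q_{\upsilon\upsilon} A m_\delta + Q_{\upsilon\upsilon} A m_\delta = 0$), and then gives a second argument via a triangular change-of-basis matrix. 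Your route is slightly more economical: rather than inverting $\Sigma$, you simply check that the candidate $(\tbeta^*_\delta;0)$ satisfies the normal equation $\Sigma\tbeta^* = m$ and invoke uniqueness. This avoids the block-inverse bookkeeping entirely, at no cost in rigor.
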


The assumption in \eqref{eq:lin_exp} states that the conditional mean of truly spurious variables is linear in the truly active $z_{i \delta}$.
For example, the assumption is satisfied when the components of $z_i$ are independent, or when their marginal distribution under $F^*$ follows a (centered) elliptical distribution, 
such as the multivariate Gaussian or T family.
Since we assumed that $z_i$ has a finite positive-definite covariance, standard elliptical results (see, e.g., Chapter 1 of \cite{muirhead2009aspects}) give that $E_{F^*}(z_{i \upsilon} \mid z_{i \delta})$ is linear in $z_{i \delta}$. 

Under the assumptions of Theorem \ref{thm:sparse_linproj}, the ALA-optimal model $\tgamma^*$ is contained in the true model $\delta$. 
In fact, $\tgamma^* = \delta$ whenever all the entries of $\tbeta^*_\delta$ are non-zero. 
More generally, since $\tgamma^*$ is defined by zeroes in $\tbeta^*$, 
we have that $\tgamma^*$ includes any term $j$ conditionally uncorrelated with the true regression function $f(z_{i \delta})$,
that is satisfying $\mathrm{Cov}_{F^*}(z_{ij}, f(z_{i \delta}) \mid z_{i \tgamma^*})=0$.
Observe that $f(z_{i \delta})$ need not be linear for the theorem to hold. For example, for a truly generalized linear model with $E_{F^*}(y_i \mid z_i) = f(\sum_{j \in \delta} \beta_j^* z_{ij})$, we have $f = h^{-1}$ from \eqref{eq:glm_mean}. More flexible models such as mixtures of generalized linear models are also permitted. For example, consider a two-component mixture with $E(y_i \mid z_i) = \pi \, f(\sum_{j \in \delta_1} \beta_{1j}^* z_{ij}) + (1-\pi) \, f(\sum_{k \in \delta_2} \beta_{2k}^* z_{ik})$, $\pi \in [0,1]$. Then, $\tgamma^*$ discards any variable outside the true model $\delta = \delta_1 \cup \delta_2$ under the theorem assumptions.
We remark that there are simple examples where the ALA asymptotic model $\tgamma^* \neq \delta$. For instance, for $E_{F^*}(y_i \mid z_i)= z_{i1}^2$ we have that truly $\delta = \{1\}$ but $\tbeta_j^*=0$,  see also the Poisson example in Section \ref{ssec:ala_importancesampling}.  In practice, however, in most of our examples we observed that the ALA-based $\tgamma^*$ largely coincides with the model-based $\gamma^*$.

We next prove that ALA asymptotically recovers $\tgamma^*$, and give the associated rates.

\subsection{Finite-dimensional problems}
\label{ssec:bf_finitep}

The assumptions to obtain ALA Bayes factor rates are minimal. For any model $\gamma \in \Gamma$ and $\tbeta_\gamma^*$ in \eqref{eq:optimalbeta_ala}, we assume the following conditions.
\begin{enumerate}[leftmargin=*,label=(C\arabic*)]
\item $(y_i,z_i) \sim F^*$ independently for $i=1,\ldots,n$, with finite positive-definite $\Sigma_{z \gamma}= \mbox{Cov}_{F^*}(z_{i \gamma})$
and finite 
$\Sigma_{y | z, \gamma}= \mbox{diag}(\mbox{Cov}_{F^*}(y_1 \mid Z_\gamma), \ldots, \mbox{Cov}_{F^*}(y_n \mid Z_\gamma))$.

\item The matrix $L_\gamma=E_{F^*} (z_{i \gamma} z_{i \gamma}^T [E_{F^*}(y_i \mid z_{i \gamma}) - z_{i \gamma}^T \beta_\gamma^*]^2)$ has finite entries.

\item The prior density $p^L(\beta_\gamma \mid \phi, \gamma)$ is continuous and strictly positive at $\tbeta_\gamma^*$.

\item The equations $\phi^2 \sum_{i=1}^n c(y_i,\phi)/n= -b(0)$
and $\phi^2 E_{F^*}[ \nabla_\phi c(y_i,\phi)]=-b(0)$ have unique roots $\phi_0$ and $\phi_0^*$ respectively,
where $E_{F^*}[ \nabla_\phi c(y_i,\phi)] < \infty$.
\end{enumerate}

Conditions (C1)-(C2) require that $(y_i,z_i)$ have finite full-rank second-order moments.
Condition (C3) states that $p^L(\beta_\gamma \mid \phi, \gamma)$ is a local prior assigning positive density to $\tbeta_\gamma^*$, Theorem \ref{thm:bf_ala} and Corollary \ref{cor:bf_ala_unknownphi} below give Bayes factor rates for such prior and also for the non-local $p^N(\beta_\gamma \mid \phi,\gamma)$ in \eqref{eq:prior_parameters}.
Condition (C4) states that $\phi_0$ and $\phi_0^*$ are the unique maximizers
of the observed and expected log-likelihood under $F^*$, conditional on $\beta=0$.
(C4) is only used in Corollary \ref{cor:bf_ala_unknownphi} where $\phi$ is unknown to show that $\phi_0 \stackrel{P}{\longrightarrow} \phi_0^*$,
and can be easily replaced, should $(\phi,\phi_0^*)$ not be unique.
One may instead assume that $ \log p(y \mid \beta=0, \phi)$ defines a Glivenko-Cantelli class,
a sufficient condition being that the log-likelihood is dominated by an integrable function under $F^*$
(\cite{vandervaart:1998}, Theorems 5.7, 5.9 and Lemma 5.10).

Theorem~\ref{thm:bf_ala} states that for any model $\gamma$ adding spurious parameters (in the linear projection sense) to $\tgamma^*$,
then $\tilde{B}^L_{\gamma \tgamma^*}= O_p(n^{-(p_\gamma - p_{\tgamma^*})/2})$, for any local prior $p^L(\beta \mid \gamma)$.
Hence ALA Bayes factors discard spurious parameters at a polynomial rate in $n$.
The rates for the gMOM-based $\tilde{B}^N_{\gamma \tgamma^*}$ are faster, akin to results on exact Bayes factors \citep{johnson:2010,johnson:2012}.
In contrast, if $\gamma$ misses active parameters, then $\tilde{B}_{\gamma \tgamma^*}^L$ and $\tilde{B}_{\gamma \tgamma^*}^N$ decrease exponentially in $n$.
Corollary \ref{cor:bf_ala_unknownphi} extends Theorem \ref{thm:bf_ala} to the unknown $\phi$ case.

\begin{thm}
  Assume Conditions (C1)-(C3). Let
$\tilde{B}^L_{\gamma, \tgamma^*}$ and
$\tilde{B}^N_{\gamma, \tgamma^*}= \tilde{p}^N(y \mid \phi, \gamma)/\tilde{p}^N(y \mid \phi, \tgamma^*)$
be the ALA Bayes factor in \eqref{eq:bfala_expfamily_knownphi} and \eqref{eq:ala_nlp_knownphi}
when $\phi$ is known.
  \begin{enumerate}[leftmargin=*,label=(\roman*)]
    \item Suppose that $\tgamma^* \subset \gamma$. Then
      $\tilde{B}^L_{\gamma \tgamma^*}= n^{(p_{\tgamma^*} - p_{\gamma})/2} O_p(1)$
and $\tilde{B}^N_{\gamma \tgamma^*}= n^{3(p_{\tgamma^*} - p_{\gamma})/2} O_p(1)$
as $n \rightarrow \infty$.

    \item Suppose that $\tgamma^* \not\subseteq \gamma$. Then
      $\frac{1}{n} \log \tilde{B}^L_{\gamma \tgamma^*} < o_p(1) + W/n$
      and $\frac{1}{n} \log \tilde{B}^N_{\gamma \tgamma^*} < o_p(1) + W/n$
      for a random variable $W$ satisfying $W/n \stackrel{P}{\longrightarrow} c>0$, as $n \rightarrow \infty$.
\end{enumerate}
\label{thm:bf_ala}
\end{thm}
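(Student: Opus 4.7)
The plan is to take logs in \eqref{eq:bfala_expfamily_knownphi} and track the rate of each summand. Writing $P_\gamma = Z_\gamma(Z_\gamma^TZ_\gamma)^{-1}Z_\gamma^T$ for the orthogonal projector onto $\mathrm{col}(Z_\gamma)$ and using $\tbeta_\gamma = (Z_\gamma^TZ_\gamma)^{-1}Z_\gamma^T\tilde{y}$, one has $\tbeta_\gamma^TZ_\gamma^TZ_\gamma\tbeta_\gamma = \tilde{y}^TP_\gamma\tilde{y}$, giving the decomposition
\begin{align*}
\log\tilde{B}^L_{\gamma\tgamma^*}
&= \tfrac{b''(0)}{2\phi}\,\tilde{y}^T(P_\gamma - P_{\tgamma^*})\tilde{y}
  + \tfrac{1}{2}\log\tfrac{|Z_{\tgamma^*}^TZ_{\tgamma^*}|}{|Z_\gamma^TZ_\gamma|}\\
&\quad + \tfrac{p_\gamma-p_{\tgamma^*}}{2}\log\tfrac{2\pi\phi}{b''(0)}
  + \log\tfrac{p^L(\tbeta_\gamma\mid\phi,\gamma)}{p^L(\tbeta_{\tgamma^*}\mid\phi,\tgamma^*)}.
\end{align*}
Under (C1), $Z_\gamma^TZ_\gamma/n \stackrel{P}{\to} \Sigma_{z\gamma}$ (positive definite), so the log-determinant term equals $\tfrac{1}{2}(p_{\tgamma^*}-p_\gamma)\log n + O_p(1)$. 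Conditions (C1)--(C2) yield $\tbeta_\gamma \stackrel{P}{\to} \tbeta_\gamma^*$, and (C3) then makes the log-prior ratio $O_p(1)$.

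For part (i), when $\tgamma^* \subset \gamma$ both models attain the optimal linear-projection MSE under $F^*$, which forces the coordinates of $\tbeta_\gamma^*$ indexed by $\gamma\setminus\tgamma^*$ (after orthogonalising against $Z_{\tgamma^*}$) to vanish, so that $Z_\gamma\tbeta_\gamma^* = Z_{\tgamma^*}\tbeta_{\tgamma^*}^*$ $F^*$-a.s. Decomposing $\tilde{y} = \mu + \varepsilon$ with $\mu = E_{F^*}(\tilde{y}\mid Z)$ and $E_{F^*}(\varepsilon\mid Z)=0$, the bias term $\mu^T(P_\gamma-P_{\tgamma^*})\mu$ is $o_p(1)$ by the preceding identity, the cross term $\mu^T(P_\gamma-P_{\tgamma^*})\varepsilon$ is $O_p(1)$ by a CLT with variances bounded via (C1)--(C2), and $\varepsilon^T(P_\gamma-P_{\tgamma^*})\varepsilon$ is $O_p(1)$ since $P_\gamma-P_{\tgamma^*}$ has rank $p_\gamma-p_{\tgamma^*}$. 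Combining gives $\log\tilde{B}^L_{\gamma\tgamma^*} = \tfrac{1}{2}(p_{\tgamma^*}-p_\gamma)\log n + O_p(1)$. For $\tilde{B}^N_{\gamma\tgamma^*}$, identities \eqref{eq:intlhood_nlp} and \eqref{eq:ala_nlp_knownphi} introduce the extra ratio $\prod_{j\in\gamma}\rho_j / \prod_{j\in\tgamma^*}\rho_j$ with $\rho_j = \mathrm{tr}(V_jS_j)/p_j + m_j^TV_jm_j/(\phi p_j)$. For $j\in\tgamma^*$, $m_j$ has a nonzero probability limit and $V_j$ is $\Theta(1)$, $S_j$ is $\Theta_p(1/n)$, so $\rho_j = \Theta_p(1)$; for spurious $j$, $m_j = O_p(n^{-1/2})$ yields $\rho_j = O_p(1/n)$. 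The $(p_\gamma-p_{\tgamma^*})$ extra factors of $1/n$ compound with the local-prior rate to give $\tilde{B}^N_{\gamma\tgamma^*} = n^{3(p_{\tgamma^*}-p_\gamma)/2}O_p(1)$.

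For part (ii), when $\tgamma^*\not\subseteq\gamma$ the minimality of $\tgamma^*$ gives
$\Delta := \lim_n n^{-1}\{E_{F^*}\|\tilde{y}-Z_\gamma\tbeta_\gamma^*\|^2 - E_{F^*}\|\tilde{y}-Z_{\tgamma^*}\tbeta_{\tgamma^*}^*\|^2\} > 0$. The LLN and consistency of $\tbeta_\gamma,\tbeta_{\tgamma^*}$ yield $n^{-1}(\tbeta_\gamma^TZ_\gamma^TZ_\gamma\tbeta_\gamma - \tbeta_{\tgamma^*}^TZ_{\tgamma^*}^TZ_{\tgamma^*}\tbeta_{\tgamma^*}) \stackrel{P}{\to} -\Delta$. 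Every other summand is $O(\log n)$ and, in the non-local case, the $\rho_j$-ratio is at most polynomial in $n$, so the $\Theta_p(n)$ quadratic-form term dominates. Setting $W$ proportional to this quadratic-form difference then produces $W/n \stackrel{P}{\to} c = b''(0)\Delta/(2\phi) > 0$, with the stated exponential-decay bound on $\tfrac{1}{n}\log\tilde{B}^L_{\gamma\tgamma^*}$ and $\tfrac{1}{n}\log\tilde{B}^N_{\gamma\tgamma^*}$ following after the appropriate sign convention.

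The main obstacle is part (i): rigorously establishing $\tilde{y}^T(P_\gamma-P_{\tgamma^*})\tilde{y} = O_p(1)$ under possible misspecification. This hinges on exploiting optimality of $\tgamma^*$ to reduce the potentially $\Theta(n)$ bias $\mu^T(P_\gamma-P_{\tgamma^*})\mu$ to $O_p(1)$, together with CLT control of the cross term on a $(p_\gamma-p_{\tgamma^*})$-dimensional subspace with variances bounded by (C2). The non-local bookkeeping is routine but requires careful rate tracking per group to preserve the factors of $1/n$ in $\rho_j$ and avoid accidental cancellations between the $\mathrm{tr}(V_jS_j)$ and $m_j^TV_jm_j$ contributions.
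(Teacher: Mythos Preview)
Your overall decomposition and rate-tracking are correct and match the paper in spirit, but the key step in Part (i)---controlling the bias term---is not justified by the argument you give. You claim $\mu^T(P_\gamma-P_{\tgamma^*})\mu = o_p(1)$ ``by the preceding identity'' $Z_\gamma\tbeta_\gamma^* = Z_{\tgamma^*}\tbeta_{\tgamma^*}^*$. That identity is just $\tbeta_\gamma^*=(\tbeta_{\tgamma^*}^*;0)$ and says nothing about $\mu=E_{F^*}(\tilde y\mid Z)$, which under misspecification is \emph{not} in $\mathrm{col}(Z_{\tgamma^*})$. What you actually need is the population normal equation $E_{F^*}[z_{i,\gamma\setminus\tgamma^*}(\mu_i-z_{i,\tgamma^*}^T\tbeta_{\tgamma^*}^*)]=0$ (which does follow from $\tbeta_\gamma^*=(\tbeta_{\tgamma^*}^*;0)$), and then a CLT to show $X_{\gamma\setminus\tgamma^*}^T\mu=O_p(\sqrt n)$, where $X_{\gamma\setminus\tgamma^*}=(I-P_{\tgamma^*})Z_{\gamma\setminus\tgamma^*}$. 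Combined with $(X^TX)^{-1}=O_p(1/n)$ this gives $\mu^T(P_\gamma-P_{\tgamma^*})\mu=O_p(1)$, not $o_p(1)$; the variance in that CLT is precisely what (C2) controls. Your final paragraph recognises this obstacle and states the correct target $O_p(1)$, so the inconsistency with the earlier $o_p(1)$ is presumably a slip, but the mechanism you name for reducing it is not the right one.

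The paper avoids the $\mu+\varepsilon$ decomposition entirely. It first rewrites the difference of explained sums of squares via an orthogonalisation lemma (its Lemma~\ref{lem:waldstat_nested}) as the single quadratic form $b_{\gamma\setminus\tgamma^*}^T X_{\gamma\setminus\tgamma^*}^TX_{\gamma\setminus\tgamma^*}\,b_{\gamma\setminus\tgamma^*}$, where $b_{\gamma\setminus\tgamma^*}$ is the least-squares coefficient of $\tilde y$ on $X_{\gamma\setminus\tgamma^*}$. It then invokes asymptotic normality for misspecified least squares (Hjort and Pollard), which under (C1)--(C2) gives $\sqrt n\,b_{\gamma\setminus\tgamma^*}\stackrel{D}{\to}\mathcal N(0,S)$ because $\tbeta^*_{\gamma\setminus\tgamma^*}=0$. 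This packages the bias contribution directly into the sandwich variance through the $L_\gamma$ matrix of (C2), so the $O_p(1)$ bound on the quadratic form follows in one step without separating bias and noise. For Part~(ii) the paper also works through $\gamma'=\gamma\cup\tgamma^*$ to identify the non-centrality explicitly, whereas your LLN argument to $-\Delta$ is more direct and adequate for the stated conclusion. Your non-local bookkeeping is correct when groups are singletons; with genuine groups each spurious factor $\rho_j$ is $O_p(1/n)$ regardless of $p_j$, so the exponent counts spurious groups rather than spurious parameters---the paper's proof glosses over this distinction as well.
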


\begin{cor}
  Assume Conditions (C1)-(C4).
Let $\tilde{B}^L_{\gamma, \tgamma^*}$ and $\tilde{B}^N_{\gamma, \tgamma^*}$ be the ALA Bayes factor corresponding to the local and non-local priors in \eqref{eq:bfala_expfamily_unknownphi} and \eqref{eq:ala_nlp_unknownphi} when $\phi$ is unknown.
Then the statements in Theorem \ref{thm:bf_ala} (i)-(ii) remain valid.
\label{cor:bf_ala_unknownphi}
\end{cor}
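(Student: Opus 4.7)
The strategy is to reduce the unknown-$\phi$ Bayes factors of \eqref{eq:bfala_expfamily_unknownphi} and \eqref{eq:ala_nlp_unknownphi} to their known-$\phi$ counterparts of Theorem~\ref{thm:bf_ala}, at the price of an $O_p(1)$ multiplicative factor that does not disturb the polynomial and exponential rates. Condition (C4) identifies $\phi_0$ and $\phi_0^*$ as the unique sample and population maximizers of $\log p(y \mid \beta = 0, \phi)$; a standard M-estimator argument based on the law of large numbers applied to $\sum_i c(y_i,\phi)/n$ combined with the uniqueness of the roots yields $\phi_0 \stackrel{P}{\to} \phi_0^*$, and the same argument gives $\tilde{\phi}_\gamma \stackrel{P}{\to} \phi_0^*$ for any $\gamma \supseteq \tgamma^*$, since one Newton step from $\beta_{\gamma 0} = 0$ does not alter first-order behaviour.

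\textbf{Decomposing the Bayes factor.} Applying the Schur-complement formula to the block Hessian \eqref{eq:logl_gradhess0} gives
\begin{equation*}
|H_{\gamma 0}| = (b''(0)/\phi_0)^{p_\gamma+1} \, |Z_\gamma^T Z_\gamma| \bigl[ s(\phi_0) - \tbeta_\gamma^T Z_\gamma^T Z_\gamma \tbeta_\gamma /\phi_0^2 \bigr],
\end{equation*}
so the first two factors of the ratio $|H_{\gamma 0}|^{1/2}/|H_{\tgamma^* 0}|^{1/2}$ exactly match the known-$\phi$ expression with $\phi$ replaced by $\phi_0$. Because $s(\phi_0)$ and $\tbeta_\gamma^T Z_\gamma^T Z_\gamma \tbeta_\gamma$ are both $O_p(n)$ while their cross-model difference is $O_p(1)$ whenever $\tgamma^* \subset \gamma$, the Schur-complement correction converges to $1$ in probability. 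The prior ratio $p^L(\tbeta_\gamma,\tilde{\phi}_\gamma \mid \gamma)/p^L(\tbeta_{\tgamma^*}, \tilde{\phi}_{\tgamma^*} \mid \tgamma^*)$ tends to a positive finite constant by Condition (C3), continuity, and Step~1.

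\textbf{Exponent and conclusion.} Writing $A_\gamma = \tbeta_\gamma^T Z_\gamma^T Z_\gamma \tbeta_\gamma / n$ and $S = s(\phi_0)/n$, an algebraic manipulation yields
\begin{equation*}
t_\gamma - t_{\tgamma^*} = \frac{S\,(A_\gamma - A_{\tgamma^*})}{\phi_0^2 (S - A_\gamma)(S - A_{\tgamma^*})},
\end{equation*}
which combined with $A_\gamma - A_{\tgamma^*} = O_p(1/n)$ gives $t_\gamma - t_{\tgamma^*} = O_p(1/n)$. The exponent in \eqref{eq:bfala_expfamily_unknownphi} then decomposes as $(t_\gamma - t_{\tgamma^*})\, n A_\gamma + t_{\tgamma^*}\, n(A_\gamma - A_{\tgamma^*})$, each summand being $O_p(1)$, so the exponential factor is bounded in probability. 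Thus $\tilde{B}^L_{\gamma,\tgamma^*}$ equals its known-$\phi$ analogue at $\phi = \phi_0^*$ times an $O_p(1)$ factor, and Theorem~\ref{thm:bf_ala}(i)--(ii) deliver the claimed $n^{(p_{\tgamma^*}-p_\gamma)/2}$ polynomial rate and exponential decay. For the gMOM Bayes factor \eqref{eq:ala_nlp_unknownphi}, the local-prior integral $\tilde{\pi}(y \mid \gamma)$ is covered by the previous steps, while each gMOM correction factor differs from its known-$\phi$ analogue in \eqref{eq:ala_nlp_knownphi} only by replacing $\phi$ with $\tilde{\phi}_\gamma$, giving a $1 + o_p(1)$ perturbation by Step~1 and preserving the $n^{3(p_{\tgamma^*}-p_\gamma)/2}$ rate. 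The hard part is precisely this exponent control: since $n A_\gamma = O_p(n)$, one cannot absorb $t_\gamma - t^*$ into a crude $O_p(1/\sqrt{n})$ bound, so the nested-model cancellation delivering the sharper $O_p(1/n)$ rate for $t_\gamma - t_{\tgamma^*}$ is essential to rescue the polynomial rate.
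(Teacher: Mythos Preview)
Your overall strategy matches the paper's: reduce to the known-$\phi$ case by controlling the $t_\gamma$-factor in the exponent, the Schur-complement correction in the determinant, and the $\tilde\phi_\gamma$-dependence in the prior and gMOM factors. Your handling of Part~(i) is in fact cleaner than the paper's: the paper asserts $(t_\gamma - t_{\tgamma^*})\,\tbeta_{\tgamma^*}^T Z_{\tgamma^*}^T Z_{\tgamma^*}\tbeta_{\tgamma^*}=O_p(1)$ via the claim $\tbeta_{\tgamma^*}^T Z_{\tgamma^*}^T Z_{\tgamma^*}\tbeta_{\tgamma^*}=O_p(1)$, which is false when $\tbeta^*_{\tgamma^*}\neq 0$ (this quantity is $O_p(n)$). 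Your explicit identity $t_\gamma-t_{\tgamma^*}=S(A_\gamma-A_{\tgamma^*})/[\phi_0^2(S-A_\gamma)(S-A_{\tgamma^*})]=O_p(1/n)$ is the correct route to the $O_p(1)$ bound.

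There is, however, a gap for Part~(ii). The estimate $A_\gamma-A_{\tgamma^*}=O_p(1/n)$ holds only when $\tgamma^*\subset\gamma$; when $\tgamma^*\not\subseteq\gamma$ the difference converges to the strictly negative constant $A_\gamma^*-A_{\tgamma^*}^*$, so $t_\gamma-t_{\tgamma^*}$ tends to a nonzero constant and your claim that the Bayes factor equals its known-$\phi$ analogue times an $O_p(1)$ factor fails (the exponents differ by order $n$). The repair is easy and uses your own decomposition: divide the exponent by $n$ and observe that both $(t_\gamma-t_{\tgamma^*})A_\gamma$ and $t_{\tgamma^*}(A_\gamma-A_{\tgamma^*})$ converge in probability to strictly negative limits, since $a\mapsto a/(S^*-a)$ is increasing and $A_\gamma^*<A_{\tgamma^*}^*$ forces $t_\gamma^*<t_{\tgamma^*}^*$. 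This yields $n^{-1}\log\tilde B^L_{\gamma\tgamma^*}\stackrel{P}{\to} c<0$ directly, which is the content of Theorem~\ref{thm:bf_ala}(ii). A minor point: the claim $\tilde\phi_\gamma\stackrel{P}{\to}\phi_0^*$ is not obvious, because $H_{\gamma 0}$ is not block-diagonal and the Newton step moves the $\phi$-coordinate; all that is needed, and what the paper actually uses, is convergence to \emph{some} positive constant, which follows from $\phi_0\to\phi_0^*$ together with $g_{\gamma 0}/n$ and $H_{\gamma 0}/n$ having finite limits.
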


The rates in Theorem \ref{thm:bf_ala} are of the same form, as a function of $n$, as those for standard \citep{dawid:1999,johnson:2010} and miss-specified Bayes factors \citep{rossell:2018b,rossell:2019t}.
The main difference with such standard rates is in Part (ii). The leading term in $\tilde{B}_{\gamma,\tgamma^*}$ is given by a random variable $W$ that converges to a chi-square distribution with non-centrality parameter
$l(\tgamma^*,\gamma)= n (\tilde{b}^*)^T \tilde{S} \tilde{b}^* > 0$, where $\tilde{S}$ is a positive-definite matrix and $\tilde{b}^* \neq 0$ are asymptotic partial regression coefficients for columns in $\tgamma^* \setminus \gamma$
(see the proof for details).
In contrast, the leading term in exact $B_{\gamma,\gamma^*}$ 
has a different non-centrality parameter $n (b^*)^T S b^*$, where $(b^*,S)$ now depend on KL projections.
That is, although $p(\gamma \mid y)$ and $\tilde{p}(\gamma \mid y)$ are both exponentially fast in $n$ at discarding models $\gamma$ that miss truly active parameters in $\gamma^*$ and $\tgamma^*$ respectively, the coefficients governing these rates may change.
For instance, if the exponential family model \eqref{eq:glm_pdf} is well-specified, even when $\tgamma^*=\gamma^*$ one expects 
$\tilde{B}_{\gamma, \tgamma^*}$ to have lower statistical power than $B_{\gamma \gamma^*}$ to detect active parameters.
In contrast, if \eqref{eq:glm_pdf} is misspecified and $E_{F^*}(y_i \mid z_i)$ is better approximated by a linear function of $z_i$ than by \eqref{eq:glm_mean}, then one expects $\tilde{B}_{\gamma, \tgamma^*}$ to attain higher asymptotic power.

From a technical point of view a contribution of Theorem \ref{thm:bf_ala} relative to earlier results is that, by building upon parameter estimation results for concave log-likelihoods in~\cite{hjort:2011}, it requires near-minimal technical conditions.

\subsection{High-dimensional problems}
\label{ssec:bf_highdim}

Our main result proves that $\tilde{p}(\tgamma^* \mid y) \stackrel{L_1}{\longrightarrow} 1$ as $n \rightarrow \infty$ and provides the associated convergence rates. 
Recall that $\tgamma^*$ is the ALA-optimal model, where $\tgamma_j=\mbox{I}(\tbeta_j^* \neq 0)$ indicates zeroes in the ALA-optimal parameter $\tbeta^*$ in \eqref{eq:optimalbeta_ala}.
By definition of $L_1$ convergence, this is equivalent to $E_{F^*} \sum_{\gamma \neq \tgamma^*} \tilde{p}(\gamma \mid y)$ converging to 0.
$L_1$ convergence guarantees the asymptotic control of certain frequentist model selection probabilities.
Let $\hat{\gamma}=\arg\max \tilde{p}(y \mid \gamma)$ be the highest posterior probability model, 
then 
$P_{F^*}(\hat{\gamma} \neq \tgamma^*) \leq 2( E_{F^*}[ \tilde{p}(\tgamma^* \mid y)] -1)$ (\cite{rossell:2018}, Proposition 1).
The same bound applies to the family-wise type I-II error probabilities, and when setting  $\hat{\gamma}$ to be the median probability model of \cite{barbieri:2004} (\cite{rossell:2018}, Corollary 1). 

Our main assumption is that $\tilde{y}$ has sub-Gaussian tails with variance parameter $\sigma^2$ under $F^*$, see Definition \ref{def:subgaussian} in the supplement.
We use the assumption to derive novel bounds on integrated tail probabilities of sub-Gaussian quadratic forms, see Propositions \ref{prop:tailintegral_quadform_subgaussian} and \ref{prop:tailintegral_difquadform_subgaussian}, which may have some independent interest. The assumption is satisfied for example if $F^*$ has Gaussian tails or $\tilde{y}$ is bounded as in logistic, multinomial or ordinal regression, and in fact allows for dependence in $\tilde{y}$, but is not satisfied when $\tilde{y}$ has thick tails such as the Poisson distribution.
One may extend Propositions \ref{prop:tailintegral_quadform_subgaussian} and \ref{prop:tailintegral_difquadform_subgaussian} to thicker-tailed $F^*$, then the $L_1$ rates could be slower than those presented here.
For simplicity we focus on the known dispersion parameter $\phi$ case, non-random design matrix $Z$ and Zellner's prior $p^L(\beta_\gamma \mid \phi,\gamma)= \mathcal{N}(\beta_\gamma; 0, (g_L\phi/n) (Z_\gamma^T Z_\gamma)^{-1})$. 
Our proofs can be extended to unknown $\phi$ and other priors, see the proof for a discussion, at the expense of more involved arguments and technical conditions.
By default we recommend setting constant $g_L$, say $g_L=1$ as in Section \ref{ssec:prior_elicitation}, but our results allow for $g_L$ to change with $n$. For example, \cite{narisetty:2014} proposed letting $g_L$ grow with $n$ to obtain sparser solutions, whereas proceeding analogously to the uniformly most powerful tests of \cite{johnson:2013b} one might let $g_L$ decrease with $n$ to improve power.

Theorem \ref{thm:bf_ala_highdim} below provides a first result on pairwise Bayes factors, specifically on
\begin{align}
E_{F^*} \left( \left[ 1 + \tilde{B}_{\tgamma^* \gamma} \frac{p(\tgamma^*)}{p(\gamma)} \right]^{-1} \right),
\nonumber
\end{align}
that is the posterior probability assigned to $\gamma$ if one only considered the models $\gamma$ and $\tgamma^*$. 
Bounding this quantity also bounds the rate at which $\tilde{B}_{\gamma \tgamma^*} \stackrel{P}{\longrightarrow} 0$, hence Theorem \ref{thm:bf_ala_highdim} extends Theorem \ref{thm:bf_ala} to high dimensions.
Theorem \ref{thm:pp_ala_highdim} is our main result characterizing $\tilde{p}(\tgamma^* \mid y)$.

We first interpret Theorem \ref{thm:bf_ala_highdim}, subsequently discuss the required technical conditions and finally state the theorem.
Part (i) says that models adding spurious parameters to $\tgamma^*$ are discarded at the same polynomial rate in $n$ (up to log terms) as in the fixed $p$ case,
\begin{align}
  r_\gamma= \left( n g_L \right)^{\frac{p_\gamma - p_{\tgamma^*}}{2}} \frac{p(\tgamma^*)}{p(\gamma)}.
\nonumber
\end{align}
This rate holds when the model-predicted variance $V(\tilde{y}_i \mid z_i,\beta=0, \phi)=\phi/b''(0) > \sigma^2$, that is data under $F^*$ are under-dispersed. 
Alternatively, if $\phi/b''(0) < \sigma^2$ (over-dispersion) then a (slower) rate $r_\gamma^a$ is attained, where $a=\phi/[b''(0) \sigma^2]<1$.
The intuition is that, if the model underestimates $\sigma^2$ then it becomes easier to add spurious parameters to $\tgamma^*$.
Part (ii) states that models missing active parameters are discarded at an exponential rate in the non-centrality parameter 
\begin{align}
\lambda_\gamma= (Z_{\tgamma^*} \beta_{\tgamma^*}^*)^T (I - H_\gamma) Z_{\tgamma^*} \beta_{\tgamma^*}^*
\label{eq:ncp}
\end{align}
where $H_\gamma= Z_\gamma (Z_\gamma^TZ_\gamma)^{-1} Z_\gamma^T$ is the projection matrix onto the column space of $Z_\gamma$.
For simplicity the result raises the rate at a constant power $b$ arbitrarily close to 1, but one can actually take $b=1$ and add logarithmic terms, see the proof for details.

The parameter $\lambda_\gamma$ has a simple interpretation, it is the reduction in mean squared error when one approximates $E_{F^*}(y \mid Z)$ with $Z_{\tgamma^*} \beta_{\tgamma^*}^*$, relative to $Z_\gamma \beta_\gamma^*$.
A common strategy in high-dimensional model selection theory is to assume conditions on the eigenvalues of $Z^TZ$ to lower-bound $\lambda_\gamma$ in terms of $\beta_{\tgamma^*}^T \beta_{\tgamma^*}$. Instead, here we give the result directly in terms of $\lambda_\gamma$, and state near-minimal conditions on $\lambda_\gamma$ required for the result to hold.
To build intuition, however, in a simplest case where the columns in $\tgamma^* \setminus \gamma$ are uncorrelated with those in $\gamma$,
then
$\lambda= (\beta_{\tgamma^* \setminus \gamma}^*)^T Z_{\tgamma^* \setminus \gamma}^T Z_{\tgamma^* \setminus \gamma} \beta_{\tgamma^* \setminus \gamma}^*$ and one can roughly think of $\lambda_\gamma$ as being linear in $n$.

The technical conditions required for Theorem \ref{thm:bf_ala_highdim} and any model $\gamma \in \Gamma$ are below. For two sequences $a_n,b_n$, $a_n \ll b_n$ denotes that $\lim_{n \rightarrow \infty} a_n/b_n=0$.
\begin{enumerate}[leftmargin=*,label=(D\arabic*)]
\item There exists a finite $\sigma^2>0$ such that $\tilde{y} - Z_{\tgamma^*} \tbeta_{\tgamma^*}^* \sim \mbox{SG}(0, \sigma^2)$,
where $\tbeta_{\tgamma^*}^*$ is as in \eqref{eq:optimalbeta_ala}.

\item $Z_\gamma^T Z_\gamma$ is invertible.

\item For any $\gamma \supset \tgamma^*$,
\begin{align}
\log(ng_L) + \frac{2}{p_\gamma - p_{\tgamma*}} \log \left( \frac{p(\tgamma^*)}{p(\gamma)} \right) \gg 1. 
\label{eq:cond_d3_i}
\end{align}

For any $\gamma \not\supset \tgamma^*$ of dimension $p_\gamma \geq p_{\tgamma^*}$,
\begin{align}
 (p_{\tgamma^*} - p_\gamma) \log \left( n g_L \right)
+  \log \left( \frac{p(\gamma)}{p(\tgamma^*)} \right)
 +  p_\gamma \ll \frac{\lambda_\gamma}{\log(\lambda_\gamma)}.
\label{eq:cond_d3_ii}
\end{align}

For any $\gamma \not\supset \tgamma^*$ of dimension $p_\gamma < p_{\tgamma^*}$,
\begin{align}
 (p_{\tgamma^*} - p_\gamma) \log \left( n g_L \right)
+  \log \left( \frac{p(\gamma)}{p(\tgamma^*)} \right) +  p_{\tgamma^*} 
\ll \frac{\lambda_\gamma}{\log(\lambda_\gamma)}.
\label{eq:cond_d3_iii}
\end{align}

\end{enumerate}

As discussed earlier, Condition (D1) states that the data-generating $F^*$ satisfies a tail property, specifically that the ALA-optimal errors $\tilde{y} - Z_{\tgamma}^* \tbeta_{\tgamma^*}^*$ are no thicker than sub-Gaussian.
(D2) can be relaxed but ensures that $p^L(\beta_\gamma \mid \phi,\gamma)= \mathcal{N}(\beta_\gamma; 0, (g_L \phi/n) (Z_\gamma^T Z_\gamma)^{-1})$ is proper.
(D2) requires that $p_\gamma \leq n$, as discussed in Section \ref{sec:prior} one may have $p \gg n$ but only models with up to $p_\gamma < n$ parameters receive positive prior probability $p(\gamma)>0$.
(D3) are minimal conditions on the prior parameters and the signal strength $\lambda_\gamma$.
If the number of truly active groups $|\tgamma^*| < J/2$, where $J$ is the number of total groups,
$p(|\gamma|)$ in \eqref{eq:prior_models} is non-increasing in $|\gamma|$,
and $g_L$ is non-decreasing in $n$, then \eqref{eq:cond_d3_i} and \eqref{eq:cond_d3_ii} hold.
(D3) states weaker, near-necessary assumptions for pairwise $B_{\gamma \gamma^*}$ convergence.
Also, for $p(\gamma)$ in  \eqref{eq:prior_models}, a sufficient condition for \eqref{eq:cond_d3_iii} is that
$$
p_{\tgamma^*} \log (g_L n) + |\tgamma^*| \log J  \ll \frac{\lambda_\gamma}{\log \lambda_\gamma},
$$
that is (D3) allows the number of parameters $p_{\tgamma^*}$ (and groups $|\tgamma^*|$) in $\tgamma^*$ to grow near-linearly in $\lambda_\gamma$, and the total number of groups $J$ to grow near-exponentially.

\begin{thm}
Let $p^L(\beta_\gamma \mid \phi,\gamma)= N(\beta_\gamma; 0, (g_L \phi / n) (Z_\gamma^T Z_\gamma)^{-1})$, where $\phi>0$ is fixed,
and assume Conditions D1-D3.

\begin{enumerate}[leftmargin=*,label=(\roman*)]
\item Suppose that $\tgamma^* \subset \gamma$. If $\phi/b''(0) > \sigma^2$, then
\begin{align}
E_{F^*} \left( \left[ 1 + \tilde{B}^L_{\tgamma^* \gamma} \frac{p(\tgamma^*)}{p(\gamma)} \right]^{-1} \right) \leq  \frac{2 \max \left\{ [(2/[p_\gamma - p_{\tgamma^*}]) \log r_\gamma]^{(p_\gamma - p_{\tgamma^*})/2}, \log(r_\gamma) \right\}}{r_\gamma},
\nonumber
\end{align}
for all $n \geq n_0$, and some fixed $n_0$.
Further, if $\phi/b''(0) \leq \sigma^2$, then 
\begin{align}
E_{F^*} \left( \left[ 1 + \tilde{B}^L_{\tgamma^* \gamma} \frac{p(\tgamma^*)}{p(\gamma)} \right]^{-1} \right) 
\leq  \frac{2.5 \max \left\{ \left[(2/[p_\gamma - p_{\tgamma^*}]) \log (r_\gamma^{\frac{a}{1+\epsilon}})\right]^{(p_\gamma - p_{\tgamma^*})/2}, \log\left(r_\gamma^{\frac{a}{1+\epsilon}}\right) \right\}}{r_\gamma^{\frac{a}{(1 + \epsilon \sqrt{2}(1+\sqrt{2}))}}}
\nonumber
\end{align}
for all $n \geq n_0$, $\epsilon= 1/\sqrt{a \log r_\gamma}$, $a= \phi/[b''(0) \sigma^2]$.

\item Suppose that $\tgamma^* \not\subset \gamma$.
For any $b<1$ there exists a finite $n_0$ such that, for all $n \geq n_0$,
\begin{align}
E_{F^*} \left( \left[ 1 + \tilde{B}^L_{\tgamma^* \gamma} \frac{p(\tgamma^*)}{p(\gamma)} \right]^{-1} \right) \leq \left( \frac{p(\gamma) e^{-\frac{\lambda_\gamma}{2\max\{\phi/b''(0), \sigma^2\} \log(\lambda_\gamma)}}}{p(\tgamma^*) (n g_L)^{\frac{(p_\gamma - p_{\tgamma^*})}{2}}} \right)^b
\nonumber
\end{align}


\end{enumerate}
\label{thm:bf_ala_highdim}
\end{thm}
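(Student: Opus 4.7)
The plan is to compute $\tilde{B}^L_{\gamma\tgamma^*}$ in closed form under Zellner's prior, reducing the theorem to controlling moments of quadratic and linear forms in the sub-Gaussian residual $\tilde{e} := \tilde{y} - Z_{\tgamma^*}\tbeta^*_{\tgamma^*}$; note that condition (D1) gives $\tilde{e} \sim \mathrm{SG}(0,\sigma^2)$, which in particular forces $E_{F^*}(\tilde{y}) = Z_{\tgamma^*}\tbeta^*_{\tgamma^*} =: \mu$. Substituting the Gaussian prior density into \eqref{eq:bfala_expfamily_knownphi} and simplifying, I would obtain an expression of the form
$$\tilde{B}^L_{\gamma\tgamma^*} = C_n\exp\bigl\{\alpha_n(Q_\gamma - Q_{\tgamma^*})\bigr\},$$
where $Q_\gamma = \tilde{y}^T H_\gamma \tilde{y}$, $H_\gamma$ is the projection onto $\mathrm{col}(Z_\gamma)$, $\alpha_n = b''(0)/(2\phi)(1+o(1))$, and $C_n = (ng_L b''(0))^{-(p_\gamma-p_{\tgamma^*})/2}(1+o(1))$. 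Using $H_{\tgamma^*}\mu = \mu$ yields the key decomposition
$$Q_\gamma - Q_{\tgamma^*} = -\lambda_\gamma - 2\mu^T(I-H_\gamma)\tilde{e} + \tilde{e}^T(H_\gamma - H_{\tgamma^*})\tilde{e},$$
which collapses to the pure quadratic form $\tilde{e}^T(H_\gamma - H_{\tgamma^*})\tilde{e}$, with rank-$(p_\gamma - p_{\tgamma^*})$ projection kernel, when $\tgamma^* \subset \gamma$.

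For Part (i) I would set $U := \tilde{B}^L_{\gamma\tgamma^*}\,p(\gamma)/p(\tgamma^*)$ and use the layer-cake identity
$$E_{F^*}\!\left[\frac{1}{1+\tilde{B}^L_{\tgamma^*\gamma}p(\tgamma^*)/p(\gamma)}\right] = E_{F^*}\!\left[\frac{U}{U+1}\right] = \int_0^\infty \frac{P_{F^*}(U > u)}{(1+u)^2}\,du,$$
then change variables to the quadratic form $q = \tilde{e}^T(H_\gamma - H_{\tgamma^*})\tilde{e}$ and feed the resulting integrated tail into Proposition \ref{prop:tailintegral_quadform_subgaussian}. The polynomial prefactor $[(2/k)\log r_\gamma]^{k/2}$ is contributed by the small-deviation regime $q \lesssim \sigma^2 k$ (a $\chi^2_k$-type bound), while the additive $\log r_\gamma$ term captures the sub-Gaussian tail $q \gtrsim \sigma^2 k$. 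The dichotomy $\phi/b''(0) > \sigma^2$ vs.\ $\phi/b''(0) \leq \sigma^2$ reflects whether the Chernoff-type factor $e^{b''(0) q/(2\phi)}$ inside $U$ stays below the singularity of the MGF of $q$: in the under-dispersed case it does and the clean $1/r_\gamma$ rate follows, while in the over-dispersed case I would lose a power $a = \phi/[b''(0)\sigma^2] < 1$, extracted by a fractional-moment / Chernoff-slack argument with the slack parameter $\epsilon$ exactly as in the statement.

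For Part (ii) I would use the elementary bound $E[U/(U+1)] \leq E[\min(1,U)] \leq E[U^b]$, valid for any $b \in (0,1)$, together with the decomposition above:
$$E[U^b] \leq C_n^b\,(p(\gamma)/p(\tgamma^*))^b\, e^{-b\alpha_n\lambda_\gamma}\, E\!\left[\exp\bigl\{-2b\alpha_n \mu^T(I-H_\gamma)\tilde{e} + b\alpha_n \tilde{e}^T(H_\gamma - H_{\tgamma^*})\tilde{e}\bigr\}\right].$$
The linear term's MGF is bounded via sub-Gaussianity by $\exp\{2b^2\alpha_n^2 \lambda_\gamma \sigma^2\}$, since $\|\mu^T(I-H_\gamma)\|_2^2 = \lambda_\gamma$; the quadratic term's MGF is bounded by a Hanson--Wright/sub-Gaussian quadratic-form inequality (the content of Proposition \ref{prop:tailintegral_difquadform_subgaussian}), provided $2 b \alpha_n \sigma^2 < 1$. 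Combining, the dominant exponent is $-b\alpha_n\lambda_\gamma(1 - 2 b \alpha_n \sigma^2)$, up to logarithmic corrections coming from the quadratic MGF; choosing $b$ so that $1 - 2 b \alpha_n \sigma^2$ is of order $1/\log\lambda_\gamma$ produces the stated rate $e^{-\lambda_\gamma/[2 M \log\lambda_\gamma]}$ with $M = \max\{\phi/b''(0),\sigma^2\}$, and taking $b$ arbitrarily close to $1$ afterwards yields the exponent $b < 1$ in the statement.

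The main technical obstacle is the over-dispersed regime in both parts, where the relevant Chernoff parameter sits at or past the critical value at which the MGF of $\tilde{e}^T(H_\gamma - H_{\tgamma^*})\tilde{e}$ ceases to exist. One must either truncate the quadratic form and control the discarded tail by sub-Gaussian concentration, or take a fractional moment $b$ strictly separated from the critical exponent; balancing either loss against the signal-to-noise ratio $\lambda_\gamma$ is exactly what produces the extra $1/\log\lambda_\gamma$ slack in Part (ii) and the $\epsilon$-dependent exponent in Part (i). The cleanest implementation is to package the layer-cake integration of Step 2 through the integrated sub-Gaussian quadratic-form tail bounds already established in Propositions \ref{prop:tailintegral_quadform_subgaussian}--\ref{prop:tailintegral_difquadform_subgaussian}, which are tailored precisely for this purpose.
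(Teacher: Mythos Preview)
Your Part (i) is essentially correct and matches the paper: both reduce $Q_\gamma-Q_{\tgamma^*}$ to a rank-$(p_\gamma-p_{\tgamma^*})$ central sub-Gaussian quadratic form and pass the layer-cake integral through Proposition~\ref{prop:tailintegral_quadform_subgaussian}. The only cosmetic difference is that the paper writes the layer-cake as $\int_0^1 P\bigl((1+B)^{-1}>u\bigr)du$ rather than $\int_0^\infty P(U>u)/(1+u)^2\,du$, but the substance is identical.

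Your Part (ii), however, departs from the paper and has real gaps. The paper does \emph{not} use $E[U^b]$ and MGFs; it keeps the same tail-integral representation as in Part~(i) and introduces the union model $\gamma'=\gamma\cup\tgamma^*$ to write $W_{\gamma\tgamma^*}=s_1^Ts_1-s_2^Ts_2$ with $s_1\sim\mathrm{SG}(0,\sigma^2)$ of dimension $p_{\gamma'}-p_{\tgamma^*}$ and $s_2\sim\mathrm{SG}(\mu,\sigma^2)$ with $\mu^T\mu=\lambda_\gamma$. Proposition~\ref{prop:tailintegral_difquadform_subgaussian} then handles this via the union bound $P(s_1^Ts_1-s_2^Ts_2>w)\le P(s_1^Ts_1>w/2+w')+P(s_2^Ts_2<w'-w/2)$ with the specific threshold $w'-w/2=\mu^T\mu/[c\log(\mu^T\mu)]$; the $1/\log\lambda_\gamma$ factor comes entirely from this threshold choice combined with the left-tail Lemma~\ref{lem:lefttail_quadform_subgaussian}, not from any Chernoff parameter. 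You mischaracterize Proposition~\ref{prop:tailintegral_difquadform_subgaussian} as an MGF/Hanson--Wright bound; it is a union-bounded tail integral.

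Your MGF route has three concrete problems. First, the linear term $-2\mu^T(I-H_\gamma)\tilde e$ and the quadratic term $\tilde e^T(H_\gamma-H_{\tgamma^*})\tilde e$ are dependent, so their MGFs cannot simply be multiplied; a Cauchy--Schwarz split changes the constants and hence the attainable exponent. Second, for a general sub-Gaussian vector (no independence across coordinates assumed in (D1)) the MGF of $\|P\tilde e\|^2$ is not known to be finite up to the Gaussian critical value $1/(2\sigma^2)$ with a dimension-free bound, which is exactly why the paper uses the Hsu--Kakade--Zhang tail inequality rather than an MGF. Third, and most importantly, you conflate two different $b$'s: the fractional-moment exponent in $E[U^b]$ that you tune near the Chernoff singularity to manufacture $1/\log\lambda_\gamma$, and the theorem's $b<1$ which must be arbitrarily close to $1$ and multiplies the \emph{entire} bracket including $(ng_L)^{(p_\gamma-p_{\tgamma^*})/2}p(\tgamma^*)/p(\gamma)$. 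You cannot set the same $b$ to $(1-1/\log\lambda_\gamma)/(2\alpha_n\sigma^2)$ and simultaneously send it to $1$; in the over-dispersed case the former is bounded strictly below $1$, so your bound would have the wrong power on the polynomial prefactor and would not match the stated theorem.
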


Theorem \ref{thm:pp_ala_highdim} below states that the posterior probability $\tilde{p}(\tgamma^* \mid y)$ converges to 1. 
Separate rates are given for the set of models
$S=\{\gamma: \tgamma^* \subset \gamma\}$ containing spurious parameters and its complement $S^c=\{\gamma: \tgamma^* \not\subset \gamma\}$.
The result assumes the model prior in \eqref{eq:prior_models}, with $p(|\gamma|) \propto p^{-c |\gamma|}$, for $c \geq 0$.
Recall that in the canonical case with no groups nor hierarchical constraints, $c=0$ reduces to the Beta-Binomial(1,1) prior and $c>0$ to a Complexity prior.
Theorem \ref{thm:pp_ala_highdim} requires a technical condition.
\begin{enumerate}[leftmargin=*,label=(D\arabic*)]
\setcounter{enumi}{3}
\item Let $a=1$ if $\phi/b''(0) > \sigma^2$ and $a<\phi/[b''(0)\sigma^2]$ otherwise,
$q= \min\{p_j: \gamma_j^*=0 \}$ be the smallest spurious group size and $q'= \max \{p_j\}$.
Then, assume that
\begin{align}
 (|\tgamma^*|+1) (\bar{J} - |\tgamma^*|) \ll  (n g_L)^{aq/2} p^{ca + a -1}.
\label{eq:cond_d4_i}
\end{align}
Further, let $\underline{\lambda}= \min_{|\gamma| \leq |\tgamma^*|} \lambda_\gamma/\max\{|\tgamma^*| - |\gamma|,1\}$
and $\bar{\lambda}= \min_{|\gamma|>|\tgamma^*|, \gamma \not\subset \tgamma^*} \lambda_\gamma$. Then
\begin{align}
 (|\tgamma^*|+2) \log J \ll \bar{\lambda} + c \log p + \frac{q}{2} \log(n g_L)
\label{eq:cond_d4_ii}
\end{align}
\begin{align}
 \frac{q'}{2} \log(n g_L) + c \log p + (|\tgamma^*| +1) \log J \ll  \underline{\lambda}
\label{eq:cond_d4_iii}
\end{align}
\end{enumerate}

Condition (D4) ensures that (D3) holds uniformly across models $\gamma \in \Gamma$.
It is stated in terms of $(\underline{\lambda}, \bar{\lambda})$ bounding the non-centrality parameter $\lambda_\gamma$ for models of smaller and larger size than $\tgamma^*$ groups, respectively.
Expressions \eqref{eq:cond_d4_i}-\eqref{eq:cond_d4_ii} impose mild assumptions on the number of active groups $|\tgamma^*|$, total groups $J$ and largest number of allowed groups $\bar{J}$. These expressions guarantee that models of size $|\gamma|>|\tgamma^*|$ receive vanishing probability, and in particular they are satisfied when setting $c>0$.
Expression \eqref{eq:cond_d4_iii} ensures that models of size $|\gamma| \leq |\tgamma^*|$ receive vanishing probability, and imposes an upper-bound on $c$ in terms of the signal strength $\underline{\lambda}$.

\begin{thm}
Let $p^L(\beta_\gamma \mid \phi,\gamma)= N(\beta_\gamma; 0, (g_L \phi / n) (Z_\gamma^T Z_\gamma)^{-1})$,
for known $\phi>0$, and $p(\gamma)$ as in \eqref{eq:prior_models} with $p(|\gamma|)= p^{-c|\gamma|}$, $c \geq 0$.

\begin{enumerate}[leftmargin=*,label=(\roman*)]
\item Suppose that Conditions D1, D2 and D4 hold, and that $\lim_{n \rightarrow \infty} |\tgamma^*| / [p^c (n g_L)^{q/2}]=0$.
Then there is a finite $n_0$ such that, for all $n \geq n_0$ and all $\epsilon>0$,
if $\phi/b''(0) < \sigma^2$ then
\begin{align}
 E_{F_0} \left( \tilde{p}^L(S \mid y) \right) \leq
\frac{(|\tgamma^*| + 1) (\bar{J}-|\tgamma^*|) (\log(n g_L^{q/2} ) + [c+1]\log(p) + \epsilon) }{p^c (n g_L)^{q/2}}.
\nonumber
\end{align}
Further, if $\phi/b''(0) < \sigma^2$, then
\begin{align}
 E_{F_0} \left( \tilde{p}^L(S \mid y) \right) \leq
\frac{(|\tgamma^*|+1)(\bar{J} - |\tgamma^*|) [ \log \left((n g_L)^{q/2} \right) + (c+1)\log(p) + \epsilon]}{p^{a(c+1)-1} (ng_L)^{aq/2}},
\nonumber
\end{align}
where $a<1$ is a constant smaller than but arbitrarily close to $\phi/[b''(0)\sigma^2]$.

\item Suppose that (D1), (D2) and (D4) hold. Then there is a finite $n_0$ such that, for all $n \geq n_0$, the posterior probability assigned to $S^c=\{\gamma: \tgamma^* \not\subseteq \gamma\}$ satisfies
\begin{align}
 E_{F_0} \left( \tilde{p}^L(S^c \mid y) \right) \leq
 \frac{(|\tgamma^*|+1) e^{(|\tgamma^*| + 2)\log J}}{[e^{\bar{\lambda}} p^c (ng)^{q/2}]^b}
+ \frac{e^{|\tgamma^*| \log J}}{e^{\underline{\lambda}}} + \frac{e^{|\tgamma^*| \log J}}{e^{(|\tgamma^*| + 1) [\underline{\lambda} - c\log(p) - (q'/2)\log(ng)]}}.
\nonumber
\end{align}

\end{enumerate}

\label{thm:pp_ala_highdim}
\end{thm}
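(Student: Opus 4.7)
The plan is to reduce each of the two posterior probability bounds to the pairwise Bayes factor bounds already established in Theorem \ref{thm:bf_ala_highdim}, and then execute a careful combinatorial summation across all competing models. The starting point is the standard inequality
\begin{align*}
\tilde{p}^L(\gamma \mid y) \leq \left[ 1 + \tilde{B}^L_{\tgamma^* \gamma} \frac{p(\tgamma^*)}{p(\gamma)} \right]^{-1},
\end{align*}
obtained by retaining only the $\gamma$ and $\tgamma^*$ terms in the denominator of the posterior. Summing over $\gamma \in S$ (respectively $\gamma \in S^c$) and taking expectations under $F^*$ reduces both parts of the theorem to summing the bounds supplied by Theorem \ref{thm:bf_ala_highdim}. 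The prior \eqref{eq:prior_models} with $p(|\gamma|) \propto p^{-c|\gamma|}$ yields the explicit ratio $p(\tgamma^*)/p(\gamma) = p^{c(|\gamma| - |\tgamma^*|)} \binom{J}{|\gamma|}/\binom{J}{|\tgamma^*|}$ for $\gamma \in C$ with $|\gamma| \leq \bar{J}$, which will be used throughout.

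For Part (i), I would stratify models $\gamma \supset \tgamma^*$ by the number $k = |\gamma| - |\tgamma^*|$ of added groups. The count of such supersets is at most $\binom{J - |\tgamma^*|}{k}$ (further restricted by $C$ if hierarchical constraints are active), and each has $p_\gamma - p_{\tgamma^*} \geq kq$ where $q = \min\{p_j : \gamma^*_j = 0\}$. Plugging the rate $r_\gamma = (ng_L)^{(p_\gamma - p_{\tgamma^*})/2} p(\tgamma^*)/p(\gamma)$ from Theorem \ref{thm:bf_ala_highdim}(i) and invoking condition \eqref{eq:cond_d4_i}, the inner sum over $k$ is geometric and dominated by the $k=1$ term, which supplies the $(\bar{J} - |\tgamma^*|)$ multiplicity and the leading $1/[p^c (ng_L)^{q/2}]$ decay; the logarithmic factor $\log(ng_L^{q/2}) + (c+1)\log(p) + \epsilon$ comes from the $\max\{\ldots, \log(r_\gamma)\}$ term in Theorem \ref{thm:bf_ala_highdim}(i). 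The $(|\tgamma^*|+1)$ pre-factor absorbs the geometric-series remainder. The over-dispersed case $\phi/b''(0) \leq \sigma^2$ is handled identically, replacing $r_\gamma$ by $r_\gamma^{a/(1+\epsilon\sqrt{2}(1+\sqrt{2}))}$ with $a = \phi/[b''(0)\sigma^2]$, which propagates to the exponent $a(c+1)-1$ in the denominator.

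For Part (ii), I would split $S^c$ into $S^c_{>} = \{\gamma \in S^c : |\gamma| > |\tgamma^*|\}$ and $S^c_{\leq} = \{\gamma \in S^c : |\gamma| \leq |\tgamma^*|\}$. For $\gamma \in S^c_{>}$, Theorem \ref{thm:bf_ala_highdim}(ii) gives an $e^{-b\lambda_\gamma/[2\max\{\phi/b''(0),\sigma^2\}\log\lambda_\gamma]}$ decay; bounding $\lambda_\gamma \geq \bar{\lambda}$ uniformly and counting via $\binom{J}{|\gamma|} \leq e^{|\gamma|\log J}$ produces the first term of the bound, with condition \eqref{eq:cond_d4_ii} guaranteeing that $\bar{\lambda}$ dominates the combinatorial $(|\tgamma^*|+2)\log J$ growth. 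For $\gamma \in S^c_{\leq}$, the factor $(ng_L)^{(p_\gamma - p_{\tgamma^*})/2}$ is inverted (since $p_\gamma < p_{\tgamma^*}$) and must be absorbed into the signal via \eqref{eq:cond_d4_iii}; a sub-case analysis depending on whether $\gamma$ misses a single group or several groups of $\tgamma^*$ produces the two remaining terms, with the $(|\tgamma^*|+1)$ factor in the exponent of the third term reflecting the worst-case deficit $p_{\tgamma^*} - p_\gamma \leq (|\tgamma^*|+1)q'$.

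The main obstacle is the bookkeeping in Part (ii): one must uniformly control the competing exponential, polynomial and combinatorial factors for every $\gamma \in S^c$ while keeping the exponent $b$ from Theorem \ref{thm:bf_ala_highdim}(ii) arbitrarily close to $1$. In particular, verifying that Condition (D4) is the \emph{correct} signal-strength scaling, so that $\bar{\lambda}$ and $\underline{\lambda}$ respectively dominate the model-complexity terms $(|\tgamma^*|+2)\log J$ and $(q'/2)\log(ng_L) + c\log p + (|\tgamma^*|+1)\log J$, is the delicate step. Once these dominations are in place the remaining summations are geometric and telescope to the stated polynomial/exponential bounds; the $\epsilon$ appearing in Part (i) captures the small slack from using $r_\gamma^{a/(1+\epsilon\sqrt{2}(1+\sqrt{2}))}$ in the maximum.
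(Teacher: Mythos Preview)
Your proposal is correct and follows the paper's own approach: bound each $\tilde p(\gamma\mid y)$ by the pairwise quantity from Theorem~\ref{thm:bf_ala_highdim}, stratify by model size, and sum. Two small refinements are worth flagging. In Part~(i) the sum over $k=|\gamma|-|\tgamma^*|$ is not purely geometric because the prior ratio contributes a factor $\binom{|\gamma|}{|\tgamma^*|}$; the paper handles this via the binomial generating function $\sum_{l>k}\binom{l}{k}b^l=b^k[(1-b)^{-(k+1)}-1]$ and then uses $(1-b)^{-(k+1)}-1\asymp (k+1)b$ for small $b$, which is where the $(|\tgamma^*|+1)$ prefactor actually comes from. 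In Part~(ii) the split inside $S^c_{\leq}$ is by \emph{model size} ($|\gamma|=|\tgamma^*|$ versus $|\gamma|<|\tgamma^*|$), not by how many groups of $\tgamma^*$ are missed: the second term in the bound collects all same-size non-nested models (at most $\binom{J}{|\tgamma^*|}\leq J^{|\tgamma^*|}$ of them, each contributing $e^{-\underline\lambda}$), while the third term arises from a geometric sum over $l=0,\dots,|\tgamma^*|-1$ whose bound $[e^{-\underline\lambda}p^{bc}(ng)^{q'/2}]^{|\tgamma^*|+1}$ is the source of the $(|\tgamma^*|+1)$ in the exponent.
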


The three terms in the bound for $ E_{F_0} \left( \tilde{p}^L(S^c \mid y) \right)$ correspond to the posterior probability assigned models with $|\gamma|>|\tgamma^*|$, $|\gamma|=|\tgamma^*|$ and $|\gamma|<|\tgamma^*|$, respectively.
That is, 
Theorem \ref{thm:pp_ala_highdim} does not only characterize the posterior probability $\tilde{p}(\tgamma^* \mid y)$ assigned to the ALA-optimal model $\tgamma^*$, but also to other interesting model space subsets: those adding spurious parameters to $\tgamma^*$, and those missing parameters with either smaller/larger size than $|\tgamma^*|$.
These rates reflect that sparse priors, for example Complexity priors ($c>0$) or diffuse priors ($g_L$ grows with $n$), are faster at discarding models larger than $|\tgamma^*|$. The trade-off is that they attain slower rates for models of size $|\gamma|<|\tgamma^*|$, that is they have lower statistical power to discard small models missing truly active parameters.

\section{Results}\label{sec:results}

We illustrate the performance of ALA in terms of numerical accuracy, computation time and quality of the model selection inference in simulated and empirical data.
Section \ref{ssec:binary_example} shows the computational scalability in logistic and Poisson regression, as either $n$ or $p$ grow.
 We consider the default ALA at $\beta_0=0$ and refined versions where $\beta_{\gamma 0}$ is obtained by 1 and 2 Newton-Raphson iterations starting at zero, respectively. We also consider the combined use of ALA with importance sampling to obtain samples from the exact posterior. 
Section \ref{ssec:sim_gaussian} studies the accuracy of ALA under the non-local gMOM prior in \eqref{eq:ala_nlp}.
For this prior, ALA are faster and often more precise than LA, particularly for models that include spurious covariates. 
We also compare the gMOM ALA model selection performance in a non-linear regression example to exact gZellner calculations, the group LASSO \citep{bakin:1999}, group SCAD \citep{fan:2001} and group MCP \citep{zhang:2010}.
In Section \ref{ssec:salary_data} we analyze a poverty dataset that has a binary outcome and large $n$ and $p$. 
Finally, Section \ref{ssec:survival_example} shows survival examples where the likelihood lies outside the exponential family, but is nevertheless log-concave, making it amenable to the ALA.

In all examples we used the gMOM and gZellner priors with default $g_N=1$ and $g_L=1$ parameters and the Beta-Binomial prior on models, truncated to models satisfying the hierarchical constraints when required (Section \ref{sec:prior}).
To ensure that run times between LA and ALA are comparable, we implemented both in C++ in R package \texttt{mombf}.
For problems with $>10^6$ models where full enumeration was unfeasible, we used the augmented Gibbs sampling algorithm from \cite{rossell:2019t}. 
 We used the software defaults producing 10,000 full Gibbs scans and no parallel computing, hence our run times are a conservative figure relative to those potentially attainable with more advanced model search strategies.  
Packages \texttt{grplasso} and \texttt{grpreg} \citep{breheny:2015} were used to implement group LASSO, group SCAD and group MCP. 


\subsection{Simulations for logistic and Poisson regression}
\label{ssec:binary_example}

\begin{figure}
\begin{center}
\begin{tabular}{cc}
$p=10$, $n \in \{100,500,1000,5000\}$ &
$n=500$, $p \in \{5,10,25,50\}$ \\
\includegraphics[width=0.5\textwidth]{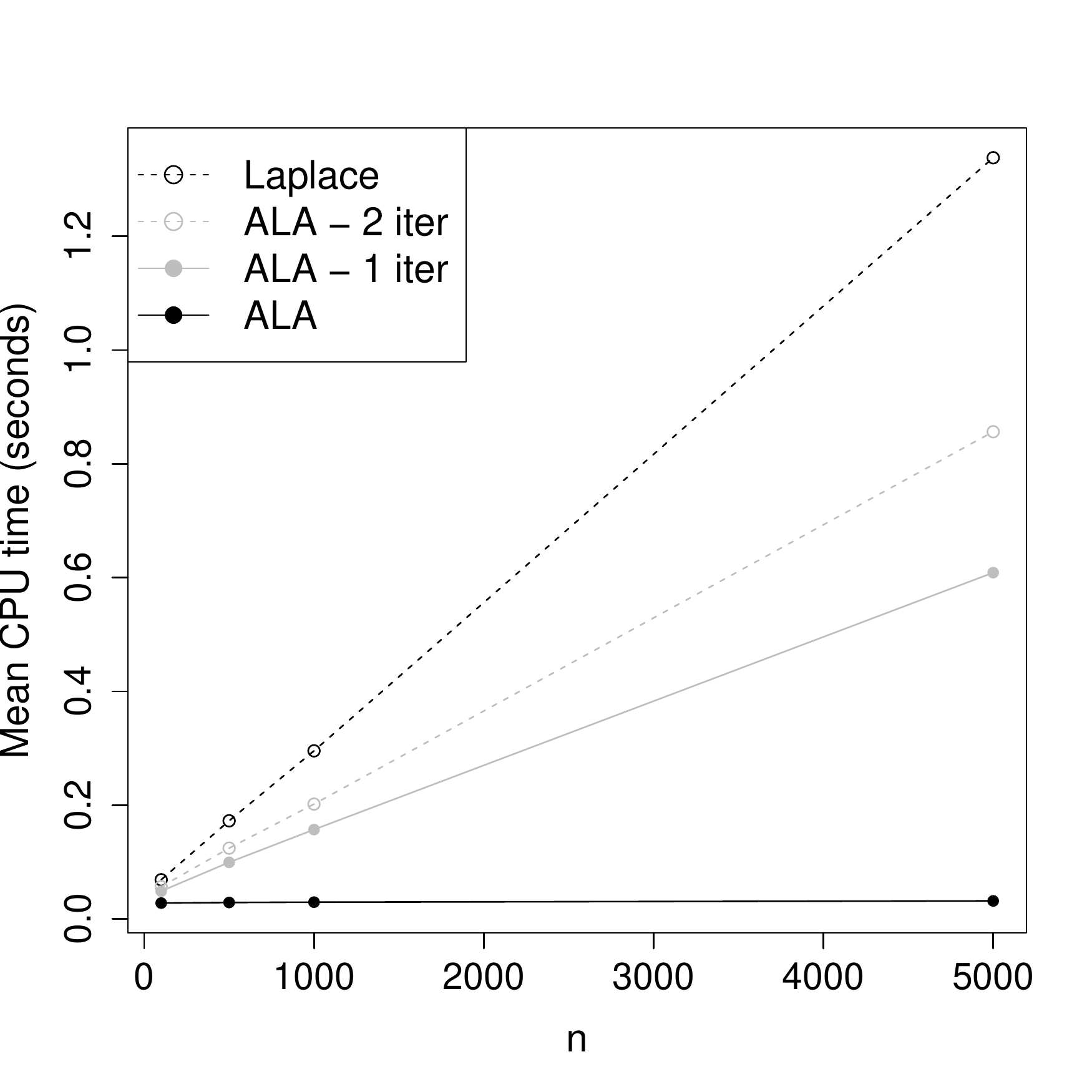} &
\includegraphics[width=0.5\textwidth]{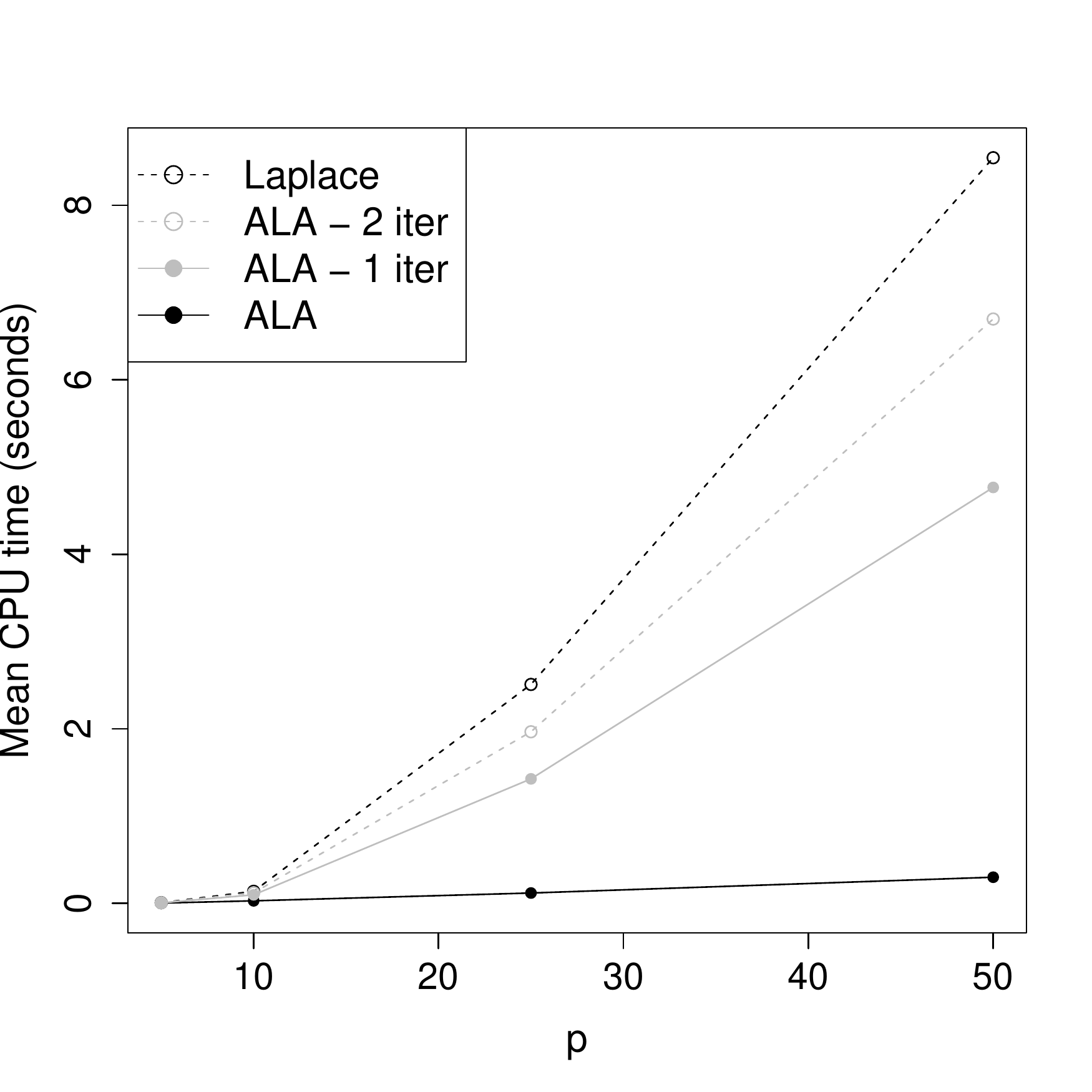} \\
\includegraphics[width=0.5\textwidth]{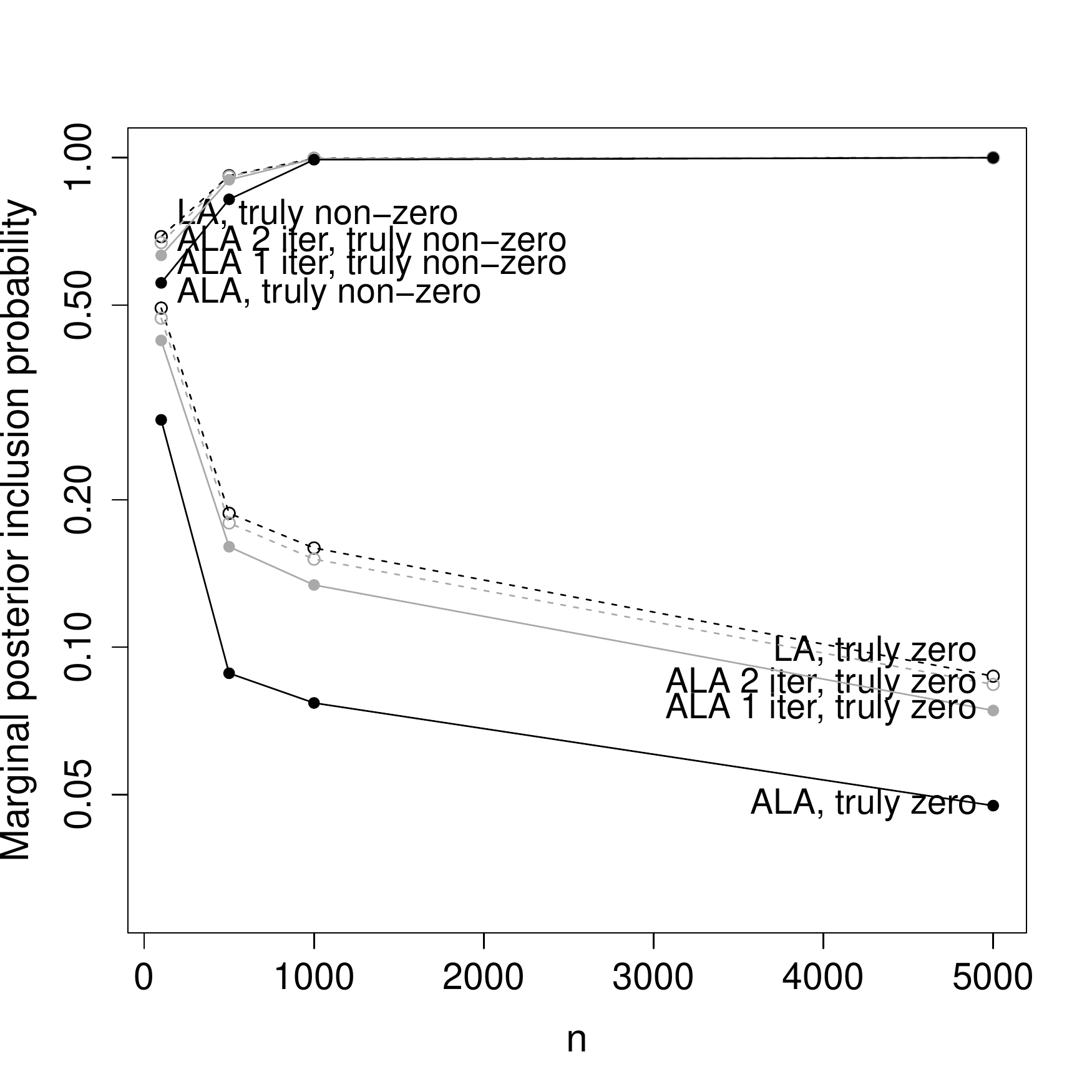} & 
\includegraphics[width=0.5\textwidth]{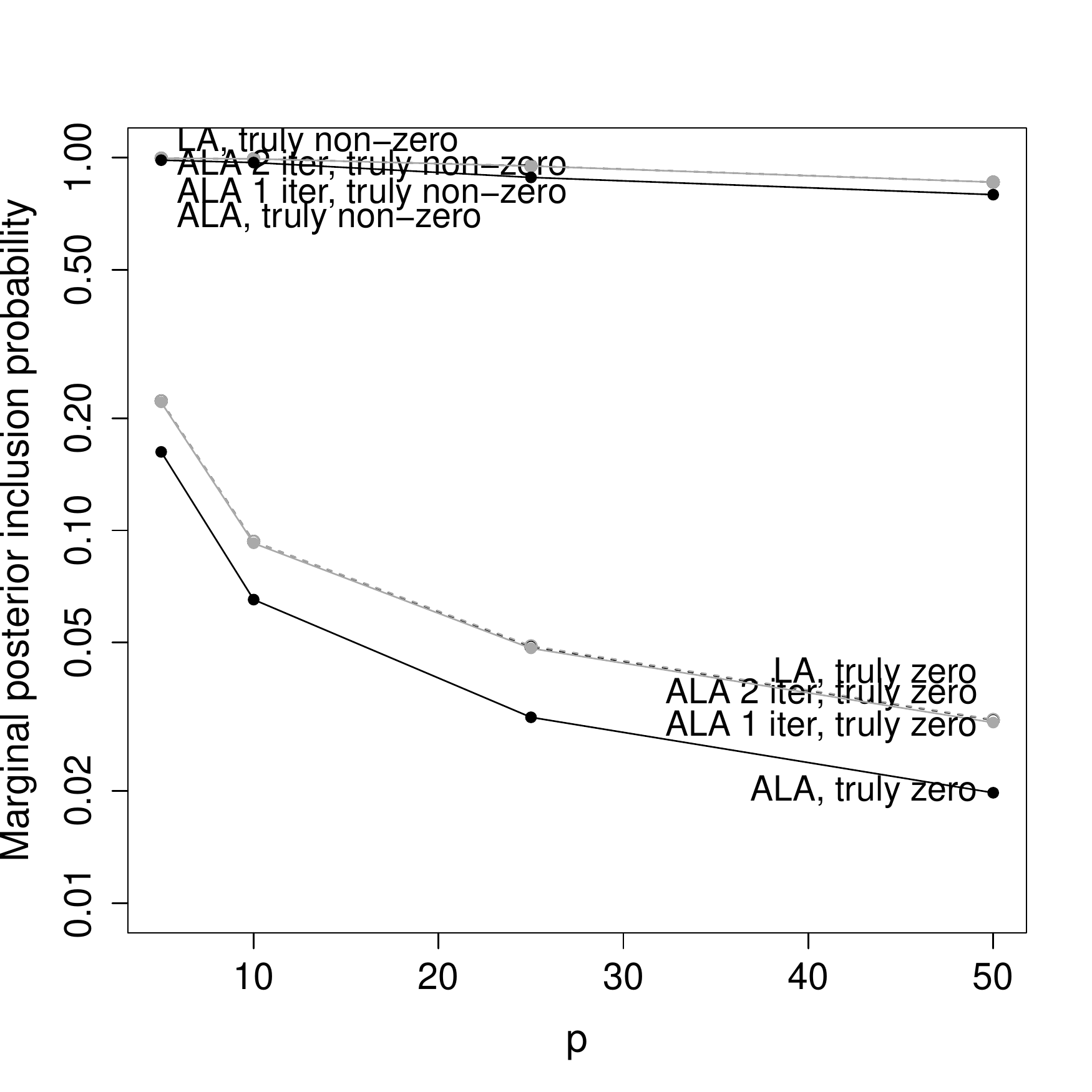} \\
\end{tabular}
\end{center}
\caption{Logistic regression simulation. 
Top: average run time (seconds) in a single-core i7 processor running Ubuntu 20.04.
Bottom: average posterior inclusion probabilities for truly active and inactive variables
}
\label{fig:logistic_sims}
\end{figure}

We considered simulated examples where the data are truly generated from logistic and Poisson models with linear predictor
$\beta^*=(0,\ldots,0,0.5,1)$ and $z_i \sim \mathcal{N}(0,\Sigma)$, $\Sigma_{ii}=1$, $\Sigma_{ij}=0.5$,
and we set the group Zellner prior in \eqref{eq:prior_parameters}. 
We consider a first setting with fixed $p=10$ and $n \in \{100,500,1000,5000\}$, and a second setting with fixed $n=500$ and $p \in \{5,10,25,50\}$.

Figure \ref{fig:logistic_sims} summarizes the logistic regression results, and Figure \ref{fig:simpoisson} those for Poisson regression.
ALA at $\beta_0=0$ significantly reduced run times for larger $n$ or $p$.
In terms of the resulting inference, ALA and LA attained consistency as $n$ grows (bottom left) and discriminateed truly active versus inactive variables, even for larger $p$ (bottom right). 
Figure \ref{fig:simpoisson} also shows that ALA Bayes factors that do not incorporate the over-dispersion curvature adjustment in \eqref{eq:bfala_adj_expfamily} led to assigning significantly higher inclusion probabilities to truly spurious parameters, even for fairly large $n$.

We also applied ALA with $\beta_{\gamma 0}$ given by 1 and 2 Newton-Raphson iterations from zero (Figure \ref{fig:logistic_sims}).
This refinement gave a closer approximation to the LA posterior, at a non-negligible computational cost, particularly in scalability as $n$ grows.

Finally, we studied the use of ALA as a tool to identify promising models that can be subsequently refined with exact methods. Specifically, we used importance sampling to re-weight models sampled from the ALA posterior. Section \ref{ssec:ala_importancesampling} offers a full description. Briefly, from our theory,  the ALA and LA posteriors in general concentrate on two different models $\tilde{\gamma}^*$ and $\gamma^*$ respectively, resulting in degenerate importance weights as $n \rightarrow \infty$.
However, in practice for finite $n$ there are situations where importance sampling is effective; see the logistic regression example in Section \ref{ssec:ala_importancesampling}. In other cases, such as the Poisson example in Section \ref{ssec:ala_importancesampling}, ALA is useful to screen out certain truly inactive covariates, but cannot be directly combined with importance sampling.
Recall that Theorem \ref{thm:sparse_linproj} gives conditions where ALA asymptotically recovers or screens out the correct parameters. More generally, combining ALA with exact strategies is an interesting avenue for future research that deserves a separate treatment elsewhere.


\subsection{Simulations under Gaussian outcomes}
\label{ssec:sim_gaussian}

Even for Gaussian outcomes the marginal likelihood under the gMOM prior requires a number of operations growing exponentially with model size \citep{kan:2008}. 
Section \ref{ssec:alavsla_nlps} illustrates that ALA not only provides faster integrated likelihoods than the LA, but that it can also be more precise. Section \ref{ssec:linreg} compares ALA-based gMOM  model selection versus exact calculations under the group-Zellner prior and three penalized likelihood methods.

\subsubsection{Numerical accuracy under non-local priors}\label{ssec:alavsla_nlps}

\begin{figure}
\begin{center}
\begin{tabular}{cc}
\includegraphics[width=0.5\textwidth]{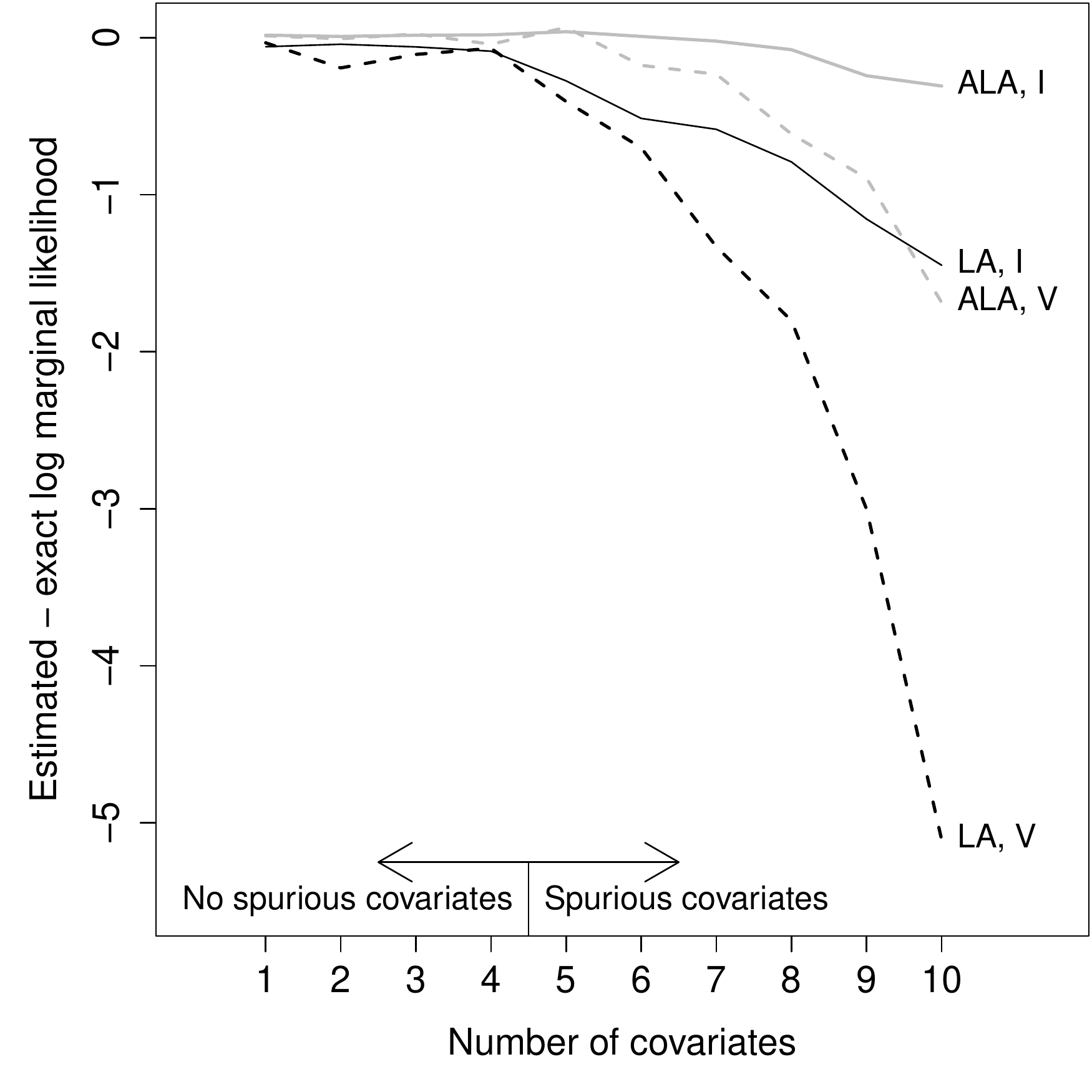} &
\includegraphics[width=0.5\textwidth]{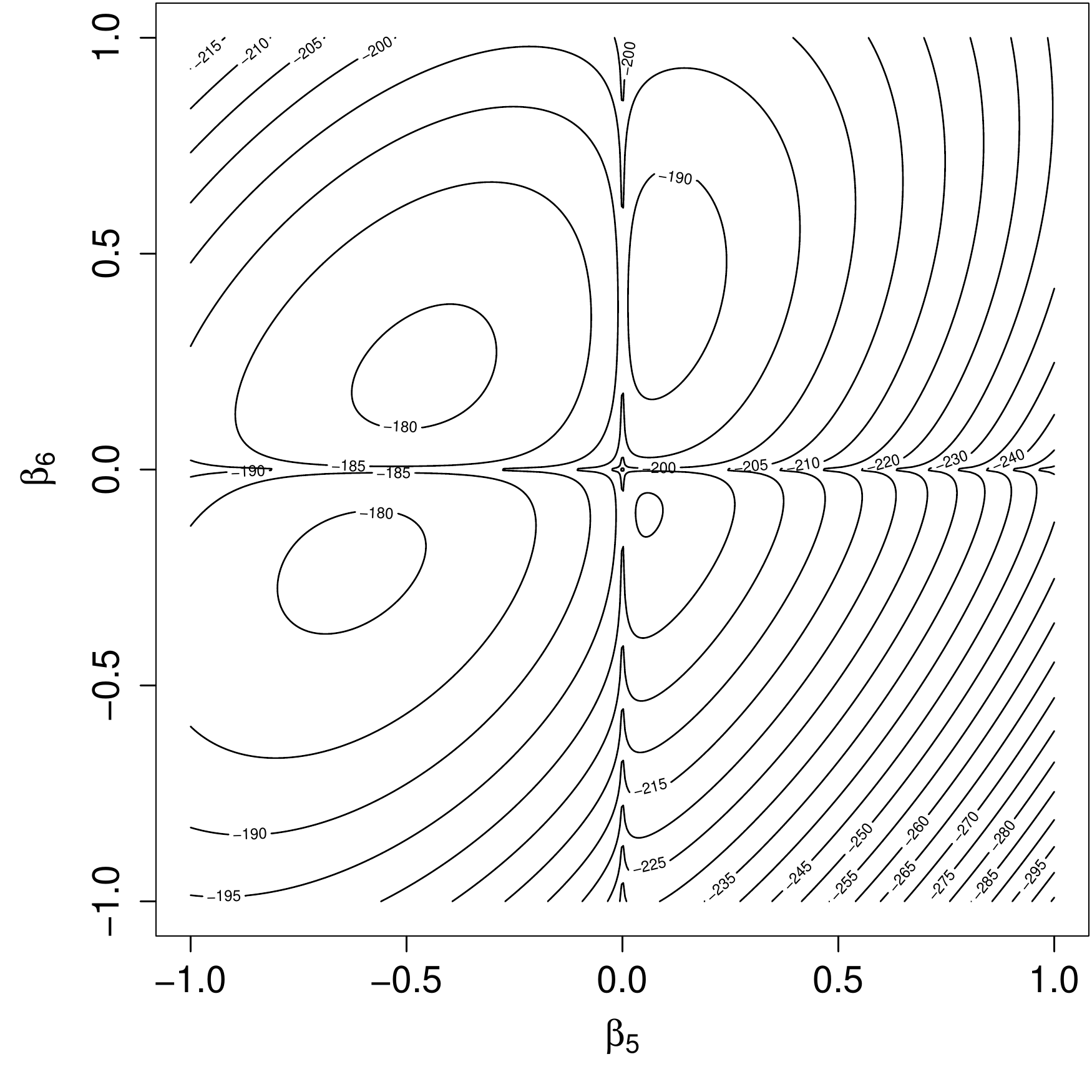}
\end{tabular}
\end{center}
\caption{Simulated linear regression, gMOM prior ($n=50$, $p=10$, $\bbeta=(0.4,0.6,1.2,0.8,0,\ldots,0)$, $\phi=1$).
Left: mean error $\log \tilde{p}^N(y \mid \gamma) - \log p^N(y \mid \gamma)$ for $x_i \sim \mathcal{N}(0,I)$ and $x_i \sim \mathcal{N}(0,V)$, random non-diagonal $V$.
Right: log-integrand $p(y \mid \beta, \phi) p^N(\beta \mid \gamma)$ contours versus two spurious parameters for a randomly-selected dataset, $x_i \sim \mathcal{N}(0,V)$
}
\label{fig:ala_la_error}
\end{figure}

Consider an example with $p=10$, $n \in \{50,200,500\}$ and truly $y \sim \mathcal{N}(Z \beta^*, I)$,
where $\beta^*= (0.4,0.6,1.2,0.8,0,\ldots,0)$. The rows in $Z$ are independent draws $z_i \sim \mathcal{N}(0,I)$ or, alternatively, $z_i \sim \mathcal{N}(0,V)$ for random non-diagonal $V$.
Specifically, $V$ is the correlation matrix associated to $W^TW$, where $W$ is a $p \times p$ matrix with $w_{ij} \sim \mathcal{N}(0,1)$ independently across $i,j$.
We evaluated the integrated likelihood $p^N(y \mid \gamma)$ under the gMOM prior for a sequence of models including $p_\gamma=1,2,\ldots,10$ covariates.
For $p_\gamma \leq 10$ one can evaluate $p^N(y \mid \gamma)$ exactly, and hence the error when estimating $\hat{p}^N(y \mid \gamma)$.
We report average errors across 100 simulations.

Figure \ref{fig:ala_la_error} summarizes the results for the $n=50$ case.
The left panel shows the mean of $\log( \hat{p}^N(y \mid \gamma) / p^N(y \mid \gamma))$, which quantifies the relative approximation error.
Both LA and ALA provided fairly accurate estimates for models including up to 4 covariates. Note that $\beta^*$ is such that for $p_\gamma \leq 4$ all included covariates are truly active. For models with $>4$ covariates the LA error was significantly higher than for ALA, particularly in the non-diagonal covariance setting. 
The right panel in Figure \ref{fig:ala_la_error} illustrates the difficulty of the integration exercise by plotting the contours of the log-integrand $p(y \mid \beta,\phi=1)p^N(\beta \mid \phi=1,\gamma)$ versus two truly spurious parameters $(\beta_5,\beta_6)$ in a randomly selected dataset. The marked multi-modality, in general, does not disappear even as $n \rightarrow \infty$. 
The results for $n=200$ and $n=500$ were largely analogous, see Figure \ref{fig:ala_la_error_boxplot}. 


\subsubsection{Group constraints in non-linear regression}\label{ssec:linreg}

\begin{figure}
\begin{center}
\begin{tabular}{cc}
Non-linear $(x_{i1},x_{i2})$ ($J=5$, $p=30$) & Non-linear $(x_{i1},x_{i2})$ ($J=25$, $p=150$) \\
\includegraphics[width=0.5\textwidth]{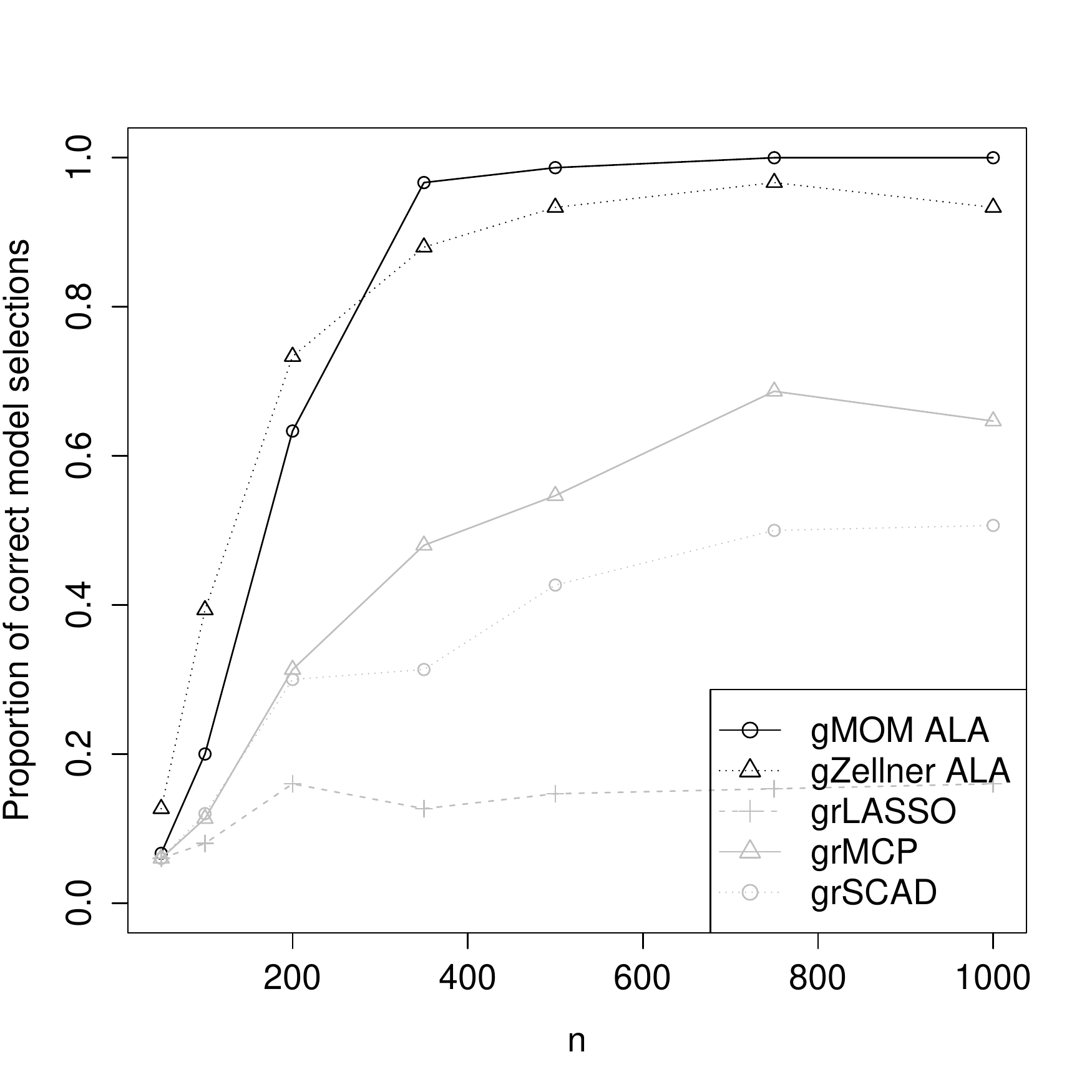} &
\includegraphics[width=0.5\textwidth]{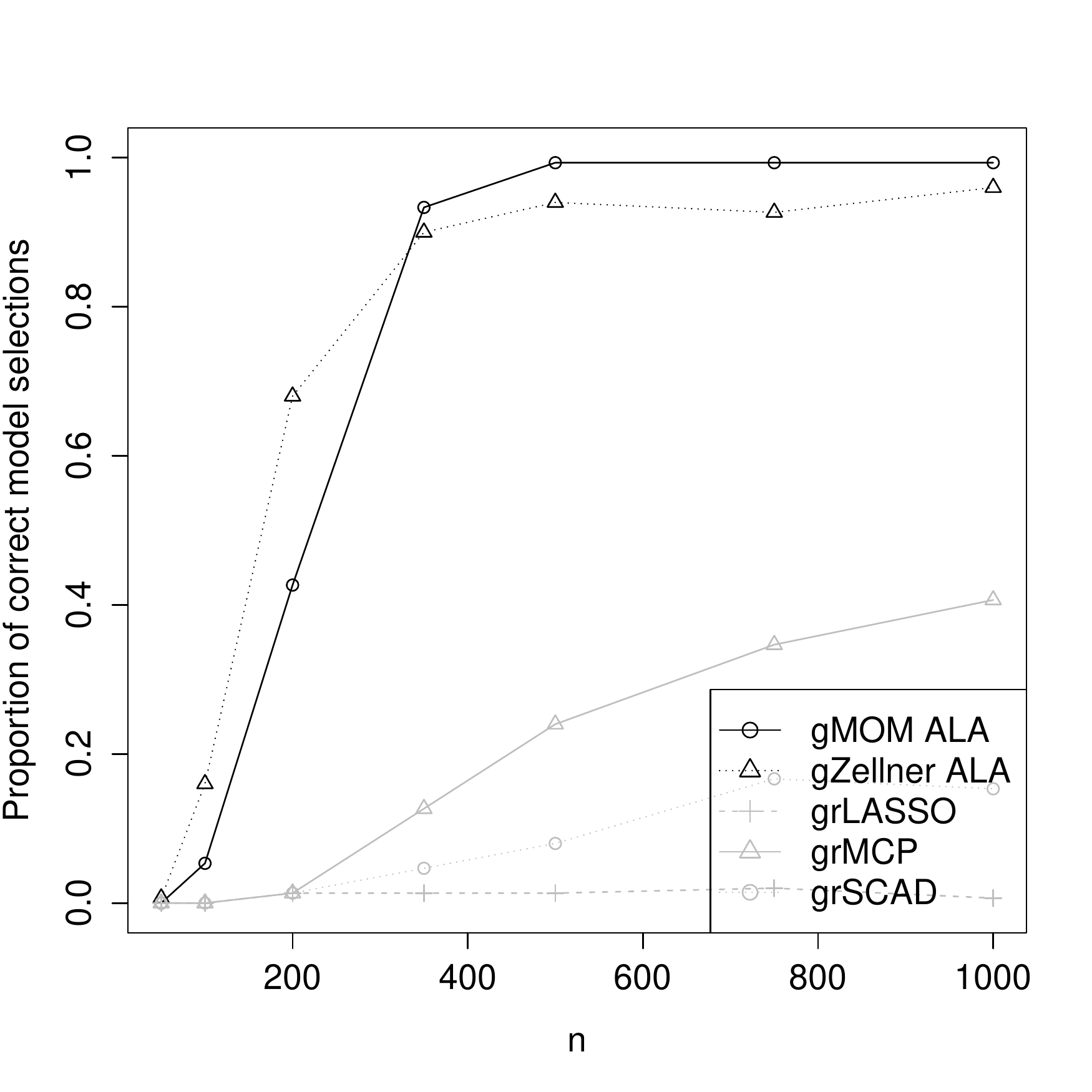}
\end{tabular}
\end{center}
  \caption{Proportion of correct model selections in Gaussian simulations with non-linear effects for two truly active covariates and a total of $J=5$ and $J=25$ covariates ($p=30$ and 150, respectively)}
\label{fig:simlm}
\end{figure}

We present a simulation example where one incorporates group constraints  to model non-linear covariate effects. See Section Section \ref{ssec:suppl_simgaussian} for examples where groups are defined by a categorical covariate. 
We considered the following data-generating truth
\begin{equation*}
  y_i = \sin(-x_{i1} + .1) + \sin(2.5 x_{i2}) + \epsilon_i
\end{equation*}
where $\epsilon_i \sim \mathcal{N}(0,1)$. We considered scenarios where one observes a total of 5 and 50 covariates (including the two truly active covariates), generated from a multivariate Normal with zero mean, unit variances and 0.5 pairwise correlations.

 Suppose that the data analyst poses an additive model that considers non-linear effects but, unaware that the true expectation of $y_i$ depends on $\sin$ functions,  misspecifies their form. Specifically,  the assumed model decomposes covariate effects into a linear plus a deviation-from-linearity component via a 5-dimensional cubic splines, following \cite{rossell:2019t}.
That is, each covariate is coded into the design matrix via one column for its linear effect and a five-variable group capturing deviations from linearity.
Hence the scenarios with 5 and 50 covariates result in $J=10$ and $J=100$ groups (respectively) and in $p=30$ and $p=300$ total parameters (respectively).

Figure \ref{fig:simlm} reports the proportion of correct model selections across 150 simulations. The regularization parameter for grLASSO, grMCP and grSCAD was set via 10-fold cross-validation.
In all settings the proportion of correct model selections were highest for either the exact gZellner calculation (particularly for smaller $n$) or the ALA-based gMOM prior (for larger $n$),  showing that the latter leads to high-quality approximate inference.

%

\subsection{Poverty line}\label{ssec:salary_data}


We studied what factors are associated to individuals working full-time being below the poverty line in the USA.
We used a large dataset from the Current Population Survey \citep{flood:2020} conducted in 2010 and 2019 for single-race individuals aged 18-65 years who were non-military employed for 35-40 hours/week. 

The response is a binary indicator $y_i \in \{0,1\}$ for individual $i$ being below the poverty line. The covariates $x_i$ include gender and hispanic origin indicators, race, marital status, level of education, citizenship status, nativity status, occupation sector, size of the firm employing the individual, the presence of impairment/difficulties, type of employment, moving to another state from within or outside the USA, and the weekly hours worked (35-40). Many of these variables are categorical, see Section \ref{ssec:povertyline_suppl} for a description.
The data has $n=89,755$ individuals.
In a first exercise we considered a logistic regression analysis with only main effects, where $p=60$. Subsequently we considered pairwise interactions between all covariates, then the corresponding design matrix $Z$ has $p=1,469$ columns.

We first discuss the computation times.
In the main effects analysis LA took $>10$ hours to run and ALA took 19.5 seconds. For comparison, GLASSO took 42.7 seconds.
When adding interactions LA took 17.3 days, ALA 5.2 minutes and grLASSO 10.7 minutes.

In the main effects analysis
ALA and LA selected the same model, with virtually identical marginal posterior inclusion probabilities (the correlation between $\hat{p}(\gamma \mid y)$ and $\tilde{p}(\gamma \mid y)$ was $>0.999$).
All main effects had posterior inclusion probability $>0.95$, except for nativity status, with posterior probability $<0.02$.
To help interpret the results, Table \ref{tab:poverty_maineffects_alamodel_mle} provides the estimated coefficients. Briefly, higher poverty odds were estimated for females, hispanics, blacks and native Americans, individuals with difficulties, lower education levels, non-citizens, working in small firms, having moved from outside the US, and working in sectors such as farming or maintenance.
The grLASSO results were similar, except that it selected all main effects, including nativity status (both when setting the penalization parameter via cross-validation or to minimize the BIC).
For comparison, the P-value obtained from a maximum likelihood fit under the full model was 0.1959 for nativity, and $<0.0001$ for all other main effects.

Regarding the analysis with interactions, LA and ALA selected the same 13 main effects (inclusion probability $>0.5$, Table \ref{tab:poverty_selected}).
LA selected 5 interaction terms and ALA selected 8.
Both selected the interaction of gender vs. marital status, education level vs. hispanic origin, and hispanic origin vs. marital status.
LA also selected education vs firm size, whereas ALA selected education vs marital status, and moving state vs hispanic origin, marital status and education level.
The ALA- and LA-estimated marginal posterior probabilities $\tilde{p}(\gamma \mid y)$ and $\hat{p}(\gamma \mid y)$ had 0.827 correlation. 

Table \ref{tab:poverty_alamodel_mle} displays parameter estimates.
We refrain from making any causal interpretation of these findings, but they suggest interesting future research to better understand poverty.
For instance, gender and hispanic origin were associated with poverty, but the differences between hispanic and non-hispanic males was larger (estimated odds-ratio=1.63) than between hispanic and non-hispanic females (odds-ratio= 1.36).
As another example, the odds of poverty were similar for non-hispanic married males and divorced males (odds-ratio=0.92), but married females had significantly lower odds than divorced females (odds-ratio=0.281) and married hispanics had higher odds than divorced hispanics (odds-ratio= 1.46)

For comparison, the grLASSO selected 4 main effects and 16 interaction terms. For only 1 out of the 16 interactions the two corresponding main effects were also selected, 
illustrating the need to explicitly enforce hierarchical restrictions.

\subsection{Cancer survival}\label{ssec:survival_example}

\begin{figure}
\begin{center}
\begin{tabular}{cc}
\includegraphics[width=0.48\textwidth]{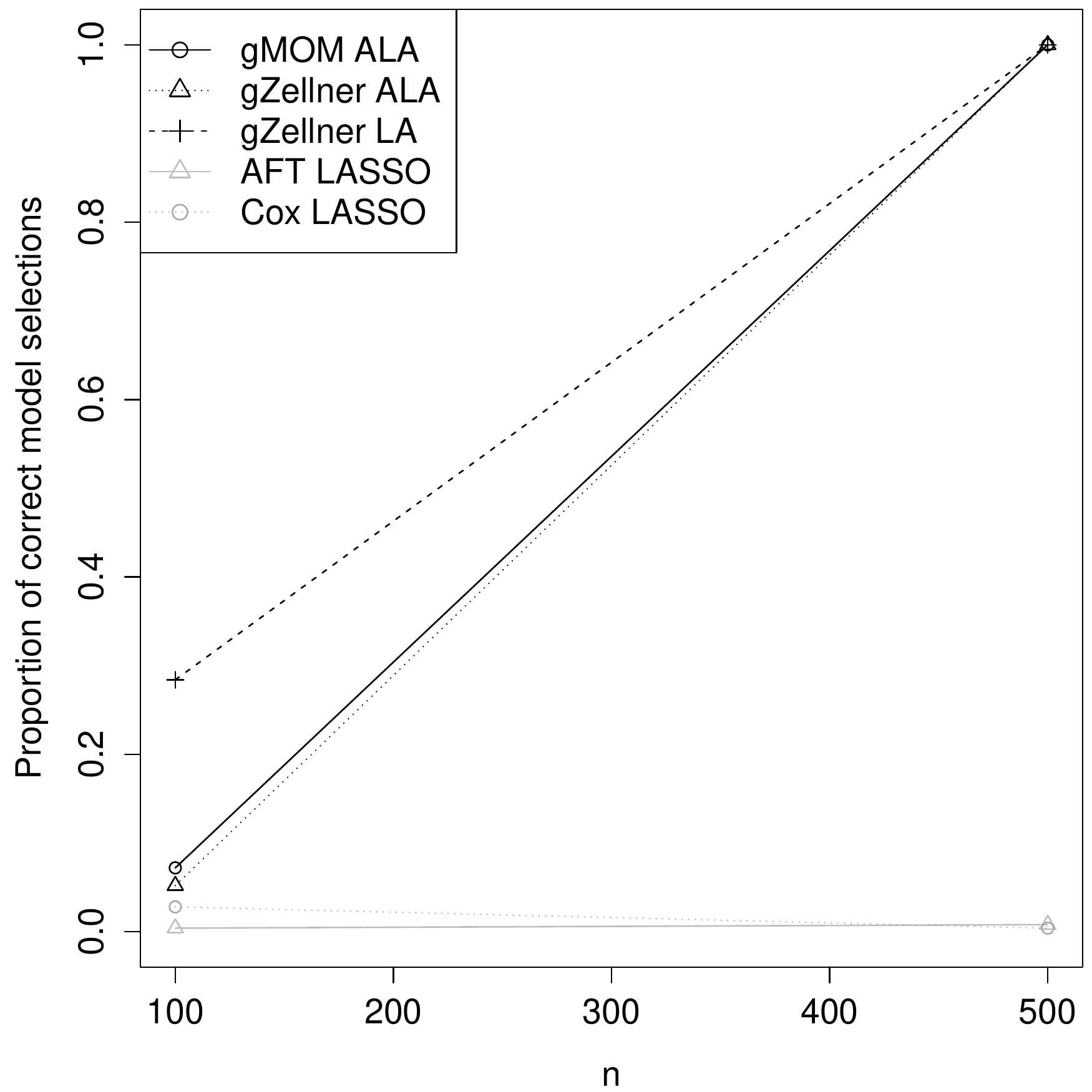} &
\includegraphics[width=0.48\textwidth]{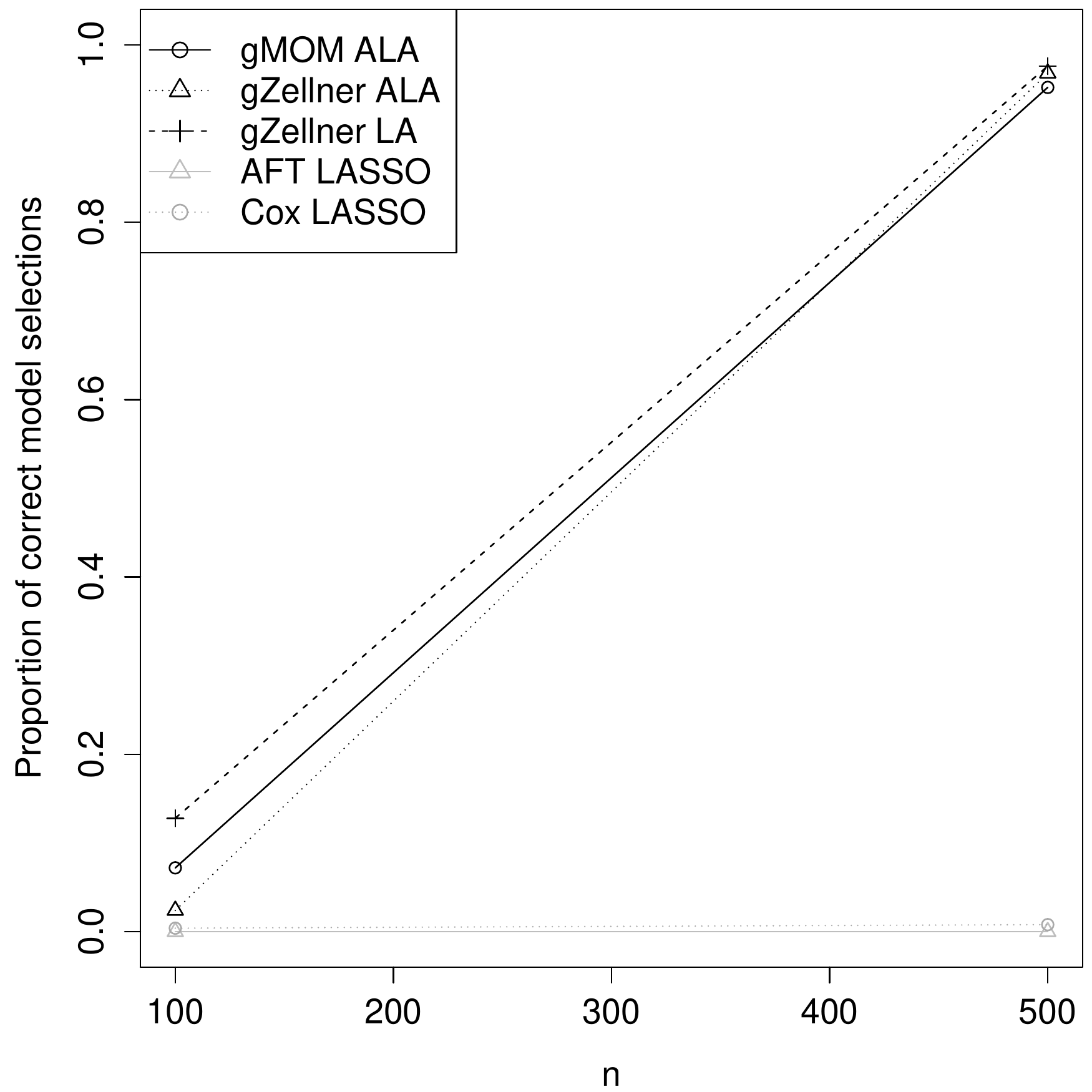}
\end{tabular}
\end{center}
\caption{Non-linear survival regression $(J=100,p=300)$. Proportion of correct model selections under a truly accelerated failure (left) and truly proportional hazards model (right)}
\label{fig:pcorrect_survsim}
\end{figure}

\begin{table}
\begin{center}
\begin{tabular}{|c|cc|} \hline
\multicolumn{3}{|c|}{Poverty data ($n=89,755$)} \\ \hline
              & Main effects ($p=60$) & Interactions ($p=1,469$)\\
gZellner ALA  & 19.5 sec.  & 5.2 min.  \\
gZellner LA   & 10.1 days  & 17.3 days \\
\hline
\multicolumn{3}{|c|}{Simulation under true AFT model ($J=100,p=300$)} \\ \hline
             & $n=100$ &   $n=500$\\
gMOM ALA     & 4.0 sec.  &    7.5 sec.  \\
gZellner ALA & 9.6 sec.  &   13.2 sec.  \\
gZellner LA   & 188.9 sec.&  251.1 sec.\\
\hline
\multicolumn{3}{|c|}{Simulation under true PH model ($J=100,p=300$)} \\ \hline
             & $n=100$ &   $n=500$\\
gMOM ALA     &  2.0 sec.  &    4.1 sec.\\
gZellner ALA &  2.3 sec.  &   10.7 sec.\\
gZellner LA  &  152.4 sec.&   81.1 sec.\\
\hline
\multicolumn{3}{|c|}{Colon cancer data ($n=260$, $p=175$)} \\ \hline
             & Uniform $p(\gamma)$  & BetaBin $p(\gamma)$ \\
gMOM ALA     & 3.3min  & 39.3 sec.\\
gZellner ALA & 11.1min & 48.8 sec. \\
gZellner LA  & 55.0h   & 15.1min \\
\hline
\end{tabular}
\end{center}
\caption{Mean run times for poverty data, survival analysis in truly AFT simulations, truly PH simulations, and colon cancer data.
Laptop with Ubuntu 18.04, Intel i7 1.8GHz processor, 15.4Gb RAM, 1 core}
\label{tab:cputime_survival}
\end{table}

We illustrate the ALA in models outside the exponential family via the non-linear additive survival model from \cite{rossell:2019t}, a spline-based log-normal accelerated failure time (AFT) model.
If one re-parameterizes the model by dividing the regression parameters by the error standard deviation, then the log-likelihood is concave 
\citep{silvapulle:1986}, hence amenable to be analyzed via ALA.
We present a simulation study, and analyze a colon cancer dataset in Section \ref{ssec:coloncancer_suppl}.

We compared the ALA results obtained under gMOM and gZellner priors to the LA under the gZellner prior,
the semi-parametric AFT model with LASSO penalties of \cite{rahaman:2019} (AFT-LASSO),
and to the Cox model with LASSO penalties of \cite{simon:2011} (Cox-LASSO).
For AFT-LASSO and Cox-LASSO we used functions \texttt{AEnet.aft} and \texttt{glmnet} in R packages  \texttt{AdapEnetClass} and  \texttt{glmnet}, and we set the penalization parameter via 10-fold cross-validation.


The simulation study extends that in \cite{rossell:2019t} (Section 6.2).
Briefly, there are two covariates that truly have non-linear effects and 48 spurious covariates, generated from a zero-mean multivariate normal with unit variances and 0.5 correlation between all covariate pairs.
The assumed model poses $E(y_i \mid x_i)= \sum_{j=1}^{50} x_{ij} \beta_j + \sum_{j=51}^{100} z_{ij}^T \beta_j$, where $x_{ij} \in \mathbb{R}$ and $z_{ij} \in \mathbb{R}^5$ captures deviations from linearity by projecting $x_{ij}$ onto a spline basis and orthogonalizing the result to $x_{ij}$. We used a five-dimensional spline basis given that \cite{rossell:2019t} found that larger dimensions gave very similar results.
In our notation, there are $J=100$ groups and $p= 50 + 250= 300$ parameters.

We considered two data-generating truths. In Scenario 1 the truth is an accelerated failure time model and in Scenario 2 a proportional hazards model. Both scenarios are challenging in that there is a significant amount of censored data, which effectively reduces the information in the likelihood, and the dimension is moderately high.
\begin{itemize}
\item Scenario 1. Log-survival times are $x_{i1} + 0.5 \log(|x_{i2}|) + \epsilon_i$, where $\epsilon_i \sim N(0,\sigma=0.5)$.
All log-censoring times are $0.5$, giving an average of 69\% censored individuals.

\item Scenario 2. 
Let $h_0$ be the log-Normal(0,0.5) baseline hazard. Log-survival times arise from a proportional hazards structure $h(t)= h_0(t) \exp \left\{ 3x_{i1}/4 - 5\log(|x_{i2}|)/4 \right\}$.
All log-censoring times are $0.55$, giving an average of 68\% censored individuals.
\end{itemize}

In Scenario 1 the assumed model is well-specified, except for approximating the non-linear truth by a finite spline basis, whereas in Scenario 2 the whole hazard function is misspecified.
Figure \ref{fig:pcorrect_survsim} shows the proportion of correct model selections, across 250 independent simulations for $n \in \{100,500\}$.
Generally, ALA showed a competitive performance. The LA selected the data-generating model slightly more frequently under $n=100$ and the well-specified Scenario 1 (upon inspection this was due to higher power to include $x_{i1}$), whereas it performed similar to ALA for larger $n$ and in Scenario 2.
For $n=500$ the proportions of correct model selections for ALA were near 1. Both ALA and LA outperformed significantly AFT-LASSO and Cox-LASSO.

The ALA provided significant computational gains over LA. In most scenarios the computation time was reduced by a factor ranging from 20-70, see Table \ref{tab:cputime_survival}.

\section{Discussion}
\label{sec:discussion}

Our main contribution was proposing an approximate inference tool that can be particularly helpful in non-Gaussian regression, and in Gaussian outcomes where one wishes to use non-local priors.
 The proposal focuses on scoring models quickly for structural learning problems, and can be combined with parallel computing strategies to accelerate model search. 
The posterior probability rates require the same type of technical conditions than exact inference, and attain essentially the same functional rates in $n$.
 Importantly, ALA and exact calculations asymptotically recover, in general, different models. However we characterized situations where the ALA recovers the correct model, even under misspecification, and showed numerous examples where ALA results agree with exact inference. 
We also illustrated a significant applied potential in reducing computation times, enabling the use of Bayesian model selection to settings where it was previously impractical.

We focused our theory and examples on a simple strategy where Taylor expansions are taken around an $\eta_{\gamma 0}$ with zero regression coefficients.
 We also illustrated that $\eta_{\gamma 0}$ given by 1-2 Newton-Raphson steps typically improves precision, though computations scale more poorly with $(n,p)$. 
In future work it may be interesting to study alternative choices of $\eta_{\gamma 0}$ that balance computational cost and accuracy and/or model selection properties. For instance, under concave log-likelihoods and minimal regularity conditions, it is possible to show that if $\eta_{\gamma 0}= \eta_\gamma^* + O_p(1/\sqrt{n})$, then the ALA provides a consistent estimator of the exact integrated likelihood.

The ALA can in principle be applied to any model, but one expects it to work best when the log-likelihood is concave or at least locally concave around $\eta_{\gamma 0}$, e.g. models satisfying local asymptotic normality.
Yet another avenue is to apply ALA to conditionally concave settings in the spirit of INLA for latent Gaussian models \citep{rue:2009}.

From a foundational Bayesian point-of-view, a limitation of ALA is it they only provides approximate inference.
It would be interesting to explore strategies to combine the ALA with exact computation, e.g. to build proposal distributions for MCMC algorithms, or sequential Monte Carlo strategies  as in \cite{schafer:2013}.
Such extensions require care as both our theory and examples show that a naive combination of ALA and importance sampling can lead to degenerate weights. 
In summary, the current work provides a basis which we hope may lead the ground for multiple interesting extensions.

\beginsupplement

\section*{Supplementary material}

Section \ref{sec:derivation_ala} derives formulae for the approximate Laplace approximation (ALA) to an integral, first for a general likelihood and subsequently for exponential family models with known and unknown dispersion parameter $\phi$.
Section \ref{sec:logl_popularmodels} provides the log-likelihood, gradient and derivatives for common exponential family models such as logistic and Poisson regression.
Section \ref{sec:default_priordispersion} outlines the derivation of default values for the prior dispersion parameters $(g_L,g_N)$ in the Normal- and MOM-based priors.

Sections \ref{sec:postmean_mompenalty}-\ref{sec:proof_bf_ala_highdim} contain the proofs and auxiliary results for 
Lemma \ref{lem:mean_pmompenalty}, Theorem \ref{thm:bf_ala}, Corollary \ref{cor:bf_ala_unknownphi} and Theorem \ref{thm:bf_ala_highdim}.
The auxiliary results in Section \ref{sec:auxiliary_bfala} are either reminders or simple extensions of existing results.
Those in Section \ref{sec:auxiliary_bfala_highdim} are novel results that bound the integral of tail probabilities related to sub-Gaussian random variables, a result that has some independent interest, e.g. to characterize the $L_1$ convergence rate of Bayes factors, which in turn bound the frequentist probability of correct model selection, type I and II errors (see \cite{rossell:2018}, Proposition 1, Corollary 1 and Lemma 1).

Finally, Section \ref{sec:suppl_results} contains supplementary empirical results to complement our examples.

\section{Derivation of ALA}
\label{sec:derivation_ala}

For simplicity we drop the $\gamma$ subindex and denote by $\eta \in \mathbb{R}^p$ the vector of $p$ unknown model parameters.
Let $\eta_0$ be an initial value for $\eta$, $g_0$ and $H_0$ be the gradient and hessian of the negative log-likelihood $-\log p(y \mid \eta)$ evaluated at $\eta_0$.
The goal is to approximate
$$
 p(y)= \int p(y \mid \eta) p(\eta) d\eta.
$$

\subsection{General case}
\label{ssec:derivation_ala_general}

A second order Taylor expansion of $\log p(y \mid \eta)$ at $\eta=\eta_0$ gives
\begin{align}
  &p(y) \approx
  p(y \mid \eta_0)   \int e^{-(\eta - \eta_0)^T g_0 - \frac{1}{2} (\eta - \eta_0)^T H_0 (\eta - \eta_0)} p(\eta)  d\eta
  \nonumber \\
  &=p(y \mid \eta_0) e^{\eta_0^T g_0 - \frac{1}{2} \eta_0 H_0 \eta_0}
  \int e^{-\frac{1}{2} [\eta^T H_0 \eta - 2 (H_0 \eta_0 - g_0)^T \eta]} p(\eta) d\eta
   \nonumber \\
  &=p(y \mid \eta_0) e^{\eta_0^T g_0 - \frac{1}{2} \eta_0 H_0 \eta_0 + \frac{1}{2} \tilde{\eta}^T H_0 \tilde{\eta}}
    \int e^{-\frac{1}{2} (\eta - \tilde{\eta})^T H_0 (\eta - \tilde{\eta})} p(\eta) d\eta
\label{eq:ala_step1}
\end{align}
where $\tilde{\eta}= -H_0^{-1} (-H_0 \eta_0 + g_0)= \eta_0 - H_0^{-1} g_0$.
If $p(\eta)$ is a Normal density then the integral on the right-hand side has closed-form.
More generally, one may use the Laplace approximation
\begin{align}
\int e^{-\frac{1}{2} (\eta - \tilde{\eta})^T H_0 (\eta - \tilde{\eta})} p(\eta) d\eta \approx p(\tilde{\eta}) (2 \pi)^{p/2} |H_0|^{-\frac{1}{2}}.
\label{eq:approx_integral_normallogl}
\end{align}
As a technical remark, the relative error of the latter approximation can be shown to vanish under minimal conditions.
Specifically, one may apply Proposition 8 in \cite{rossell:2019t}. Note that $(\eta - \tilde{\eta})^T H_0 (\eta - \tilde{\eta})$ is convex in $\eta$ and $p(\eta) >0$ for all $\eta$, 
and assume the minimal condition that $H_0 \stackrel{P}{\longrightarrow} H^*$ for some positive-definite matrix $H^*$, when data arise from some true $F^*$. Then
$$
\frac{\int e^{-\frac{1}{2} (\eta - \tilde{\eta})^T H_0 (\eta - \tilde{\eta})} p(\eta) d\eta}
{p(\tilde{\eta}) (2 \pi)^{p/2} |H_0|^{-\frac{1}{2}}}
\stackrel{P}{\longrightarrow} 1
$$
as $n \rightarrow \infty$, where $\stackrel{P}{\longrightarrow}$ denotes convergence in probability under $F^*$.

Plugging \eqref{eq:approx_integral_normallogl} into the right-hand side of \eqref{eq:ala_step1} gives
\begin{align}
\tilde{p}(y)= \frac{p(y \mid \eta_0) p(\tilde{\eta}) (2 \pi)^{p/2}}{|H_0|^{\frac{1}{2}}}
  e^{\eta_0^T g_0 - \frac{1}{2} \eta_0 H_0 \eta_0 + \frac{1}{2} \tilde{\eta}^T H_0 \tilde{\eta}}.
\nonumber
\end{align}
Since $\tilde{\eta}= \eta_0 - H_0^{-1} g_0$, it follows that
$
  \tilde{\eta}^T H_0 \tilde{\eta}=
  \eta_0^T H_0 \eta_0 - 2 g_0^T \eta_0 + g_0^T H_0^{-1} g_0,
$
therefore
\begin{align}
\tilde{p}(y)=  \frac{p(y \mid \eta_0) p(\tilde{\eta}) (2 \pi)^{p/2}}{|H_0|^{\frac{1}{2}}} e^{\frac{1}{2} g_0^T H_0^{-1} g_0},
\label{eq:intlhood_ala_asymp}
\end{align}
as we wished to prove.

It is possible to derive an alternative ALA that is similar expression to~\eqref{eq:intlhood_ala_asymp}, by performing a Taylor approximation on the log-joint
$\log p(y \mid \eta) + \log p(\eta)$, rather than the log-likelihood.
Then $(g_0,H_0)$ are the gradient and hessian of the log-joint $\log p(y \mid \eta) + \log p(\eta)$ at $\eta_0$ and in the right-hand side of \eqref{eq:ala_step1} we would use that
$$
\int e^{-\frac{1}{2} (\eta - \tilde{\eta})^T H_0 (\eta - \tilde{\eta})} d\eta= (2 \pi)^{p/2} |H_0|^{-\frac{1}{2}}
$$
to obtain
\begin{align}
\tilde{p}(y)=  \frac{p(y \mid \eta_0) p(\eta_0) (2 \pi)^{p/2}}{|H_0|^{\frac{1}{2}}} e^{\frac{1}{2} g_0^T H_0^{-1} g_0}.
\nonumber
\end{align}

\subsection{Exponential family case, known $\phi$}
\label{ssec:derivation_ala_knownphi}

We obtain simple expressions for ALA in exponential family densities of the form in \eqref{eq:glm_pdf}, where one takes $\beta_0=0$ and $\phi$ is known.
As before we drop the $\gamma$ subindex, hence $(\beta_\gamma,Z_\gamma,p_\gamma)$ are denoted by $(\beta,Z,p)$, and $p(y \mid \gamma)$ by $p(y)$.
Let $\beta_0$ be an initial value for $\beta$, $g_0$ and $H_0$ be the gradient and hessian of the negative log-likelihood with respect to $\beta$, evaluated at $\beta_0$.
The ALA in \eqref{eq:intlhood_ala_asymp} gives
\begin{align}
\tilde{p}(y)= \frac{p(y \mid \beta_0, \phi) p(\tilde{\beta} \mid \phi) (2 \pi)^{p/2}}{|H_0|^{\frac{1}{2}}}
  e^{\frac{1}{2} g_0^T H_0^{-1} g_0}.
\label{eq:intlhood_ala_knownphi}
\end{align}
If $\beta_0=0$ then
$g_0= -Z^T(y - b(0) \mathbbm{1})/\phi $ and $H_0= (b''(0) / \phi) Z^T Z$, giving
$g_0^T H_0^{-1} g_0= (b''(0)/\phi) \tilde{y}^T Z (Z^TZ)^{-1} Z^T \tilde{y}= (b''(0)/\phi) \tilde{\beta}^T Z^T Z \tilde{\beta}$,
thus
\begin{align}
\tilde{p}(y) &=  \frac{p(y \mid \beta_0, \phi) p(\tilde{\beta} \mid \phi)}{|H_0|^{\frac{1}{2}}}
       \left(\frac{2 \pi \phi}{b''(0)} \right)^{p/2}  e^{\frac{b''(0)}{2 \phi} \tilde{\beta}^T Z^T Z \tilde{\beta}}
       \nonumber \\
  &= e^{ - n b(0)/\phi + \sum_{i=1}^n c(y_i, \phi)}
  p(\tilde{\beta} \mid \phi) \left( \frac{2 \pi \phi}{b''(0)} \right)^{p/2}
|Z^T Z|^{-\frac{1}{2}}  e^{\frac{b''(0)}{2\phi} \tilde{\beta}^T Z^T Z \tilde{\beta}}
,
\nonumber
\end{align}
where $\tilde{\beta}= (Z^T Z)^{-1} Z^T \tilde{y}$ and $\tilde{y}= (y - b(0) \mathbbm{1})/ b''(0)$.

\subsection{Exponential family case, unknown $\phi$}
\label{ssec:derivation_ala_unknownphi}

Consider the case where $\phi$ is unknown, $\beta_0=0$ and $\phi_0= \arg\max p(y \mid \beta_0,\phi)$ is the maximum likelihood estimator of $\phi$ given $\beta=\beta_0$. The ALA in \eqref{eq:intlhood_ala_asymp} gives
\begin{align}
\tilde{p}(y)= p(y \mid \beta_0, \phi_0) p(\tilde{\beta}, \tilde{\phi}) (2 \pi)^{p/2} |H_0|^{-\frac{1}{2}}
  e^{\frac{1}{2} g_0^T H_0^{-1} g_0}.
\nonumber
\end{align}
where recall that
\begin{align}
  g_0&= -\frac{b''(0)}{\phi_0} \begin{pmatrix} Z^T \tilde{y} \\ 0 \end{pmatrix} \nonumber \\
  H_0&= \frac{b''(0)}{\phi_0} \begin{pmatrix} Z^TZ & -Z^T \tilde{y}/\phi_0 \\ 
-\tilde{y}^T Z/\phi_0 & s(\phi_0) \end{pmatrix}, \nonumber
\end{align}
where $s(\phi_0)= [2n b(0)/\phi_0^2 + \phi_0 \sum_{i=1}^n \nabla_{\phi \phi}^2 c(y_i,\phi_0)]/b''(0)$.

Since the last entry in $g_0$ is 0 it follows that
$ g_0^T H_0^{-1} g_0= (b''(0)/\phi_0)^2 \tilde{y}^T Z A Z^T \tilde{y}$,
where $A$ is the submatrix of $H_0^{-1}$ corresponding to $\tilde{\beta}$.
Hence, applying the block-wise matrix inversion formula to $H_0$ gives that
\begin{align}
 g_0^T H_0^{-1} g_0&=
\frac{b''(0)}{\phi_0} \tilde{y}^T Z
\left[(Z^TZ)^{-1} + \frac{ (Z^TZ)^{-1} Z^T \tilde{y} \tilde{y}^T Z (Z^TZ)^{-1}}{ \phi_0^2( s(\phi_0) - \tilde{y}^T Z  (Z^TZ)^{-1} Z^T \tilde{y})} \right]
Z^T \tilde{y}=
\nonumber \\
&= 
\frac{b''(0)}{\phi_0} \left[ \tilde{\beta}^T Z^T Z \tilde{\beta}
+ \frac{\tilde{y}^T Z \tilde{\beta} \tilde{\beta}^T Z^T \tilde{y}}{\phi_0^2 (s(\phi_0) - \tilde{\beta}^T Z^TZ \tilde{\beta})} \right]
\nonumber \\
&=
\frac{b''(0)}{\phi_0} \left[ \tilde{\beta}^T Z^T Z \tilde{\beta}
+ \frac{(\tilde{\beta}^T Z^TZ \tilde{\beta})^2}{\phi_0^2 (s(\phi_0) - \tilde{\beta}^T Z^TZ \tilde{\beta})} \right].
\nonumber \\
&=
\frac{b''(0)}{\phi_0} \tilde{\beta}^T Z^T Z \tilde{\beta} 
\left[ 1 + \frac{\tilde{\beta}^T Z^TZ \tilde{\beta}}{\phi_0^2 (s(\phi_0) - \tilde{\beta}^T Z^TZ \tilde{\beta})} \right]
\nonumber
\end{align}
Therefore,
\begin{align}
\tilde{p}(y)= p(y \mid \beta_0, \phi_0) p(\tilde{\beta}, \tilde{\phi}) (2 \pi)^{p/2} |H_0|^{-\frac{1}{2}}
  \exp\left\{\frac{b''(0) t(\tilde{\beta},\phi_0)}{2\phi_0} \tilde{\beta}^T Z^T Z \tilde{\beta}  \right\},
\label{eq:intlhood_ala_unknownphi}
\end{align}
where
$$
t(\tilde{\beta},\phi_0)= 1 + \frac{\tilde{\beta}^T Z^TZ \tilde{\beta}}{\phi_0^2 (s(\phi_0) - \tilde{\beta}^T Z^TZ \tilde{\beta})},
$$
as we wished to prove.

\subsection{Exponential family case, known $\phi$. Curvature adjustment}
\label{ssec:ala_curvature_adjustment}

This section derives a curvature-adjusted ALA that adjusts the log-likelihood hessian to account for differences in the model-predicted and observed variance of $y$, in the case where the dispersion parameter $\phi$ is a fixed constant (e.g. $\phi=1$ in logistic and Poisson models).

Let $g_{\gamma 0}= -Z_\gamma^T(y - b'(Z_\gamma\beta_{\gamma 0}))/\phi$, $H_{\gamma 0}= Z_\gamma^T \mbox{diag}(b''(Z_\gamma\beta_{\gamma 0})) Z_\gamma/\phi$ be the gradient and hessian at an arbitrary $\beta_{\gamma 0}$, that can potentially depend on the data $(y,Z_\gamma)$.
Then 
\begin{align}
g_{\gamma 0}^T H_{\gamma 0}^{-1} g_{\gamma 0}
     &=  (y - \mu_{\gamma 0})^T Z_\gamma (Z_\gamma^T \textnormal{Cov}(y \mid Z_\gamma, \beta=\beta_{\gamma 0},\phi) Z_\gamma)^{-1} Z_\gamma^T (y - \mu_{\gamma 0}),
                    \nonumber \\
  H_{\gamma 0}&= Z_\gamma^T \mbox{Cov}(y \mid Z_\gamma, \beta=\beta_{\gamma 0},\phi) Z_\gamma / \phi^2
  \nonumber
\end{align}
where $\mu_{\gamma 0}= E(y \mid Z_\gamma, \beta=\beta_{\gamma 0},\phi)= b'(Z_\gamma \beta_{\gamma 0})$ is the model-predicted expectation at $\beta_{\gamma 0}$ and
$\mbox{Cov}(y \mid Z_\gamma, \beta=\beta_{\gamma 0},\phi)= \phi \, \textnormal{diag}(b''(Z_\gamma \beta_{\gamma 0}))$ the model-predicted covariance.
Hence the ALA is
\begin{align}
  \tilde{p}(y \mid \gamma)= \frac{p(y \mid \beta_{\gamma 0},\phi) p(\tilde{\beta}_\gamma \mid \phi,\gamma) (2\pi \phi^2)^{p_\gamma/2}}{|Z_\gamma^T \mbox{Cov}(y \mid Z_\gamma, \beta=\beta_{\gamma 0},\phi) Z_\gamma|^{1/2}}
  e^{\frac{1}{2}  (y - \mu_{\gamma 0})^T Z_\gamma (Z_\gamma^T \textnormal{Cov}(y \mid Z_\gamma, \beta=\beta_{\gamma 0},\phi) Z_\gamma)^{-1} Z_\gamma^T (y - \mu_{\gamma 0})}.
\nonumber
\end{align}

This expression relates directly $Z_\gamma^T(y - \mu_{\gamma 0})$ to its model-predicted covariance $Z_\gamma^T \textnormal{Cov}(y \mid Z_\gamma, \beta=\beta_{\gamma 0},\phi) Z_\gamma$.
A practical issue is that, even when the data are truly generated from a distribution $F^*$ included in the assumed model \eqref{eq:glm_pdf} for some true $\beta^*$, there is a mismatch between
$\textnormal{Cov}(y \mid Z_\gamma, \beta=\beta_{\gamma 0},\phi)=E((y-\mu_{\gamma 0})^T(y-\mu_{\gamma 0}) \mid Z_\gamma, \beta=\beta_{\gamma 0},\phi)$
and $E_{F^*}[(y-\mu_{\gamma 0})^T(y-\mu_{\gamma 0})]= \phi \, \textnormal{diag}(b''(Z_\gamma\beta^*)) + E_{F^*}[(\mu_{\gamma 0} - \mu_\gamma^*) (\mu_{\gamma 0} - \mu_\gamma^*)^T]$.
That is, the data may be either over- or under-dispersed relative to the model prediction.

To address this issue we estimate
$E_{F^*}[(y-\mu_{\gamma 0})^T(y-\mu_{\gamma 0})]$ non-parametrically, under the assumption that $(y_i,z_i)$ are independent draws from $F^*$ across $i=1,\ldots, n$. Specifically, we use the usual over-dispersion estimator based on Pearson chi-square residuals,
\begin{align}
\hat{\rho}= \frac{\sum_{i=1}^n (y_i - h^{-1}(z_i^T\beta_{\gamma 0}))^2/[\phi b''(z_i^T \beta_{\gamma 0})]}{n - |\beta_{\gamma 0}|_0}
\nonumber
\end{align}
where $h()$ is the link function,
$h^{-1}(z_i^T\beta_{\gamma 0})= E(y_i \mid z_{i\gamma}, \beta=\beta_{\gamma 0},\phi)$ 
and  $|\beta_{\gamma 0}|_0$ the number of non-zero elements in $\beta_{\gamma 0}$.

The curvature-adjusted ALA is obtained by replacing $\textnormal{cov}(y \mid \beta=\beta_{\gamma 0},\phi)$ for
$\hat{\rho} \textnormal{cov}(y \mid \beta=\beta_{\gamma 0},\phi)= \hat{\rho} \phi \textnormal{diag}(b''(Z_\gamma\beta_{\gamma 0}))$, which gives
\begin{align}
  \tilde{p}(y \mid \gamma)&=
  \frac{p(y \mid \beta_{\gamma 0},\phi) p(\tilde{\beta}_\gamma \mid \phi,\gamma) (2\pi \phi/\hat{\rho})^{p_\gamma/2}}{|Z_\gamma^T \textnormal{diag}(b''(Z_\gamma\beta_{\gamma 0})) Z_\gamma|^{1/2}}
  e^{\frac{1}{2 \hat{\rho} \phi}  (y - \mu_{\gamma 0})^T Z_\gamma (Z_\gamma^T \textnormal{diag}(b''(Z_\gamma\beta_{\gamma 0})) Z_\gamma)^{-1} Z_\gamma^T (y - \mu_{\gamma 0})}.
\label{eq:marglhood_ala_curvadj}
\end{align}

We remark that there are other possible strategies to adjust the curvature. For example, one could obtain the model-predicted curvature at $\tilde{\beta}_{\gamma 0}$ and adjust for over-dispersion by replacing $\textnormal{Cov}(y \mid \beta=\beta_{\gamma 0},\phi)$ for
$\tilde{\rho}_\gamma \textnormal{Cov}(y \mid Z_\gamma, \beta=\tilde{\beta}_{\gamma 0},\phi)= \tilde{\rho} \, \phi \, \textnormal{diag}(b''(Z_\gamma\tilde{\beta}_{\gamma 0}))$,
where $\tilde{\rho}_\gamma= \sum_{i=1}^n (y_i - z_i^T\tilde{\beta}_{\gamma 0})^2/[\phi b''(z_i^T \tilde{\beta}_{\gamma 0})] /(n - p_\gamma)$.
Such an alternative has the potential to improve inference by obtaining a better approximation to the hessian at the limiting $\beta^*$ recovered by maximum likelihood estimation that is specific for each model $\gamma$. The computational cost of $\tilde{\rho}_\gamma$ is however slightly higher than for $\hat{\rho}$, when the latter is evaluated at a common $\beta_{\gamma 0}$ that does not vary across models. We therefore focus on $\hat{\rho}$ in the sequel.

Expression \eqref{eq:marglhood_ala_curvadj} can be simplified in the particular case where $Z_\gamma\beta_{\gamma 0}$ does not depend on $Z_\gamma$, that is when either $\beta_{\gamma 0}=0$ or the only non-zero entry in $\beta_{\gamma 0}$ corresponds to the intercept.
That is, suppose that $z_i^T\beta_{\gamma 0}=\nu_0$ for all $i=1,\ldots,n$ and some $\nu_0 \in \mathbb{R}$, then
\begin{align}
  \tilde{p}(y \mid \gamma)&=
                \frac{p(y \mid \beta_{\gamma 0},\phi) p(\tilde{\beta}_\gamma \mid \phi,\gamma)}{|Z_\gamma^T Z_\gamma|^{1/2}}
                 \left(\frac{2\pi \phi}{\hat{\rho} b''(\nu_{\gamma 0})}\right)^{p_\gamma/2}
                e^{\frac{1}{2 \hat{\rho} \phi b''(\nu_{\gamma 0})}  (y - \mu_{\gamma 0})^T Z_\gamma (Z_\gamma^T Z_\gamma)^{-1} Z_\gamma^T (y - \mu_{\gamma 0})}
                \nonumber \\
              &=
                \frac{e^{n[\nu_0 \bar{y} - b(\nu_0)]/\phi + \sum_{i=1}^n c(y_i,\phi)}
                p(\tilde{\beta}_\gamma \mid \phi, \gamma)}{|Z_\gamma^T Z_\gamma|^{1/2}}
                 \left(\frac{2\pi \phi}{\hat{\rho} b''(\nu_0)}\right)^{p_\gamma/2}
                e^{\frac{b''(\nu_0)}{2 \hat{\rho} \phi}  \tilde{y}^T Z_\gamma (Z_\gamma^T Z_\gamma)^{-1} Z_\gamma^T \tilde{y}}
\nonumber
\end{align}
where $\tilde{y}= (y - b'(\nu_0))/b''(\nu_0)$, and $\bar{y}= \sum_{i=1}^n y_i /n$ is the sample mean.
Note that $\nu_0$ may depend on the data, and in particular we take the MLE under the intercept-only model $\nu_0=h(\bar{y})$, but is constant across models. Similarly $\hat{\rho}$ is also constant across models by construction.
Therefore the curvature-adjusted ALA BF is
\begin{align}
\tilde{B}_{\gamma \gamma'}=
\frac{p(\tilde{\beta}_\gamma \mid \phi, \gamma) |Z_{\gamma'}^T Z_{\gamma'}|^{1/2}}{p(\tilde{\beta}_{\gamma'} \mid \phi, \gamma') |Z_\gamma^T Z_\gamma|^{1/2}}
\left(\frac{2\pi \phi}{\hat{\rho} b''(\nu_0)}\right)^{(p_\gamma-p_{\gamma'})/2}
e^{\frac{b''(\nu_0)}{2 \hat{\rho} \phi}  [\tilde{y}^T Z_\gamma (Z_\gamma^T Z_\gamma)^{-1} Z_\gamma^T \tilde{y} - \tilde{y}^T Z_{\gamma'} (Z_{\gamma'}^T Z_{\gamma'})^{-1} Z_{\gamma'}^T \tilde{y}]}.
\nonumber
\end{align}
Finally, note also that for $\nu_0=h(\bar{y})$ we obtain $b''(nu_0)=b''(h(\bar{y}))$ and a particularly simple over-dispersion estimator
\begin{align}
\hat{\rho}= \frac{\sum_{i=1}^n (y_i - \bar{y})^2/[\phi b''(h(\bar{y}))]}{n - 1}.
\nonumber
\end{align}

For logistic regression $h(\bar{y})= \log(\bar{y}/(1-\bar{y}))$ and $b''(u)=e^{-u}/(1+e^{-u})^2$ (Section \ref{ssec:logl_logisticregr}), then using straightforward algebra gives $b''(h(\bar{y}))= \bar{y}(1-\bar{y})$.
For Poisson regression $h(\bar{y})= \log(\bar{y})$ and $b''(u)=e^u$ (Section \ref{ssec:logl_poissonregr}), so that $b''(h(\bar{y}))= \bar{y}$.

\section{Log-likelihood and derivatives for popular exponential family models}
\label{sec:logl_popularmodels}

\subsection{Logistic regression}
\label{ssec:logl_logisticregr}

The logistic regression log-likelihood is
\begin{align}
  \log p(y \mid \beta)=  y^T Z\beta - \sum_{i=1}^n \log (1 + e^{z_i^T\beta}),
\nonumber
\end{align}
corresponding to $b(u)= \log (1 + e^u)$ in~\eqref{eq:glm_pdf}, $b'(u)= 1/(1+e^{-u})$ and $b''(u)=e^{-u}/(1+e^{-u})^2$.
Its gradient and hessian are
\begin{align}
  g_\beta(\beta)&=  Z^Ty - \sum_{i=1}^n z_i \frac{1}{1 + e^{-z_i^T\beta}}
  \nonumber \\
  H_{\beta\beta}(\beta)&= - \sum_{i=1}^n z_i z_i^T \frac{e^{-z_i^T\beta}}{(1 + e^{-z_i^T\beta})^2}= - Z^T D_\beta Z \nonumber
\end{align}
where $D_\beta$ is a diagonal matrix with $(i,i)$ entry
$
1/[(1 + e^{-z_i^T\beta})(1+ e^{z_i^T\beta})].
$

\subsection{Poisson regression}
\label{ssec:logl_poissonregr}

The Poisson regression log-likelihood is
\begin{align}
  \log p(y \mid \beta)= 
y^T Z\beta - \sum_{i=1}^n e^{z_i^T\beta} - \log(y_i!),
\nonumber
\end{align}
corresponding to $b(u)= e^u$, $b'(u)=b''(u)=e^u$ and $c(y_i)= \log(y_i!)$ in~\eqref{eq:glm_pdf}.
Its gradient and hessian are
\begin{align}
  g_\beta(\beta)&= Z^T y - \sum_{i=1}^n z_i e^{z_i^T\beta}
  \nonumber \\
  H_{\beta,\beta}(\beta)&= - \sum_{i=1}^n z_i z_i^T e^{z_i^T\beta}= - Z^T
  D_\beta Z \nonumber
\nonumber
\end{align}
where $D_\beta$ is a diagonal matrix with $(i,i)$ entry $e^{z_i^T\beta}$.

\subsection{Gaussian accelerated failure time model}
\label{ssec:logl_aft}

Let $o_i \in \mathbb{R}^+$ be survival times and $c_i \in \mathbb{R}^+$ be right-censoring times
for individuals $i=1,\ldots,n$, $d_i=\mbox{I}(o_i<c_i)$ censoring indicators, and $y_i=\min\{\log(o_i),\log(c_i)\}$ the observed log-times. The Gaussian accelerated failure time model assumes

\begin{eqnarray}
\log(o_i) = z_i^T\beta +  \epsilon_i,
\nonumber
\end{eqnarray}
where $\epsilon_i \sim N(0,\phi)$ independently across $i=1,\ldots,n$.
It is convenient to reparameterize the model via $\alpha = \beta/\sigma$ and $\tau = 1/\sqrt{\phi}$, as then that the log-likelihood is concave, provided that the number of uncensored individuals $n_o \geq \mbox{dim}(\beta)$ and the sub-matrix of $Z$ corresponding to uncensored individuals has full column rank \citep{burridge:1981,silvapulle:1986}.
The log-likelihood function is
\begin{eqnarray}
p(y \mid \alpha, \tau) &=& -\frac{n_o}{2} \log\left(\frac{2\pi}{\tau^2}\right) - \dfrac{1}{2} \sum_{d_i=1}(\tau y_i-z_i^T \alpha)^2 
 + \sum_{d_i=0} \log \left\{\Phi\left(z_i^T \alpha - \tau y_i\right)\right\},
\nonumber
\end{eqnarray}
its gradient is
\begin{equation}
\begin{pmatrix}
g_\alpha(\alpha,\tau) \\
g_\tau(\alpha,\tau) 
\end{pmatrix} =
\begin{pmatrix}
\sum_{d_i=1}  z_i  \left(\tau y_i-z_i^T \alpha  \right) + \sum_{d_i=0}  z_i r\left(z_i^T\alpha  - \tau y_i \right)  \\
\dfrac{n_o}{\tau} -  \sum_{d_i=1} y_i\left(\tau y_i-z_i^T\alpha \right)  - \sum_{d_i=0} y_i r\left(z_i^T\alpha  - \tau y_i \right)
\end{pmatrix},
\nonumber
\end{equation}
and its hessian is
\begin{eqnarray*}
  H_{\alpha,\alpha}(\alpha,\tau)&=&
 - \sum_{d_i=1} z_i z_i^T
- \sum_{d_i=0} z_i z_i^T  D\left(\tau y_i - z_i^T\alpha \right),\\
 H_{\alpha,\kappa}(\alpha,\tau) &=& \sum_{d_i=1} z_i  y_i + \sum_{d_i=0} z_i y_i D\left(\tau y_i - z_i^T\alpha  \right),\\
 H_{\tau,\tau}(\alpha,\tau)  &=&  -\dfrac{n_o}{\tau^2} - \sum_{d_i=1} y_i^2  - \sum_{d_i=0} y_i^2 D\left(\tau y_i - z_i^T\alpha \right),
\end{eqnarray*}
where $D\left(z \right) = r\left(-z \right)^2 -z r\left(-z \right) \in (0,1)$,
$r(t) = f(t)/\Phi(t)$ is the Normal inverse Mills ratio, $f()$ is the standard Normal probability density function and $\Phi()$ its cumulative density function.

\section{Derivation of default $(g_L,g_N)$}
\label{sec:default_priordispersion}

The goal is to find prior dispersion parameters such that the prior expectation
$$E( \beta_j^T Z_j^T Z_j \beta_j)/[n\phi] )=1$$
for all $j$.
First note that if $\beta_j \sim \mathcal{N}(0, g \phi n/p_j] (Z_j^T Z_j)^{-1})$,
then $\beta_j^T Z_j^T Z_j \beta_j p_j/[g \phi n] \sim \chi_{p_j}^2$, which has mean $p_j$ and variance $2 p_j$.

Therefore, under the Normal prior in~\eqref{eq:prior_parameters} we have
$$E\left(\frac{\beta_j^T Z_j^T Z_j^T \beta_j}{\phi n} \right)=
\frac{g_L}{p_j}
\int \frac{\beta_j^T Z_j^T Z_j^T \beta_j p_j}{g_L \phi n} \mathcal{N}(0, \frac{g_L \phi n}{p_j} (Z_j^T Z_j)^{-1}) d \beta_j=
\frac{g_L}{p_j} p_j,$$
obtaining the default $g_L= 1$.
For the gMOM prior
\begin{align}
&E\left( \frac{\beta_j^T Z_j^T Z_j \beta_j}{\phi n} \right)=
  \int \left[ \frac{\beta_j^T Z_j^T Z_j \beta_j}{\phi n} \right]^2 \frac{(p_j+2)}{p_j g_N}
  \mathcal{N}\left(\beta_j; 0, \frac{g_N \phi n}{p_j+2} (Z_j^T Z_j)^{-1} \right) d\beta_j
                \nonumber \\
& = \frac{g_N}{p_j(p_j+2)} \int \left[ \frac{\beta_j^T Z_j^T Z_j \beta_j (p_j+2)}{\phi n g_N} \right]^2
  \mathcal{N}\left(\beta_j; 0, \frac{g_N \phi n}{p_j(p_j+2)} (Z_j^T Z_j)^{-1} \right) d\beta_j
  =\frac{g_N (2 p_j + p_j^2)}{p_j(p_j+2)}= g_N,
    \nonumber
  \end{align}
giving the default $g_N= 1$.

\section{Proof of Lemma \ref{lem:mean_pmompenalty}}
\label{sec:postmean_mompenalty}

  Consider the first integral with respect to $\xi$, which is equal to the expectation $\phi^{-1} E( \xi^T A \xi )$.
  Let $\tilde{\xi}=A^{1/2} \xi$, since $\tilde{\xi} \sim \mathcal{N}(A^{1/2} m, \phi A^{1/2} S A^{1/2})$ we have that
  $$
  E( \xi^T A \xi )= E(\tilde{\xi}^T \tilde{\xi})= 
  \sum_{j=1}^{l} V(\tilde{\xi}_{jl}) + [E(\tilde{\xi}_{jl})]^2
  = \phi \mbox{tr}(\tilde{S}) + \tilde{m}^T\tilde{m}=
  \phi \mbox{tr}(A S) + m^T A m,
  $$
  where $\tilde{S}= A^{1/2} S A^{1/2}$, $\tilde{m}= A^{1/2} m$, proving the first part of the result.
  Now, integrating with respect to $\phi$ gives
  \begin{align}
  \int (\mbox{tr}(A S) + m^T A m/\phi)  \mbox{IG}(\phi; a,b) d\phi=
  \mbox{tr}(A S) + m^T A m \frac{a}{b},
  \nonumber
  \end{align}
  as we wished to prove.
  
\section{Proof of Theorem \ref{thm:sparse_linproj}}
\label{sec:sparse_linproj}

We shall use $(\msf{y}, \msf{z})$ to generically denote a random vector in $\mathbb{R} \otimes \mathbb{R}^p$ distributed as $F^*$, so that $\{(y_i, z_i)\}_{i=1}^n$ are $n$ iid copies of $(\msf{y}, \msf{z})$. To ease notation, we shall use $\msf{z}_1$ and $\msf{z}_2$ in place of $\msf{z}_{\delta}$ and $\msf{z}_{\upsilon}$ in this proof. Let us partition $\Sigma := \mbox{Cov}_{F^*}(\msf{z})$ into the corresponding blocks to define $\Sigma_{11} := \mbox{Cov}_{F^*}(\msf{z}_1)$, $\Sigma_{22} := \mbox{Cov}_{F^*}(\msf{z}_2)$, and $\Sigma_{12} := \mbox{Cov}_{F^*}(\msf{z}_1,\msf{z}_2) = \Sigma_{21}'$. Since $\delta$ is the true regression model, $E_{F^*}(\msf{y} \mid \msf{z}) = f(\msf{z}_1)$. Also, by the assumption in equation \eqref{eq:lin_exp} (henceforth assumption \eqref{eq:lin_exp}), $E_{F^*}(\msf{z}_2 \mid \msf{z}_1) = A \msf{z}_1$. 

Observe that $E_{F^*} (Z'Z) = n E_{F^*} (\msf{z} \msf{z}') = n \Sigma$ and $E_{F^*}(Z'\tilde{y}) = n E_{F^*}[\msf{z} \, (\msf{y} - b'(0))/b''(0)] = n E_{F^*}(\msf{z} \, \msf{y})$, where in both identities we use $E_{F^*}(\msf{z}) = 0$ in the last step. Let $m = E_{F^*}(\msf{z} \, \msf{y})$ so that $\tilde{\beta}^* = \Sigma^{-1} m$. Using the tower property of conditional expectations, $m = E_{F^*}[ \msf{z} \, E_{F^*}( \msf{y} \mid \msf{z} )] = E_{F^*}[ \msf{z} \, f(\msf{z}_1)]$. Partition $m$ into the same blocks as $\msf{z}$ to write $m = (m_1; m_2)$, so that $m_1 = E_{F^*}[ \msf{z}_1 \, f(\msf{z}_1)]$ and $m_2 = E_{F^*}[ \msf{z}_2 \, f(\msf{z}_1)]$. Using the tower property and assumption \eqref{eq:lin_exp}, write $m_2 = E_{F^*}[ f(\msf{z}_1) \, E_{F^*}( \msf{z}_2 \mid \msf{z}_1 ) ] = E_{F^*} [ f(\msf{z}_1) \, A \msf{z}_1] = A E_{F^*}[ \msf{z}_1 \, f(\msf{z}_1)] = A m_1$. 

Since $\mbox{Cov}_{F^*}(\msf{z}_2 - E(\msf{z}_2 \mid \msf{z}_1), \msf{z}_1) = 0$, assumption \eqref{eq:lin_exp} implies that $\Sigma_{21} = A \Sigma_{11}$, and thus $A = \Sigma_{21} \Sigma_{11}^{-1}$; note that since $\Sigma$ is positive definite, both $\Sigma_{11}$ and $\Sigma_{22}$ are positive definite as well. Let $Q = \Sigma^{-1}$ be the precision matrix of $\msf{z}$. Then, from the block matrix inversion formula,  $Q_{22} = \Sigma_{22} - \Sigma_{21} \Sigma_{11}^{-1} \Sigma_{12}$ and $Q_{21} = - Q_{22} \Sigma_{21} \Sigma_{11}^{-1} = -Q_{22} A$. Partition $\tilde{\beta}^* = (\tilde{\beta}_1^*; \tilde{\beta}_2^*)$ the same way as $\msf{z}$, and note that $\tilde{\beta}_2^* = Q_{21} m_1 + Q_{22} m_2 = - Q_{22} A m_1 + Q_{22} A m_1 = 0$. This proves that $\tilde{\beta}^*_j = 0$ whenever $j \in \delta$.  

To prove the remaining bit (and also provide an alternative argument for the previous part), we proceed as follows. Note that for any $p \times p$ invertible matrix $B$, we can write $\tilde{\beta}^* = (B \Sigma)^{-1} (Bm)$. Choose $B$ to be the matrix with blocks $B_{11} = I_{p_\delta}$, $B_{12} = 0_{p_\delta \times p-p_\delta}$, $B_{21} = -A$, and $B_{22} = I_{p-p_\delta}$. Then, it follows that $(Bm)_1 = m_1$, and $(Bm)_2 = m_2 - A m_1 = 0$. Moreover, $(B \Sigma)_{11} = \Sigma_{11}$, $(B \Sigma)_{12} = \Sigma_{12}$, $(B \Sigma)_{21} = \Sigma_{21} - A \Sigma_{11} = 0$, and $(B \Sigma)_{22} = \Sigma_{22.1} :\,= \Sigma_{22} - \Sigma_{21} \Sigma_{11}^{-1} \Sigma_{12}$. By the block inverse formula, we have $[(B \Sigma)^{-1}]_{11} = \Sigma_{11}^{-1}$ and $[(B \Sigma)^{-1}]_{21} = 0$. Putting everything together, we get $\tilde{\beta}^*_1 = \Sigma_{11}^{-1} m_1$ and $\tilde{\beta}_2^* = 0$. A similar argument as in the beginning of this proof shows that $\tilde{\beta}^*_\delta = \Sigma_{11}^{-1} m_1$. This concludes the proof that $\tilde{\beta}^* = (\tilde{\beta}^*_1; \tilde{\beta}^*_2) = (\tilde{\beta}_\delta^*; 0)$.

\newpage
\section{Auxiliary results for Theorem \ref{thm:bf_ala} and Corollary \ref{cor:bf_ala_unknownphi}}
\label{sec:auxiliary_bfala}

For completeness we state and prove two basic results that are helpful in the proofs of our main results.
Lemma \ref{lem:waldstat_nested} is a well-know results. It expresses the difference between the explained sum of squares by two nested models $\gamma' \subset \gamma$ in terms of the regression parameters obtained under the larger model, after suitably orthogonalizing its design matrix.

Lemma \ref{lem:phi0_convergence} establishes the convergence in probability of the conditional MLE $\phi_0$ to a finite optimal $\phi_0^*$, as $n \rightarrow \infty$, and is a simple adaptation of Lemma 5.10 in \cite{vandervaart:1998}).

\begin{lemma}
  Let $y \in \mathbb{R}^n$ and $X_\gamma= (X_{\gamma'}, X_{\gamma \setminus \gamma'})$ be an $n$ times $p_\gamma$ matrix.
  Let $\hat{\beta}_\gamma= (X_\gamma^T X_\gamma)^{-1} X_\gamma^T y$
  and $\hat{\beta}_{\gamma'}= (X_{\gamma'}^T X_{\gamma'})^{-1} X_{\gamma'}^T y$
  the least-squares estimates associated to $\gamma$ and $\gamma'$, respectively.
Then
$$  \hat{\beta}_{\gamma}^T X_{\gamma}^TX_{\gamma} \hat{\beta}_{\gamma} - \hat{\beta}_{\gamma'}^T X_{\gamma'}^TX_{\gamma'} \hat{\beta}_{\gamma'}
=\tilde{\beta}_{\gamma \setminus \gamma'}^T \tilde{X}_{\gamma \setminus \gamma'}^T\tilde{X}_{\gamma \setminus \gamma'} \tilde{\beta}_{\gamma \setminus \gamma'}=
y^T \tilde{X}_{\gamma \setminus \gamma'} (\tilde{X}_{\gamma \setminus \gamma'}^T \tilde{X}_{\gamma \setminus \gamma'})^{-1} \tilde{X}_{\gamma \setminus \gamma'}^T y
$$
where $\tilde{\beta}_{\gamma \setminus \gamma'}= (\tilde{X}_{\gamma \setminus \gamma'}^T\tilde{X}_{\gamma \setminus \gamma'})^{-1} \tilde{X}_{\gamma \setminus \gamma'}^T y$
and $\tilde{X}_{\gamma \setminus \gamma'}= X_{\gamma \setminus \gamma'} - X_{\gamma'}(X_{\gamma'}^TX_{\gamma'})^{-1}X_{\gamma'}^TX_{\gamma \setminus \gamma'}$.
\label{lem:waldstat_nested}
\end{lemma}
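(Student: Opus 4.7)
The plan is to recognize this as a standard Frisch--Waugh--Lovell style decomposition and prove it by identifying each quadratic form with a projection of $y$ onto an appropriate subspace. Let $P_\gamma = X_\gamma (X_\gamma^T X_\gamma)^{-1} X_\gamma^T$ and $P_{\gamma'} = X_{\gamma'} (X_{\gamma'}^T X_{\gamma'})^{-1} X_{\gamma'}^T$ denote the orthogonal projectors onto the column spaces of $X_\gamma$ and $X_{\gamma'}$, respectively. The first step is to observe that $\hat{\beta}_\gamma^T X_\gamma^T X_\gamma \hat{\beta}_\gamma = y^T P_\gamma y$ and similarly $\hat{\beta}_{\gamma'}^T X_{\gamma'}^T X_{\gamma'} \hat{\beta}_{\gamma'} = y^T P_{\gamma'} y$, so the left-hand side of the identity equals $y^T(P_\gamma - P_{\gamma'}) y$.

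The core step is to show that
\begin{equation*}
P_\gamma - P_{\gamma'} = \tilde{X}_{\gamma\setminus\gamma'} (\tilde{X}_{\gamma\setminus\gamma'}^T \tilde{X}_{\gamma\setminus\gamma'})^{-1} \tilde{X}_{\gamma\setminus\gamma'}^T,
\end{equation*}
where $\tilde{X}_{\gamma\setminus\gamma'} = (I - P_{\gamma'}) X_{\gamma\setminus\gamma'}$. I would do this by noting that the right-hand side is the projector onto the column space of $\tilde{X}_{\gamma\setminus\gamma'}$, which is precisely the orthogonal complement of $\mathrm{col}(X_{\gamma'})$ within $\mathrm{col}(X_\gamma)$: indeed, every column of $\tilde{X}_{\gamma\setminus\gamma'}$ is orthogonal to $\mathrm{col}(X_{\gamma'})$ by construction, and together with the columns of $X_{\gamma'}$ they span $\mathrm{col}(X_\gamma)$. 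Since $\mathrm{col}(X_\gamma) = \mathrm{col}(X_{\gamma'}) \oplus \mathrm{col}(\tilde{X}_{\gamma\setminus\gamma'})$ as an orthogonal direct sum, $P_\gamma$ decomposes as the sum of the two projectors, which yields the displayed identity. Substituting back gives $y^T (P_\gamma - P_{\gamma'}) y = y^T \tilde{X}_{\gamma\setminus\gamma'} (\tilde{X}_{\gamma\setminus\gamma'}^T \tilde{X}_{\gamma\setminus\gamma'})^{-1} \tilde{X}_{\gamma\setminus\gamma'}^T y$, establishing the second equality in the lemma.

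For the first equality, plug the definition $\tilde{\beta}_{\gamma\setminus\gamma'} = (\tilde{X}_{\gamma\setminus\gamma'}^T \tilde{X}_{\gamma\setminus\gamma'})^{-1} \tilde{X}_{\gamma\setminus\gamma'}^T y$ into the quadratic form $\tilde{\beta}_{\gamma\setminus\gamma'}^T \tilde{X}_{\gamma\setminus\gamma'}^T \tilde{X}_{\gamma\setminus\gamma'} \tilde{\beta}_{\gamma\setminus\gamma'}$; after cancelling the Gram matrix with its inverse, this collapses directly to $y^T \tilde{X}_{\gamma\setminus\gamma'} (\tilde{X}_{\gamma\setminus\gamma'}^T \tilde{X}_{\gamma\setminus\gamma'})^{-1} \tilde{X}_{\gamma\setminus\gamma'}^T y$, matching the expression just derived.

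The only non-trivial step is the projector decomposition $P_\gamma = P_{\gamma'} + \tilde{X}_{\gamma\setminus\gamma'}(\tilde{X}_{\gamma\setminus\gamma'}^T \tilde{X}_{\gamma\setminus\gamma'})^{-1}\tilde{X}_{\gamma\setminus\gamma'}^T$; if a more algebraic proof is preferred, one can alternatively apply the block-matrix inversion formula to $(X_\gamma^T X_\gamma)^{-1}$, expanding $X_\gamma^T X_\gamma$ in its $2\times 2$ block form with Schur complement $\tilde{X}_{\gamma\setminus\gamma'}^T \tilde{X}_{\gamma\setminus\gamma'}$, and then multiplying out $X_\gamma (X_\gamma^T X_\gamma)^{-1} X_\gamma^T$ to recover the same decomposition. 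Either route handles the only substantive step; the rest is direct substitution.
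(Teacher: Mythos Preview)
Your proposal is correct and follows essentially the same approach as the paper: both arguments hinge on the orthogonality $X_{\gamma'}^T \tilde{X}_{\gamma\setminus\gamma'}=0$ and the resulting orthogonal decomposition of the projection onto $\mathrm{col}(X_\gamma)$, with the paper phrasing it as equality of least-squares predictions $X_\gamma\hat\beta_\gamma = X_{\gamma'}\hat\beta_{\gamma'} + \tilde X_{\gamma\setminus\gamma'}\tilde\beta_{\gamma\setminus\gamma'}$ and you phrasing it as the projector identity $P_\gamma = P_{\gamma'} + \tilde X_{\gamma\setminus\gamma'}(\tilde X_{\gamma\setminus\gamma'}^T\tilde X_{\gamma\setminus\gamma'})^{-1}\tilde X_{\gamma\setminus\gamma'}^T$. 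These are the same statement, so there is no substantive difference.
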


\begin{proof}
First note that $\tilde{X}_{\gamma \setminus \gamma'}$ is orthogonal to the projection of $X_{\gamma \setminus \gamma'}$ on $X_{\gamma'}$, hence
$$X_{\gamma'}^T\tilde{X}_{\gamma \setminus \gamma'}=
X_{\gamma'}^T X_{\gamma \setminus \gamma'} - X_{\gamma'}^T X_{\gamma'}(X_{\gamma'}^TX_{\gamma'})^{-1}X_{\gamma'}^TX_{\gamma \setminus \gamma'}
=0.
$$
Further, simple algebra shows that the least-squares prediction $X_\gamma \hat{\beta}_\gamma$
from regressing $y$ on $X_\gamma$
is equal to the least-squares prediction from regressing $y$ on $(X_{\gamma'},\tilde{X}_{\gamma \setminus \gamma'})$,
that is
$X_\gamma \hat{\beta}_\gamma= X_{\gamma'} \hat{\beta}_{\gamma'} + \tilde{X}_{\gamma \setminus \gamma} \tilde{\beta}_{\gamma \setminus \gamma}$,
where note that the coefficient for $X_{\gamma'}$ is $\hat{\beta}_{\gamma'}$, since $X_\gamma^T \tilde{X}_{\gamma \setminus \gamma'}=0$.
Therefore
\begin{align}
  &\hat{\beta}_{\gamma}^T X_{\gamma}^TX_{\gamma} \hat{\beta}_{\gamma} - \hat{\beta}_{\gamma'}^T X_{\gamma'}^TX_{\gamma'} \hat{\beta}_{\gamma'}=
(\hat{\beta}_{\gamma'}^T X_{\gamma'}^TX_{\gamma'} \hat{\beta}_{\gamma'}
  + \tilde{\beta}_{\gamma \setminus \gamma'}^T \tilde{X}_{\gamma \setminus \gamma'}^T\tilde{X}_{\gamma \setminus \gamma'} \tilde{\beta}_{\gamma \setminus \gamma'}) - \hat{\beta}_{\gamma'}^T X_{\gamma'}^TX_{\gamma'} \hat{\beta}_{\gamma'}
  \nonumber \\
&=\tilde{\beta}_{\gamma \setminus \gamma'}^T \tilde{X}_{\gamma \setminus \gamma'}^T\tilde{X}_{\gamma \setminus \gamma'} \tilde{\beta}_{\gamma \setminus \gamma'},
\nonumber
\end{align}
as we wished to prove.
\end{proof}

\begin{lemma}
Let
\begin{align}
 \phi_0&= \arg\max_{\phi >0} n^{-1} \log p(y \mid \beta=0, \phi)= - \frac{b(0)}{\phi} + \frac{1}{n} \sum_{i=1}^n c(y_i,\phi),
\nonumber \\
 \phi_0^*&= \arg\max_{\phi >0} - \frac{b(0)}{\phi} + E_{F^*}\left[ c(y_i,\phi)\right],
\nonumber
\end{align}
and assume Conditions (C1) and (C4). Then $\phi_0 \stackrel{P}{\longrightarrow} \phi_0^*$ as $n \rightarrow \infty$, when data truly arise from $F^*$.
\label{lem:phi0_convergence}
\end{lemma}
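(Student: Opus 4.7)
The plan is to apply the standard argmax consistency theorem for M-estimators to the criteria
$$M_n(\phi) \;=\; -\,\frac{b(0)}{\phi} + \frac{1}{n}\sum_{i=1}^n c(y_i,\phi), \qquad M(\phi) \;=\; -\,\frac{b(0)}{\phi} + E_{F^*}[c(y_i,\phi)],$$
whose maximizers are $\phi_0$ and $\phi_0^*$ respectively. Condition (C4) hands me two crucial facts: both maximizers are unique (via the stated uniqueness of the roots of the first-order conditions), and $E_{F^*}[\nabla_\phi c(y_i,\phi)]$ is finite. The proof is therefore a standard Wald-type consistency argument, with the only care required being that $\phi$ ranges over the unbounded half-line $(0,\infty)$.

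First I would establish pointwise convergence by the weak law of large numbers: $n^{-1}\sum_{i=1}^n c(y_i,\phi) \stackrel{P}{\longrightarrow} E_{F^*}[c(y_i,\phi)]$ at each fixed $\phi$, so $M_n(\phi) \stackrel{P}{\longrightarrow} M(\phi)$. Second, I would upgrade this to uniform convergence on every compact $K \subset (0,\infty)$. As the remark following (C4) indicates, it suffices to assume that $c(y,\phi)$ admits an $F^*$-integrable envelope over $K$, so that $\{c(\cdot,\phi) : \phi \in K\}$ is Glivenko-Cantelli (Lemma 5.10 of \cite{vandervaart:1998}) and $\sup_{\phi \in K} |M_n(\phi) - M(\phi)| \stackrel{P}{\longrightarrow} 0$. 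Such an envelope can be built by combining the value of $c(y,\phi)$ at a single reference point of $K$ with $|\nabla_\phi c(y,\phi)|$ via the mean value theorem, using the finiteness furnished by (C4). Third, uniqueness of $\phi_0^*$ gives the well-separated-maximum property $\sup_{|\phi - \phi_0^*| \geq \epsilon} M(\phi) < M(\phi_0^*)$ for every $\epsilon > 0$.

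The main obstacle is ruling out that $\phi_0$ escapes to $0$ or to $\infty$, since otherwise the previous uniform bound does not apply globally. For this I would argue that $M(\phi) \to -\infty$ as $\phi \to 0^+$ (driven by the $-b(0)/\phi$ term when $b(0) > 0$, or else by the expected log-density term) and that $\limsup_{\phi \to \infty} M(\phi) < M(\phi_0^*)$ by global uniqueness of the maximizer. A Wald-type sandwich using the pointwise convergence at $\phi_0^*$ and at two comparison points then confines $\phi_0$ to a fixed compact neighborhood $K$ of $\phi_0^*$ with probability tending to one. Restricted to that compact set, the uniform convergence together with the well-separated-maximum property feeds into Theorem 5.7 of \cite{vandervaart:1998}, yielding $\phi_0 \stackrel{P}{\longrightarrow} \phi_0^*$ as claimed.
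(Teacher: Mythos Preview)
Your proposal is correct but takes a genuinely different route from the paper. You frame the problem as an M-estimator and invoke the argmax theorem (Theorem~5.7 in \cite{vandervaart:1998}), which forces you to establish uniform convergence on compacts via a Glivenko--Cantelli envelope and to handle tightness on the non-compact domain $(0,\infty)$. The paper instead works directly with the score $\psi_n(\phi)=b(0)/\phi^2+n^{-1}\sum_i\nabla_\phi c(y_i,\phi)$ as a Z-estimator: by the weak law of large numbers $\psi_n(\phi_0^*\pm\epsilon)\stackrel{P}{\longrightarrow}\psi(\phi_0^*\pm\epsilon)$, and since $\phi_0^*$ is the unique root of $\psi$ these limits have opposite signs; continuity then traps a root of $\psi_n$ in $(\phi_0^*-\epsilon,\phi_0^*+\epsilon)$, which by the uniqueness in (C4) must be $\phi_0$. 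This needs only pointwise convergence at two points and sidesteps uniform convergence, envelopes, and the escape-to-infinity argument entirely. Because (C4) is already stated in terms of unique roots of the first-order conditions, the Z-estimator framing is the more natural and economical one here; your approach buys generality but at the cost of extra machinery, and your envelope construction leans on an integrability assumption (the $\sup$ over $K$ of $|\nabla_\phi c|$) that is slightly stronger than what (C4) literally provides.
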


\begin{proof}
Denote the derivatives of the observed and expected log-likelihoods with respect to $\phi$, evaluated at $\beta=0$, by
\begin{align}
 \psi_n(\phi)&= \nabla_\phi \frac{1}{n} \log p(y \mid \beta=0,\phi)= \frac{b(0)}{\phi^2} + \frac{1}{n} \sum_{i=1}^n \nabla_\phi c(y_i,\phi)
\\ \nonumber
 \psi(\phi) &= \frac{b(0)}{\phi^2} + E_{F^*} \nabla_\phi c(y_i,\phi).
\nonumber
\end{align}
Recall that $\phi_0$ is the conditional MLE defined to satisfy $\psi_n(\phi_0)=0$, $\phi_0^*$ satisfies $\psi(\phi_0^*)=0$, and both $(\phi_0,\phi_0^*)$ are unique by assumption.
Let $(\phi_0^* - \epsilon, \phi_0^* + \epsilon)$ be an $\epsilon$ neighborhood around $\phi_0^*$.
For sufficiently small $\epsilon$, the event that $\Psi_n(\phi_0^* - \epsilon) < 0$ and $\Psi_n(\phi_0^* + \epsilon)>0$ implies that
$\phi_0 \in (\phi_0^* - \epsilon, \phi_0^* + \epsilon)$, hence
\begin{align}
P_{F^*} \left( \Psi_n(\phi_0^* - \epsilon) < 0, \Psi_n(\phi_0^* + \epsilon) \right) \leq
P_{F^*} \left(\phi_0 \in (\phi_0^* - \epsilon, \phi_0^* + \epsilon) \right).
\nonumber
\end{align}

We shall show that the left-hand size converges to 1 as $n \rightarrow \infty$ for any $\epsilon >0$, implying that the right-hand side also converges to 1 and hence that $\phi_0 \stackrel{P}{\longrightarrow} \phi_0^*$.
First note that from (C1) we have that $(y_i,z_i) \sim F^*$ independently, hence by the weak law of large numbers $\psi_n(\phi) \stackrel{P}{\longrightarrow} \psi(\phi)$ for any fixed $\phi$.
This implies that $\psi_n(\phi_0^* - \epsilon) \stackrel{P}{\longrightarrow} \psi(\phi_0^* - \epsilon) < 0$
and $\psi_n(\phi_0^* + \epsilon) \stackrel{P}{\longrightarrow} \psi(\phi_0^* + \epsilon) > 0$, thus
\begin{align}
\lim_{n \rightarrow \infty} P_{F^*} \left( \Psi_n(\phi_0^* - \epsilon) < 0, \Psi_n(\phi_0^* + \epsilon) \right)=0.
\nonumber
\end{align}
\end{proof}

\section{Proof of Theorem \ref{thm:bf_ala}}
\label{sec:proof_bf_ala}

Let $\tilde{y}= (y - b(0) \mathbbm{1})/ b''(0)$, $\tilde{\beta}_\gamma= (Z_\gamma^T Z_\gamma)^{-1} Z_\gamma^T \tilde{y}$ and
assume the dispersion parameter $\phi$ is known.
Recall that the ALA Bayes factor for a local prior density $\tilde{p}^L(\beta \mid \phi)$, which we assume to be strictly positive for any $\beta$, is
\begin{align}
  \tilde{B}^L_{\gamma \tgamma^*}=
  \left( \frac{2 \pi \phi}{b''(0)} \right)^{(p_\gamma - p_{\tgamma^*})/2}
  \frac{|Z_{\tgamma^*}^T Z_{\tgamma^*}|^{\frac{1}{2}} p(\tilde{\beta}_\gamma \mid \phi, \gamma)}
  {|Z_{\gamma}^T Z_{\gamma}|^{\frac{1}{2}} p(\tilde{\beta}_{\tgamma^*} \mid \phi, \tgamma^*)}
  e^{\frac{b''(0)}{2\phi} (\tilde{\beta}_\gamma^T Z_\gamma^T Z_\gamma \tilde{\beta}_\gamma - \tilde{\beta}_{\tgamma^*}^T Z_{\tgamma^*}^T Z_{\tgamma^*} \tilde{\beta}_{\tgamma^*})}
\label{eq:normal_bf_knownphi}
\end{align}
and that for the gMOM prior it is
\begin{align}
\tilde{B}^N_{\gamma \tgamma^*}= \tilde{B}^L_{\gamma, \tgamma^*}
\frac{\prod_{\gamma_j=1} \mbox{tr}(V_j S_{\gamma j}) / p_j + \tilde{\beta}_{\gamma j}^T V_j \tilde{\beta}_{\gamma j}/ (\phi p_j)}
{\prod_{\tgamma^*_j=1} \mbox{tr}(V_j S_{\tgamma^* j}) / p_j + \tilde{\beta}_{\gamma j}^T V_j \tilde{\beta}_{\gamma j}/ (\phi p_j)}
\label{eq:gmom_bf_knownphi},
\end{align}
where $S_{\gamma,j}$ is the sub-matrix of $(Z_\gamma^T D_\gamma Z_\gamma)^{-1}$ corresponding to $\tilde{\beta}_{\gamma j}$
and $V_j= Z_j^T Z_j (p_j+2)/(n p_j g_N)$.

The proof strategy for $\tilde{B}^L_{\gamma, \tgamma^*}$
is to show that $Z^T_\gamma Z_\gamma/n$ and $Z^T_{\tgamma^*} Z_{\tgamma^*}/n$ converge to positive-definite matrices,
show that $\tilde{\beta}_\gamma$ is asymptotically normally-distributed,
to then prove that $p^L(\tilde{\beta}_\gamma \mid \phi, \gamma)$  and $p^L(\tilde{\beta}_{\tgamma^*} \mid \phi, \gamma)$ 
converge in probability to positive constants and characterize the sampling distribution of
$\tilde{\beta}_\gamma^T Z_\gamma^T Z_\gamma \tilde{\beta}_\gamma - \tilde{\beta}_{\tgamma^*}^T Z_{\tgamma^*}^T Z_{\tgamma^*} \tilde{\beta}_{\tgamma^*}$.
The proof for $\tilde{B}^N_{\gamma, \tgamma^*}$ then follows easily by characterizing the second term on the right-hand side of \eqref{eq:gmom_bf_knownphi}.

Recall that we denoted the parameter value minimizing mean squared error by
\begin{align}
 \tilde{\beta}_\gamma^*&=   \arg\min_{\beta_\gamma} E_{F^*} ||\tilde{y} - Z_\gamma \beta_\gamma||_2^2=
  [E_{F^*}(Z_\gamma^T Z_\gamma)]^{-1} E_{F^*}[ Z_\gamma^T \tilde{y} ].
  \nonumber
\end{align}

To prove the asymptotic Normality of $\tilde{\beta}_\gamma$ we use the results in~\cite{hjort:2011} (Section 3D)
for misspecified least squares, since $\tilde{\beta}_\gamma$ is the least-squares estimate for regressing $\tilde{y}$ on $Z_\gamma$.
The result in~\cite{hjort:2011} (Section 3D) requires that
$$
\tilde{L}_\gamma=E_{F^*(z_i)} (z_{i \gamma} z_{i \gamma}^T [E_{F^*}(\tilde{y}_i \mid z_{i \gamma}) - z_{i \gamma}^T \tbeta_\gamma^*]^2)
$$
is a finite matrix, and that
$Z_\gamma^TZ_\gamma/n$ and $Z_\gamma^T \Sigma_{\tilde{y} \mid z, \gamma} Z_\gamma /n$
converge to finite positive-definite matrices as $n \rightarrow \infty$ under $F^*$.
Without loss of generality assume that $Z_\gamma$ has zero column means.
Since $\Sigma_{y | z, \gamma}= \mbox{Cov}_{F^*}(y \mid Z_\gamma)$ and $L_\gamma$ are finite and positive-definite by assumption (C2),
then so are $\Sigma_{\tilde{y} | z, \gamma}$ and $\tilde{L}_\gamma$.
Also by assumption $\Sigma_{z \gamma}$ is a finite matrix and $z_1,\ldots,z_n$ are independent realizations from $F^*$, hence
$Z^TZ/n \stackrel{P}{\longrightarrow} \Sigma_{z \gamma}$
and $Z^T \Sigma_{\tilde{y} \mid z} Z /n=
\sum_{i=1}^n z_i z_i^T V_{F^*}(\tilde{y}_i \mid z_i)/n
\stackrel{P}{\longrightarrow} \Sigma_{\tilde{y} \mid z, \gamma}^{1/2} \Sigma_{z \gamma} \Sigma_{\tilde{y} \mid z, \gamma}^{1/2}$
by the weak law of large numbers.
Then~\cite{hjort:2011} (Section 3D) gives that
\begin{align}
  \sqrt{n} (\tilde{\beta}_\gamma - \tilde{\beta}_\gamma^*) \stackrel{D}{\longrightarrow}
  \mathcal{N}(0, V_\gamma),
\nonumber
\end{align}
where $V_\gamma= \Sigma_{z, \gamma}^{-1} (\Sigma_{\tilde{y} \mid z, \gamma}^{1/2} \Sigma_{z, \gamma} \Sigma_{\tilde{y} \mid z, \gamma}^{1/2} + \tilde{L}_\gamma) \Sigma_{z, \gamma}^{-1}$.
This implies $\tilde{\beta}_\gamma \stackrel{P}{\longrightarrow} \tilde{\beta}_\gamma^*$, hence by the continuous mapping theorem
\begin{align}
  \frac{\tilde{B}^L_{\gamma \tgamma^*}}
  {n^{(p_{\tgamma^*} - p_{\gamma})/2}}
  \left( \frac{b''(0)}{2 \pi \phi} \right)^{(p_\gamma - p_{\tgamma^*})/2}
  \frac{|\Sigma_{z, \gamma}|^{\frac{1}{2}} p(\tilde{\beta}_{\tgamma^*}^* \mid \phi, \tgamma^*)}
  {|\Sigma_{z,\tgamma^*}|^{\frac{1}{2}} p(\tilde{\beta}_\gamma^* \mid \phi, \gamma)}
  e^{-\frac{b''(0)}{2\phi} (\tilde{\beta}_\gamma^T Z_\gamma^T Z_\gamma \tilde{\beta}_\gamma - \tilde{\beta}_{\tgamma^*}^T Z_{\tgamma^*}^T Z_{\tgamma^*} \tilde{\beta}_{\tgamma^*})}
\stackrel{P}{\longrightarrow} 1,
\label{eq:bf_ala_asymp}
\end{align}
where
$$
  \left( \frac{b''(0)}{2 \pi \phi} \right)^{(p_\gamma - p_{\tgamma^*})/2}
  \frac{|\Sigma_{z, \gamma}|^{\frac{1}{2}} p(\tilde{\beta}_{\tgamma^*}^* \mid \phi, \tgamma^*)}
  {|\Sigma_{z,\tgamma^*}|^{\frac{1}{2}} p(\tilde{\beta}_\gamma^* \mid \phi, \gamma)}
$$
is a finite non-zero constant that does not depend on $n$.

To conclude the proof it suffices to characterize the exponential term in~\eqref{eq:bf_ala_asymp},
separately for Part (i) and (ii) in Theorem \ref{thm:bf_ala}.
In both cases we shall use that, from Lemma~\ref{lem:waldstat_nested}, for any two nested models $\gamma' \subset \gamma$ in the sense that $Z_{\gamma'}$ is a submatrix of $Z_\gamma$, 
\begin{align}
  \tilde{\beta}_\gamma^T Z_\gamma^T Z_\gamma \tilde{\beta}_\gamma - \tilde{\beta}_{\gamma'}^T Z_{\gamma'}^T Z_{\gamma'} \tilde{\beta}_{\gamma'}=
  b_{\gamma \setminus \gamma'}^T X_{\gamma \setminus \gamma'}^T X_{\gamma \setminus \gamma'} b_{\gamma \setminus \gamma'},
\label{eq:teststat_knownphi}
\end{align}
where
$X_{\gamma \setminus \gamma'}= Z_{\gamma \setminus \gamma'} - Z_{\gamma'}(Z_{\gamma'}^T Z_{\gamma'})^{-1} Z_{\gamma'}^TZ_{\gamma \setminus \gamma'}$
is orthogonal to the projection of $Z_\gamma$ onto $Z_{\gamma'}$
and $b_{\gamma \setminus \gamma'}= (X_{\gamma \setminus \gamma'}^TX_{\gamma \setminus \gamma'})^{-1} X_{\gamma \setminus \gamma'}^T \tilde{y}$
the least-squares estimate regressing $\tilde{y}$ on $X_{\gamma \setminus \gamma'}$.
Our earlier argument to prove the asymptotic normality of $\tilde{\beta}_\gamma$ also implies that
\begin{align}
  \sqrt{n}(b_{\gamma \setminus \gamma'} - \tilde{\beta}_{\gamma \setminus \gamma'}^*) \stackrel{D}{\longrightarrow} \mathcal{N}(0,S_{\gamma \setminus \gamma'}),
\label{eq:limitnormal_extraparameters}
\end{align}
for finite positive-definite $S_{\gamma \setminus \gamma'}$, where
\begin{align}
 \tilde{\beta}_{\gamma \setminus \gamma'}^*= [E_{F^*}(X_{\gamma \setminus \gamma'})^T X_{\gamma \setminus \gamma'} ]^{-1}
E_{F^*}[ X_{\gamma \setminus \gamma'}^T \tilde{y}].
\nonumber
\end{align}

\subsection*{Part (i), ${\bm \tilde{B}^L_{\gamma, \tgamma^*}}$}
Suppose that $\tgamma^* \subset \gamma$. Then, by definition of $\tgamma^*$, we have that $\tilde{\beta}_{\gamma \setminus \tgamma^*}^*=0$.
Set $\gamma'=\tgamma^*$, denote by $W= b_{\gamma \setminus \tgamma^*}^T S_{\gamma \setminus \gamma'}^{-1} b_{\gamma \setminus \tgamma^*}$ and note that
\begin{align}
 \lambda_1 W \leq
b_{\gamma \setminus \tgamma^*}^T X_{\gamma \setminus \tgamma^*}^T X_{\gamma \setminus \tgamma^*} S_{\gamma \setminus \gamma'} S_{\gamma \setminus \gamma'}^{-1} b_{\gamma \setminus \tgamma^*}
\leq \lambda_2 W
\label{eq:waldstat_bound_eigenvals}
\end{align}
where $(\lambda_1,\lambda_2)$ are the smallest and largest eigenvalues of
$X_{\gamma \setminus \tgamma^*}^T X_{\gamma \setminus \tgamma^*} S_{\gamma \setminus \gamma'}$.
Since $\sqrt{n} b_{\gamma \setminus \tgamma^*} \stackrel{D}{\longrightarrow} \mathcal{N}(0,S_{\gamma \setminus \gamma'})$, by Slutsky's theorem
$W \stackrel{D}{\longrightarrow} \chi^2_{p_\gamma - p_{\tgamma^*}}$ as $n \rightarrow \infty$.
Thus the exponential term in~\eqref{eq:bf_ala_asymp} is $O_p(1)$ and
$$
  \tilde{B}^L_{\gamma \tgamma^*}= n^{(p_{\tgamma^*} - p_{\gamma})/2} O_p(1),
$$
as we wished to prove.

\subsection*{Part (ii), ${\bm \tilde{B}^L_{\gamma, \tgamma^*}}$}
Suppose now that $\tgamma^* \not\subseteq \gamma$.
Then by Lemma \ref{lem:waldstat_nested}
\begin{align}
  \tilde{\beta}_\gamma^T Z_\gamma^T Z_\gamma \tilde{\beta}_\gamma - \tilde{\beta}_{\tgamma^*}^T Z_{\tgamma^*}^T Z_{\tgamma^*} \tilde{\beta}_{\tgamma^*}
  \pm \tilde{\beta}_{\gamma'}^T Z_{\gamma'}^T Z_{\gamma'} \tilde{\beta}_{\gamma'}
  =
  b_{\gamma' \setminus \tgamma^*}^T X_{\gamma' \setminus \tgamma^*}^T X_{\gamma' \setminus \tgamma^*} b_{\gamma' \setminus \tgamma^*}
-  b_{\gamma \setminus \gamma'}^T X_{\gamma \setminus \gamma'}^T X_{\gamma \setminus \gamma'} b_{\gamma \setminus \gamma'},
\nonumber
\end{align}
where we set $\gamma'= \tgamma^* \cup \gamma$ to be the model such that $Z_{\gamma'}$ includes all columns from $Z_{\tgamma^*}$ and $Z_{\gamma}$.
Note that Lemma~\ref{lem:waldstat_nested} applies since both
$\tgamma^* \subseteq \gamma'$ and $\gamma \subseteq \gamma'$.
Now, $\tgamma^* \subseteq \gamma'$ implies that $\tilde{\beta}^*_{\gamma' \setminus \tgamma^*}=0$,
applying the same argument as in~\eqref{eq:waldstat_bound_eigenvals} gives that
$b_{\gamma' \setminus \tgamma^*}^T X_{\gamma' \setminus \tgamma^*}^T X_{\gamma' \setminus \tgamma^*} b_{\gamma' \setminus \tgamma^*}= O_p(1)$.
Finally consider the term $b_{\gamma \setminus \gamma'}^T X_{\gamma \setminus \gamma'}^T X_{\gamma \setminus \gamma'} b_{\gamma \setminus \gamma'}$.
Analogously to~\eqref{eq:waldstat_bound_eigenvals}
let $W= b_{\gamma \setminus \gamma'}^T S_{\gamma \setminus \gamma'}^{-1} b_{\gamma \setminus \gamma'}$
and note that
\begin{align}
  \lambda_1 W \leq
b_{\gamma \setminus \gamma'}^T X_{\gamma \setminus \gamma'}^T X_{\gamma \setminus \gamma'} S_{\gamma \setminus \gamma'} S_{\gamma \setminus \gamma'}^{-1} b_{\gamma \setminus \gamma'}
\leq \lambda_2 W
\nonumber
\end{align}
where now $(\lambda_1,\lambda_2)$ denote the smallest and largest eigenvalues of
$X_{\gamma \setminus \gamma'}^T X_{\gamma \setminus \gamma'} S_{\gamma \setminus \gamma'}$.
Now from~\eqref{eq:limitnormal_extraparameters} we have that
$$
\sqrt{n} (b_{\gamma \setminus \gamma'} - \tilde{\beta}^*_{\gamma \setminus \gamma'}) \stackrel{D}{\longrightarrow} \mathcal{N}(0, S_{\gamma \setminus \gamma'}),
$$
where $\tilde{\beta}^*_{\gamma \setminus \gamma'} \neq 0$. Therefore by Slutky's theorem
$W \stackrel{D}{\longrightarrow}  \chi^2_{p_{\gamma'} - p_\gamma}(l(\gamma',\gamma))$
with non-zero non-centrality parameter
$l(\gamma',\gamma)= n (\tilde{\beta}^*_{\gamma' \setminus \gamma})^T S_{\gamma' \setminus \gamma} \tilde{\beta}^*_{\gamma' \setminus \gamma} \neq 0$.
Note that $l(\gamma',\gamma)$ can be interpreted as the increase in expected sum of squares in $\tilde{y}$ explained by $\gamma'= \tgamma^* \cup \gamma$, relative to that explained by $\gamma$ alone.
Therefore, from~\eqref{eq:bf_ala_asymp} we obtain
\begin{align}
  \frac{1}{n} \log \tilde{B}^L_{\gamma \tgamma^*}&=
  \frac{b''(0)}{n \phi} [ O_p(1) - b_{\gamma \setminus \gamma'}^T X_{\gamma \setminus \gamma'}^T X_{\gamma \setminus \gamma'} b_{\gamma \setminus \gamma'}]
  <   \frac{b''(0)}{n \phi} [ O_p(1) - \lambda_1 W]
  \nonumber \\
  &\stackrel{P}{\longrightarrow}
  \phi^{-1} b''(0) \lambda_1 (\tilde{\beta}^*_{\gamma' \setminus \gamma})^T S_{\gamma' \setminus \gamma} \tilde{\beta}^*_{\gamma' \setminus \gamma} > 0,
\nonumber
\end{align}
as we wished to prove.

\subsection*{Part (i), ${\bm \tilde{B}^N_{\gamma, \tgamma^*}}$}

First note that 
$\mbox{tr}(V_j S_{\gamma j}) / p_j= O_p(1/n)$, since $S_j$ is a sub-matrix of $(Z_\gamma^T D_\gamma Z_\gamma)^{-1}$ and we showed earlier that $Z_\gamma^T D_\gamma Z_\gamma/n \stackrel{P}{\longrightarrow} \Sigma_{\gamma z}$, where $\Sigma_{\gamma z}$ is a strictly positive-definite matrix, and similarly $V_j= Z_j^T Z_j (p_j+2)/(n p_j g_N)$ converges to a finite positive-definite matrix.
The same reasoning implies that $\mbox{tr}(V_j S_{\tgamma^* j}) / p_j= O_p(1/n)$.

Second, we showed that 
$\tilde{\beta}_{\tgamma^* j} \stackrel{P}{\longrightarrow} \tilde{\beta}_{j}^* \neq 0$,
$\tilde{\beta}_{\gamma j} \stackrel{P}{\longrightarrow} \tilde{\beta}_{j}^* \neq 0$ for truly active $j \in \tgamma^*$,
and $\tilde{\beta}_{\gamma j}= O_p(1/n)$ for $j \in \gamma \setminus \tgamma^*$.
This immediately implies that
\begin{align}
&\frac{\prod_{\gamma_j=1} \mbox{tr}(V_j S_{\gamma j}) / p_j + \tilde{\beta}_{\gamma j}^T V_j \tilde{\beta}_{\gamma j}/ (\phi p_j)}
{\prod_{\tgamma^*_j=1} \mbox{tr}(V_j S_{\tgamma^* j}) / p_j + \tilde{\beta}_{\gamma j}^T V_j \tilde{\beta}_{\gamma j}/ (\phi p_j)}=
\nonumber \\
&\frac{\prod_{j \in \gamma \setminus \tgamma^*} O_p(1/n) + O_p(1/n) \prod_{\tgamma^*_j=1} O_p(1/n) + [(\tilde{\beta}_{j}^*)^T V_j \tilde{\beta}_{j}^* + o_p(1)]/ (\phi p_j)}
{\prod_{\tgamma^*_j=1} O_p(1/n) + [(\tilde{\beta}_{j}^*)^T V_j \tilde{\beta}_{j}^* + o_p(1)]/ (\phi p_j)}=
O_p(n^{-(p_\gamma - p_{\tgamma^*})}),
\nonumber
\end{align}
since $(\tilde{\beta}_{j}^*)^T V_j \tilde{\beta}_{j}^*$ is a non-zero constant, as we wished to prove.

\subsection*{Part (ii), ${\bm \tilde{B}^N_{\gamma, \tgamma^*}}$}

Arguing as above,
\begin{align}
&\frac{\prod_{\gamma_j=1} \mbox{tr}(V_j S_{\gamma j}) / p_j + \tilde{\beta}_{\gamma j}^T V_j \tilde{\beta}_{\gamma j}/ (\phi p_j)}
{\prod_{\tgamma^*_j=1} \mbox{tr}(V_j S_{\tgamma^* j}) / p_j + \tilde{\beta}_{\gamma j}^T V_j \tilde{\beta}_{\gamma j}/ (\phi p_j)}=
\frac{\prod_{\gamma_j=1} O_p(1/n) + \tilde{\beta}_{\gamma j}^T V_j \tilde{\beta}_{\gamma j}/ (\phi p_j)}
{\prod_{\tgamma^*_j=1} [(\tilde{\beta}_{j}^*)^T V_j \tilde{\beta}_{j}^* + o_p(1)]/ (\phi p_j)}=
O_p(n^{-|\tilde{\beta}^*_\gamma|_0} )
\nonumber
\end{align}
where $|\tilde{\beta}^*_\gamma|_0= \sum_{\gamma_j=1} \mbox{I}(\tilde{\beta}_{\gamma j}^*)$ is the number of parameters with optimal value equal to 0.
Hence
\begin{align}
\frac{1}{n} \log \tilde{B}^N_{\gamma, \tgamma^*}= \frac{1}{n} \log \tilde{B}^L_{\gamma, \tgamma^*} + o_p(1),
\nonumber
\end{align}
proving the desired result.

\section{Proof of Corollary \ref{cor:bf_ala_unknownphi}}
\label{sec:proof_bf_ala_unknownphi}

The proof strategy is as follows.
First, note that $\tilde{\phi}_\gamma= \phi_0 - H_0^{-1} g_0$ converges in probability to a finite non-zero constant,
since $\phi_0 \stackrel{P}{\longrightarrow} \phi_0^*$ by Lemma \ref{lem:phi0_convergence}, $g_0/n \stackrel{P}{\longrightarrow} g_0^*$ and $H_0/n \stackrel{P}{\longrightarrow} H_0^*$ for some fixed non-zero $g_0^*$, $H_0^*$.
Arguing as in the proof of Theorem \ref{thm:bf_ala} shows that
\begin{align}
\tilde{B}^N_{\gamma \tgamma^*}= \tilde{B}^L_{\gamma \tgamma^*}
\frac{\prod_{\gamma_j=1} \mbox{tr}(V_j S_{\gamma j})/p_j + \frac{\tilde{\beta}_{\gamma j}^T V_j \tilde{\beta}_{\gamma j}}{\tilde{\phi}_\gamma p_j}}
{\prod_{\tgamma^*_j=1} \mbox{tr}(V_j S_{\tgamma^* j})/p_j + \frac{\tilde{\beta}_{\tgamma^* j}^T V_j \tilde{\beta}_{\tgamma^* j}}{\tilde{\phi}_\tgamma^* p_j}}
\nonumber
\end{align}
is equal to $\tilde{B}^L_{\gamma \tgamma^*} O_p(n^{-(p_\gamma - p_{\tgamma^*})})$ when $\tgamma^* \in \gamma$,
and equal to $\tilde{B}^L_{\gamma \tgamma^*} O_p(n^{-|\tilde{\beta}_\tgamma^*|_0})$ when $\tgamma^* \in \gamma$.
These expressions for $\tilde{B}^N_{\gamma \tgamma^*}$ are the same as those in Theorem \ref{thm:bf_ala} where $\phi$ was assumed known,
hence it suffices to show that $\tilde{B}^L_{\gamma \tgamma^*}$ also attains the same rates as in Theorem \ref{thm:bf_ala}.
To prove this, we use algebraic manipulation and auxiliary Lemma \ref{lem:phi0_convergence}
to show that the leading term in the Bayes factor exponent for unknown $\phi$ converges to a constant multiple of that in the known $\phi$ case, which immediately gives the desired result.

Specifically, recall that
\begin{align}
\tilde{B}^L_{\gamma,\gamma'}= 
e^{\frac{b''(0)}{2\phi_0}[ 
t_\gamma \tilde{\beta}_\gamma^T Z_\gamma^T Z_\gamma \tilde{\beta}_\gamma
- t_{\gamma'} \tilde{\beta}_{\gamma'}^T Z_{\gamma'}^T Z_{\gamma'} \tilde{\beta}_{\gamma'}
 ]}
(2 \pi)^{\frac{p_\gamma-p_{\gamma'}}{2}}
\frac{|H_{\gamma' 0}|^{\frac{1}{2}} p^L(\tilde{\beta}_\gamma, \tilde{\phi}_\gamma \mid \gamma)}{|H_{\gamma 0}|^{\frac{1}{2}} p^L(\tilde{\beta}_{\gamma'}, \tilde{\phi}_{\gamma'} \mid \gamma)} 
\nonumber
\end{align}
where
\begin{align}
H_{\gamma 0}&= \frac{b''(0)}{\phi_0} \begin{pmatrix} Z_\gamma^TZ_\gamma & -Z_\gamma^T \tilde{y}/\phi_0 \\ 
-\tilde{y}^T Z_\gamma/\phi_0 & s(\phi_0) \end{pmatrix}
\nonumber \\
t_\gamma&= 1 + \frac{\tilde{\beta}_\gamma^T Z_\gamma^TZ_\gamma \tilde{\beta}_\gamma}{\phi_0^2 (s(\phi_0) - \tilde{\beta}_\gamma^T Z_\gamma^TZ_\gamma \tilde{\beta}_\gamma)}
\nonumber \\
s(\phi_0)&= [2n b(0)/\phi_0^2 + \phi_0 \sum_{i=1}^n \nabla_{\phi \phi}^2 c(y_i,\phi_0)]/b''(0).
\nonumber
\end{align}

Arguing as in the proof of Theorem \ref{thm:bf_ala} gives that
\begin{align}
 \frac{|H_{\gamma' 0}|^{\frac{1}{2}}}{|H_{\gamma 0}|^{\frac{1}{2}}}=
n^{\frac{(p_{\gamma'} - p_\gamma)}{2}} \frac{|n^{-1} H_{\gamma' 0}|^{\frac{1}{2}}}{|n^{-1} H_{\gamma 0}|^{\frac{1}{2}}}
=
n^{\frac{(p_{\gamma'} - p_\gamma)}{2}} \left(\frac{\phi_0}{b''(0)}\right)^{\frac{p_\gamma - p_{\gamma'}}{2}}
\left[ \frac{|\Sigma_{z,\gamma'}|^{\frac{1}{2}}}{|\Sigma_{z,\gamma}|^{\frac{1}{2}}} + o_p(1) \right],
\nonumber
\end{align}
where $\Sigma_{z,\gamma'}$ and $\Sigma_{z,\gamma}$ are fixed and strictly positive-definite matrices.
This gives an expression analogous to \eqref{eq:bf_ala_asymp}, specifically
\begin{align}
  \frac{\tilde{B}^L_{\gamma \gamma'}}
  {n^{(p_{\gamma'} - p_{\gamma})/2}}
  \left( \frac{b''(0)}{2 \pi \phi_0} \right)^{\frac{p_\gamma - p_{\gamma'}}{2}}
  \frac{|\Sigma_{z, \gamma}|^{\frac{1}{2}} p(\tilde{\beta}_{\gamma'}^*, \tilde{\phi}_{\gamma'}^* \mid \gamma')}
  {|\Sigma_{z,\gamma'}|^{\frac{1}{2}} p(\tilde{\beta}_\tgamma^*, \tilde{\phi}_\tgamma^* \mid \gamma)}
  e^{-\frac{b''(0)}{2\phi_0} (t_\gamma \tilde{\beta}_\gamma^T Z_\gamma^T Z_\gamma \tilde{\beta}_\gamma - t_{\gamma'} \tilde{\beta}_{\gamma'}^T Z_{\gamma'}^T Z_{\gamma'} \tilde{\beta}_{\gamma'})}
\stackrel{P}{\longrightarrow} 1,
\label{eq:bf_ala_asymp_unknownphi}
\end{align}
where
$$
  \left( \frac{b''(0)}{2 \pi \phi_0} \right)^{(p_\gamma - p_{\gamma'})/2}
  \frac{|\Sigma_{z, \gamma}|^{\frac{1}{2}} p(\tilde{\beta}_{\gamma'}^*, \tilde{\phi}_{\gamma'} \mid \gamma')}
  {|\Sigma_{z,\gamma'}|^{\frac{1}{2}} p(\tilde{\beta}_\tgamma^*, \tilde{\phi}_\gamma \mid \gamma)}
$$
converges in probability to a finite non-zero constant that does not depend on $n$,
since $\phi_0 \stackrel{P}{\longrightarrow} \phi_0^*$ by Lemma \ref{lem:phi0_convergence} and $p(\beta_\gamma, \phi_\gamma \mid \gamma)>0$ is continuous by Condition (C3).

Now, set $\gamma'= \tgamma^*$. The term in the exponent of \eqref{eq:bf_ala_asymp_unknownphi} can be written
\begin{align}
t_\gamma \tilde{\beta}_\gamma^T Z_\gamma^T Z_\gamma \tilde{\beta}_\gamma - t_{\gamma'} \tilde{\beta}_{\gamma'}^T Z_{\gamma'}^T Z_{\gamma'} \tilde{\beta}_{\gamma'}=
 t_\gamma (\tilde{\beta}_\gamma^T Z_\gamma^T Z_\gamma \tilde{\beta}_\gamma - \tilde{\beta}_{\tgamma^*}^T Z_{\tgamma^*}^T Z_{\tgamma^*} \tilde{\beta}_{\tgamma^*})
+ (t_\gamma - t_{\tgamma^*}) \tilde{\beta}_{\tgamma^*}^T Z_{\tgamma^*}^T Z_{\tgamma^*} \tilde{\beta}_{\tgamma^*}
\nonumber
\end{align}
To complete the proof we show that the second term in the right-hand side is $O_p(1)$, then proceed as in the proof of Theorem \ref{thm:bf_ala} to conclude the proof.
Specifically, the asymptotic normality of $\tilde{\beta}_{\tgamma^*}$ implies that
$\tilde{\beta}_{\tgamma^*}^T Z_{\tgamma^*}^T Z_{\tgamma^*} \tilde{\beta}_{\tgamma^*}= O_p(1)$ (see proof of Theorem \ref{thm:bf_ala}).
Further, in
\begin{align}
 t_\gamma= 1 + \frac{\tilde{\beta}_\gamma^T Z_\gamma^TZ_\gamma \tilde{\beta}_\gamma/n}{\phi_0^2 (s(\phi_0)/n - \tilde{\beta}_\gamma^T Z_\gamma^TZ_\gamma \tilde{\beta}_\gamma/n)}
\nonumber
\end{align}
the term 
$\tilde{\beta}_\gamma^T Z_\gamma^TZ_\gamma \tilde{\beta}_\gamma/n \stackrel{P}{\longrightarrow} (\tilde{\beta}_\tgamma^*)^T \Sigma_{\gamma, z} \tilde{\beta}_\tgamma^*>0$ converges in probability to a finite constant by the strong law of large numbers if $\tilde{\beta}_\tgamma^* \neq 0$, and $\tilde{\beta}_\gamma^T Z_\gamma^TZ_\gamma \tilde{\beta}_\gamma/n=O_p(1/n)$ if $\tilde{\beta}_\tgamma^* =0$.
Further,
\begin{align}
\frac{s(\phi_0)}{n}= \frac{1}{b''(0)} \left[ \frac{2 b(0)}{\phi_0^2} + \frac{\phi_0}{n} \sum_{i=1}^n \nabla_{\phi \phi}^2 c(y_i,\phi_0) \right]
\stackrel{P}{\longrightarrow} 
\frac{1}{b''(0)} \left[ \frac{2 b(0)}{\phi_0^2} + \phi_0 E_{F^*} [\nabla_{\phi \phi}^2 c(y_i,\phi_0) ] \right] >0.
\nonumber
\end{align}
Hence by the continuous mapping theorem $t_\gamma$ converges in probability to a finite constant, so does $t_\gamma - t_{\tgamma^*}$,
implying that 
$
(t_\gamma - t_{\tgamma^*}) \tilde{\beta}_{\tgamma^*}^T Z_{\tgamma^*}^T Z_{\tgamma^*} \tilde{\beta}_{\tgamma^*}= O_p(1)
$
and
 \begin{align}
   e^{-\frac{b''(0)}{2\phi_0} (t_\gamma \tilde{\beta}_\gamma^T Z_\gamma^T Z_\gamma \tilde{\beta}_\gamma - t_{\gamma'} \tilde{\beta}_{\gamma'}^T Z_{\gamma'}^T Z_{\gamma'} \tilde{\beta}_{\gamma'})}=
e^{-\frac{b''(0) t_\gamma}{2\phi_0} (\tilde{\beta}_\gamma^T Z_\gamma^T Z_\gamma \tilde{\beta}_\gamma - \tilde{\beta}_{\tgamma^*}^T Z_{\tgamma^*}^T Z_{\tgamma^*} \tilde{\beta}_{\tgamma^*})} \times O_p(1)
\nonumber
\end{align}
The proof is completed by noting that $b''(0) t_\gamma/ \phi_0$ converges in probability to a constant and proceeding as in the proof of Theorem \ref{thm:bf_ala}, after \eqref{eq:bf_ala_asymp}, to characterize the exponential term 
$\tilde{\beta}_\gamma^T Z_\gamma^T Z_\gamma \tilde{\beta}_\gamma - \tilde{\beta}_{\tgamma^*}^T Z_{\tgamma^*}^T Z_{\tgamma^*} \tilde{\beta}_{\tgamma^*}$
for the cases $\tgamma^* \subset \gamma$ and $\tgamma^* \not\subseteq \gamma$.

\newpage
\section{Auxiliary results on sub-Gaussian random vectors}
\label{sec:auxiliary_bfala_highdim}

This section contains results on sub-Gaussian random vectors. Although their main interest for the current work is as auxiliary results used in the proof of Theorem \ref{thm:bf_ala_highdim}, they have some independent interest in bounding integrals related to Bayes factors that can bound posterior model probabilities in more general settings beyond the ALA framework considered in Theorem \ref{thm:bf_ala_highdim}.

We first recall the definition of a sub-Gaussian random vector and the basic property that linear combinations of sub-Gaussian vectors are sub-Gaussian (Lemma \ref{lem:subgaussian_lincomb}). 
Lemma \ref{lem:subgaussian_quadform} proves a useful property that certain quadratic forms of $n$-dimensional sub-Gaussian vectors can be re-expressed as a quadratic form of $d$-dimensional sub-Gaussian vectors.

A second set of results in Lemma \ref{lem:tail_quadform_subgaussian} and Lemma \ref{lem:lefttail_quadform_subgaussian} bound right tail and left tail probabilities (respectively) involving quadratic forms of sub-Gaussian random vectors.
Lemma \ref{lem:tail_quadform_subgaussian} is an adaptation of Theorem 1 in \cite{hsu:2012}. 
The interpretation is that, the probability that $s^Ts$ exceeds a threshold $\sigma^2 d q$ that is larger than its expectation $E(s^Ts) \leq \sigma^2 d$, decreases exponentially in $dq/2(1+k_0)$, where $k_0$ is of order $1/\sqrt{q}$ as $q$ grows, and hence $k_0$ becomes arbitrarily close to 0 for large $q$. The obtained bound is analogous to those for quadratic forms of Gaussian random variables. 
Regarding Lemma \ref{lem:lefttail_quadform_subgaussian}, the interpretation is that the probability that $s^Ts$ is less than a threshold $a$ that is smaller than the non-centrality parameter $\mu^T\mu$ decreases exponentially in $-\mu^T\mu/\sigma^2$.

The last set of results are Proposition \ref{prop:tailintegral_quadform_subgaussian} and Proposition \ref{prop:tailintegral_difquadform_subgaussian}. Both are novel results.
Proposition \ref{prop:tailintegral_quadform_subgaussian} provides a finite-sample bound for an integral involving tail probabilities as in Lemma \ref{lem:tail_quadform_subgaussian}. 
Finally, Proposition \ref{prop:tailintegral_difquadform_subgaussian} provides a bound for a similar integral involving differences between quadratic forms of central and non-central sub-Gaussian vectors. 
In both propositions one should think of $c$ as being a constant and $h$ being a large number, e.g. in ALA Bayes factors $h$ typically grows logarithmically in $n$.

If Bayes factors can be expressed (or bounded by) functions involving sub-Gaussian quadratic forms, as is the case for ALA to Bayes factors, then these propositions allow bounding the expectation of the posterior probability assigned to any given model (see Section \ref{sec:proof_bf_ala_highdim}). Such bounds on posterior probabilities in turn bound their $L_1$ convergence rate to 0, which in turn implies strong frequentist properties for the posterior probabilities, see \cite{rossell:2018}.

We first state the results, and subsequently provide the proofs.

\begin{defn}
A $d$-dimensional random vector $s=(s_1,\ldots,s_d)$ follows a sub-Gaussian distribution with parameters $\mu \in \mathbb{R}^d$ and $\sigma^2 > 0$, which we denote by $s \sim \mbox{SG}(\mu,\sigma^2)$ if and only if
$$
E\left[ \exp\{ \alpha^T(s - \mu) \} \right] \leq \exp\{ \alpha^T\alpha \sigma^2/2 \}
$$
for all $\alpha \in \mathbb{R}^d$.
\label{def:subgaussian}
\end{defn}

\begin{lemma}
Let $s \sim \mbox{SG}(\mu, \sigma^2)$ be a sub-Gaussian $d$-dimensional random vector, and $A$ be a $q \times d$ matrix.
Then $As \sim \mbox{SG}(A\mu, \lambda \sigma^2)$, where $\lambda$ is the largest eigenvalue of $A^TA$, or equivalently the largest eigenvalue of $A A^T$.
\label{lem:subgaussian_lincomb}
\end{lemma}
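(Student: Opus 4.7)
The plan is to work directly from the moment generating function characterization of sub-Gaussianity in Definition \ref{def:subgaussian}. Fix any $\beta \in \mathbb{R}^q$. The first step is to rewrite the moment generating function of $As - A\mu$ by shifting the linear functional onto $s$, namely
\[
E\left[ \exp\{ \beta^T(As - A\mu) \} \right] = E\left[ \exp\{ (A^T\beta)^T(s - \mu) \} \right].
\]
Since $A^T\beta \in \mathbb{R}^d$, I can apply the sub-Gaussian property of $s$ with $\alpha = A^T\beta$ to obtain the bound
\[
E\left[ \exp\{ (A^T\beta)^T(s - \mu) \} \right] \leq \exp\left\{ (A^T\beta)^T(A^T\beta)\, \sigma^2 / 2 \right\} = \exp\left\{ \beta^T A A^T \beta \, \sigma^2 / 2 \right\}.
\]

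The second step is to bound the quadratic form $\beta^T A A^T \beta$ by $\lambda \, \beta^T\beta$, where $\lambda$ is the largest eigenvalue of $AA^T$. This is the standard Rayleigh quotient bound, which follows because $AA^T$ is symmetric positive semidefinite and hence diagonalizable with nonnegative eigenvalues. I would also note the elementary linear algebra fact that $AA^T$ and $A^TA$ share the same nonzero eigenvalues (e.g.\ via the singular value decomposition of $A$), which justifies the ``equivalently'' clause in the statement.

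Combining these two steps yields
\[
E\left[ \exp\{ \beta^T(As - A\mu) \} \right] \leq \exp\left\{ \lambda \beta^T\beta \, \sigma^2 / 2 \right\}
\]
for every $\beta \in \mathbb{R}^q$, which is precisely the sub-Gaussian condition for $As$ with mean parameter $A\mu$ and variance parameter $\lambda \sigma^2$. There is no real obstacle here; the proof is essentially a one-line calculation plus the Rayleigh bound, and I do not anticipate needing any technical conditions beyond those already built into Definition \ref{def:subgaussian}.
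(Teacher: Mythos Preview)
Your proposal is correct and follows essentially the same approach as the paper: rewrite $\beta^T(As-A\mu)$ as $(A^T\beta)^T(s-\mu)$, apply the sub-Gaussian bound for $s$, and then use the Rayleigh quotient inequality $\beta^T AA^T\beta \le \lambda\,\beta^T\beta$. Your version is in fact slightly more careful than the paper's, since you correctly take the test vector in $\mathbb{R}^q$ and explicitly justify why $AA^T$ and $A^TA$ share the same largest eigenvalue.
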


\begin{lemma}
Let $y \sim \mbox{SG}(\mu,\sigma^2)$ be an $n$-dimensional sub-Gaussian random vector.
Let $W= (y-a)^T X (X^T X)^{-1} X^T (y-a)$, where $X$ is an $n \times d$ matrix such that $X^TX$ is invertible.
Then $W= s^T s$, where $s \sim \mbox{SG}((X^TX)^{-1/2} X^T (\mu - a), \sigma^2)$ is $d$-dimensional.
\label{lem:subgaussian_quadform}
\end{lemma}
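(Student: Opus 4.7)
The plan is to construct $s$ explicitly and then verify both claims (that $W = s^T s$ and that $s$ has the stated sub-Gaussian distribution) by direct computation, leveraging Lemma \ref{lem:subgaussian_lincomb}. Define
\[
s := (X^T X)^{-1/2} X^T (y - a),
\]
which is a well-defined $d$-dimensional vector since $X^T X$ is positive definite by hypothesis.

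First I would verify the algebraic identity $W = s^T s$ by expanding
\[
s^T s = (y-a)^T X (X^T X)^{-1/2} (X^T X)^{-1/2} X^T (y-a) = (y-a)^T X (X^T X)^{-1} X^T (y-a) = W,
\]
using that $(X^T X)^{-1/2}$ is symmetric and $(X^T X)^{-1/2} (X^T X)^{-1/2} = (X^T X)^{-1}$.

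Next I would analyze the distribution of $s$. Write $s = A y - A a$ where $A = (X^T X)^{-1/2} X^T$ is a $d \times n$ matrix. Since shifting by the deterministic vector $A a$ only shifts the mean parameter in Definition \ref{def:subgaussian} (the moment-generating inequality is stated in terms of $s - \mu$), it suffices to determine the distribution of $A y$ and then subtract $A a$ from its mean. By Lemma \ref{lem:subgaussian_lincomb}, $A y \sim \mbox{SG}(A \mu, \lambda \sigma^2)$ where $\lambda$ is the largest eigenvalue of $A A^T$. Computing directly,
\[
A A^T = (X^T X)^{-1/2} X^T X (X^T X)^{-1/2} = I_d,
\]
so $\lambda = 1$ and $A y \sim \mbox{SG}(A \mu, \sigma^2)$. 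Hence $s = A y - A a \sim \mbox{SG}(A(\mu - a), \sigma^2)$, which is exactly the claimed distribution.

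There is no real obstacle here; the whole argument is a two-line computation once one recognizes that $(X^T X)^{-1/2} X^T$ has orthonormal rows, so that the variance proxy $\sigma^2$ is preserved under this linear map. The only care needed is to observe that the sub-Gaussian shift property (mean parameter changes, variance proxy does not under deterministic translation) is built directly into Definition \ref{def:subgaussian}, and to handle the symmetric square root $(X^T X)^{-1/2}$ consistently.
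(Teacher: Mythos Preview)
Your proposal is correct and follows essentially the same approach as the paper: define $s = (X^TX)^{-1/2} X^T (y-a)$, apply Lemma~\ref{lem:subgaussian_lincomb} with $A = (X^TX)^{-1/2} X^T$, and observe that $AA^T = I_d$ so the variance proxy is unchanged. The paper's version is slightly terser (it does not spell out the $W = s^Ts$ computation or the deterministic-shift step), but the argument is the same.
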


\begin{lemma} {\bf Central sub-Gaussian quadratic forms. Right-tail probabilities}
Let $s=(s_1,\ldots,s_d) \sim \mbox{SG}(0,\sigma^2)$. Then
\begin{enumerate}[leftmargin=*,label=(\roman*)]
\item For any $t>0$,
\begin{align}
 P \left( \frac{s^Ts}{\sigma^2} > d t [1 + \sqrt{2/t} + 1/t] \right) \leq \exp\left\{- \frac{dt}{2} \right\}.
\nonumber
\end{align}

\item For any $q>0$ and any $k_0$ such that $k_0 \geq \frac{(1+k_0)}{q} + \sqrt{\frac{2(1+k_0)}{q}}$,
\begin{align}
P \left( \frac{ s^T s}{\sigma^2} > dq \right) \leq \exp \left\{  - \frac{d q}{2(1+k_0)} \right\}
\nonumber
\end{align}

\item For any $q \geq 2 (1+\sqrt{2})^2$,
\begin{align}
P \left( \frac{ s^T s}{\sigma^2} > dq \right) \leq \exp \left\{  - \frac{d q}{2(1+k_0)} \right\}
\nonumber
\end{align}
where $k_0\geq \sqrt{2}(1+\sqrt{2})/\sqrt{q}$.
\end{enumerate}
\label{lem:tail_quadform_subgaussian}
\end{lemma}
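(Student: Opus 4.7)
My plan is to deduce all three parts from the sub-Gaussian quadratic form tail bound of Hsu, Kakade and Zhang (2012), which for $s \sim \mathrm{SG}(0,\sigma^2)$ and any $d \times d$ psd matrix $A^TA$ states
\begin{align}
 P\!\left( \|A s\|^2 > \sigma^2 \left[ \mathrm{tr}(A^TA) + 2\sqrt{\mathrm{tr}((A^TA)^2) \, u} + 2\|A^TA\|\, u \right] \right) \leq e^{-u}
\nonumber
\end{align}
for every $u>0$. The first step is to specialize to $A = I_d$, so $\mathrm{tr}(A^TA) = \mathrm{tr}((A^TA)^2) = d$ and $\|A^TA\|=1$, and then substitute $u = dt/2$. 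A short algebraic rearrangement gives $d + 2\sqrt{d \cdot dt/2} + 2 \cdot dt/2 = d(1 + \sqrt{2t} + t) = dt[1 + \sqrt{2/t} + 1/t]$, so Part (i) follows immediately.

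For Part (ii), I plan to reparameterize by setting $t = q/(1+k_0)$ inside Part (i). The threshold $dt[1 + \sqrt{2/t} + 1/t]$ then equals $dq$ precisely when $1 + k_0 = 1 + \sqrt{2(1+k_0)/q} + (1+k_0)/q$, i.e.\ $k_0 = (1+k_0)/q + \sqrt{2(1+k_0)/q}$. If $k_0$ merely satisfies the inequality $k_0 \geq (1+k_0)/q + \sqrt{2(1+k_0)/q}$, then $t(1 + \sqrt{2/t} + 1/t) \leq q$, so the event $\{s^Ts/\sigma^2 > dq\}$ is contained in the event of Part (i), which gives the bound $e^{-dt/2} = e^{-dq/[2(1+k_0)]}$. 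The delicate point will be verifying that the inequality version (rather than equality) still yields the correct direction of containment, but this is immediate from monotonicity of $t \mapsto t(1 + \sqrt{2/t} + 1/t)$.

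For Part (iii), I will simply check that the explicit choice $k_0 = \sqrt{2}(1+\sqrt{2})/\sqrt{q}$ satisfies the hypothesis of Part (ii) whenever $q \geq 2(1+\sqrt{2})^2$. Under this lower bound on $q$ one has $k_0 \leq 1$, hence $1 + k_0 \leq 2$; therefore $(1+k_0)/q \leq 2/q$ and $\sqrt{2(1+k_0)/q} \leq 2/\sqrt{q}$, so
\begin{align}
\frac{1+k_0}{q} + \sqrt{\frac{2(1+k_0)}{q}} \leq \frac{2}{q} + \frac{2}{\sqrt{q}} \leq \frac{2}{\sqrt{q}}\left(1 + \frac{1}{\sqrt{2}}\right) = \frac{\sqrt{2}(1+\sqrt{2})}{\sqrt{q}} = k_0,
\nonumber
\end{align}
where the second inequality uses $1/\sqrt{q} \leq 1/\sqrt{2}$. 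The bound of Part (ii) then yields Part (iii).

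The only genuinely non-mechanical step is the choice of substitution $t = q/(1+k_0)$ in Part (ii); once that is in place the remainder is straightforward algebra. I do not expect any real obstacle, since the Hsu--Kakade--Zhang theorem already does the heavy lifting of controlling the moment generating function of $\|A s\|^2$ for sub-Gaussian $s$.
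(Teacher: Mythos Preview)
Your proposal is correct and follows essentially the same route as the paper: both deduce Part~(i) from the Hsu--Kakade--Zhang bound with $A=I_d$ and $u=dt/2$, obtain Part~(ii) via the substitution $t=q/(1+k_0)$ (the paper phrases this as choosing $t_0$ first and then setting $q=t(1+k_0)$, but it is the same computation), and prove Part~(iii) by verifying algebraically that the explicit $k_0=\sqrt{2}(1+\sqrt{2})/\sqrt{q}$ satisfies the hypothesis of Part~(ii). Your Part~(iii) check, bounding $1+k_0\le 2$ and then $2/q+2/\sqrt{q}\le (2/\sqrt{q})(1+1/\sqrt{2})$, is a slightly cleaner variant of the paper's inequality chain, but the strategy is identical.
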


\begin{lemma} {\bf Non-central sub-Gaussian quadratic forms. Left-tail probabilities}
Let $s=(s_1,\ldots,s_d) \sim \mbox{SG}(\mu,\sigma^2)$. Then
\begin{align}
 P(s^Ts < a) \leq \exp \left\{ - \frac{\mu^T\mu}{8\sigma^2} \left( 1 - \frac{a}{\mu^T\mu} \right)^2 \right\}.
\nonumber
\end{align}

\label{lem:lefttail_quadform_subgaussian}
\end{lemma}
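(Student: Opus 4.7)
The plan is to reduce the quadratic left-tail event $\{s^Ts<a\}$ to a linear right-tail event for a scalar sub-Gaussian variable, then invoke a standard Chernoff bound. First I would write $s=\mu+z$ with $z=s-\mu\sim \mbox{SG}(0,\sigma^2)$ (immediate from Definition \ref{def:subgaussian}) and expand
\begin{align*}
s^Ts=\mu^T\mu+2\mu^Tz+z^Tz\ \geq\ \mu^T\mu+2\mu^Tz,
\end{align*}
so that $\{s^Ts<a\}\subseteq\{-2\mu^Tz>\mu^T\mu-a\}$. This reduces the problem to a one-dimensional tail bound on the scalar $-2\mu^Tz$, eliminating the quadratic form entirely.

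Next, by Lemma \ref{lem:subgaussian_lincomb} applied with $A=\mu^T$ (so that $AA^T=\mu^T\mu$ has sole eigenvalue $\mu^T\mu$), $\mu^Tz$ is scalar sub-Gaussian with variance parameter $(\mu^T\mu)\sigma^2$; hence $-2\mu^Tz\sim \mbox{SG}(0,4(\mu^T\mu)\sigma^2)$. The standard Chernoff estimate $P(Y>u)\leq \exp(-u^2/(2\tau^2))$ for $Y\sim \mbox{SG}(0,\tau^2)$ and $u>0$, applied with $u=\mu^T\mu-a$ and $\tau^2=4(\mu^T\mu)\sigma^2$, then yields
\begin{align*}
P(s^Ts<a)\ \leq\ \exp\!\left\{-\frac{(\mu^T\mu-a)^2}{8(\mu^T\mu)\sigma^2}\right\}\ =\ \exp\!\left\{-\frac{\mu^T\mu}{8\sigma^2}\Big(1-\tfrac{a}{\mu^T\mu}\Big)^2\right\},
\end{align*}
which is exactly the asserted bound.

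The main subtlety---rather than a genuine obstacle---is the implicit requirement $a<\mu^T\mu$, needed so that $u>0$ when invoking Chernoff. At $a=\mu^T\mu$ the asserted right-hand side equals $1$ and the statement is trivial, while for $a>\mu^T\mu$ this route produces no useful bound; however that regime is not used in the downstream application (Theorem \ref{thm:bf_ala_highdim}), where $a$ is taken well below the non-centrality $\mu^T\mu$. The one lossy step---discarding $z^Tz\geq 0$---is what makes the argument short: retaining the quadratic term would force an MGF-of-quadratic-form calculation in the spirit of Lemma \ref{lem:tail_quadform_subgaussian}, whereas the one-sided left tail we actually need only calls for a linear-form Chernoff argument.
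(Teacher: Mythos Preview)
Your proof is correct and takes essentially the same approach as the paper: both expand $s^Ts=(s-\mu)^T(s-\mu)+\mu^T\mu+2\mu^T(s-\mu)$, discard the nonnegative quadratic term, apply the sub-Gaussian MGF bound to the linear term $\mu^T(s-\mu)$, and optimize the Chernoff parameter to obtain $t_0=(\mu^T\mu-a)/(4\mu^T\mu\sigma^2)$. The only cosmetic difference is that you drop $z^Tz\ge 0$ at the event level before invoking scalar Chernoff, whereas the paper drops it via $e^{-t(s-\mu)^T(s-\mu)}\le 1$ inside the MGF computation; your explicit remark that $a<\mu^T\mu$ is needed for the bound to be nontrivial is a point the paper leaves implicit.
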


\begin{prop}
Let $s=(s_1,\ldots,s_d) \sim \mbox{SG}(0,\sigma^2)$ and 
\begin{align}
U(d,h,c)= \int_0^1 P \left( \frac{s^T s}{c} > d \log \left( \frac{h}{(1/u-1)^{2/d}} \right) \right) du.
\nonumber
\end{align}

\begin{enumerate}[leftmargin=*,label=(\roman*)]
\item Denote $h_0=e^{2(1+\sqrt{2})^2}$. Suppose that $(c,h)$ satisfy $\log(h) \geq h_0$ and
$c \geq \sigma^2 (1 + \sqrt{\sigma^2/[c \log h]})$. Then
\begin{align}
 U(d,h,c) \leq  \frac{2 \max \left\{ [\log h]^{d/2}, \log(h^{d/2}) \right\}}{h^{d/2}}
\nonumber
\end{align}
In particular, the result holds for $c \geq \sigma^2 (1 + \sqrt{1/\log h}) \Leftrightarrow \log h \geq (c/\sigma^2-1)^{-2}$.

\item Denote $h_0= \max\{ [\sigma^2(1+\epsilon)/c] e^{2(1+\sqrt{2})^2}, e^{2(1+\sqrt{2})^2 \sigma^2/c} \}$. Suppose that $(c,h)$ satisfy $\log(h) \geq h_0$ and $c < \sigma^2 (1 + \sqrt{\sigma^2/[c \log h]})$. Then
\begin{align}
 U(d,h,c) \leq  \frac{2.5 \max \left\{ \left[\log (h^{\frac{c}{\sigma^2(1+\epsilon)}})\right]^{d/2}, \log\left(h^{\frac{dc}{2\sigma^2(1+\epsilon)}}\right) \right\}}{h^{\frac{d c}{2\sigma^2 (1 + \epsilon \sqrt{2}(1+\sqrt{2}))}}}
\nonumber
\end{align}
where $\epsilon= \sqrt{\sigma^2/[c \log h]}$.
\end{enumerate}

\label{prop:tailintegral_quadform_subgaussian}
\end{prop}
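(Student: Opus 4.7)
The starting point is an application of Fubini's theorem. The condition $s^Ts/c > d\log(h/(1/u-1)^{2/d})$ rearranges algebraically to $u < [1+h^{d/2}e^{-s^Ts/(2c)}]^{-1}$, so exchanging the order of integration yields the identity
\begin{align}
U(d,h,c) = E_s\!\left[ \frac{1}{1+h^{d/2}\,e^{-s^Ts/(2c)}} \right].
\nonumber
\end{align}
Using the elementary bound $1/(1+h^{d/2}e^{-w}) \leq \min\{1, h^{-d/2}e^w\}$ with $w=s^Ts/(2c)$ then gives
\begin{align}
U(d,h,c) \leq P(s^Ts > cd\log h) + h^{-d/2}\,E\!\left[ e^{s^Ts/(2c)}\,\mathbbm{1}(s^Ts \leq cd\log h) \right].
\nonumber
\end{align}

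Writing $\tau := c/\sigma^2$, the first term is handled directly by Lemma \ref{lem:tail_quadform_subgaussian}(iii): the hypothesis $\log h \geq h_0 = e^{2(1+\sqrt{2})^2}$ ensures $\tau\log h \geq 2(1+\sqrt{2})^2$, so the lemma applies with $k_0 = \sqrt{2}(1+\sqrt{2})/\sqrt{\tau\log h}$ and gives a tail bound of order $h^{-d\tau/[2(1+k_0)]}$. For the truncated exponential moment I would apply integration by parts (layer cake) to rewrite it as $1 + (2c)^{-1}\int_0^{cd\log h} e^{x/(2c)}\,P(s^Ts>x)\,dx$, and then substitute the Lemma \ref{lem:tail_quadform_subgaussian}(iii) tail bound for $P(s^Ts>x)$. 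The resulting integrand behaves like $\exp\{-x[\tau-(1+k_0(x))]/[2\sigma^2\tau(1+k_0(x))]\}$, which decays in $x$ precisely when $\tau > 1 + k_0(x)$.

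The main obstacle is that the margin $\tau-1$ guaranteed by the part~(i) condition is only $1/\sqrt{\tau\log h}$, which vanishes as $\log h \to \infty$, so $k_0(x)$ must be kept uniformly small over the integration range. I would fix a level $k_0^* < \tau-1$, apply the sharp bound for $x \geq x_0 := 2(1+\sqrt{2})^2 d\sigma^2/(k_0^*)^2$ (the range where $k_0(x) \leq k_0^*$), and fall back to the crude $e^{x/(2c)} \leq e^{x_0/(2c)}$ together with $P(\cdot) \leq 1$ on $x \leq x_0$; the assumption $\log h \geq h_0$ ensures this crude piece is negligible relative to the target. Optimizing over $k_0^*$ balances the two regimes and produces the claimed factor $\max\{[\log h]^{d/2}, \log(h^{d/2})\}/h^{d/2}$, the two members of the maximum corresponding to the polynomial factor arising from the interior of the integration range (dominant for large $d$) versus from the endpoint $x = cd\log h$ (dominant for small $d$). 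For part~(ii), where $\tau < 1+1/\sqrt{\tau\log h}$, the same strategy yields a reduced exponent $dc/[2\sigma^2(1+\epsilon\sqrt{2}(1+\sqrt{2}))]$ with $\epsilon=\sqrt{\sigma^2/(c\log h)}$, reflecting that $\tau/(1+k_0)$ can no longer be forced up to $1$.
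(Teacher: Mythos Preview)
Your Fubini identity $U(d,h,c)=E\bigl[(1+h^{d/2}e^{-s^Ts/(2c)})^{-1}\bigr]$ and the subsequent split into a tail term plus a truncated exponential moment are both correct, and this is a genuinely different route from the paper's. However, the layer-cake step has a real gap under the minimal Part~(i) hypothesis. You need $k_0^* < \tau - 1$ (writing $\tau=c/\sigma^2$) for the integrand on $[x_0',A]$ to decay, and then $x_0' = 2(1+\sqrt{2})^2 d\sigma^2/(k_0^*)^2$. But the assumption only guarantees $\tau-1 \geq 1/\sqrt{\tau\log h}$, so any admissible $k_0^*$ satisfies $(k_0^*)^2 < 1/(\tau\log h)$, forcing $x_0' > 2(1+\sqrt{2})^2\, d\sigma^2\tau\log h = 2(1+\sqrt{2})^2 A > A$. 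The ``sharp'' region is therefore empty; the entire range $[0,A]$ falls in the crude region, and your bound collapses to $h^{-d/2}\int_0^A (2c)^{-1}e^{x/(2c)}\,dx \approx 1$. The same obstruction affects your Term~1: with $k_0 = \sqrt{2}(1+\sqrt{2})/\sqrt{\tau\log h}$ you get exponent $d\tau/[2(1+k_0)]$, and $\tau \geq 1+k_0$ does not follow from $\tau \geq 1 + 1/\sqrt{\tau\log h}$ since $\sqrt{2}(1+\sqrt{2}) > 1$.

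The paper avoids this by staying in the $u$-variable rather than passing to $x=s^Ts$. It cuts the $u$-integral at $\underline{u}=(\log h/h)^{d/2}$, bounding $[0,\underline{u}]$ trivially by $\underline{u}$ (this already supplies the $(\log h/h)^{d/2}$ piece of the answer). On $[\underline{u},1]$ the threshold $q(u)=\tau\log\bigl(h/(1/u-1)^{2/d}\bigr)$ is bounded below by roughly $\tau\log\log h$, so the relevant $k$-value is of order $1/\sqrt{\log\log h}$ rather than $1/\sqrt{\log h}$, and the exponent in Lemma~\ref{lem:tail_quadform_subgaussian}(iii) can be pushed up to $d/2$. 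The key structural difference is that in $u$-space the difficult small-threshold region has measure $\underline{u}$ and can be discarded wholesale, whereas in your $x$-space it corresponds to $x$ near the upper endpoint $A$, where the weight $e^{x/(2c)}$ is largest and cannot be discarded.
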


\begin{prop}
Let $s_1 \sim \mbox{SG}(0,\sigma^2)$ be a $d_1$-dimensional and $s_2 \sim \mbox{SG}(\mu,\sigma^2)$ be a $d_2$-dimensional sub-Gaussian random vector where $\sigma^2>0$ is finite and $\mu \in \mathbb{R}^{d_2}$. Define
\begin{align}
 U(d_1,d_2,c,d,h)= \int_0^1 P \left( \frac{s_1^Ts_1 - s_2^Ts_2}{c} > d \log \left(\frac{h}{(1/u-1)^{2/d}} \right) \right) du
\nonumber
\end{align}
where $c>0$, $d \in \mathbb{R}$ and $h \geq 1$. Assume that
$\mu^T\mu/ \log(\mu^T\mu) \geq -cd \log(h) + c d_1 \log(h_0)$, where $h_0$ is the positive constant in Proposition \ref{prop:tailintegral_quadform_subgaussian}.

\begin{enumerate}[leftmargin=*,label=(\roman*)]
\item Suppose that $c > \sigma^2$. Then, there exists a finite $t_0$ such that
\begin{align}
 U(d_1,d_2,c,d,h) \leq \frac{3 \max \left\{ [\frac{2}{d_1} \log(l(\mu))]^{d_1/2}, \log(l(\mu)) \right\}}{l(\mu)}
\nonumber
\end{align}
for all $\mu^T\mu \geq t_0$, where $l(\mu)= h^{d/2} e^{\frac{\mu^T\mu}{2c\log(\mu^T\mu)}}$.
Hence, for any fixed $a<1$ it holds that $\lim_{\mu^T\mu \rightarrow \infty} U(d_1,d_2,c,d,h)/l(\mu)^a=0$.

\item Suppose that $c \leq \sigma^2$. Then, there exists a finite $t_0$ such that
\begin{align}
 U(d_1,d_2,c,d,h) \leq \frac{3.5 \max \left\{[\frac{2}{d_1} \log(l'(\mu))^{1/(1+\epsilon)} ]^{d_1/2}, \log(l'(\mu)^{1/(1+\epsilon)})  \right\}}{l'(\mu)^{1/(1+\epsilon)}},
\nonumber
\end{align}
for all $\mu^T\mu \geq t_0$ and $\epsilon\geq \sqrt{2}(1+\sqrt{2})\sigma^2 d_1 / [c \mu^T\mu/\log(\mu^T\mu) + d \log(h)]$,
where $l'(\mu)= h^{d/2} e^{\frac{\mu^T\mu}{2\sigma^2\log(\mu^T\mu)}}$.

Hence, for any fixed $a<1$ it holds that $\lim_{\mu^T\mu \rightarrow \infty} U(d_1,d_2,c,d,h)/l'(\mu)^a=0$.
\end{enumerate}

\label{prop:tailintegral_difquadform_subgaussian}
\end{prop}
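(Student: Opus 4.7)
The plan is a union-bound decoupling of the right tail of $s_1^Ts_1$ from the left tail of $s_2^Ts_2$. For any $\alpha \in (0, \mu^T\mu)$,
\begin{align*}
\{s_1^Ts_1 - s_2^Ts_2 > t\} \subseteq \{s_1^Ts_1 > t+\alpha\} \cup \{s_2^Ts_2 \le \alpha\},
\end{align*}
so, since the second event is $u$-free,
\begin{align*}
U(d_1,d_2,c,d,h) \le \int_0^1 P\!\left( \tfrac{s_1^Ts_1}{c} > d\log\tfrac{h}{(1/u-1)^{2/d}} + \tfrac{\alpha}{c} \right)du + P(s_2^Ts_2 \le \alpha).
\end{align*}
The shift $\alpha/c$ can be absorbed into the logarithm and the dimension relabeled: since $d\log h + \alpha/c = d_1 \log \tilde h$ for $\tilde h := h^{d/d_1}\exp\{\alpha/(cd_1)\}$, the first integral equals $U(d_1,\tilde h,c)$ in the notation of Proposition \ref{prop:tailintegral_quadform_subgaussian}, and the hypothesis $\mu^T\mu/\log(\mu^T\mu) \ge -cd\log h + cd_1 \log h_0$ is what guarantees $\log\tilde h \ge h_0$ once $\alpha$ is taken of order $\mu^T\mu$.

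The second term $P(s_2^Ts_2 \le \alpha)$ is bounded by Lemma \ref{lem:lefttail_quadform_subgaussian} as $\exp\{-(\mu^T\mu - \alpha)^2/(8\sigma^2 \mu^T\mu)\}$. The free slack $\alpha = \mu^T\mu - r$ is tuned so that this rate matches the target, yielding $r \asymp \sigma\mu^T\mu/\sqrt{c\log(\mu^T\mu)}$ in case (i) (with $c$ replaced by $\sigma^2$ in case (ii)), in particular $r = o(\mu^T\mu)$. With this choice $\alpha \sim \mu^T\mu$, so Proposition \ref{prop:tailintegral_quadform_subgaussian} applied at $\tilde h$ delivers a right-tail rate of order $\exp\{-\mu^T\mu/(2c)\}$ in case (i) or $\exp\{-\mu^T\mu/(2\sigma^2)\}$ in case (ii), both strictly stronger than the target rate and therefore absorbed. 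Putting $\tilde h$ back in, $\tilde h^{d_1/2}$ (case i) matches $l(\mu) = h^{d/2}\exp\{\mu^T\mu/[2c\log(\mu^T\mu)]\}$, while $\tilde h^{d_1 c/[2\sigma^2(1+\epsilon\sqrt{2}(1+\sqrt{2}))]}$ (case ii) matches $l'(\mu)^{1/(1+\epsilon)}$ at leading order, and the polylogarithmic prefactor $\max\{[\tfrac{2}{d_1}\log\tilde h]^{d_1/2}, \log\tilde h\}$ inherited from Proposition \ref{prop:tailintegral_quadform_subgaussian} becomes the stated $\max\{[\tfrac{2}{d_1}\log l(\mu)]^{d_1/2}, \log l(\mu)\}$ up to a constant absorbed into the leading $3$ (or $3.5$ in case ii).

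The main obstacle will be the bookkeeping in Case (ii): the rate in Proposition \ref{prop:tailintegral_quadform_subgaussian} depends on $\epsilon = \sqrt{\sigma^2/(c\log\tilde h)}$, and the condition $c < \sigma^2(1+\epsilon)$ must be rechecked at the shifted $\tilde h$; tracking this through the substitution $h \mapsto \tilde h$ with $\alpha \sim \mu^T\mu$ is what produces the specific lower bound $\epsilon \ge \sqrt{2}(1+\sqrt{2})\sigma^2 d_1/[c\mu^T\mu/\log(\mu^T\mu) + d\log h]$ in the statement. The threshold $t_0$ is then simply the smallest $\mu^T\mu$ at which $r < \mu^T\mu$, $\log\tilde h \ge h_0$, and (in case ii) the $\epsilon$-condition hold simultaneously. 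The concluding $\lim_{\mu^T\mu \to \infty} U/l(\mu)^a = 0$ for $a<1$ is then automatic, since the bound has the form of a polylogarithmic factor in $l(\mu)$ divided by $l(\mu)$.
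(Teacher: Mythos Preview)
Your decoupling via the union bound and the plan to feed the right tail into Proposition~\ref{prop:tailintegral_quadform_subgaussian} and the left tail into Lemma~\ref{lem:lefttail_quadform_subgaussian} is exactly the paper's strategy. The gap is in the choice of the split point~$\alpha$, and your narrative becomes internally inconsistent as a result.

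You set $\alpha=\mu^T\mu-r$ with $r\asymp\sigma\mu^T\mu/\sqrt{c\log(\mu^T\mu)}$, so $\alpha\sim\mu^T\mu$. Then $\tilde h=h^{d/d_1}e^{\alpha/(cd_1)}$ gives $\tilde h^{d_1/2}=h^{d/2}e^{\alpha/(2c)}\sim h^{d/2}e^{\mu^T\mu/(2c)}$, which is \emph{not} $l(\mu)=h^{d/2}e^{\mu^T\mu/(2c\log(\mu^T\mu))}$; it is larger by the enormous factor $e^{\mu^T\mu(1-1/\log(\mu^T\mu))/(2c)}$. So your sentence ``$\tilde h^{d_1/2}$ (case~i) matches $l(\mu)$'' is false, and the polylogarithmic prefactor from Proposition~\ref{prop:tailintegral_quadform_subgaussian} at this $\tilde h$ is $\max\{[\log\tilde h]^{d_1/2},\ldots\}$ with $\log\tilde h\asymp\mu^T\mu$, not $\log l(\mu)\asymp\mu^T\mu/\log(\mu^T\mu)$. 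With your $\alpha$ the dominant term is actually the \emph{left}-tail, which your tuning gives as $e^{-\mu^T\mu/(2c\log(\mu^T\mu))}=h^{d/2}/l(\mu)$; this is the right order but does not carry the stated prefactor, and the factor $h^{d/2}$ has to be absorbed separately.

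The paper's choice is the opposite regime: take the threshold for $s_2^Ts_2$ equal to $\mu^T\mu/\log(\mu^T\mu)$ (i.e.\ $\alpha$ small, not close to $\mu^T\mu$). Then $\tilde h^{d_1/2}=h^{d/2}e^{\mu^T\mu/(2c\log(\mu^T\mu))}=l(\mu)$ \emph{exactly}, so Proposition~\ref{prop:tailintegral_quadform_subgaussian} at $\tilde h$ yields the right-tail bound $2\max\{[\tfrac{2}{d_1}\log l(\mu)]^{d_1/2},\log l(\mu)\}/l(\mu)$ directly, with the prefactor in the stated form. The left tail becomes $\exp\{-\tfrac{\mu^T\mu}{8\sigma^2}(1-1/\log(\mu^T\mu))^2\}$, which decays like $e^{-c'\mu^T\mu}$ and is trivially absorbed into the constant~$3$ for large $\mu^T\mu$. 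The same substitution in case~(ii) is what produces the specific $\epsilon$-bound, since $\log\tilde h=(d/d_1)\log h+\mu^T\mu/(cd_1\log(\mu^T\mu))$ is exactly the combination appearing there. Replace your $\alpha$ by $\mu^T\mu/\log(\mu^T\mu)$ and the rest of your outline goes through cleanly.
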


\subsection*{Proof of Proposition \ref{prop:tailintegral_difquadform_subgaussian}}

Denote by $w=d \log(h/[1/u-1]^{2/d})$ and let $w'>0$ be an arbitrary number, the union bound gives
\begin{align}
&P\left( \frac{s_1^Ts_1 - s_2^Ts_2}{c} > w \right)
=P\left( \frac{s_1^Ts_1 - s_2^Ts_2}{c} > \frac{w}{2} + w' - (w' - \frac{w}{2}) \right)
\nonumber \\
&\leq P \left(\frac{s_1^Ts_1}{c} >  \frac{w}{2} + w' \right)
 + P \left( \frac{s_2^T s_2}{c} < w' - \frac{w}{2}  \right).
\nonumber
\end{align}
We shall take $w'= 0.5 \log (h^d/[1/u-1]^2 e^{2\mu^T\mu/[c \log(\mu^T\mu)]})$, so that
$
 w' - \frac{w}{2}= \mu^T\mu/[c \log(\mu^T\mu)
$
and
$$
 w' + \frac{w}{2}= d\log \left( \frac{h}{(1/u-1)^{2/d}} \right) + \frac{\mu^T\mu}{c \log(\mu^T\mu)}.
$$

Applying Lemma \ref{lem:lefttail_quadform_subgaussian} immediately gives that
\begin{align}
\int_0^1 P \left( \frac{s_2^T s_2}{c} < w' - \frac{w}{2}  \right) du \leq
\exp \left\{  - \frac{\mu^T\mu}{8\sigma^2} \left( 1 - \frac{1}{\log(\mu^T\mu)} \right)^2  \right\}.
\label{eq:proofquadform_term2}
\end{align}

Next consider
\begin{align}
 \int_0^1 P \left(\frac{s_1^Ts_1}{c} >  \frac{w}{2} + w' \right) du =
\int_0^1 P \left(\frac{s_1^Ts_1}{c} > d_1 \log \left( \frac{h^{\frac{d}{d_1}} e^{\mu^T\mu/[c d_1 \log(\mu^T\mu)]}}{(1/u-1)^{2/d_1}} \right)  \right) du.
\label{eq:proofquadform_term1}
\end{align}
This integral is in the form required by Proposition \ref{prop:tailintegral_quadform_subgaussian}.
That proposition requires the condition that
$$
\log(h^{\frac{d}{d_1}} e^{\mu^T\mu/[c d_1 \log(\mu^T\mu)]}) \geq h_0
\Leftrightarrow
\frac{\mu^T\mu}{\log(\mu^T\mu)} \geq -cd \log(h) + c d_1 \log(h_0)
$$
where $h_0$ is the finite constant given in Proposition \ref{prop:tailintegral_quadform_subgaussian} defined separately for the cases $c>\sigma^2$ and $c \leq \sigma^2$. Since this condition holds by assumption, we may apply Proposition \ref{prop:tailintegral_quadform_subgaussian}.

Consider first the case $c> \sigma^2$. Proposition \ref{prop:tailintegral_quadform_subgaussian} Part (i) gives that \eqref{eq:proofquadform_term1} is
\begin{align}
 \leq \frac{2 \max \left\{  [\frac{2}{d_1} \log(l(\mu))]^{d_1/2} , \log(l(\mu)) \right\}}{l(\mu)},
\label{eq:proofquadform_boundterm1_c1}
\end{align}
where $l(\mu)= h^{d/2} e^{\mu^T\mu/[2 c \log(\mu^T\mu)]} $.
Combining \eqref{eq:proofquadform_term2} and \eqref{eq:proofquadform_boundterm1_c1} gives
\begin{align}
 U(d_1,d_2,c,d,h) \leq 
e^{  - \frac{\mu^T\mu}{8\sigma^2} \left( 1 - \frac{1}{\log(\mu^T\mu)} \right)^2} +
\frac{2 \max \left\{  [\frac{2}{d_1} \log(l(\mu))]^{d_1/2} , \log(l(\mu)) \right\}}{l(\mu)}.
\nonumber
\end{align}
Noting that the second term is larger than the first term as $\mu^T\mu$ grows gives that there exists $t_0$ such that, for all $\mu^T\mu>t_0$,
\begin{align}
  U(d_1,d_2,c,d,h) \leq 
\frac{3 \max \left\{  [\frac{2}{d_1} \log(l(\mu))]^{d_1/2} , \log(l(\mu)) \right\}}{l(\mu)},
\nonumber
\end{align}
as we wished to prove.

Consider now the case $c \leq \sigma^2$. Proposition \ref{prop:tailintegral_quadform_subgaussian} Part (ii) and simple algebra gives that \eqref{eq:proofquadform_term1} is
\begin{align}
 \leq \frac{2.5 \max \left\{[\frac{2}{d_1} \log(l'(\mu))^{1/(1+\epsilon)} ]^{d_1/2}, \log(l'(\mu)^{1/(1+\epsilon)})  \right\}}{l'(\mu)^{1/(1+\epsilon \sqrt{2}(1+\sqrt{2}))}}
\label{eq:proofquadform_boundterm1_c2}
\end{align}
where $l'(\mu)= h^{dc/2\sigma^2} e^{\mu^T\mu/[2 \sigma^2 \log(\mu^T\mu)]}$,
and $\epsilon= \sigma^2 d_1 / [c \mu^T\mu/\log(\mu^T\mu) + d \log(h)] $.
Combining \eqref{eq:proofquadform_term2} and \eqref{eq:proofquadform_boundterm1_c2} gives
\begin{align}
  U(d_1,d_2,c,d,h) \leq 
e^{  - \frac{\mu^T\mu}{8\sigma^2} \left( 1 - \frac{1}{\log(\mu^T\mu)} \right)^2} +
\frac{2.5 \max \left\{[\frac{2}{d_1} \log(l'(\mu))^{1/(1+\epsilon)} ]^{d_1/2}, \log(l'(\mu)^{1/(1+\epsilon)})  \right\}}{l'(\mu)^{1/(1+\epsilon \sqrt{2}(1+\sqrt{2}))}}.
\nonumber
\end{align}
Noting that the first term is smaller than the second term as $\mu^T\mu$ grows gives that there exists $t_0$ such that, for all $\mu^T\mu>t_0$,
\begin{align}
   U(d_1,d_2,c,d,h) \leq 
\frac{3.5 \max \left\{[\frac{2}{d_1} \log(l'(\mu))^{1/(1+\epsilon')} ]^{d_1/2}, \log(l'(\mu)^{1/(1+\epsilon')})  \right\}}{l'(\mu)^{1/(1+\epsilon')}},
\nonumber
\end{align}
where $\epsilon'\geq \sqrt{2}(1+\sqrt{2})\sigma^2 d_1 / [c \mu^T\mu/\log(\mu^T\mu) + d \log(h)]$.

\subsection*{Proof of Lemma \ref{lem:subgaussian_lincomb}}
Consider an arbitrary $\alpha \in \mathbb{R}^d$, and let $\tilde{\alpha}= A^T \alpha$. Then
\begin{align}
 E \left( e^{\alpha^T (As - A\mu)}\right) =
 E \left( e^{\tilde{\alpha}^T (s - \mu)} \right) \leq
 e^{\frac{\tilde{\alpha}^T \tilde{\alpha} \sigma^2}{2}} 
=  e^{\frac{\alpha^T A A^T \alpha \sigma^2}{2}} 
\leq  e^{\frac{\alpha^T \alpha \lambda \sigma^2}{2}},
\nonumber
\end{align}
where $\lambda$ is the largest eigenvalue of $A^TA$, as we wished to prove.

\subsection*{Proof of Lemma \ref{lem:subgaussian_quadform}}
Let $s= (X^TX)^{-1/2} X^T (y - a) \in \mathbb{R}^d$, then $s \sim \mbox{SG}((X^TX)^{-1/2} X^T (\mu-a), \lambda \sigma^2)$
by Lemma \ref{lem:subgaussian_lincomb}, where $\lambda$ is the largest eigenvalue of
\begin{align}
 (X^TX)^{-1/2} X^T X (X^TX)^{-1/2},
\nonumber
\end{align}
which is the identity matrix, hence $\lambda=1$, as we wished to prove.

\subsection*{Proof of Lemma \ref{lem:tail_quadform_subgaussian}}
\underline{Part (i).} Theorem 1 in \cite{hsu:2012} for the $\mu=0$ case gives that
\begin{align}
e^{-w} \geq  P \left( \frac{s^Ts}{\sigma^2} > (d + 2 \sqrt{dw} + 2w)  \right)
= P \left( \frac{s^Ts}{\sigma^2} > d \left[ 1 + 2 \sqrt{\frac{w}{d}} + \frac{2w}{d} \right]  \right)
\nonumber
\end{align}
for any $w>0$. Equivalently, letting $t=2w/d$,
\begin{align}
 P \left( \frac{s^Ts}{\sigma^2} > d t \left[ 1 + \sqrt{2/t} + 1/t \right]  \right) \leq e^{-dt/2}.
\nonumber
\end{align}

\underline{Part (ii).} 
Let $t_0>0$ be fixed and define $k_0= 1/t_0 + \sqrt{2c/t_0}$. For all $t \geq t_0$, it holds that $k_0 \geq 1/t + \sqrt{2/t}$.
Hence, by Part (i),
\begin{align}
P \left( \frac{s^Ts}{\sigma^2} > d t k_0 \right) \leq e^{-dt/2}.
\nonumber
\end{align}
Equivalently, letting $q= t(1+k_0)$,
\begin{align}
P \left( \frac{s^Ts}{\sigma^2} > d q \right) \leq e^{-\frac{dq}{2(1+k_0)}},
\nonumber
\end{align}
for any $k_0$ such that $k_0 \geq 1/t + \sqrt{2/t}= (1+k_0)/q + \sqrt{2(1+k_0)/q}$.

\underline{Part (iii).}
The result follows by showing that $q \geq 2 (1+\sqrt{2})^2$ and $k_0= \sqrt{2}(1+\sqrt{2})/\sqrt{q}$ satisfy the condition
$$(1+k_0)/q + \sqrt{2(1+k_0)/q} \leq k_0$$
required by Part (ii).

First, for the specified values it holds that $k_0 \leq 1$ and $k_0 \leq q-1$, since $q \geq 2$. Hence
\begin{align}
 (1+k_0)/q + \sqrt{2} \sqrt{(1+k_0)/q} \leq (1 + \sqrt{2}) \sqrt{(1+k_0)/q}
\nonumber
\end{align}
and it suffices to prove that
\begin{align}
(1 + \sqrt{2}) \sqrt{(1+k_0)/q} \leq k_0
\Leftrightarrow
\sqrt{\frac{1+k_0}{k_0^2}} \leq \frac{\sqrt{q}}{1+\sqrt{2}}.
\nonumber
\end{align}
Since $k_0 \leq 1$, we have $\sqrt{(1+k_0)/k_0^2} \leq \sqrt{2/k_0^2}$, hence it suffices to show
\begin{align}
\frac{\sqrt{2}}{k_0} \leq \frac{\sqrt{q}}{1+\sqrt{2}},
\nonumber
\end{align}
which holds for $k_0= \sqrt{2}(1+\sqrt{2})/\sqrt{q}$, as we wished to prove.

\subsection*{Proof of Lemma \ref{lem:tail_quadform_subgaussian}}
Let $t>0$ be an arbitrary real number. Clearly
\begin{align}
 P(s^Ts < a)= P \left( e^{-t s^Ts} > e^{-t a} \right) \leq e^{t a} E(e^{-t s^T s}),
\nonumber
\end{align}
the right-hand side following from Markov's inequality. Since $e^{-t (s-\mu)^T (s-\mu)}<1$,
\begin{align}
e^{t a} E(e^{-t s^T s})= 
e^{t a} E(e^{-t [(s-\mu)^T (s-\mu) + \mu^T \mu^T + 2 \mu^T (s-\mu)] }) \leq
e^{t (a - \mu^T \mu)} E(e^{-2 t \mu^T (s-\mu)})
\leq e^{t(a - \mu^T \mu) + 2 t^2 \mu^T \mu \sigma^2},
\nonumber
\end{align}
where the right-hand side follows from the definition of sub-Gaussianity.
The bound holds for any $t>0$, we obtain the tightest bound by minimizing the exponent with respect to $t$.
Setting its first derivative to 0 gives that the minimum is attained at
$t_0= (\mu^T\mu-a)/(4 \mu^T\mu \sigma^2)$, giving that
\begin{align}
 P(s^Ts < a) \leq e^{t_0(a - \mu^T \mu) + 2 t_0^2 \mu^T \mu \sigma^2}
= e^{-\frac{(a - \mu^T \mu)^2}{8 \mu^T \mu \sigma^2}}= e^{-\frac{\mu^T \mu}{8 \sigma^2} (1 - a/\mu^T\mu)^2}.
\nonumber
\end{align}

\subsection*{Proof of Proposition \ref{prop:tailintegral_quadform_subgaussian}}
\underline{Part (i).} Let $\underline{u}= [\log(h)/h]^{d/2} \in (0,1)$. Note that
\begin{align}
 U(d,h,c) \leq \underline{u} + \int_{\underline{u}}^1 P \left( \frac{s^T s}{\sigma^2} > \frac{c d}{\sigma^2} \log \left( \frac{h}{(1/u-1)^{2/d}} \right) \right) du,
\nonumber
\end{align}
Let $q=(c/\sigma^2) \log(h / [1/u-1]^{2/d})$. We shall show that $q \geq 2 (1+\sqrt{2})^2$, satisfying the condition in Lemma \ref{lem:tail_quadform_subgaussian}, Part (iii). To ease notation let $a=2 (1+\sqrt{2})^2$. Then $q \geq a$ requires
\begin{align}
 \frac{c}{\sigma^2} \log\left(\frac{h}{[1/u-1]^{2/d}}\right) \geq a \Leftrightarrow
u \geq \frac{1}{1 + (h/e^{a \sigma^2/c})^{d/2}},
\nonumber
\end{align}
which holds since
\begin{align}
 \underline{u}= [\log(h)/h]^{d/2} \geq (e^{a \sigma^2/c}/h)^{d/2}
> \frac{1}{1 + (h/e^{a \sigma^2/c})^{d/2}} 
\nonumber
\end{align}
given that $\log h \geq e^a > e^{a \sigma^2/c}$ by assumption, since $\sigma^2<c$.
We remark that a possible alternative strategy is to use Part(ii) from Lemma \ref{lem:tail_quadform_subgaussian} to set a milder condition on $h$ than the required $\log h> e^a$, and still apply the remainder of the proof's strategy, however the algebra becomes somewhat more complicated and for simplicity we use Part (iii).

Using Lemma \ref{lem:tail_quadform_subgaussian}, Part (iii), gives
\begin{align}
 U(d,h,c) \leq \underline{u} + \int_{\underline{u}}^1 
\left[ \frac{(1/u-1)^{2/d}}{h} \right]^{\frac{d c}{2 \sigma^2 [1+k_u]}} du
\nonumber
\end{align}
for $k_u=\sqrt{a/q}$.
To bound the integrand let $\underline{q}= (c/\sigma^2) \log(h / [1/\underline{u}-1]^{2/d})$ and $\underline{k}=\sqrt{a/\underline{q}}$.
Since $q$ is increasing in $u$, it follows that $k_u \leq \underline{k}$.
Further, note that the integrand is $\leq 1$. To see this, 
for any $u \geq \underline{u}$ it holds that
$(1/u-1)^{2/d} \leq (1/\underline{u}-1)^{2/d}< h/\log(h)$, where $\log(h)\geq 1$ by assumption.
Hence the integrand is upper-bounded by taking a smaller power, specifically
\begin{align}
 U(d,h,c) \leq \underline{u} + \int_{\underline{u}}^1 
\left[ \frac{(1/u-1)^{2/d}}{h} \right]^{\frac{d c}{2 \sigma^2 [1+\underline{k}]}} du
\nonumber
\end{align}
Simple algebra shows that the assumption $c \geq \sigma^2 (1 + \sqrt{\sigma^2/[c \log h]} )$ implies that
$c \geq \sigma^2 (1+\underline{k})$.
Hence the integral on the right-hand side is
\begin{align}
\leq \frac{1}{h^{d/2}} \int_{\underline{u}}^1 (1/u-1) du=
\frac{1}{h^{d/2}}[\log(1) - \log(\underline{u}) - (1 - \underline{u})] < \frac{1}{h^{d/2}} \log(1/\underline{u}).
\nonumber
\end{align}
Plugging in $\underline{u}= [\log(h)/h]^{d/2}$ gives
\begin{align}
U(d,h,c) < \left[ \frac{\log h}{h} \right]^{d/2} +
\frac{1}{h^{d/2}} \log\left( \frac{h^{d/2}}{[\log(h)]^{d/2}} \right)
< 
\left[ \frac{\log h}{h} \right]^{d/2} +
\frac{\log\left( h^{d/2} \right)}{h^{d/2}} 
\nonumber
\end{align}
since $\log(h^{d/2}) \geq 1$.


\underline{Part (ii).} 
Define $\tilde{c}= \sigma^2 ( 1 + \epsilon)$, where $\epsilon= \sqrt{\sigma^2/[c \log h]}$, so that $c/\tilde{c}<1$.
We shall split $U(d,h,c)$ by integrating over $u \leq 0.5$ and $u>0.5$, then apply Part (i) to the first integral and Lemma \ref{lem:tail_quadform_subgaussian} to the second integral. Specifically,
\begin{align}
&\int_0^{0.5} P \left( \frac{s^T s}{c} > d \log \left( \frac{h}{(1/u-1)^{2/d}} \right) \right) du=
\int_0^{0.5} P \left( \frac{s^T s}{\tilde{c}} > d \log \left( h^{\frac{c}{\tilde{c}}} \left[\frac{u}{1-u} \right]^{\frac{2c}{d\tilde{c}}} \right) \right) du
\nonumber \\
&< \int_{0}^{0.5} P \left( \frac{s^T s}{\tilde{c}} > d \log \left( h^{\frac{c}{\tilde{c}}} \left[\frac{u}{1-u} \right]^{\frac{2}{d}} \right) \right) du.
\nonumber
\end{align}
the right-hand side following from $[u/(1-u)]^{c/\tilde{c}} > u/(1-u)$, since $u/(1-u) \leq 1$ and $c/\tilde{c}<1$.
The resulting integral can be bounded by Part (i), since $\tilde{c}$ satisfies $\tilde{c} \geq \sigma^2 (1 + \sqrt{\sigma^2/[\tilde{c} \log h]})$ and $h^{c/\tilde{c}}$ satisfies $\log (h^{c/\tilde{c}}) \geq e^{2(1+\sqrt{2})^2}$,
since 
$$
\log h \geq \frac{\tilde{c} e^{2(1+\sqrt{2})^2}}{c} =
\frac{\sigma^2 ( 1 + \epsilon) e^{2(1+\sqrt{2})^2}}{c} 
$$
by assumption.
Therefore, applying Part (i) of Proposition \ref{prop:tailintegral_quadform_subgaussian},
\begin{align}
 \int_{0}^{0.5} P \left( \frac{s^T s}{\tilde{c}} > d \log \left( h^{\frac{c}{\tilde{c}}} \left[\frac{u}{1-u} \right]^{\frac{2}{d}} \right) \right) du < \frac{2 \max \left\{ [\log( h^{c/\tilde{c}} )]^{d/2}, \log( h^{\frac{dc}{2\tilde{c}}}  ) \right\}}{h^{\frac{dc}{2\tilde{c}}}}.
\label{eq:bound_partii_int1}
\end{align}

Next consider
\begin{align}
&\int_{0.5}^1 P \left( \frac{s^T s}{\tilde{c}} > d \log \left( h^{\frac{c}{\tilde{c}}} \left[\frac{u}{1-u} \right]^{\frac{2c}{d\tilde{c}}} \right) \right) du \leq
\int_{0.5}^1 P \left( \frac{s^T s}{\tilde{c}} > d \log \left( h^{\frac{c}{\tilde{c}}} \right) \right) du
\nonumber \\
&=0.5 P \left( \frac{s^T s}{\sigma^2} > d \log \left( h^{\frac{c}{\sigma^2}} \right) \right),
\nonumber
\end{align}
since $u/(1-u) \geq 1$ for $u \geq 0.5$. To bound the right-hand side we use Lemma \ref{lem:tail_quadform_subgaussian}, Part (iii).
The lemma requires that
$\log(h^{\frac{c}{\sigma^2}}) \geq 2(1+\sqrt{2})^2$, which holds since 
$$
\log h \geq \frac{\sigma^2}{c} 2 (1+\sqrt{2}) \Leftrightarrow h \geq e^{\frac{\sigma^2}{c} 2 (1+\sqrt{2})}
$$
by assumption.
Thus
\begin{align}
 0.5 P \left( \frac{s^T s}{\sigma^2} > d \log \left( h^{\frac{c}{\sigma^2}} \right) \right) \leq
\frac{1}{2 h^{\frac{dc}{2\sigma^2(1+k_0)}} },
\label{eq:bound_partii_int2}
\end{align}
where $k_0= \sqrt{2}(1+\sqrt{2}) \sqrt{\sigma^2/[c \log h]}= \sqrt{2}(1+\sqrt{2}) \epsilon$.

To conclude the proof, plug in $\tilde{c}= \sigma^2 ( 1 + \epsilon)$ into \eqref{eq:bound_partii_int1} and combine with \eqref{eq:bound_partii_int2} to obtain
\begin{align}
 U(d,h,c) < 
\frac{2 \max \left\{ [\log( h^{c/[\sigma^2 (1 + \epsilon)]} )]^{d/2}, \log( h^{\frac{dc}{2 \sigma^2 (1+\epsilon) }}  ) \right\}}{h^{\frac{dc}{2 \sigma^2 (1 + \epsilon) }}}
+ \frac{1}{2 h^{\frac{dc}{2\sigma^2 \epsilon (1+ \sqrt{2}(1+\sqrt{2}))}} }
\nonumber \\
<
\frac{2 \max \left\{ [\log( h^{c/[\sigma^2 (1 + \epsilon)]} )]^{d/2}, \log( h^{\frac{dc}{2 \sigma^2 (1+\epsilon) }}  ) \right\}}{h^{\frac{dc}{2 \sigma^2 (1 + \epsilon \sqrt{2}(1+\sqrt{2})) }}}
+ \frac{1}{2 h^{\frac{dc}{2\sigma^2(1+ \epsilon \sqrt{2}(1+\sqrt{2}))}} },
\nonumber
\end{align}
giving the desired result.

\newpage

\section{Proof of Theorem \ref{thm:bf_ala_highdim}}
\label{sec:proof_bf_ala_highdim}

Fix any $\gamma \ne \tgamma^* \in \Gamma$. The goal is to show that the random variable $\tilde{p}(\gamma \mid y)$ converges to 0 in the $L_1$ sense,
that is $\lim_{n \rightarrow \infty} E_{F^*}[\tilde{p}(\gamma \mid y)]= 0$.
We also seek to bound its convergence rate, that is find a sequence $a_n$ such that
$\lim_{n \rightarrow \infty} E_{F^*}[\tilde{p}(\gamma \mid y)] / a_n= 0$, where the bound $a_n$ is as tight as possible.
Since $L_1$ convergence implies convergence in probability, this also proves the weaker result that $\tilde{p}(\tgamma^* \mid y)= O_p(a_n)$.

Following the construction in \cite{rossell:2018}, note that
\begin{align}
E_{F^*} \left( \tilde{p}(\gamma \mid y) \right)
= E_{F^*} \left( \left[ 1 + \sum_{\gamma' \neq \gamma} \tilde{B}_{\gamma' \gamma} \frac{p(\gamma')}{p(\gamma)} \right]^{-1} \right)
< E_{F^*} \left( \left[ 1 + \tilde{B}_{\tgamma^* \gamma} \frac{p(\tgamma^*)}{p(\gamma)} \right]^{-1} \right),
\nonumber
\end{align}
and that the right-hand side is the expectation of a positive random variable. Hence its expectation can be found by integrating the survival (or right-tail probability) function
\begin{align}
 \int_0^1 P_{F^*}\left( \left[ 1 + \tilde{B}_{\tgamma^* \gamma} \frac{p(\tgamma^*)}{p(\gamma)} \right]^{-1} > u \right) du=
\int_0^1 P_{F^*} \left( \log \tilde{B}_{\gamma \tgamma^*} > \log \left( \frac{p(\tgamma^*)}{p(\gamma) (1/u-1)} \right) \right) du.
\label{eq:integral_survfun}
\end{align}

Taking the expression for $\tilde{B}^L_{\gamma,\tgamma^*}$ in \eqref{eq:normal_bf_knownphi} under an arbitrary Normal prior $p(\beta_\gamma \mid \phi, \gamma)= N(\beta_\gamma; 0, \phi g_L V_\gamma^{-1}/b''(0))$, where $V_\gamma$ is a positive-definite matrix, gives
\begin{align}
\log \tilde{B}^L_{\gamma \tgamma^*}=
\log \left( \left[ \frac{1}{n g_L} \right]^{\frac{p_\gamma - p_{\tgamma^*}}{2}}
  \frac{|Z_{\tgamma^*}^T Z_{\tgamma^*}/n|^{\frac{1}{2}} |V_\gamma|^{\frac{1}{2}}}
  {|Z_{\gamma}^T Z_{\gamma}/n|^{\frac{1}{2}} |V_{\tgamma^*}|^{\frac{1}{2}}} \right)
+ \frac{b''(0)}{2\phi} \tilde{W}_{\gamma \tgamma^*}
\nonumber
\end{align}
where
$\tilde{W}_{\gamma \tgamma^*}= \tilde{\beta}_\gamma^T (Z_\gamma^T Z_\gamma - V_\gamma/g_L)\tilde{\beta}_\gamma - 
\tilde{\beta}_{\tgamma^*}^T (Z_{\tgamma^*}^T Z_{\tgamma^*} - V_{\tgamma^*}/g_L) \tilde{\beta}_{\tgamma^*}$.

We focus on the particular case of Zellner's prior where $V_\gamma=Z_\gamma^T Z_\gamma/n$.
Rearranging terms in \eqref{eq:integral_survfun} gives
\begin{align}
 \int_0^1 P_{F^*} \left( \left[ 1-\frac{1}{ng_L} \right] \frac{W_{\gamma \tgamma^*}}{\phi/b''(0)} > (p_\gamma - p_{\tgamma^*}) \log \left( \frac{h}{(1/u-1)^{\frac{2}{p_\gamma - p_{\tgamma*}}} } \right)   \right) du,
\label{eq:integral_survfun_nested}
\end{align}
where $W_{\gamma \tgamma^*}= \tilde{\beta}_\gamma^T Z_\gamma^T Z_\gamma \tilde{\beta}_\gamma - 
\tilde{\beta}_{\tgamma^*}^T Z_{\tgamma^*}^T Z_{\tgamma^*} \tilde{\beta}_{\tgamma^*}$ and
\begin{align}
h= n g_L \left[ \frac{p(\tgamma^*)}{p(\gamma)} \right]^{\frac{2}{p_\gamma - p_{\tgamma*}}}.
\nonumber
\end{align}

The proof's strategy is based on noting that $W_{\gamma \tgamma^*}$ can be expressed in terms of quadratic forms of sub-Gaussian random variables, which allows bounding the probability in \eqref{eq:integral_survfun_nested} and its integral with respect to $u$.
Given that $\lim_{n \rightarrow \infty} n g_L= \infty$ by assumption, to ease derivations we drop the factor $1- 1/(ng_L)$ from subsequent arguments (the results follow by noting that $1- 1/(n g_L)$ is arbitrarily close to 1 as $n$ grows).
We consider separately the case $\tgamma^* \subset \gamma$, where $W_{\gamma \tgamma^*}$ is a quadratic form of a zero-mean sub-Gaussian random vector, and the case $\tgamma^* \not\subseteq \gamma$ where $W_{\gamma \tgamma^*}$ is the difference between two sub-Gaussian quadratic forms.

Before proceeding, we remark that it is possible to extend our results to Normal priors with general $V_\gamma$. Briefly, one then obtains an analogous expression to \eqref{eq:integral_survfun_nested} that involves $\tilde{W}_{\gamma \tgamma^*}$ and determinants $|V_\gamma^{-1} Z_\gamma^T Z_\gamma/n|$. It is still possible to express $\tilde{W}_{\gamma \tgamma^*}$ in terms of quadratic forms of sub-Gaussian random variables, which then provides a bound for \eqref{eq:integral_survfun_nested} that depends on eigenvalues of $V_\gamma^{-1} Z_\gamma^T Z_\gamma/n$. By placing suitable conditions on said eigenvalues, one obtains rates analogous to those for Zellner's prior.
We also remark that, for more general non-Normal priors, one may replace $p(\beta_\gamma \mid \phi,\gamma)$ by a Taylor expansion at 0, which is equivalent to approximating $p(\beta_\gamma \mid \phi,\gamma)$ by a Normal prior, and then carrying out the proof as outlined above.

\subsection{Part (i)}
Suppose that $\tgamma^* \subset \gamma$. 
Let $X_{\gamma \setminus \tgamma^*}= Z_{\gamma \setminus \tgamma^*} - Z_{\tgamma^*}(Z_{\tgamma^*}^T Z_{\tgamma^*})^{-1} Z_{\tgamma^*}^TZ_{\gamma \setminus \tgamma^*}$, from Lemma \ref{lem:waldstat_nested} we can write
\begin{align}
& W_{\gamma \tgamma^*}= (\tilde{y} \pm Z_{\tgamma^*}\tbeta_{\tgamma^*}^*)^T X_{\gamma \setminus \tgamma^*} (X_{\gamma \setminus \tgamma^*}^T X_{\gamma \setminus \tgamma^*})^{-1} X_{\gamma \setminus \tgamma^*}^T (\tilde{y} \pm Z_{\tgamma^*}\tbeta_{\tgamma^*}^*)
\nonumber \\
&=(\tilde{y} - Z_{\tgamma^*}\tbeta_{\tgamma^*}^*)^T X_{\gamma \setminus \tgamma^*} (X_{\gamma \setminus \tgamma^*}^T X_{\gamma \setminus \tgamma^*})^{-1} X_{\gamma \setminus \tgamma^*}^T (\tilde{y} - Z_{\tgamma^*}\tbeta_{\tgamma^*}^*),
\nonumber
\end{align}
since $Z_{\tgamma^*}^T X_{\gamma \setminus \tgamma^*}= Z_{\tgamma^*}^T Z_{\gamma \setminus \tgamma^*} - Z_{\tgamma^*}^T Z_{\gamma \setminus \tgamma^*}=0$.
Since $\tilde{y} \sim \mbox{SG}(Z_{\tgamma^*} \tbeta_{\tgamma^*}^*, \sigma^2)$, Lemma \ref{lem:subgaussian_quadform} gives that $W_{\gamma \tgamma^*}=s^Ts$ where $s \sim \mbox{SG}(0, \sigma^2)$ is a $p_\gamma - p_{\tgamma^*}$ dimensional sub-Gaussian random vector.
Therefore the integral in \eqref{eq:integral_survfun_nested} is of the form required by Proposition \ref{prop:tailintegral_quadform_subgaussian}, letting $c=\phi/b''(0)$, $d=p_\gamma - p_{\gamma'}$ and $h$ as defined in \eqref{eq:integral_survfun_nested}.
We apply Proposition \ref{prop:tailintegral_quadform_subgaussian} Part (i) or Part (ii) depending on whether $\phi/b''(0)> \sigma^2$  or $\phi/b''(0) < \sigma^2$.

Consider first the case $c \geq \sigma^2 (1 + \sqrt{1/\log h})$.
Applying Proposition \ref{prop:tailintegral_quadform_subgaussian} Part (i) requires the condition that $\log(h) \geq e^{2(1+\sqrt{2})^2}$.
Note that, as $n$ grows,
\begin{align}
\log(h)= \log(ng_L) + \frac{2}{p_\gamma - p_{\tgamma*}} \log \left( \frac{p(\tgamma^*)}{p(\gamma)} \right) \gg 1
\nonumber
\end{align}
by assumption. Therefore the condition holds for large enough $n$ and Proposition \ref{prop:tailintegral_quadform_subgaussian} Part (i) gives
\begin{align}
E_{F^*} \left( \left[ 1 + \tilde{B}_{\tgamma^* \gamma} \frac{p(\tgamma^*)}{p(\gamma)} \right]^{-1} \right) \leq  \frac{2 \max \left\{ [\log h]^{d/2}, \log(h^{d/2}) \right\}}{h^{d/2}}
\nonumber
\end{align}
for all $n \geq n_0$, and some fixed $n_0$, as we wished to prove.

Consider next the case $c < \sigma^2 (1 + \sqrt{\sigma^2/[c \log h]})$. 
Applying Proposition \ref{prop:tailintegral_quadform_subgaussian} Part (ii) requires the condition that
$\log(h) \geq \max\{ [\sigma^2(1+\epsilon)/c] e^{2(1+\sqrt{2})^2}, e^{2(1+\sqrt{2})^2 \sigma^2/c} \}$, which again holds for large enough $n$, since $(c,\sigma^2)$ are constant and $\epsilon$ is arbitrarily close to 0 as $h$ grows.
Hence, Proposition \ref{prop:tailintegral_quadform_subgaussian} Part (ii) gives
\begin{align}
E_{F^*} \left( \left[ 1 + \tilde{B}_{\tgamma^* \gamma} \frac{p(\tgamma^*)}{p(\gamma)} \right]^{-1} \right) \leq  \frac{2.5 \max \left\{ \left[\log (h^{\frac{c}{\sigma^2(1+\epsilon)}})\right]^{d/2}, \log\left(h^{\frac{dc}{2\sigma^2(1+\epsilon)}}\right) \right\}}{h^{\frac{d c}{2\sigma^2 (1 + \epsilon \sqrt{2}(1+\sqrt{2}))}}}
\nonumber
\end{align}
where $\epsilon= \sqrt{\sigma^2/[c \log h]}$, for all $n \geq n_0$ and some fixed $n_0$, as we wished to prove.

\subsection{Part (ii)}
Suppose that $\tgamma^* \not\subseteq \gamma$. Denote by $\gamma'= \tgamma^* \cup \gamma$ the model such that $Z_{\gamma'}$ includes all columns from $Z_{\tgamma^*}$ and $Z_{\gamma}$. Note that
$$
W_{\gamma \tgamma^*}= 
\tilde{\beta}_\gamma^T Z_\gamma^T Z_\gamma \tilde{\beta}_\gamma - 
\tilde{\beta}_{\gamma'}^T Z_{\gamma'}^T Z_{\gamma'} \tilde{\beta}_{\gamma'}
+ \tilde{\beta}_{\gamma'}^T Z_{\gamma'}^T Z_{\gamma'} \tilde{\beta}_{\gamma'}
- \tilde{\beta}_{\tgamma^*}^T Z_{\tgamma^*}^T Z_{\tgamma^*} \tilde{\beta}_{\tgamma^*}
= W_{\gamma' \tgamma^*} - W_{\gamma' \gamma}.
$$

Since $\tgamma^* \subset \gamma'$, by Lemma \ref{lem:subgaussian_quadform} it holds that
$W_{\gamma' \tgamma^*}=s_1^Ts_1$ where $s_1 \sim \mbox{SG}(0,\sigma^2)$ is a $(p_{\gamma'} - p_{\tgamma^*})$-dimensional sub-Gaussian vector. Further, since $\gamma \subset \gamma'$, by Lemma \ref{lem:waldstat_nested},
\begin{align}
 W_{\gamma' \gamma}&= (\tilde{y} \pm Z_\gamma \tbeta_\gamma^*)^T \tilde{Z}_{\gamma' \setminus \gamma} (\tilde{Z}_{\gamma' \setminus \gamma}^T \tilde{Z}_{\gamma' \setminus \gamma})^{-1} \tilde{Z}_{\gamma' \setminus \gamma}^T (\tilde{y} \pm Z_\gamma \tbeta_\gamma^*)
\nonumber \\
&= (\tilde{y} - Z_\gamma \tbeta_\gamma^*)^T \tilde{Z}_{\gamma' \setminus \gamma} (\tilde{Z}_{\gamma' \setminus \gamma}^T \tilde{Z}_{\gamma' \setminus \gamma})^{-1} \tilde{Z}_{\gamma' \setminus \gamma}^T (\tilde{y} - Z_\gamma \tbeta_\gamma^*),
\nonumber
\end{align}
where $\tilde{Z}_{\gamma' \setminus \gamma}= (I - Z_\gamma(Z_\gamma^TZ_\gamma)^{-1}Z_\gamma^T) Z_{\gamma' \setminus \gamma}$,
and the right-hand side follows from $Z_\gamma^T \tilde{Z}_{\gamma' \setminus \gamma}=0$ and direct algebra.

Since $\tilde{y} - Z_\gamma \tbeta_\gamma^* \sim \mbox{SG}(Z_{\tgamma^*} \tbeta_{\tgamma^*}^* - Z_\gamma\tbeta_\gamma^* , \sigma^2)$ by assumption,
by Lemma \ref{lem:subgaussian_quadform} we may write $W_{\gamma' \gamma}=s_2^Ts_2$, where $s \sim \mbox{SG}(\mu, \sigma^2)$ is a $(p_{\gamma'} - p_{\gamma})$-dimensional sub-Gaussian random vector, and $\mu= (\tilde{Z}_{\gamma' \setminus \gamma}^T \tilde{Z}_{\gamma' \setminus \gamma})^{-1/2} \tilde{Z}_{\gamma' \setminus \gamma}^T (Z_{\tgamma^*} \tbeta_{\tgamma^*}^* - Z_\gamma \tbeta_\gamma^*) \neq 0$.
Therefore \eqref{eq:integral_survfun_nested} is equal to
\begin{align}
 \int_0^1 P_{F^*} \left( \frac{s_1^Ts_1 - s_2^Ts_2}{\phi/b''(0)} > (p_\gamma - p_{\tgamma^*}) \log \left( \frac{h}{(1/u-1)^{2/(p_\gamma - p_{\tgamma^*})}} \right)   \right) du.
\label{eq:integral_survfun_nonnested}
\end{align}
The latter expression has the form required by Proposition \ref{prop:tailintegral_difquadform_subgaussian},
setting $c= \phi/b''(0)$, $d_1= p_{\gamma'}-p_{\tgamma^*}$, $d_2=p_{\gamma'}- p_{\tgamma^*}$ and $d=p_\gamma - p_{\tgamma^*}$.
The non-centrality parameter, denoted $\lambda_\gamma$, is
\begin{align}
\lambda_\gamma= \mu^T\mu=  
(Z_{\tgamma^*} \tbeta_{\tgamma^*}^* - Z_\gamma \tbeta_\gamma^*)^T \tilde{Z}_{\gamma' \setminus \gamma}
(\tilde{Z}_{\gamma' \setminus \gamma}^T \tilde{Z}_{\gamma' \setminus \gamma})^{-1} \tilde{Z}_{\gamma' \setminus \gamma}^T (Z_{\tgamma^*} \tbeta_{\tgamma^*}^* - Z_\gamma \tbeta_\gamma^*)
\nonumber \\
=
(Z_{\tgamma^*} \tbeta_{\tgamma^*}^*)^T (I - H_\gamma) Z_{\tgamma^* \setminus \gamma}
[Z_{\tgamma^* \setminus \gamma}^T (I - H_\gamma) Z_{\tgamma^* \setminus \gamma}]^{-1} Z_{\tgamma^* \setminus \gamma}^T 
(I - H_\gamma) Z_{\tgamma^*} \tbeta_{\tgamma^*}^*,
\nonumber
\end{align}
where $H_\gamma= Z_\gamma (Z_\gamma^T Z_\gamma)^{-1} Z_\gamma^T$ is the projection matrix onto the column span of $Z_\gamma$,
and to obtain the right-hand side we used that $I-H_\gamma$ is idempotent,
that the optimal linear projection $Z_\gamma \tbeta_\gamma^*= H_\gamma Z_{\gamma^*} \tbeta_{\gamma^*}^*$,
and noting that $Z_{\gamma' \setminus \gamma}= Z_{\gamma^* \setminus \gamma}$.

The expression for $\lambda_\gamma$ can be simplified. First, note that $Z_{\tgamma^*} \tbeta_{\tgamma^*}^*= Z_{\tgamma^* \setminus \gamma} \tbeta_{\tgamma^* \setminus \gamma}^* + Z_{\gamma \cap \tgamma^*} b_{\gamma \cap \tgamma^*}$, where $b_{\gamma \cap \tgamma^*}$ is the subset of $\tbeta_{\tgamma^*}^*$ corresponding to the columns of $Z_{\gamma \cap \tgamma^*}$. Hence
\begin{align}
(Z_{\tgamma^*} \tbeta_{\tgamma^*}^*)^T (I-H_\gamma)= 
(\tbeta_{\tgamma^* \setminus \gamma}^*)^T Z_{\tgamma^* \setminus \gamma}^T (I-H_\gamma) +  b_{\gamma \cap \tgamma^*}^T Z_{\gamma \cap \tgamma^*}^T (I - H_\gamma)=
(\tbeta_{\tgamma^* \setminus \gamma}^*)^T Z_{\tgamma^* \setminus \gamma}^T (I-H_\gamma)
\nonumber
\end{align}
since $Z_{\gamma \cap \tgamma^*}^T (I - H_\gamma)=0$, as $Z_{\gamma \cap \tgamma^*}$ is in the linear span of $Z_\gamma$. Therefore
\begin{align}
 \lambda_\gamma&=
(\tbeta_{\tgamma^* \setminus \gamma}^*)^T Z_{\tgamma^* \setminus \gamma}^T (I-H_\gamma) Z_{\tgamma^* \setminus \gamma}
[Z_{\tgamma^* \setminus \gamma}^T (I - H_\gamma) Z_{\tgamma^* \setminus \gamma}]^{-1} Z_{\tgamma^* \setminus \gamma}^T 
(I - H_\gamma) Z_{\tgamma^*} \tbeta_{\tgamma^*}^*
\nonumber \\
&=(Z_{\tgamma^* \setminus \gamma} \tbeta_{\tgamma^* \setminus \gamma}^*)^T (I - H_\gamma) Z_{\tgamma^*} \tbeta_{\tgamma^*}^*
=(Z_{\tgamma^*} \tbeta_{\tgamma^*}^*)^T (I - H_\gamma) Z_{\tgamma^*} \tbeta_{\tgamma^*}^*.
\nonumber
\end{align}

Applying Proposition \ref{prop:tailintegral_difquadform_subgaussian} requires the following condition to hold:
\begin{align}
&\frac{\mu^T\mu}{c\log(\mu^T\mu)} \geq -d \log(h) + d_1 \log(h_0)
\nonumber \\
&= (p_{\tgamma^*} - p_\gamma) \log \left( n g_L \right)
+  2 \log \left( \frac{p(\gamma)}{p(\tgamma^*)} \right)
 +  (p_{\gamma'} - p_{\tgamma^*}) \log(h_0)
\label{eq:condition_ncp_proof}
\end{align}
where $\log(h_0)>0$ and $h_0$ is the positive finite constant in Proposition \ref{prop:tailintegral_quadform_subgaussian}.

The fact that \eqref{eq:condition_ncp_proof} holds follows directly from Assumption (D3).
If $p_\gamma \geq p_{\tgamma^*}$, then $p_{\gamma'}-p_{\tgamma^*} \leq p_{\gamma}$ and 
\begin{align}
 \eqref{eq:condition_ncp_proof} \leq (p_{\tgamma^*} - p_\gamma) \log \left( n g_L \right)
+  2 \log \left( \frac{p(\gamma)}{p(\tgamma^*)} \right)
 +  p_\gamma \log(h_0)
\ll \frac{\mu^T\mu}{\log(\mu^T\mu)},
\label{eq:condition_ncp_proof_large}
\end{align}
where the right-hand side holds by assumption.
If $p_\gamma < p_{\tgamma^*}$, then $p_{\gamma'} - p_{\tgamma^*} \leq p_{\tgamma^*}$ and
\begin{align}
  \eqref{eq:condition_ncp_proof} \leq 
(p_{\tgamma^*} - p_\gamma) \log \left( n g_L \right)
+  2 \log \left( \frac{p(\gamma)}{p(\tgamma^*)} \right) +  p_{\tgamma^*} \log(h_0)
\ll \frac{\mu^T\mu}{\log(\mu^T\mu)},
\label{eq:condition_ncp_proof_small}
\end{align}
which again holds by assumption.

To complete the proof we apply Proposition \ref{prop:tailintegral_quadform_subgaussian}.
Consider first the case $c=\phi/b''(0)> \sigma^2$. By Proposition \ref{prop:tailintegral_difquadform_subgaussian} Part (i), there exists a finite $t_0$ such that \eqref{eq:integral_survfun_nonnested} is
\begin{align}
 \leq \frac{3 \max \left\{ [\frac{2}{p_{\gamma'}-p_{\tgamma^*}} \log(l(\mu))]^{(p_{\gamma'}-p_{\tgamma^*})/2}, \log(l(\mu)) \right\}}{l(\mu)}
\nonumber
\end{align}
for all $\mu^T\mu \geq t_0$, where $l(\mu)= h^{(p_\gamma - p_{\tgamma^*})/2} e^{\frac{\mu^T\mu b''(0)}{2\phi\log(\mu^T\mu)}}$.
Given that $\lim_{n \rightarrow \infty} \mu^T\mu= \infty$ by assumption and
$$
\frac{\mu^T\mu b''(0)}{2\phi\log(\mu^T\mu)} \gg   (p_\gamma - p_{\tgamma^*})/2 \log(h)
$$
from \eqref{eq:condition_ncp_proof}, for any $a<1$ there exists a finite $n_0$ such that, for all $n \geq n_0$,
\begin{align}
\eqref{eq:integral_survfun_nonnested}
 \leq e^{-a \left[\frac{\mu^T\mu b''(0)}{2\phi\log(\mu^T\mu)} + \frac{(p_\gamma - p_{\tgamma^*}) \log(h)}{2} \right]}.
\label{eq:bfrate_nonspur_case1}
\end{align}

Consider now the case $c=\phi/b''(0) \leq \sigma^2$. By Proposition \ref{prop:tailintegral_difquadform_subgaussian} Part (ii), there exists a finite $t_0$ such that 
\begin{align}
\eqref{eq:integral_survfun_nonnested} \leq \frac{3.5 \max \left\{[\frac{2}{p_{\gamma'}-p_{\tgamma^*}} \log(l'(\mu))^{1/(1+\epsilon)} ]^{(p_{\gamma'}-p_{\tgamma^*})/2}, \log(l'(\mu)^{1/(1+\epsilon)})  \right\}}{l'(\mu)^{1/(1+\epsilon)}},
\nonumber
\end{align}
where $\lim_{\mu^T\mu \rightarrow \infty} \epsilon= 0$, $l'(\mu)= h^{(p_\gamma - p_{\tgamma^*})/2} e^{\frac{\mu^T\mu}{2\sigma^2\log(\mu^T\mu)}}$.
Hence, for any $a<1$ there exists a finite $n_0$ such that, for all $n \geq n_0$,
\begin{align}
\eqref{eq:integral_survfun_nonnested}
 \leq e^{-a \left[\frac{\mu^T\mu}{2\sigma^2\log(\mu^T\mu)} + \frac{(p_\gamma - p_{\tgamma^*}) \log(h)}{2} \right]}.
\label{eq:bfrate_nonspur_case2}
\end{align}
The desired result is obtained by noting that
\begin{align}
\exp\left\{ \frac{(p_\gamma - p_{\tgamma^*}) \log(h)}{2} \right\}=  
(n g_L)^{\frac{(p_\gamma - p_{\tgamma^*})}{2}} \frac{p(\tgamma^*)}{p(\gamma)}
\nonumber
\end{align}
and combining \eqref{eq:bfrate_nonspur_case1} and \eqref{eq:bfrate_nonspur_case2} to obtain
\begin{align}
\eqref{eq:integral_survfun_nonnested}
 \leq e^{-a \left[\frac{\mu^T\mu}{2 \max\{\sigma^2, \phi/b''(0) \} \log(\mu^T\mu)}  \right]}
\left(\frac{p(\gamma)/p(\tgamma^*)}{(n g_L)^{\frac{(p_\gamma - p_{\tgamma^*})}{2}}} \right)^a.
\label{eq:bfrate_nonspur_bothcases}
\end{align}

\newpage
\section{Proof of Theorem \ref{thm:pp_ala_highdim}}
\label{sec:proof_pp_ala_highdim}

\subsection{Part (i)}

Let $S_l= \{\gamma: |\gamma|=l, \tgamma^* \subset \gamma \}$ be the set of size $l$ models, and $\Gamma_l= \{ \gamma: |\gamma|=l, p(\gamma)>0 \}$ be those receiving positive prior probability (e.g. satisfying any specified hierarchical/group constraints). Then
\begin{align}
 E_{F_0} ( p(S \mid y))
=E_{F_0} \left( \sum_{l= |\tgamma^*|+1}^{\bar{J}} \sum_{\gamma \in S_l \cap \Gamma_l} \tilde{p}(\gamma \mid y) \right) 
\leq E_{F_0} \left( \sum_{l= |\tgamma^*|+1}^{\bar{J}} \sum_{\gamma \in S_l}  \tilde{p}(\gamma \mid y) \right).
\nonumber
\end{align}

From Theorem \ref{thm:bf_ala_highdim} Part (i), for each $\gamma$ there is a finite $n_{0 \gamma}$ such that
\begin{align}
E_{F_0} \left( \tilde{p}(\gamma \mid y) \right) \leq \frac{2 \max \left\{ [(2/[p_\gamma - p_{\tgamma^*}]) \log r_\gamma^a]^{(p_\gamma - p_{\tgamma^*})/2}, \log(r_\gamma^a) \right\}}{r_\gamma^a}
\label{eq:bound_pp_spurious}
\end{align}
for all $n \geq n_{0 \gamma}$, a constant $a$ such that $a=1$ if $\phi/b''(0) > \sigma^2$ and $a < \phi/(b''(0) \sigma^2)$ if $\phi/b''(0) \leq \sigma^2$, and
\begin{align}
  r_\gamma= \left( n g \right)^{\frac{p_\gamma - p_{\tgamma^*}}{2}} \frac{p(\tgamma^*)}{p(\gamma)}.
\nonumber
\end{align}
To alleviate algebra, by observing that
$\log(r_\gamma^a) \succeq  [(2/[p_\gamma - p_{\tgamma^*}]) \log r_\gamma^a]^{(p_\gamma - p_{\tgamma^*})/2}$ for $p_\gamma - p_{\tgamma^*} \geq 2$
and that $\log(r_\gamma^a) \ll  [(2/[p_\gamma - p_{\tgamma^*}]) \log r_\gamma^a]^{(p_\gamma - p_{\tgamma^*})/2}$ for $p_\gamma - p_{\tgamma^*}=1$,
it is simple to show that the bound $E_{F_0}(\tilde{p}(\gamma \mid y)) \leq \log (r_\gamma^a) / r_\gamma^a$ gives the same asymptotic rate as \eqref{eq:bound_pp_spurious}.
Further, inspecting the proof of Theorem \ref{thm:bf_ala_highdim} shows that $n_{0 \gamma}$ grows with the number of model parameters $p_\gamma$, and that if $p_\gamma \ll n$ then $n_0=\max_{\gamma \in S, |\gamma|=\bar{J}} n_{0 \gamma}$ can be taken to be fixed. 

We consider separately the cases $\phi/b''(0)> \sigma^2$ and $\phi/b''(0) \leq \sigma^2$.

\subsubsection*{\underline{Case $\phi/b''(0)> \sigma^2$}} Then $a=1$, and for any $n \geq n_0$,
\begin{align}
 E_{F_0} ( p(S \mid y))
\leq \sum_{l= |\tgamma^*|+1}^{\bar{J}} \sum_{\gamma \in S_l}  
\frac{\log \left( (n g)^{\frac{(p_\gamma - p_{\tgamma^*})}{2}} p(\tgamma^*)/p(\gamma) \right) }{(n g)^{\frac{(p_\gamma - p_{\tgamma^*})}{2}}} \frac{p(\gamma)}{p(\tgamma^*)}.
\label{eq:bound_pp_spurious2}
\end{align}

To prove the desired result we plug in the expression of $p(\gamma)/p(\tgamma^*)$, then use algebraic manipulation and the Binomial coefficient ordinary generating function to carry out the sum in \eqref{eq:bound_pp_spurious2}.
First note that
\begin{align}
\frac{p(\gamma)}{p(\tgamma^*)}
= \frac{p(|\gamma|)}{p(|\tgamma^*|)} \frac{{J \choose |\tgamma^*|}}{{J \choose |\gamma|}}
= p^{-c (|\gamma| - |\tgamma^*|)} {|\gamma| \choose |\tgamma^*|} {J-|\tgamma^*| \choose |\gamma| - |\tgamma^*|}^{-1}.
\label{eq:priorodds}
\end{align}
Denote by $\omega(|\gamma|)= p(\tgamma^*)/p(\gamma)$. The right-hand side in \eqref{eq:bound_pp_spurious2} is
\begin{align}
=\sum_{l= |\tgamma^*|+1}^{\bar{J}} p^{-c (l - |\tgamma^*|)} {l \choose |\tgamma^*|}  m_l
\nonumber
\end{align}
where 
\begin{align}
 m_l=
{J-|\tgamma^*| \choose l - |\tgamma^*|}^{-1}
\sum_{\gamma \in S_l}  \frac{\log \left( (n g)^{\frac{(p_\gamma - p_{\tgamma^*})}{2}} \omega(l) \right)}{(n g)^{\frac{p_\gamma - p_{\tgamma^*}}{2}}}
\nonumber
\end{align}
is an average penalization term across the ${J - |\tgamma^*| \choose l - |\tgamma^*|}$ models included in $S_l$.
Since $p_\gamma - p_{\tgamma^*}> q (|\gamma| - |\tgamma^*|)$, where $q$ is the smallest group size,
\begin{align}
\eqref{eq:bound_pp_spurious2} &\leq \sum_{l= |\tgamma^*|+1}^{\bar{J}} {l \choose |\tgamma^*|}  
\frac{\log \left((n g)^{\frac{q (l - |\tgamma^*|)}{2}} \omega(l) \right) }{p^{c (l - |\tgamma^*|)} (n g)^{\frac{q (l -|\tgamma^*|)}{2}}}
\nonumber \\
&\leq 
\log \left((n g)^{\frac{q (\bar{J} - |\tgamma^*|)}{2}} \max_{l \geq |\tgamma^*|+1} \omega(l) \right)
(p^c (n g)^{q/2})^{|\tgamma^*|} \sum_{l= |\tgamma^*|+1}^{\bar{J}} {l \choose |\tgamma^*|}  
\frac{1}{\left( p^c (n g)^{\frac{q}{2}} \right)^l}.
\label{eq:bound_pp_spurious3}
\end{align}
To complete the proof, recall that the Binomial coefficient ordinary generating function gives that
$\sum_{l=k+1}^\infty {l \choose k} b^l= b^k[ (1-b)^{-(k+1)} - 1]$ for any $b$.
Plugging in $b= p^{-c} (n g)^{-q/2}$ and $k= |\tgamma^*|$,
\begin{align}
(p^c (n g)^{q/2})^{|\tgamma^*|} \sum_{l= |\tgamma^*|+1}^{\bar{J}} {l \choose |\tgamma^*|}  
\frac{1}{\left( p^c (n g)^{\frac{q}{2}} \right)^l}
\leq
(1 - p^{-c} (n g)^{-q/2})^{-(|\tgamma^*| + 1)}  - 1.
\nonumber
\end{align}
Using that $n g \rightarrow \infty$ by assumption and the definition of the exponential function
\begin{align}
 \asymp \exp\left\{ \frac{|\tgamma^*|+1}{p^c (n g)^{q/2}} \right\} - 1 
\asymp \frac{|\tgamma^*|+1}{p^c (n g)^{q/2}}
\nonumber
\end{align}
since $\lim_{n \rightarrow \infty} [|\tgamma^*|+1] /[p^c (n g)^{q/2}]=0$ by assumption and $\lim_{z \rightarrow \infty} (e^z - 1)/z=1$.
Combining this expression with \eqref{eq:bound_pp_spurious3}
\begin{align}
\log \left((n g)^{\frac{q (\bar{J} - |\tgamma^*|)}{2}} \max_{l \geq |\tgamma^*|+1} \omega(l) \right) \frac{|\tgamma^*|+1}{p^c (n g)^{q/2}}.
\nonumber
\end{align}
Using Stirling's bounds $\sqrt{2\pi} i^{i+1/2} e^{-i} <  i! < e i^{i+1/2} e^{-i} $ gives
\begin{align}
& \omega(l) \leq \frac{e^2}{2\pi} p^{c(l - |\tgamma^*|)} (J-\bar{J})^{\bar{J}-|\tgamma^*|} \left(1 + \frac{\bar{J} - |\tgamma^*|}{J - \bar{J}} \right)^{J-|\tgamma^*|+\frac{1}{2}}.
\nonumber \\
& \log \omega(l) \leq \log \left( \frac{e^2}{2\pi} \right)
 + (\bar{J}-|\tgamma^*|) [c\log(p)  + \log(J)] + (J-|\tgamma^*|+\frac{1}{2} ) \log \left(1 + \frac{\bar{J} - |\tgamma^*|}{J - \bar{J}} \right).
\nonumber
\end{align}
Using that $l \leq \bar{J}$, $J \leq p$, and that $\lim_{n\rightarrow \infty} \bar{J}/J=0$ by assumption, for large enough $n$
\begin{align}
\log \max_{l \geq |\tgamma^*|} \omega(l) \leq  (\bar{J}-|\tgamma^*|) [c\log(p)  + \log(J) + \epsilon]
\leq  (\bar{J}-|\tgamma^*|) [(c+1)\log(p) + \epsilon]
\label{eq:bound_maxomegal}
\end{align}
for any fixed $\epsilon>0$, as we wished to prove.

\subsubsection*{\underline{Case $\phi/b''(0) \leq \sigma^2$}} Then we take $a$ to be a constant smaller than but arbitrarily close to $\phi/(b''(0)\sigma^2) \leq 1$. 
Then the right-hand side in \eqref{eq:bound_pp_spurious2} is
\begin{align}
&\leq a \log \left((n g)^{\frac{q (\bar{J} - |\tgamma^*|)}{2}} \max_{l \geq |\tgamma^*|+1} \omega(l) \right)
 \sum_{l=|\tgamma^*|+1}^{\bar{J}} \left(\frac{1}{p^{a c} (ng)^{aq/2}}\right)^{l-|\tgamma^*|} {l \choose |\tgamma^*|}^a {J - |\tgamma^*| \choose l - |\tgamma^*|}^{1-a}
\nonumber \\
&< a \log \left((n g)^{\frac{q (\bar{J} - |\tgamma^*|)}{2}} \max_{l \geq |\tgamma^*|+1} \omega(l) \right)
 \sum_{l=|\tgamma^*|+1}^{\bar{J}} \left(\frac{(J-|\tgamma^*|)^{1-a}}{p^{a c} (ng)^{aq/2}}\right)^{l-|\tgamma^*|} {l \choose |\tgamma^*|} 
\label{eq:bound_pp_spurious_overdisp}
\end{align}
since $a<1$ implies that ${l \choose |\tgamma^*|}^a < {l \choose |\tgamma^*|}$
and ${J - |\tgamma^*| \choose l - |\tgamma^*|} < (J-|\tgamma^*|)^{l-|\tgamma^*|}$.
Using the Binomial coefficient ordinary generating function gives that the sum in \eqref{eq:bound_pp_spurious_overdisp} is
\begin{align}
< \left[ \left( 1 - \frac{(J-|\tgamma^*|)^{1-a}}{p^{ac} (ng)^{aq/2}} \right)^{-(|\tgamma^*|+1)} -1 \right],
\label{eq:bound_sum_overdisp}
\end{align}
It is easy to show that the assumption that $\lim_{n\rightarrow \infty} (|\tgamma^*|+1) (ng)^{aq/2} p^{(c+1)a-1} =\infty$ implies that
$p^{ac}(ng)^{aq/2} \gg (J-|\tgamma^*|)^{1-a}$, hence the definition of the exponential function implies that \eqref{eq:bound_sum_overdisp} converges to
\begin{align}
\exp \left\{  \frac{(|\tgamma^*|+1)(J-|\tgamma^*|)^{1-a}}{p^{ac} (ng)^{aq/2}}\right\} -1 \asymp
\frac{(|\tgamma^*|+1)(J-|\tgamma^*|)^{1-a}}{p^{ac} (ng)^{aq/2}},
\nonumber
\end{align}
since $\lim_{z \rightarrow \infty} (e^z - 1)/z=1$.
Combining this expression with \eqref{eq:bound_pp_spurious_overdisp} and the bound
$\log \max_{l \geq |\tgamma^*|} \omega(l) \leq  (\bar{J}-|\tgamma^*|) [(c+1)\log(p) + \epsilon]$
given in \eqref{eq:bound_maxomegal}, we obtain
\begin{align}
(\bar{J} - |\tgamma^*|) [ \log \left((n g)^{q/2} \right) + (c+1)\log(p) + \epsilon]
\frac{(|\tgamma^*|+1)(J-|\tgamma^*|)^{1-a}}{p^{ac} (ng)^{aq/2}}.
\nonumber
\end{align}
Since $(J-|\tgamma^*|)<p$, we have $(J-|\tgamma^*|)^{1-a}/p^{ac} < 1/p^{a(c+1)-1}$, concluding the proof.

\subsection{Part (ii)}
Let $\lambda_\gamma= \frac{\mu^T\mu}{2 \max\{\sigma^2, \phi/b''(0) \} \log(\mu^T\mu)}$.
The proof strategy is to split $\tilde{p}(S^c \mid y)$ into two terms corresponding to models with smaller and larger size than $|\tgamma^*|$.
\begin{align}
E_{F_0} \left( \tilde{p}(S^c \mid y) \right)=
 E_{F_0} \left( \sum_{l= 0}^{|\tgamma^*|} \sum_{|\gamma|=l, \gamma \not\supset \tgamma^*} \tilde{p}(\gamma \mid y) \right)  
+ E_{F_0} \left( \sum_{l= |\tgamma^*|+1}^{\bar{J}} \sum_{|\gamma|=l, \gamma \not\supset \tgamma^*} \tilde{p}(\gamma \mid y) \right).
\nonumber
\end{align}
Consider the second term. Using expression \eqref{eq:bfrate_nonspur_bothcases} proven in Theorem \ref{thm:bf_ala_highdim} Part (ii),
there exists a finite $n_0= \max_{|\gamma|>|\tgamma^*|, \gamma \not\supset \tgamma^*} n_{0\gamma}$ such that
\begin{align}
E_{F_0} \left( \sum_{l= |\tgamma^*|+1}^{\bar{J}} \sum_{|\gamma|=l, \gamma \not\supset \tgamma^*} \tilde{p}(\gamma \mid y) \right)  
\leq \sum_{l= |\tgamma^*|+1}^{\bar{J}} \sum_{|\gamma|=l, \gamma \not\supset \tgamma^*}  
\left( \frac{e^{- \lambda_\gamma }}{(n g)^{\frac{(p_\gamma - p_{\tgamma^*})}{2}}} \frac{p(\gamma)}{p(\tgamma^*)} \right)^b
\label{eq:bound_pp_bignonspur}
\end{align}
for all $n \geq n_0$ and any fixed $b<1$, which we take to be arbitrarily close to 1. 
The existence of such a finite $n_0$ uniformly bounding all $n_{0 \gamma}$ follows from Condition (D4),
specifically using that $e^{\lambda_\gamma } (n g)^{\frac{(p_\gamma - p_{\tgamma^*})}{2}} p(\tgamma^*)/ p(\gamma)$
is uniformly bounded by $e^{\bar{\lambda}} (n g)^{\frac{(p_\gamma - p_{\tgamma^*})}{2}} p(\tgamma^*)/ p(\gamma)$, which under Condition (D4) converges to infinity,
and arguing as in \eqref{eq:condition_ncp_proof_large}.

Let $q= \min\{p_1,\ldots, p_J\}$ be the size of the smallest group
and $\bar{\lambda}= \min_{|\gamma|>|\tgamma^*|, \gamma \not\subset \tgamma^*} \lambda_\gamma$.
Using the expression for $p(\gamma)/p(\tgamma^*)$ 
and noting that there are ${J \choose l} - {J - |\tgamma^*| \choose l - |\tgamma^*| }$ models of size $l$ such that $\gamma \not\supset \tgamma^*$,
gives that \eqref{eq:bound_pp_bignonspur} is bounded above by
\begin{align}
 \sum_{l=|\tgamma^*|+1}^{\bar{J}} \frac{e^{-b\bar{\lambda}}}{\left(p^{b c} (ng)^{bq/2}\right)^{l-|\tgamma^*|}} {l \choose |\tgamma^*|}^b {J - |\tgamma^*| \choose l - |\tgamma^*|}^{-b} \left[ {J \choose l} - {J - |\tgamma^*| \choose l - |\tgamma^*| } \right]
\nonumber \\
\leq 
\sum_{l=|\tgamma^*|+1}^{\bar{J}} \frac{e^{-b\bar{\lambda}}}{\left(p^{b c} (ng)^{bq/2}\right)^{l-|\tgamma^*|}} {l \choose |\tgamma^*|}
\left(\frac{l - |\tgamma^*|}{J - |\tgamma^*|}\right)^{b(l - |\tgamma^*|)} J^l
\nonumber \\
< e^{-b\bar{\lambda}} J^{|\tgamma^*|} \sum_{l=|\tgamma^*|+1}^{\bar{J}} \frac{J^{l-|\tgamma^*|}}{\left(p^{b c} (ng)^{bq/2}\right)^{l-|\tgamma^*|}} 
\left(\frac{\bar{J} - |\tgamma^*|}{J - |\tgamma^*|}\right)^{b(l - |\tgamma^*|)}
{l \choose |\tgamma^*|}
\nonumber \\
= e^{-b\bar{\lambda}} J^{|\tgamma^*|} \sum_{l=|\tgamma^*|+1}^{\bar{J}} 
\left( \frac{J^{1-b} (\bar{J} - |\tgamma^*|)^b}{(1 - |\tgamma^*|/J)^b p^{b c} (ng)^{bq/2}} \right)^{l - |\tgamma^*|}
{l \choose |\tgamma^*|}
\nonumber
\end{align}
since $b<1$ implies that ${l \choose |\tgamma^*|}^b < {l \choose |\tgamma^*|}$.
Note that this expression is analogous to \eqref{eq:bound_pp_spurious_overdisp}. Therefore, we may use the Binomial coefficient ordinary generating function to carry out the sum and the exponential function to obtain a simpler asymptotically equivalent expression, to show that
\begin{align}
\eqref{eq:bound_pp_bignonspur} \leq 
e^{-b\bar{\lambda}} J^{|\tgamma^*|}
\frac{(|\tgamma^*|+1) J^{1-b} (\bar{J} - |\tgamma^*|)^b}{(1 - |\tgamma^*|/J)^b p^{b c} (ng)^{bq/2}}
\nonumber\\
\leq e^{(|\tgamma^*| + 1)\log J} \frac{(|\tgamma^*|+1) (\bar{J} - |\tgamma^*|)^b}{[e^{\bar{\lambda}} p^c (ng)^{q/2}]^b}
= \frac{(|\tgamma^*|+1) e^{(|\tgamma^*| + 1)\log J + \log{\bar{J}}}}{[e^{\bar{\lambda}} p^c (ng)^{q/2}]^b}
\label{eq:bound_pp_bibnonspur_final}
\end{align}
for large enough $n$.

Consider now the models of size $|\gamma| \leq |\tgamma^*|$. 
Let $\underline{\lambda}= \min_{|\gamma| \leq |\tgamma^*|} \lambda_\gamma/\max\{|\tgamma^*| - |\gamma|,1\}$,
$q'= \max \{p_1,\ldots,p_J\}$ the largest group size and note that
$$
\frac{e^{- \lambda_\gamma }}{(n g)^{\frac{(p_\gamma - p_{\tgamma^*})}{2}}} \leq 
\frac{e^{-\underline{\lambda}\max\{|\tgamma^*|-|\gamma|,1\}}}{(ng)^{(|\gamma|-|\tgamma^*|)/2}}.
$$
Using again Theorem \ref{thm:bf_ala_highdim} Part (ii),
there exists a finite $n_0'= \max_{|\gamma|\leq |\tgamma^*|} n_{0\gamma}$ such that
\begin{align}
 E_{F_0} \left( \sum_{l= 0}^{|\tgamma^*|} \sum_{|\gamma|=l} \tilde{p}(\gamma \mid y) \right) \leq
 \sum_{l=0}^{|\tgamma^*|} \sum_{|\gamma|=l}  
\left( \frac{e^{- \lambda_\gamma }}{(n g)^{\frac{(p_\gamma - p_{\tgamma^*})}{2}}} \frac{p(\gamma)}{p(\tgamma^*)} \right)^b
\nonumber \\
\leq e^{-\underline{\lambda}} {J \choose |\tgamma^*|} 
+ \sum_{l=0}^{|\tgamma^*|-1} \left( \frac{e^{\underline{\lambda}}}{p^{bc}(ng)^{q'/2}} \right)^{l - |\tgamma^*|}
\left( \frac{{J \choose |\tgamma^*|}}{{J \choose l}} \right)^b {J \choose l}
\label{eq:bound_pp_smallnonspur}
\end{align}
for all $n> n_0'$ and any fixed $b<1$, the right-hand side following from noting that there are ${J \choose l}$ models of size $|\gamma|=l<|\tgamma^*|$ and that $(ng)^{p_{\tgamma^*} - p_\gamma} \leq (ng)^{q'}$.
The existence of a finite $n_0'$ uniformly bounding all $n_{0 \gamma}$ follows from Condition (D5),
specifically bounding $\lambda_\gamma$ with $\underline{\lambda}$ and arguing as in \eqref{eq:condition_ncp_proof_small}.

Using that ${J \choose l}^{1-b} < J^{l(1-b)}$ and ${J \choose |\tgamma^*|} \leq J^{|\tgamma^*|}$, the geometric series, and simple algebra we obtain
\begin{align}
\eqref{eq:bound_pp_smallnonspur} 
&\leq 
e^{-\underline{\lambda} + |\tgamma^*| \log J}  +
e^{|\tgamma^*| \log J} \left[e^{-\underline{\lambda}} p^{bc}(ng)^{q'/2}\right]^{|\tgamma^*|} 
\sum_{l=0}^{|\tgamma^*|-1} \left( \frac{ e^{\underline{\lambda}} J^{(1-b)}}{\left( p^{bc}(ng)^{q'/2} \right)} \right)^l
\nonumber \\
&< e^{-\underline{\lambda} + |\tgamma^*| \log J}  + e^{|\tgamma^*| \log J} [e^{-\underline{\lambda}} p^{bc}(ng)^{q'/2}]^{|\tgamma^*| + 1} 
\nonumber \\
& < \frac{e^{|\tgamma^*| \log J}}{e^{\underline{\lambda}}}  + 
\frac{e^{|\tgamma^*| \log J}}{e^{(|\tgamma^*| + 1) [\underline{\lambda} - c\log(p) - \log(ng^{q'/2})]}}.
\nonumber
\end{align}
The proof concludes by combining this expression with \eqref{eq:bound_pp_bibnonspur_final}, and noting that $\bar{J} \leq J$.

\section{Supplementary results}
\label{sec:suppl_results}

\subsection{Computation times and posterior inference in Poisson regression}
\label{ssec:suppl_simpoisson}

\begin{figure}
\begin{center}
\begin{tabular}{cc}
$p=10$, $n \in \{100,500,1000,5000\}$ &
$n=500$, $p \in \{5,10,25,50\}$ \\
\includegraphics[width=0.5\textwidth]{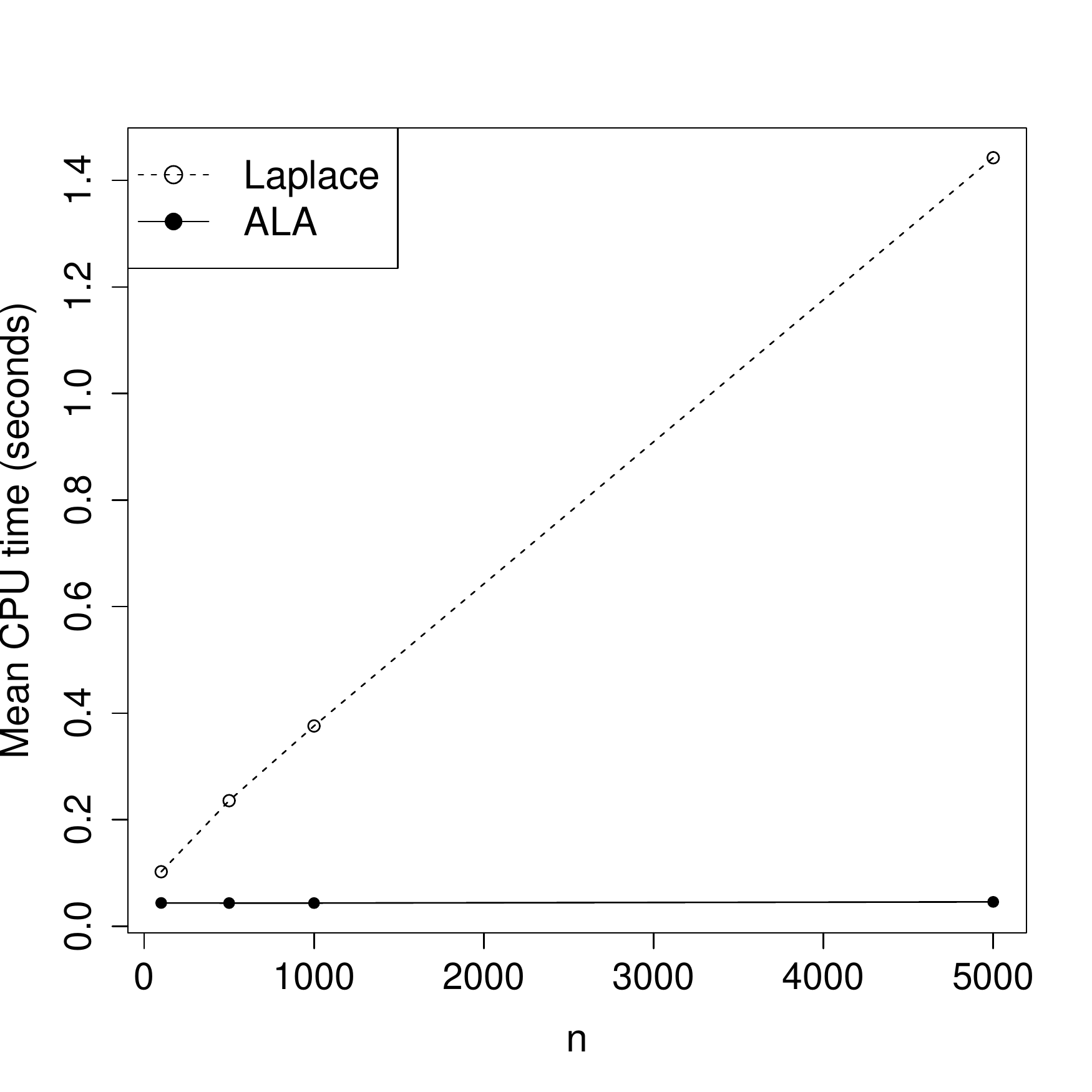} & 
\includegraphics[width=0.5\textwidth]{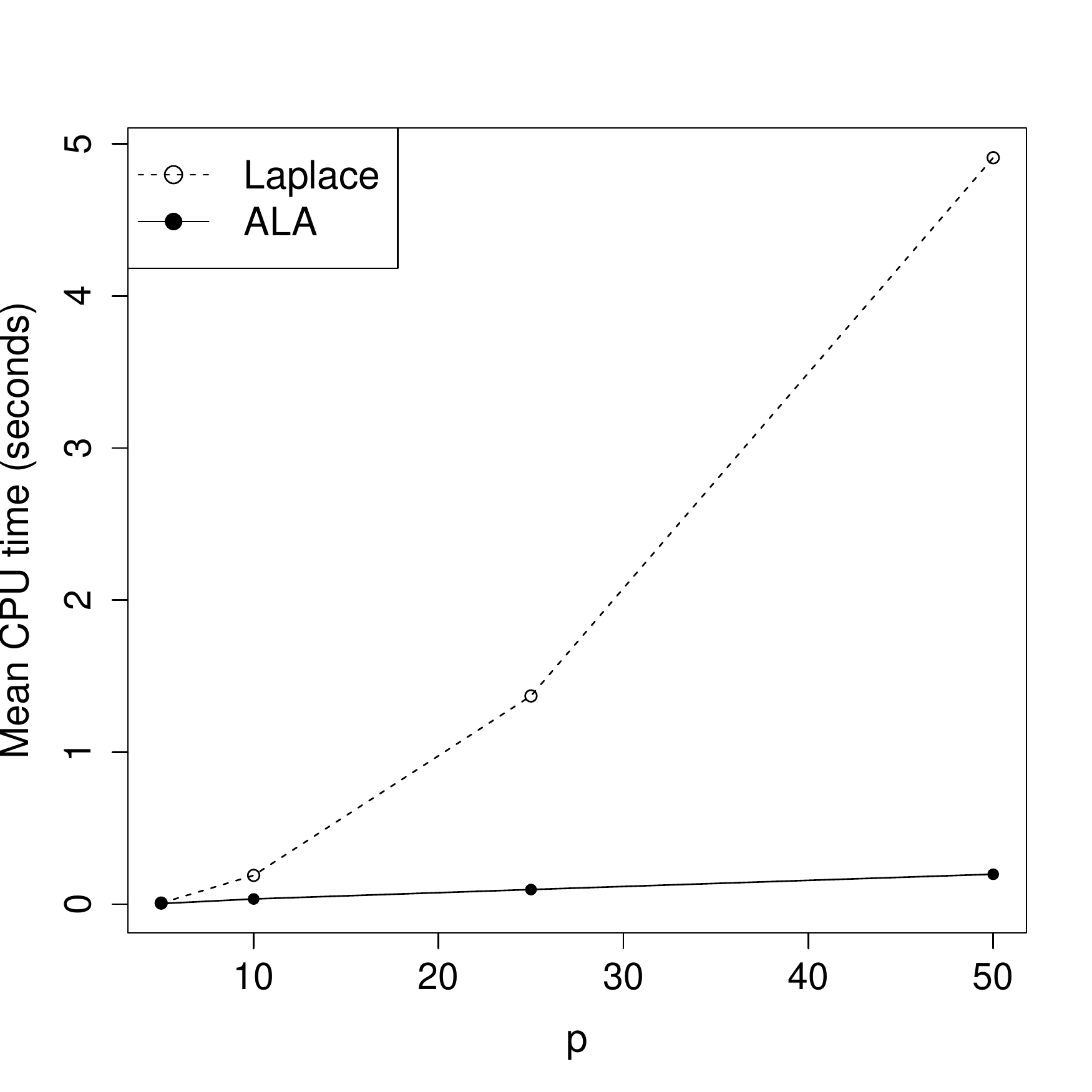} \\
\includegraphics[width=0.5\textwidth]{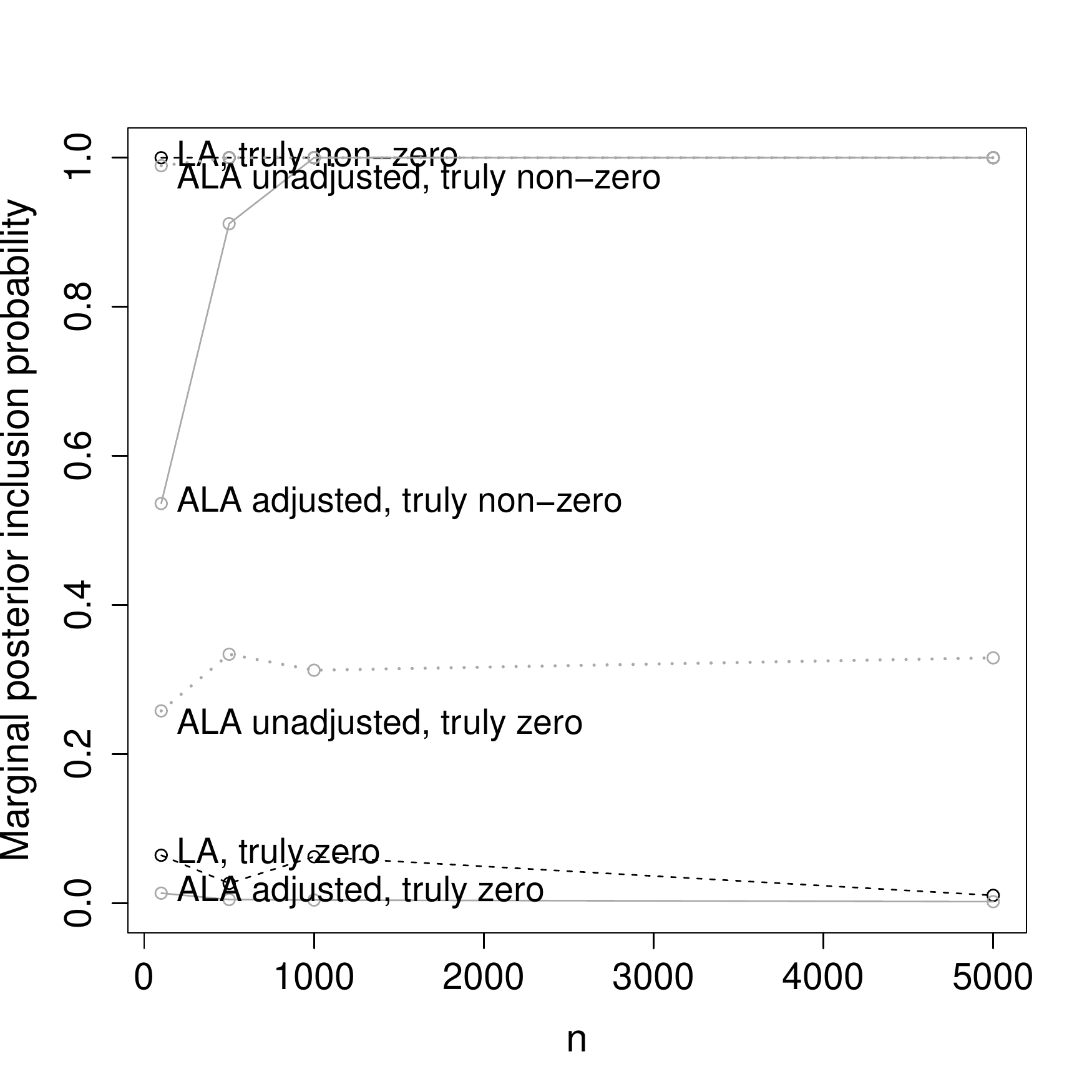} & 
\includegraphics[width=0.5\textwidth]{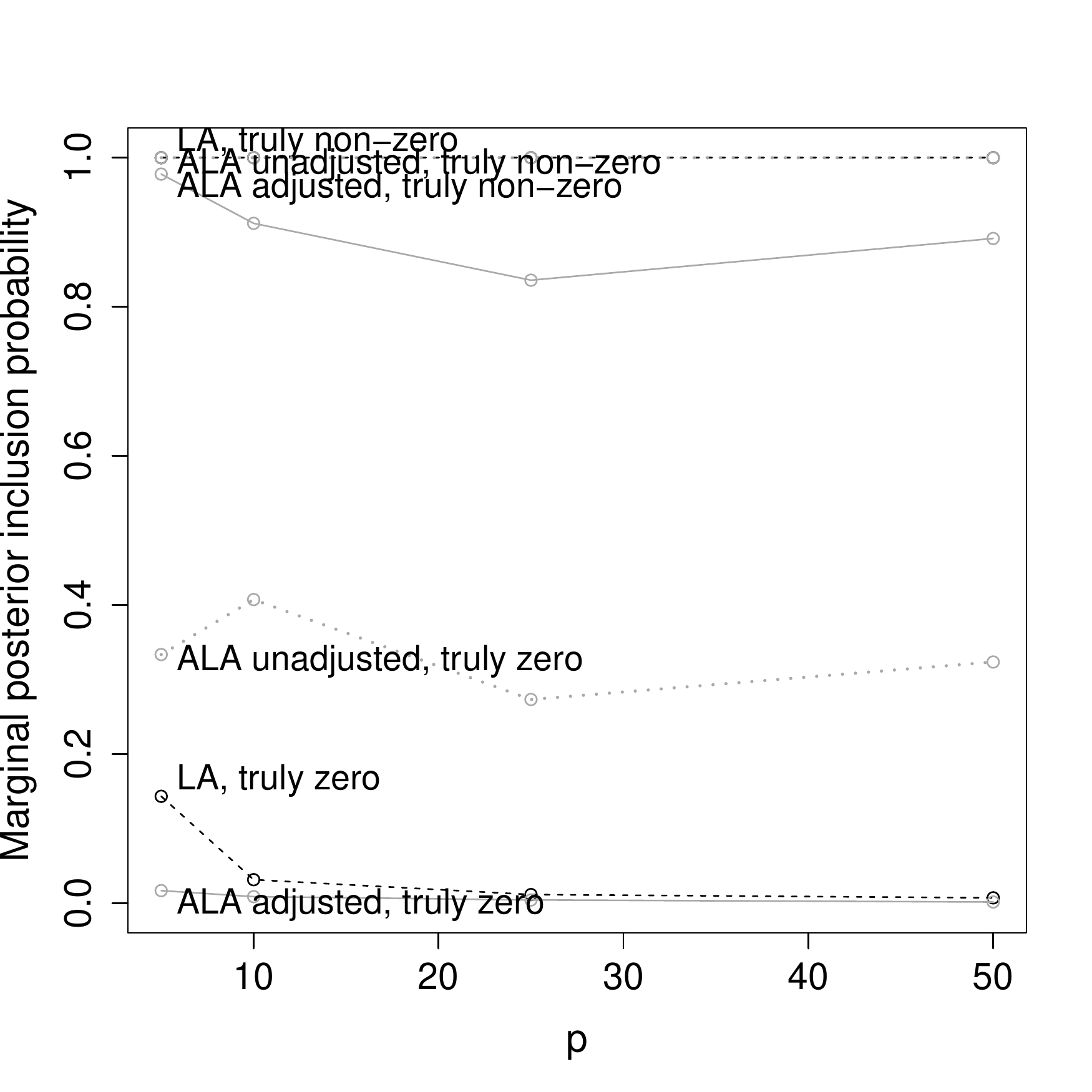} \\
\end{tabular}
\end{center}
\caption{Poisson simulation. The top panels show the average run time (seconds) in a single-core i7 processor with Ubuntu 20.04 operating system,
and the bottom panels the average posterior inclusion probabilities for variables that are truly active ($\beta_j^* \neq 0$) and truly inactive ($\beta_j^* =0$).
The left panels correspond to fixed $p=10$ and various $n$, and the right panels to fixed $n=500$ and various $p$
}
\label{fig:simpoisson}
\end{figure}

We illustrate the computation times and quality of inference in a setting where the data are truly Poisson with $\beta^*=(0,\ldots,0,0.5,1)$ and $z_i \sim N(0,\Sigma)$ with $\Sigma_{ii}=1$ and $\Sigma_{ij}=0.5$ for all $i \neq j$.
Figure \ref{fig:simpoisson} summarizes the results.
The top panels show that ALA run times are significantly more scalable as either $n$ or $p$ grow.
The bottom panels show that ALA tends to assign lower posterior inclusion probabilities $P(\beta_j \neq 0 \mid y)$ than LA, particularly for small $n$ or large $p$. For comparison, the figure also displays results for ALA where no over-dispersion adjustment is applied. This unadjusted version assigns relatively high posterior inclusion probabilities to truly inactive variables, even for large $n$.




\subsection{Combining ALA with importance sampling and variable screening}
\label{ssec:ala_importancesampling}

\begin{table}
  \begin{center}
    
\begin{tabular}{rrrrrr}   \hline
    & ALA & ALA - 1 iter & ALA - 2 iter & LA   & ALA screening + LA  \\  \hline
  $z_{1}$ & 0.0343 & 1.0000 & 1.0000 & 1.0000  & 1.000 \\ 
  $z_{2}$ & 0.0548 & 0.1000 & 0.1109 & 0.1147  & - \\ 
  $z_{3}$ & 0.0619 & 0.0904 & 0.0981 & 0.1014  & - \\ 
  $z_{4}$ & 0.0436 & 0.0683 & 0.0735 & 0.0760  & - \\ 
  $z_{5}$ & 0.0450 & 0.0719 & 0.0773 & 0.0796  & - \\ 
  $z_{6}$ & 0.0434 & 0.0722 & 0.0781 & 0.0806  & - \\ 
  $z_{7}$ & 0.0528 & 0.0771 & 0.0827 & 0.0854  & - \\ 
  $z_{8}$ & 0.0464 & 0.0739 & 0.0802 & 0.0829  & - \\ 
  $z_{9}$ & 0.9743 & 0.9978 & 0.9986 & 0.9987  & 0.9999 \\ 
  $z_{10}$ & 1.0000 & 1.0000 & 1.0000 & 1.0000 & 1.0000 \\
  \hline
\end{tabular}

\end{center}
\caption{ Marginal posterior inclusion probabilities in logistic simulation with $p=10$, $n=1000$, $\beta_1^*=2$, $\beta_9^*=0.5$, $\beta_{10}^*=1$, $\beta_2^*=\ldots,\beta_8^*=0$ }
\label{tab:is_logistic}
\end{table}

\begin{table}
  \begin{center}
    
\begin{tabular}{rrrrrr}   \hline
 & ALA & ALA - 1 iter & ALA - 2 iter & LA       & ALA screening + LA  \\  \hline
  $z_{1}$ & 0.0061 & 0.0173 & 0.0179 & 0.0188   &  - \\ 
  $z_{2}$ & 0.0066 & 0.0171 & 0.0174 & 0.0179   &  - \\ 
  $z_{3}$ & 0.0066 & 0.0132 & 0.0138 & 0.0143   &  - \\ 
  $z_{4}$ & 0.9602 & 1.0000 & 1.0000 & 1.0000   & 1.000 \\ 
  $z_{5}$ & 1.0000 & 1.0000 & 1.0000 & 1.0000   & 1.000 \\ 
  $z_{1}^2$ & 0.0049 & 0.0000 & 0.0020 & 0.0113 &  - \\ 
  $z_{2}^2$ & 0.0054 & 0.0000 & 0.0031 & 0.0099 &  - \\ 
  $z_{3}^2$ & 0.0054 & 0.0000 & 0.0078 & 0.0118 &  - \\ 
  $z_{4}^2$ & 0.9996 & 0.0000 & 0.0000 & 0.0269 & 0.1043  \\ 
  $z_{5}^2$ & 1.0000 & 0.0000 & 0.0000 & 0.0273 & 0.1059  \\ 
 \hline
\end{tabular}

\end{center}
\caption{ Marginal posterior inclusion probabilities in Poisson simulation with $p=10$, $n=1000$, $\beta_4^*=0.5$, $\beta_5^*=1$ and $\beta_j^*=0$ for $j \not\in \{4,5\}$ }
\label{tab:is_poisson}
\end{table}

\begin{figure}
\begin{center}
\begin{tabular}{cc}
\includegraphics[width=0.5\textwidth]{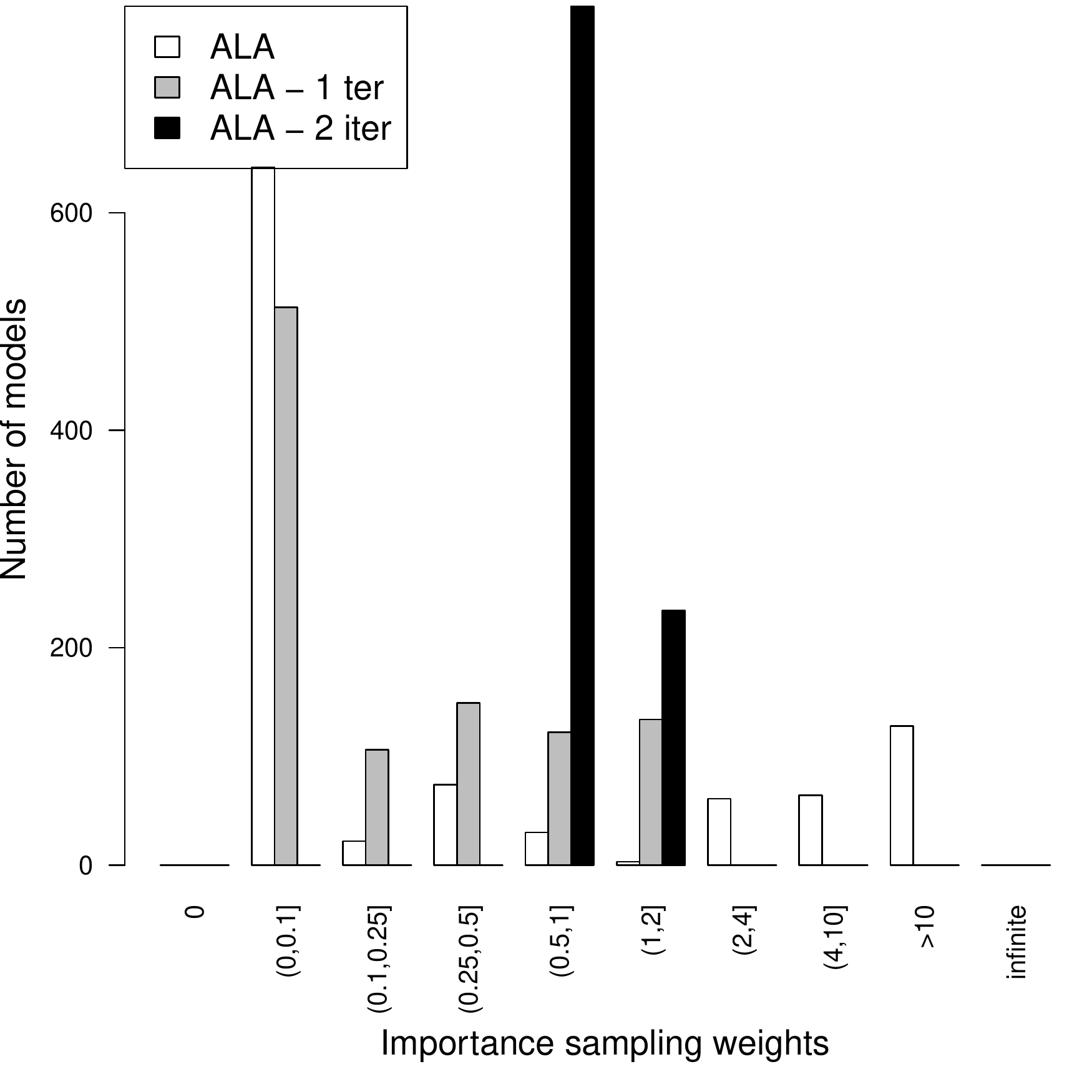} &
\includegraphics[width=0.5\textwidth]{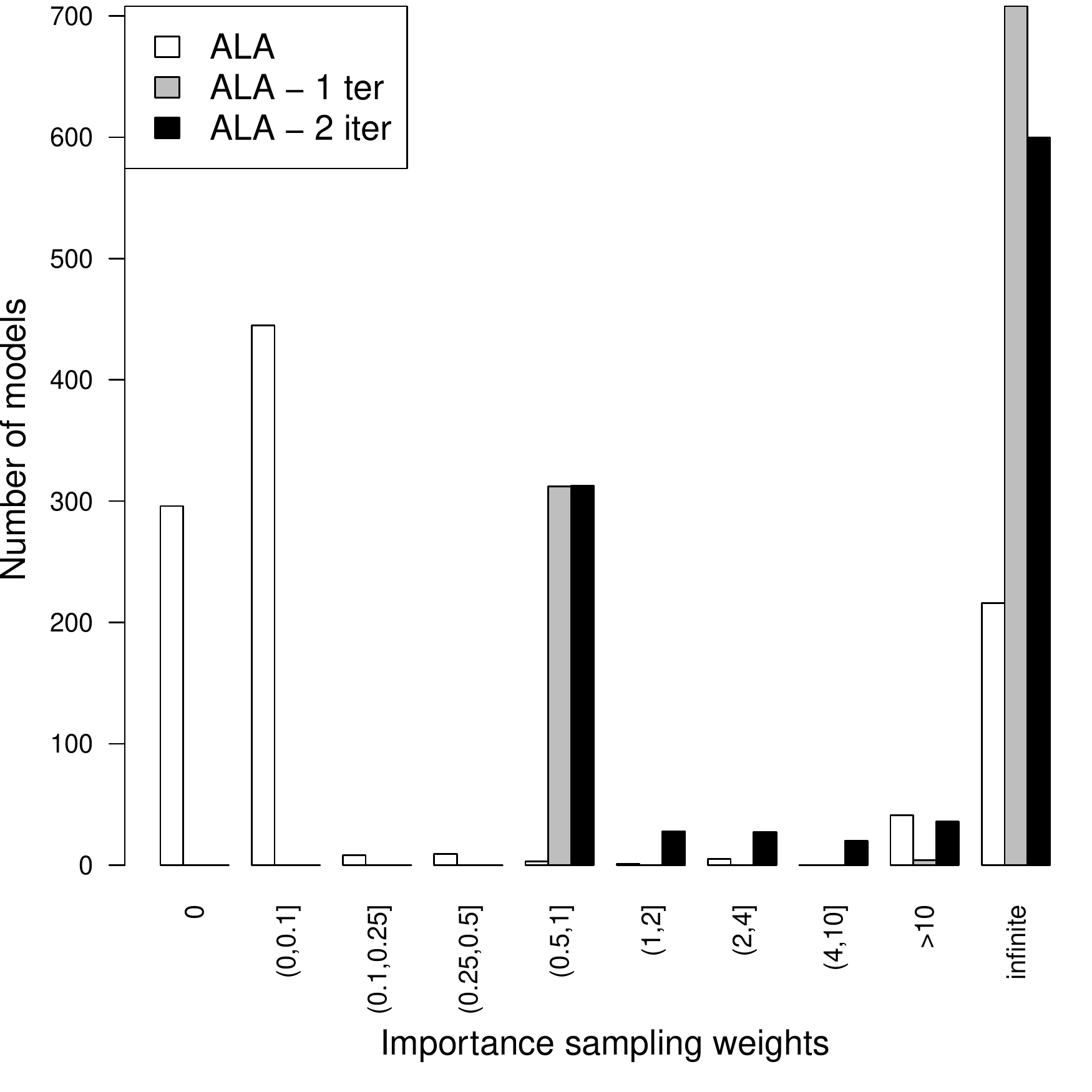} 
\end{tabular}
\end{center}
\caption{ Distribution of importance sampling weights. 
Left: logistic simulation with $p=10$, $n=1000$, $\beta_1^*=2$, $\beta_9^*=0.5$, $\beta_{10}^*=1$, $\beta_2^*=\ldots,\beta_8^*=0$.
Right: Poisson simulation with $p=10$, $n=1000$, $\beta_4^*=0.5$, $\beta_5^*=1$ and $\beta_j^*=0$ for $j \not\in \{4,5\}$ }
\label{fig:importance_sampling}
\end{figure}

This section discusses the use of ALA as a tool to pre-screen high posterior probability models. Specifically, we consider a strategy suggested by a referee based on using importance sampling to re-weight the models sampled from the ALA posterior. We illustrate that such strategy can lead to degenerate weights. We also illustrate a referee suggestion on using ALA to screen out inactive variables, and show that such strategy is effective in our examples.

Let $\tilde{p}(\gamma \mid y)$ be the posterior probability assigned by ALA and $\hat{p}(\gamma \mid y)$ that by the Laplace approximation (LA). Assuming that one has $B$ samples $\gamma^{(1)},\ldots,\gamma^{(B)}$ from $\tilde{p}(\gamma \mid y)$, importance sampling draws samples from $\hat{p}(\gamma \mid y)$ by assigning weights $w(\gamma^{(b)})= \hat{p}(\gamma \mid y) / \tilde{p}(\gamma \mid y)$ for $b=1,\ldots,B$. For this importance sampling strategy to be effective, it is necessary that the weights are bounded away from infinity. From a theoretical point of view, this is in general not true. As shown by our theory, as the sample size $n \rightarrow \infty$ the ALA and LA posteriors concentrate on two different models $\tilde{\gamma}^*$ and $\gamma^*$ respectively, hence the weight $w(\gamma^*) \stackrel{P}{\longrightarrow} \infty$.

This asymptotic argument does not rule out that in practice for finite $n$ there can be situations where importance sampling is actually effective. To illustrate this issue we consider two examples.
The first example retakes the logistic regression simulation from Section \ref{ssec:suppl_simpoisson} and illustrates a situation where importance sampling is effective. The second example is a Poisson regression where one obtains degenerate weights. In both cases we have $n=1000$, $p=10$.

Consider first the logistic regression example, where recall that $\beta_1^*=2$ for the intercept term, $\beta_9^*=0.5$, $\beta_{10}^*=1$, and $\beta_2^*=\ldots,\beta_8^*=0$. 
Figure \ref{fig:importance_sampling} (left) shows the distribution of the importance sampling weights across all $2^{10}$ models for the default ALA expanding at $\beta_0=0$, and ALA with refined $\beta_0$ obtained by taking 1 and 2 Newton--Raphson iterations.
The Newton--Raphson iterations led to a significantly improved behavior of the weights. For the default ALA no infinite weights were observed, the maximum being $\max_\gamma w(\gamma)= 64.6$.
Table \ref{tab:is_logistic} shows the posterior marginal inclusion probabilities. Interestingly, all versions of ALA and LA provided strong evidence for including the truly active covariates 9-10 and discarding the truly inactive covariates 2-8, the main discrepancy being that the default ALA excluded the intercept.

Consider next a Poisson regression setting where the design matrix has 10 covariates (plus the intercept, whose inclusion is forced into the model). Covariates 1-5 were drawn from a multivariate Gaussian with zero mean, unit variances and all pairwise correlations set to 0.5. Covariates 6-10 were the quadratic effect corresponding to covariates 1-5, e.g. $z_{i j+5}= z_{ij}^2$ for $j=1,\ldots,5$. 
We simulated $n=1000$ observations from a Poisson data-generating model where only Covariates 4-5 had a linear effect, and no quadratic effects were present. Specifically, the Poisson log-expectation was $z_i^T \beta^*$, where $\beta_4^*=0.5$, $\beta_5^*=1$ and all remaining $\beta_j^*=0$.
Figure \ref{fig:importance_sampling} (right) shows the distribution of the importance sampling weights across all $2^{10}$ models for the three ALA versions described above. In all cases there is a significant number of models with practically infinite weight (up to numerical precision).
Interestingly, the marginal posterior inclusion probabilities in Table \ref{tab:is_poisson} reveal that, despite the weight degeneracy, ALA was effective at detecting both truly active covariates 4-5, and it also discarded several truly spurious covariates.

These findings suggest that one may use ALA as a tool to screen out unpromising covariates, and then using LA to obtain posterior probabilities on the remaining ones. We tested this strategy in the logistic and Poisson regression examples. In both cases we screened out parameters with ALA marginal posterior inclusion probability $\tilde{p}(\gamma_j = 1 \mid y) < 0.5$. The last columns in Tables \ref{tab:is_logistic}-\ref{tab:is_poisson} that the use of screening followed by LA calculations selected the correct model that only includes truly active covariates, although in the  Poisson case the inclusion probabilities for two truly inactive parameters were somewhat inflated.

Altogether, these results suggest that there is promise in using ALA as a tool to pre-screen active variables. In fact, Theorem \ref{thm:sparse_linproj} gives conditions where such screening has theoretical guarantees of working asymptotically.
Regarding the use of importance sampling, while also potentially interesting, our examples show that designing effective strategies require a more careful study that is left as future work.

\subsection{Group constraints in linear regression with categorical predictors}
\label{ssec:suppl_simgaussian}

\begin{figure}
\begin{center}
\begin{tabular}{cc}
Categorical $x_{i1}$ ($\beta_1^*=0.3$, $J=50$, $p=51$) & Categorical $x_{i1}$ ($\beta_1^*=0.6$, $J=50$, $p=51$) \\
\includegraphics[width=0.5\textwidth]{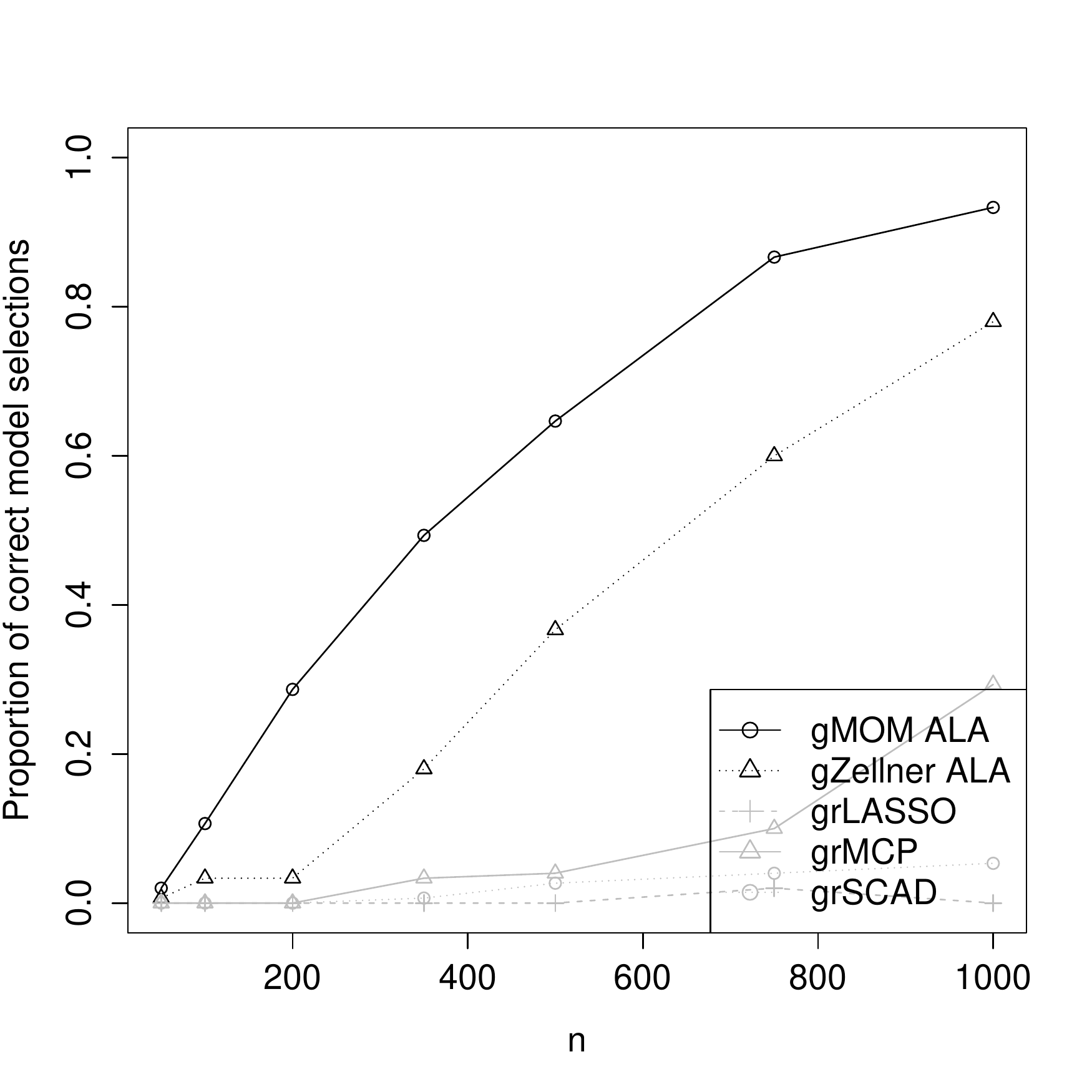} &
\includegraphics[width=0.5\textwidth]{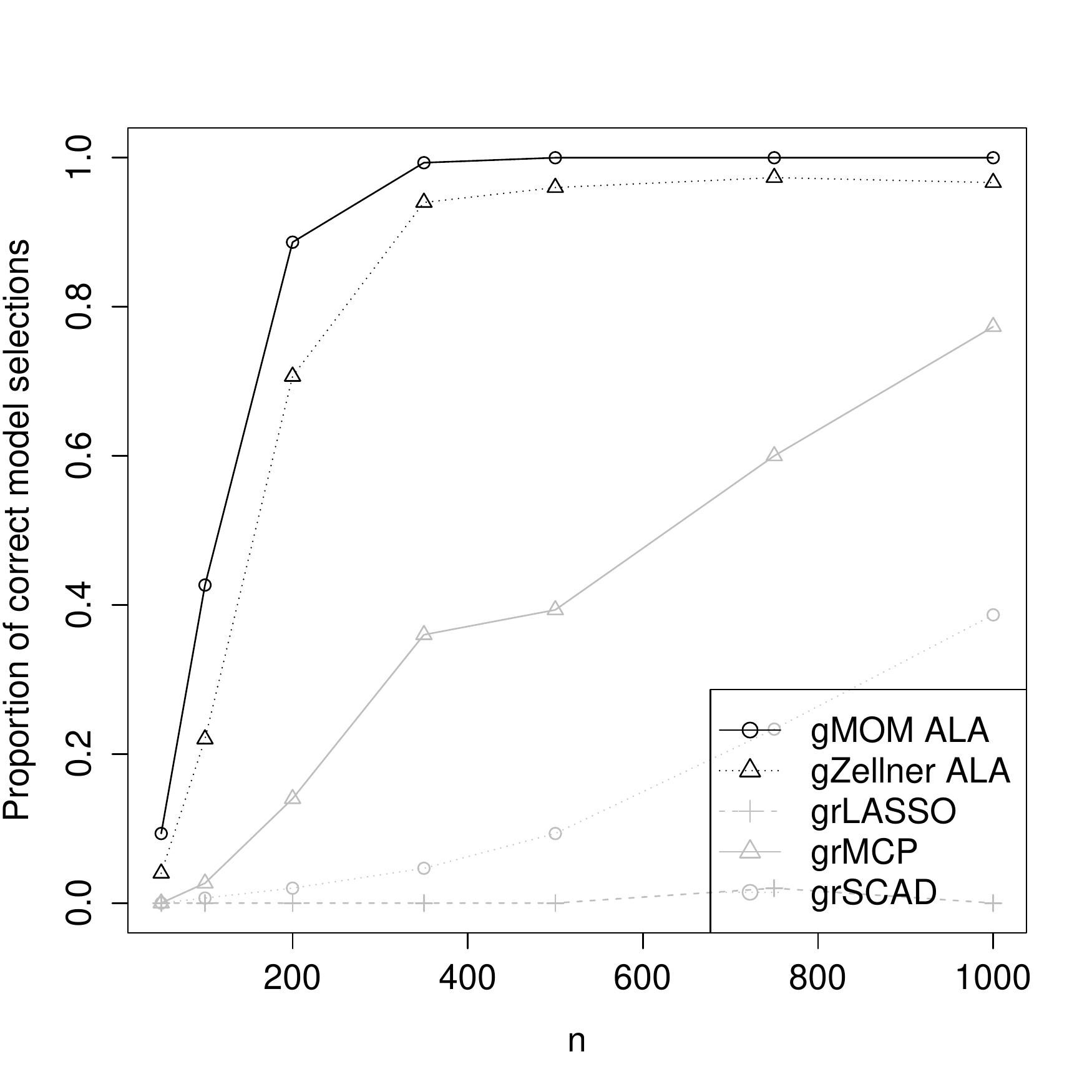} \\
Categorical $x_{i1}$ ($\beta_1^*=0.3$, $J=5$, $p=6$) & Categorical $x_{i1}$ ($\beta_1^*=0.6$, $J=5$, $p=6$) \\
\includegraphics[width=0.5\textwidth]{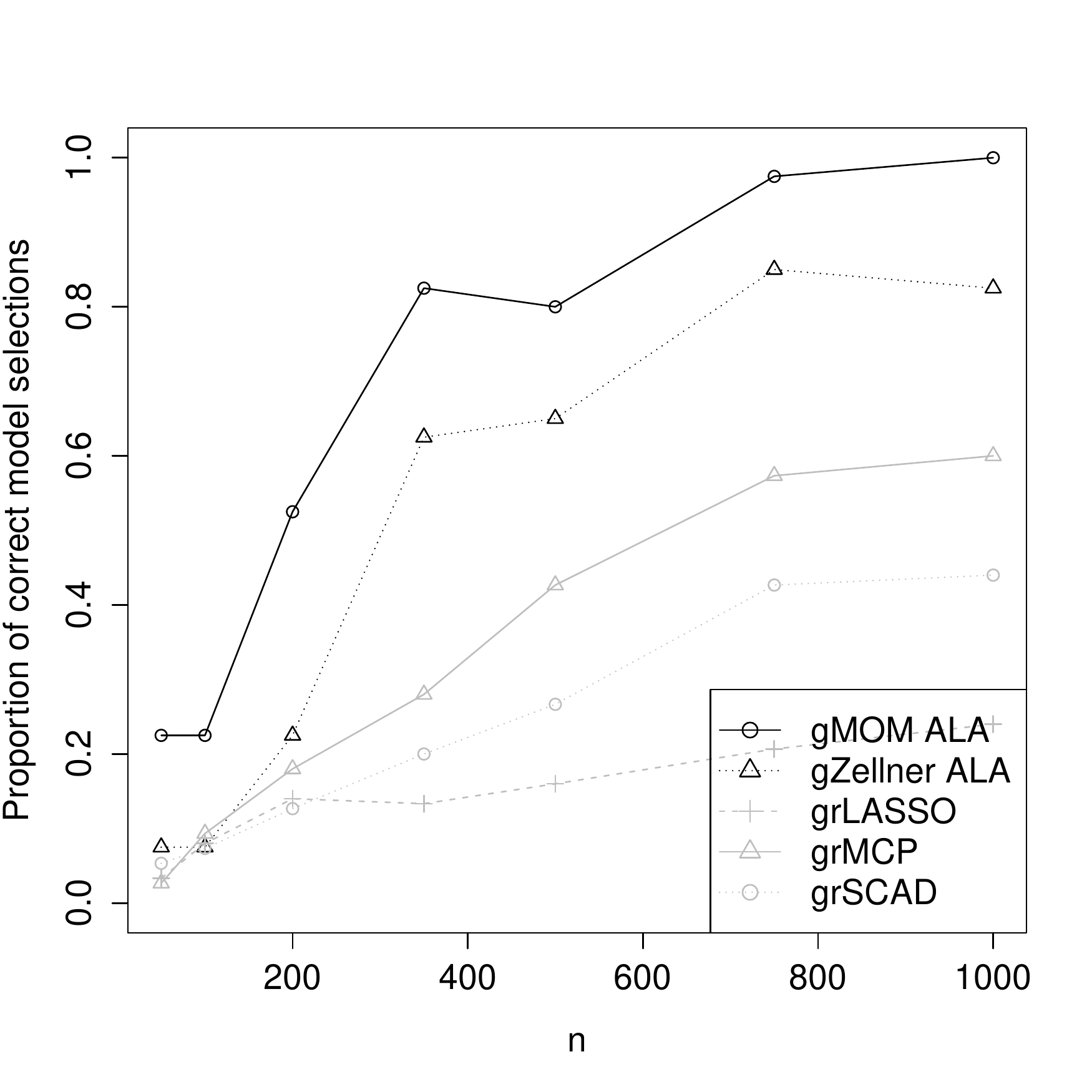} &
\includegraphics[width=0.5\textwidth]{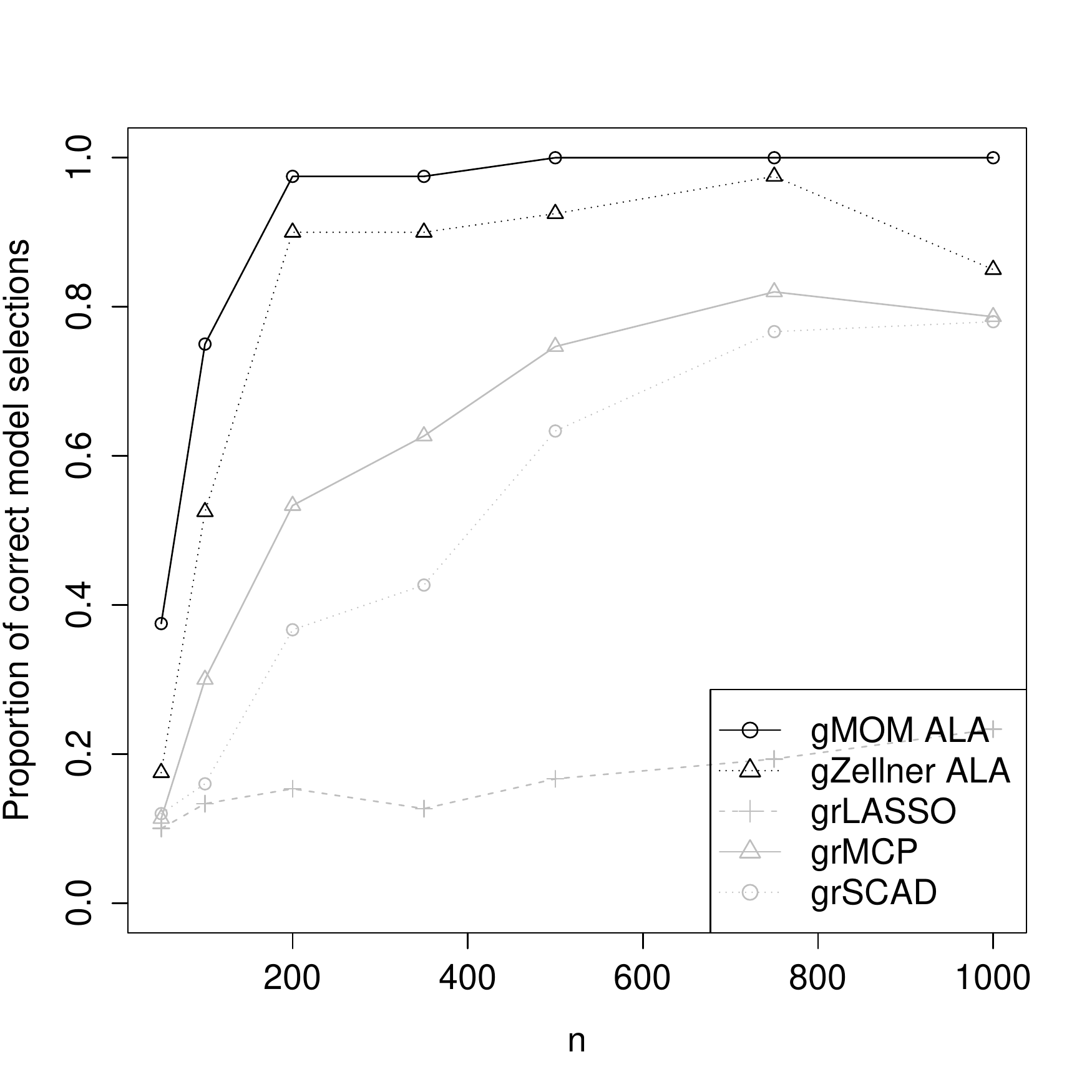}
\end{tabular}
\end{center}
  \caption{Proportion of correct model selections in Gaussian simulations with groups defined by a categorical $x_{i1} \in \{1,2,3\}$ and true effects $(-\beta_1^*,\beta_1^*)$}
\label{fig:simlm_discrete}
\end{figure}

We present a simulation example where one wishes to incorporate group constraints. We report averaged results across 150 simulations.
Continuous covariates were generated from a multivariate Normal with zero mean, unit variances and 0.5 pairwise correlations.
The regularization parameter for grLASSO, grMCP and grSCAD was set via 10-fold cross-validation.

In the example there are $J$ groups. The first group corresponds to a categorical covariate $x_{i1} \in \{1,2,3\}$, coded in the design matrix via two binary indicators $(z_{i1}, z_{i2}) \in \{0,1\}^2$, and the remaining $J-1$ groups to individual continuous covariates $x_{i2},\ldots,x_{ip}$. Hence, the design matrix $Z$ is $n \times (J+1)$, where $z_{ij+1}= x_{ij}$ for $j=2,\ldots,J$.
The data generating truth was
\begin{equation*}
  y_i = - \beta_1^* z_{i1} + \beta_1^* z_{i2} + z_{i3} 0.5 + \epsilon_i
\end{equation*}
where $\epsilon= (\epsilon_1,\ldots,\epsilon_n) \sim \mathcal{N}(0,I)$.
We considered $n \in [50,1000]$, $J \in \{5, 50\}$ groups and $\beta_1^* \in \{0.3,0.6\}$.

The top panels in Figure \ref{fig:simlm_discrete} show the proportion of correct model selections in the $J=50$ covariates case, and the bottom panels for the $J=5$ case.
The proportion of correct selections for ALA-based inference under gMOM priors was higher than for exact calculations under the gZellner prior, which was in turn higher than for the three considered penalized likelihood methods.



\begin{figure}
\begin{center}
\begin{tabular}{cc}
$n=50$, $x_i \sim \mathcal{N}(0,I)$ & $n=50$, $x_i \sim \mathcal{N}(0,V)$ \\
\includegraphics[width=0.45\textwidth, height=0.44\textwidth]{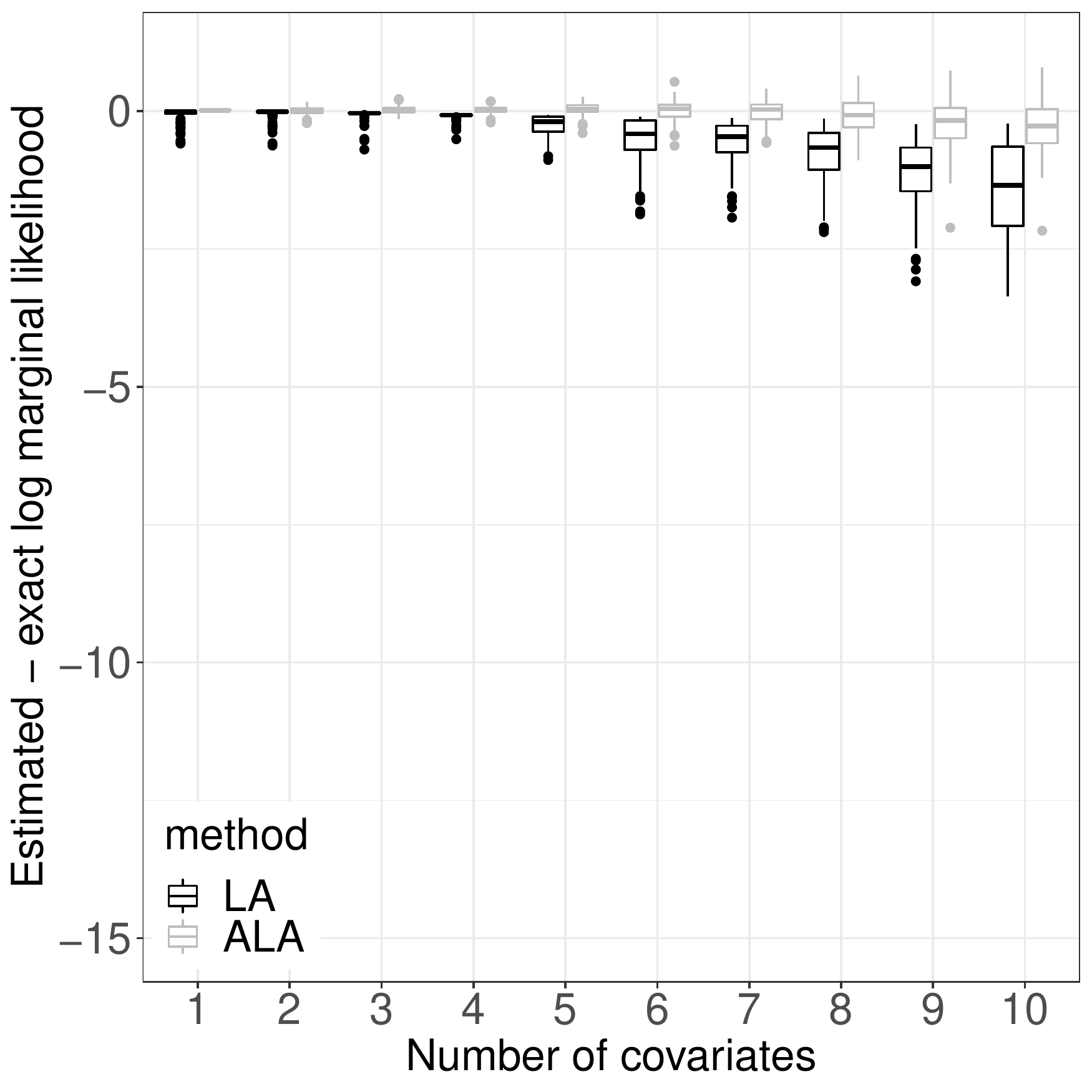} &
\includegraphics[width=0.45\textwidth, height=0.44\textwidth]{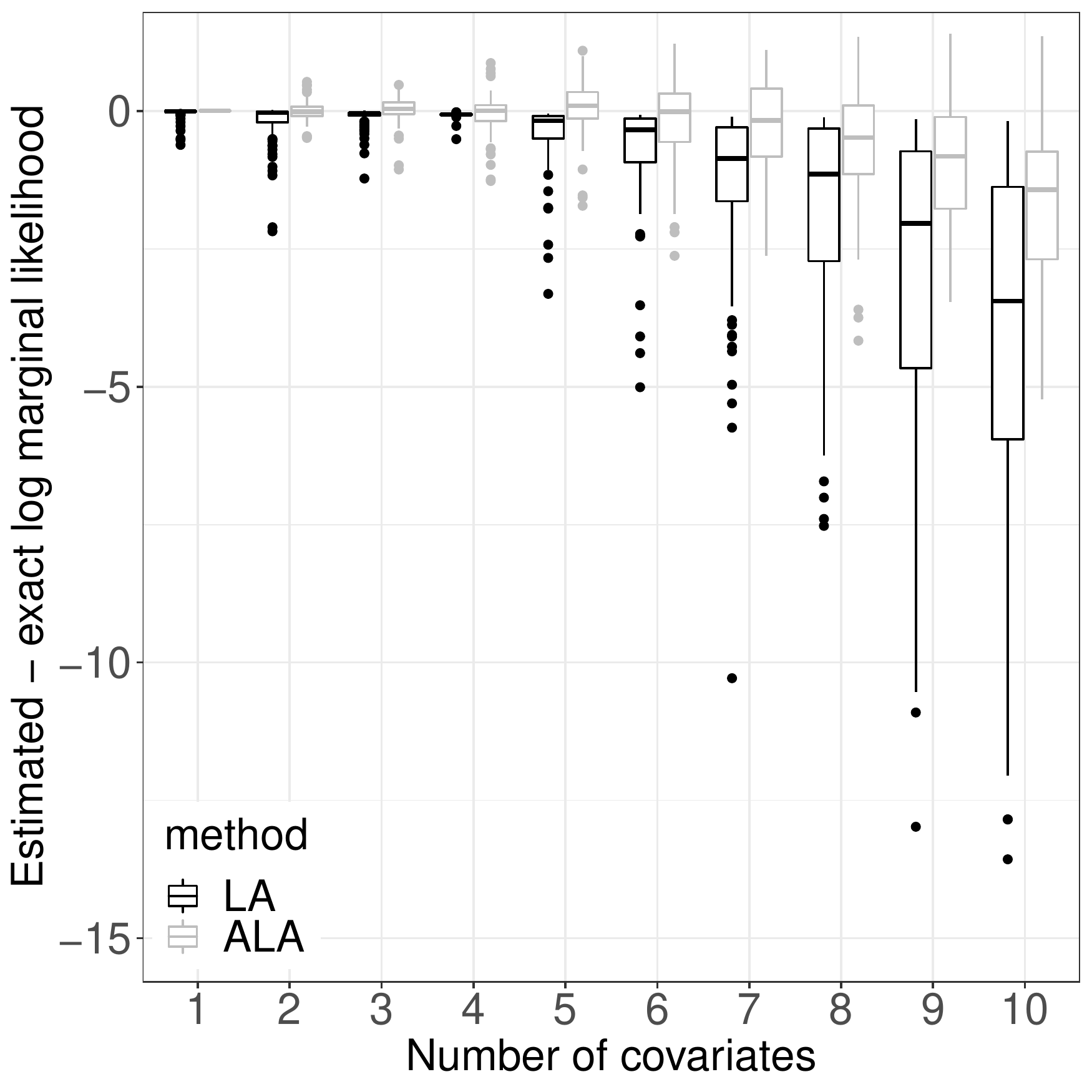} \\
$n=200$, $x_i \sim \mathcal{N}(0,I)$ & $n=200$, $x_i \sim \mathcal{N}(0,V)$ \\
\includegraphics[width=0.45\textwidth, height=0.44\textwidth]{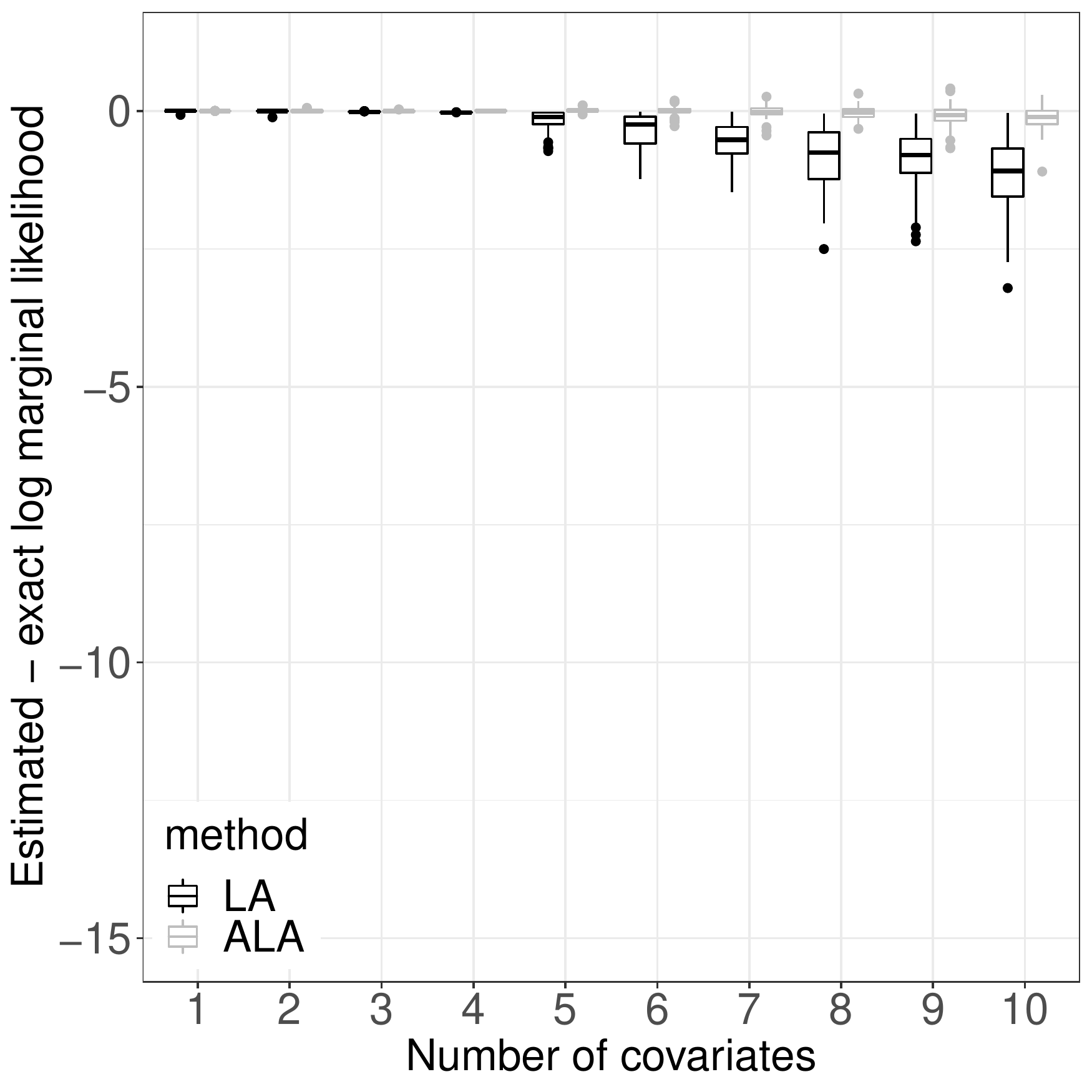} &
\includegraphics[width=0.45\textwidth, height=0.44\textwidth]{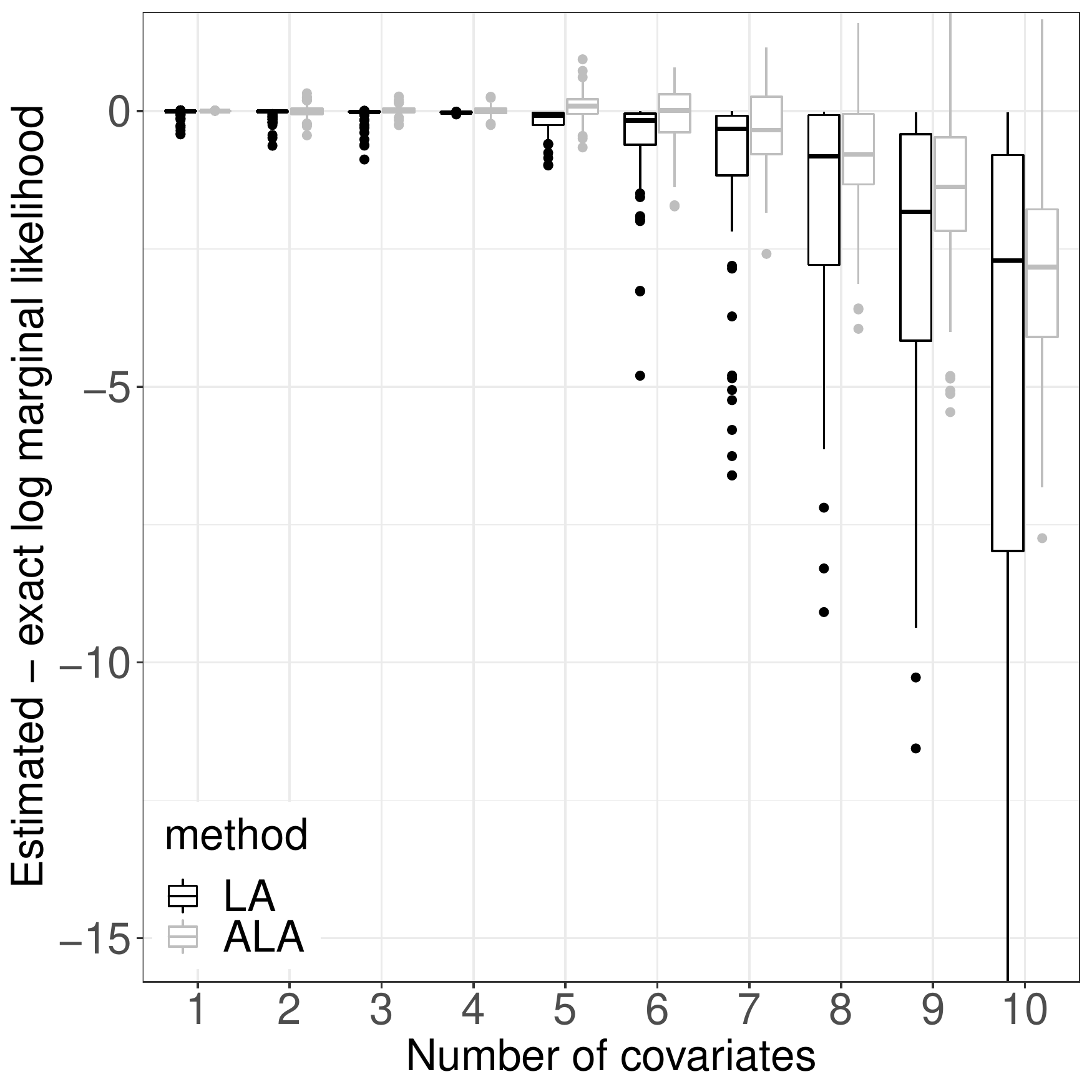} \\
$n=500$, $x_i \sim \mathcal{N}(0,I)$ & $n=500$, $x_i \sim \mathcal{N}(0,V)$ \\
\includegraphics[width=0.45\textwidth, height=0.44\textwidth]{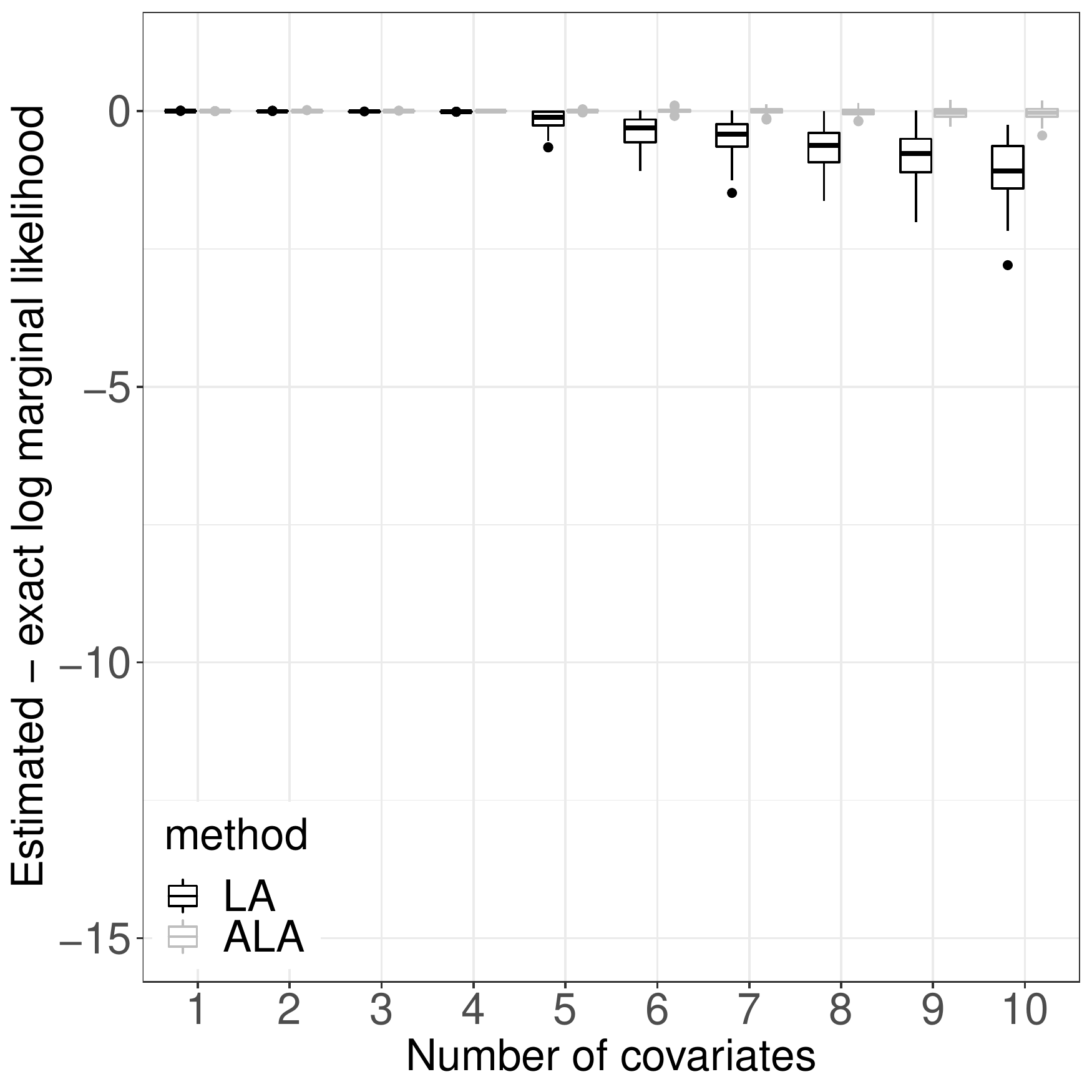} &
\includegraphics[width=0.45\textwidth, height=0.44\textwidth]{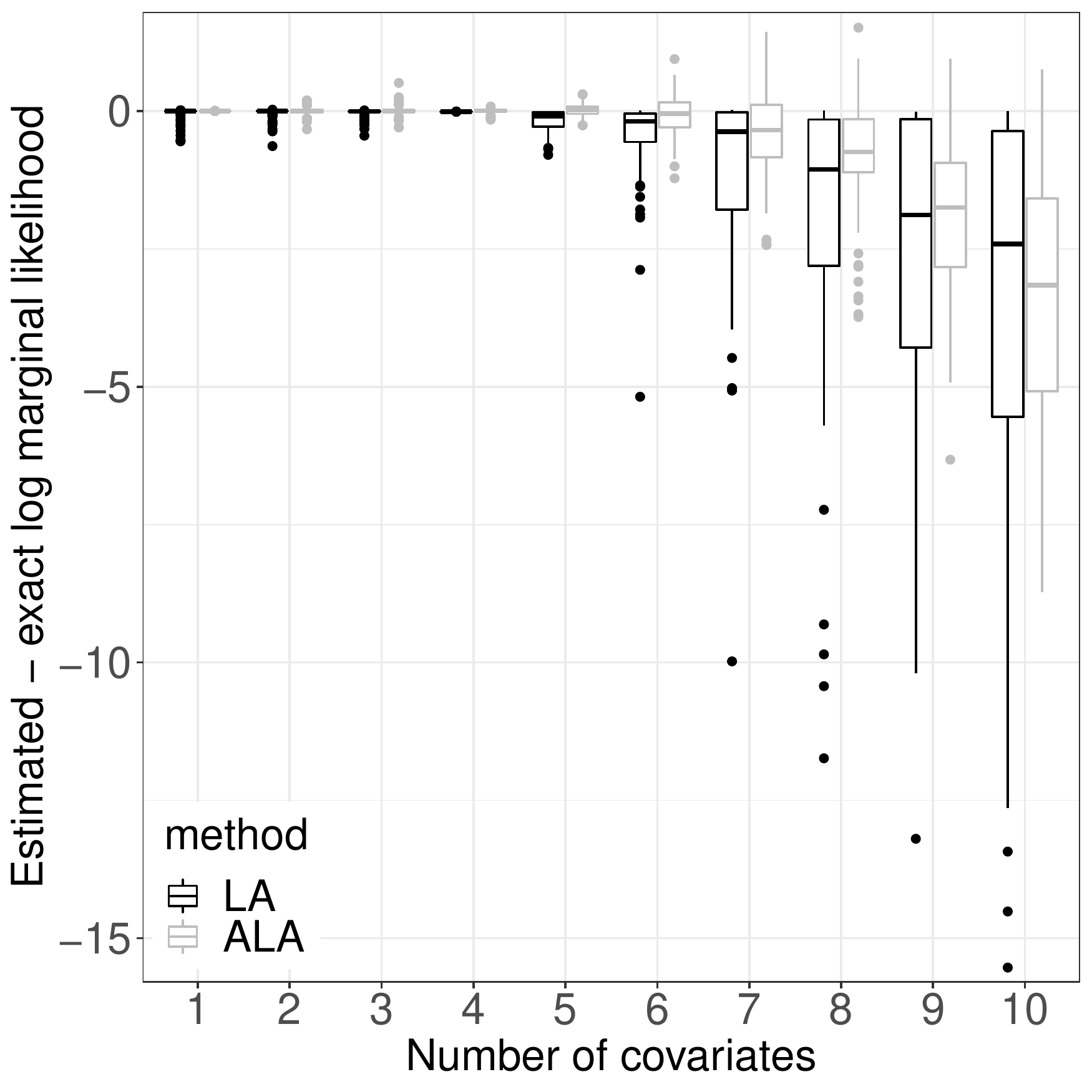} 
\end{tabular}
\end{center}
\caption{Simulated linear regression, gMOM prior.
Mean error $\log (\tilde{p}^N(y \mid \gamma) / p^N(y \mid \gamma))$ for $x_i \sim \mathcal{N}(0,I)$ and $x_i \sim \mathcal{N}(0,V)$, random non-diagonal $V$
}
\label{fig:ala_la_error_boxplot}
\end{figure}

\subsection{Poverty line data}
\label{ssec:povertyline_suppl}

The listing below provides the names of the covariates in the poverty line example, the name of the corresponding variables in the dataset, the codes corresponding to each category, and a basic descriptive summary providing the number of individuals in each category. 

Table \ref{tab:poverty_maineffects_alamodel_mle} displays the maximum likelihood parameter estimates for the model selected by ALA (marginal posterior inclusion probability $>0.5$) in the main effects analysis.
Table \ref{tab:poverty_selected} displays the estimated marginal posterior inclusion probabilities by LA and ALA for each group of variables corresponding to main effects and interactions.
Table \ref{tab:poverty_alamodel_mle} provides the maximum likelihood estimates for variables selected by ALA.

For a discussion of the variables selected by ALA and LA, and also GLASSO in the main effects analysis, see Section \ref{ssec:salary_data}.
Regarding GLASSO in the interactions analysis, it selected 4 main effects and 16 interaction terms. For 1 out of the 16 interactions (hispanic vs citizenship) the two corresponding main effects were also selected.
The main effects were hispanic, race, citizenship and occupation.
The interactions were female vs. marital, race and difficulty, marital status vs age, citizenship and hours worked, hispanic vs. citizenship and worker class,
age vs citizen, class worker vs age, moving state and hours worked,
and hours worked vs education, age, firmsize and moving state.

\scriptsize
\begin{verbatim}
-------------------------------------------------------------------------------
Female (female)

    0     1 
45798 43957 

-------------------------------------------------------------------------------
Hispanic origin (hispanic)

    0     1 
72001 17754 

-------------------------------------------------------------------------------
Marital status (marital)

    Divorced      Married NeverMarried    Separated      Widowed 
        9088        54068        23419         1986         1194 

-------------------------------------------------------------------------------
Education level (edu)

   Adv     CG    HSD No HSD  ProfD     SC 
  9245  21021  26054   7438    966  25031 

No HSD: no high-school diploma; HSD: high-school diploma; ProfD: professional 
degree; SC: some college; CG: college graduate; Adv: advanced degree (MSc/PhD)

-------------------------------------------------------------------------------
Age (age)

  18   19   20   21   22   23   24   25   26   27   28   29   30   31   32   33 
 313  578  863 1009 1253 1510 1737 1949 1907 2083 2178 2304 2252 2260 2217 2259 
  34   35   36   37   38   39   40   41   42   43   44   45   46   47   48   49 
2283 2379 2264 2388 2470 2419 2540 2413 2277 2276 2346 2389 2355 2357 2386 2300 
  50   51   52   53   54   55   56   57   58   59   60   61   62   63   64   65 
2206 2163 2040 1999 2077 1977 1832 1656 1635 1537 1374 1292 1084 1004  764  601 

-------------------------------------------------------------------------------

Race (race)

                    Asian                     Black Hawaiian/Pacific Islander 
                     6388                     11403                       533 
          Native American                     White 
                     1179                     70252 

-------------------------------------------------------------------------------
Citizenship status (citizen)

Born abroad, US parents                 Born US                 Citizen 
                    805                   71581                    7964 
            Not citizen 
                   9405 

-------------------------------------------------------------------------------
Nativity status (nativity)

      Foreign born    Foreign parents  One native parent Two native parents 
             18770               4451               2872              63573 
           Unknown 
                89 

-------------------------------------------------------------------------------
Occupation (occ)

 architect/engineer   arts/sports/media business operations      computer/maths 
               1644                1362                2399                3318 
       construction           education          extraction             farming 
               5072                5508                 143                 559 
            finance                food              health        installation 
               2332                3537                7866                3596 
              legal         maintenance          management              office 
                926                3460                8681               13031 
      personal care          production          protective               sales 
               2427                6326                1990                7310 
            science      social service          technician      transportation 
               1002                1788                 392                5086 

-------------------------------------------------------------------------------
Firm size (firmsize)

          0    1 (1-24)   2 (25-99) 3 (100-499) 4 (500-999)   5 (>1000) 
       1424       23023        9071       12191        5102       38944 

-------------------------------------------------------------------------------
Presence of any impairment/difficulty (difficulty)

    0     1 
87380  2375 

-------------------------------------------------------------------------------
Class of worker (classworker)

Government employee               Other       Self-employed         Wage/salary 
              15615                1455                4816               67869 

-------------------------------------------------------------------------------
Moved state (movedstate)

From out US     From US          No 
        238        8820       80697 

-------------------------------------------------------------------------------
Number of hours worked / week (hoursworked)

   35    36    37    38    39    40 
 4505  1814   765  1680   237 80754 
-------------------------------------------------------------------------------
\end{verbatim}
\normalsize

\small
\begin{longtable}{rrc} \hline
 & MLE & 95\% CI  \\ \hline
  age & -0.32 & (-0.364,-0.281) \\ 
  female & 0.33 & (0.248,0.41) \\ 
  difficulty & 0.34 & (0.154,0.533) \\ 
  hispanic & 0.25 & (0.149,0.344) \\ 
  hours worked & -0.10 & (-0.125,-0.07) \\ 
\hline
 Marital status & & \\
  Divorced & 0 & \\
  Married & -0.76 & (-0.871,-0.644) \\ 
  NeverMarried & -0.31 & (-0.43,-0.184) \\ 
  Separated & 0.24 & (0.061,0.424) \\ 
  Widowed & -0.21 & (-0.525,0.096) \\
\hline
 Education & & \\
No HSD & 0 & \\
  HSD & -0.59 & (-0.688,-0.493) \\ 
  ProfD & -1.83 & (-2.552,-1.117) \\ 
  SC & -0.96 & (-1.077,-0.85) \\ 
  CG & -1.44 & (-1.585,-1.294) \\ 
  Adv & -1.99 & (-2.251,-1.731) \\ 
\hline
Race & & \\
Asian & 0 & \\
  Black & 0.53 & (0.347,0.707) \\ 
  Hawaiian/Pacific Islander & -0.01 & (-0.446,0.431) \\ 
  Native American & 0.71 & (0.436,0.984) \\ 
  White & -0.03 & (-0.202,0.14) \\ 
\hline
Citizenship & & \\
Born abroad, US parents & 0 & \\
  Born US & 0.61 & (0.093,1.135) \\ 
  Citizen & 0.77 & (0.24,1.304) \\ 
  Not citizen & 1.15 & (0.62,1.671) \\ 
\hline
Occupation & & \\
architect/Engineer & 0 & \\
arts/sports/media & 0.68 & (0.017,1.342) \\ 
business operations & -0.15 & (-0.84,0.549) \\ 
computer/maths & -0.08 & (-0.753,0.593) \\ 
construction & 0.88 & (0.283,1.466) \\ 
education & 0.99 & (0.381,1.593) \\ 
extraction & 0.74 & (-0.311,1.786) \\ 
farming & 1.30 & (0.668,1.941) \\ 
finance & -0.14 & (-0.838,0.568) \\ 
food & 1.37 & (0.782,1.96) \\ 
health & 0.65 & (0.056,1.241) \\ 
installation & 0.59 & (-0.016,1.201) \\ 
legal & -0.85 & (-1.923,0.219) \\ 
maintenance & 1.19 & (0.596,1.781) \\ 
management & 0.27 & (-0.329,0.868) \\ 
office & 0.38 & (-0.208,0.969) \\ 
personal care & 1.02 & (0.418,1.617) \\ 
production & 0.74 & (0.153,1.334) \\ 
protective & 0.62 & (-0.03,1.275) \\ 
sales & 0.87 & (0.283,1.459) \\ 
science & 0.54 & (-0.235,1.319) \\ 
social service & 0.74 & (0.085,1.395) \\ 
technician & 0.39 & (-0.563,1.346) \\ 
transportation & 0.86 & (0.271,1.454) \\ 
\hline
Firm size & & \\
0 & 0 & \\
 1-24 & -0.37 & (-1.209,0.46) \\ 
 25-99 & -0.63 & (-1.474,0.213) \\ 
 100-499 & -0.85 & (-1.694,-0.006) \\ 
 500-999 & -0.73 & (-1.581,0.12) \\ 
 $>$1000 & -0.77 & (-1.614,0.065) \\ 
\hline
Worker class & & \\
Wage/salary & 0 & \\
Self-employed & 0.73 & (0.591,0.863) \\ 
Gov employee & -0.44 & (-0.589,-0.283) \\ 
Other & 2.08 & (1.247,2.905) \\ 
\hline
Moved state & & \\
From out US & 0 & \\
From US     & -0.84 & (-1.224,-0.451) \\ 
No          & -1.35 & (-1.733,-0.973) \\ 
   \hline
\caption{Model selected by ALA in poverty line example. Point estimates and 95\% confidence intervals from a maximum likelihood fit}
\label{tab:poverty_maineffects_alamodel_mle}
\end{longtable}
\normalsize

\begin{table}
\begin{center}
\begin{tabular}{lrr}
   & $P(\gamma_j=1 \mid y)$ (LA) & $\tilde{P}(\gamma_j=1 \mid y)$ (ALA) \\ \hline
  female & 1.000 & 1.000 \\ 
  hispanic & 1.000 & 1.000 \\ 
  marital & 1.000 & 1.000 \\ 
  education & 1.000 & 1.000 \\ 
  age & 1.000 & 1.000 \\ 
  race & 1.000 & 1.000 \\ 
  citizen & 1.000 & 1.000 \\ 
  occupation & 1.000 & 1.000 \\ 
  firmsize & 1.000 & 1.000 \\ 
  classworker & 1.000 & 1.000 \\ 
  movedstate & 1.000 & 1.000 \\ 
  hoursworked & 1.000 & 1.000 \\ 
  female:marital & 1.000 & 1.000 \\ 
  female:edu & 0.982 & 0.996 \\ 
  difficulty & 0.979 & 0.936 \\ 
  hispanic:marital & 1.000 & 0.530 \\ 
  female:hispanic & 0.501 & 0.989 \\ 
  maritalMarried:edu & 0.000 & 1.000 \\ 
  hispanic:movedstate & 0.000 & 1.000 \\ 
  maritalMarried:movedstate & 0.000 & 1.000 \\ 
  edu:movedstate & 0.000 & 1.000 \\ 
  edu:firmsize & 0.990 & 0.000 \\ 
  female:race & 0.211 & 0.000 \\ 
  hispanic:firmsize & 0.087 & 0.000 \\ 
  hispanic:race & 0.018 & 0.000 \\ 
  race:movedstate & 0.000 & 0.009 \\ 
  female:movedstate & 0.000 & 0.007 \\ 
  female:firmsize & 0.005 & 0.000 \\ 
  nativity & 0.001 & 0.000 \\ 
   \hline
\end{tabular}
\end{center}
\caption{Poverty data. Terms with either LA or ALA marginal posterior inclusion probability $>0.001$}
\label{tab:poverty_selected}
\end{table}

\small
\begin{longtable}{rrc} \hline
 & MLE & 95\% CI  \\ \hline
  female & 0.93 & (0.655,1.202) \\ 
  hispanic & 0.49 & (-0.29,1.272) \\ 
  maritalMarried & -0.08 & (-1.534,1.368) \\ 
  maritalNeverMarried & -0.25 & (-1.698,1.194) \\ 
  maritalSeparated & -14.19 & (-1166.616,1138.246) \\ 
  maritalWidowed & -0.14 & (-1.13,0.842) \\ 
  eduHSD & -0.33 & (-1.32,0.661) \\ 
  eduProfD & -10.74 & (-200.424,178.944) \\ 
  eduSC & 0.67 & (-0.524,1.854) \\ 
  eduCG & -0.05 & (-1.115,1.01) \\ 
  eduAdv & -0.19 & (-1.645,1.257) \\ 
  age & -0.36 & (-0.406,-0.324) \\ 
  raceBlack & 0.32 & (0.142,0.493) \\ 
  raceHawaiian/Pacific Islander & -0.03 & (-0.452,0.39) \\ 
  raceNative American & 0.49 & (0.221,0.754) \\ 
  raceWhite & -0.15 & (-0.322,0.011) \\ 
  citizenBorn US & 0.50 & (0.005,0.993) \\ 
  citizenCitizen & 0.62 & (0.112,1.121) \\ 
  citizenNot citizen & 0.99 & (0.496,1.493) \\ 
  occarts/sports/media & 1.04 & (0.387,1.703) \\ 
  occbusiness operations & 0.11 & (-0.578,0.803) \\ 
  occcomputer/maths & 0.04 & (-0.625,0.709) \\ 
  occconstruction & 1.29 & (0.705,1.885) \\ 
  occeducation & 1.12 & (0.519,1.724) \\ 
  occextraction & 0.92 & (-0.095,1.929) \\ 
  occfarming & 1.63 & (0.996,2.26) \\ 
  occfinance & 0.16 & (-0.54,0.855) \\ 
  occfood & 1.61 & (1.025,2.2) \\ 
  occhealth & 0.91 & (0.315,1.497) \\ 
  occinstallation & 0.89 & (0.281,1.493) \\ 
  occlegal & -0.45 & (-1.507,0.612) \\ 
  occmaintenance & 1.56 & (0.975,2.155) \\ 
  occmanagement & 0.66 & (0.064,1.255) \\ 
  occoffice & 0.65 & (0.062,1.235) \\ 
  occpersonal care & 1.51 & (0.918,2.11) \\ 
  occproduction & 1.04 & (0.454,1.631) \\ 
  occprotective & 0.59 & (-0.053,1.236) \\ 
  occsales & 1.24 & (0.658,1.829) \\ 
  occscience & 0.62 & (-0.152,1.391) \\ 
  occsocial service & 0.94 & (0.289,1.593) \\ 
  occtechnician & 0.51 & (-0.434,1.461) \\ 
  occtransportation & 1.19 & (0.597,1.776) \\ 
  difficulty1 & 0.34 & (0.161,0.528) \\ 
  movedstateFrom US & -1.03 & (-2.637,0.571) \\ 
  movedstateNo & -1.62 & (-3.198,-0.036) \\ 
  hoursworked & -0.11 & (-0.138,-0.084) \\ 
  female:hispanic & -0.18 & (-0.34,-0.017) \\ 
  female:maritalMarried & -1.19 & (-1.431,-0.944) \\ 
  female:maritalNeverMarried & 0.02 & (-0.212,0.26) \\ 
  female:maritalSeparated & 0.01 & (-0.386,0.402) \\ 
  female:maritalWidowed & 0.04 & (-0.781,0.857) \\ 
  hispanic:maritalMarried & 0.46 & (0.18,0.729) \\ 
  hispanic:maritalNeverMarried & -0.04 & (-0.304,0.229) \\ 
  hispanic:maritalSeparated & 0.27 & (-0.126,0.665) \\ 
  hispanic:maritalWidowed & 0.33 & (-0.384,1.034) \\ 
  female:eduHSD & -0.08 & (-0.272,0.114) \\ 
  female:eduProfD & 0.24 & (-1.261,1.732) \\ 
  female:eduSC & -0.15 & (-0.374,0.071) \\ 
  female:eduCG & -0.66 & (-0.933,-0.38) \\ 
  female:eduAdv & -0.86 & (-1.389,-0.338) \\ 
  maritalMarried:eduHSD & -0.30 & (-0.64,0.046) \\ 
  maritalNeverMarried:eduHSD & -0.19 & (-0.526,0.154) \\ 
  maritalSeparated:eduHSD & -0.13 & (-0.609,0.346) \\ 
  maritalWidowed:eduHSD & -0.30 & (-1.114,0.52) \\ 
  maritalMarried:eduProfD & 10.53 & (-179.127,200.195) \\ 
  maritalNeverMarried:eduProfD & 8.68 & (-180.987,198.35) \\ 
  maritalSeparated:eduProfD & -1.23 & (-486.004,483.542) \\ 
  maritalWidowed:eduProfD & 1.33 & (-746.496,749.153) \\ 
  maritalMarried:eduSC & -0.78 & (-1.145,-0.41) \\ 
  maritalNeverMarried:eduSC & -0.52 & (-0.869,-0.162) \\ 
  maritalSeparated:eduSC & -0.17 & (-0.691,0.345) \\ 
  maritalWidowed:eduSC & -0.37 & (-1.24,0.508) \\ 
  maritalMarried:eduCG & -0.64 & (-1.095,-0.182) \\ 
  maritalNeverMarried:eduCG & -0.47 & (-0.91,-0.027) \\ 
  maritalSeparated:eduCG & -0.55 & (-1.353,0.262) \\ 
  maritalWidowed:eduCG & 0.00 & (-1.141,1.147) \\ 
  maritalMarried:eduAdv & -0.89 & (-1.658,-0.127) \\ 
  maritalNeverMarried:eduAdv & -0.55 & (-1.306,0.205) \\ 
  maritalSeparated:eduAdv & -1.39 & (-3.487,0.715) \\ 
  maritalWidowed:eduAdv & -0.10 & (-2.267,2.057) \\ 
  hispanic:movedstateFrom US & -0.49 & (-1.259,0.269) \\ 
  hispanic:movedstateNo & -0.32 & (-1.063,0.423) \\ 
  maritalMarried:movedstateFrom US & -0.01 & (-1.449,1.427) \\ 
  maritalNeverMarried:movedstateFrom US & 0.39 & (-1.041,1.814) \\ 
  maritalSeparated:movedstateFrom US & 14.68 & (-1137.747,1167.115) \\ 
  maritalWidowed:movedstateFrom US & 0.19 & (-0.768,1.14) \\ 
  maritalMarried:movedstateNo & 0.21 & (-1.211,1.623) \\ 
  maritalNeverMarried:movedstateNo & 0.37 & (-1.042,1.782) \\ 
  maritalSeparated:movedstateNo & 14.53 & (-1137.9,1166.962) \\ 
  eduHSD:movedstateFrom US & -0.02 & (-0.985,0.947) \\ 
  eduProfD:movedstateFrom US & -0.41 & (-3.5,2.671) \\ 
  eduSC:movedstateFrom US & -1.09 & (-2.26,0.08) \\ 
  eduCG:movedstateFrom US & -0.53 & (-1.549,0.491) \\ 
  eduAdv:movedstateFrom US & -0.67 & (-2.041,0.702) \\ 
  eduHSD:movedstateNo & -0.04 & (-0.977,0.908) \\ 
  eduProfD:movedstateNo & -1.56 & (-4.357,1.238) \\ 
  eduSC:movedstateNo & -1.09 & (-2.24,0.053) \\ 
  eduCG:movedstateNo & -0.61 & (-1.594,0.373) \\ 
  eduAdv:movedstateNo & -0.87 & (-2.154,0.419) \\ 
   \hline
\caption{Model selected by ALA in poverty line example. Point estimates and 95\% confidence intervals from a maximum likelihood fit}
\label{tab:poverty_alamodel_mle}
\end{longtable}
\normalsize

\subsection{Colon cancer survival}
\label{ssec:coloncancer_suppl}

\begin{table}
\begin{center}
\begin{tabular}{|c|ccc|c|} \hline
         & \multicolumn{3}{c|}{Selected variables} & Concordance Index \\
         & Original & Original + Fake & Overlap & \\
gMOM ALA     & 3 &  3  & 3  & 0.64 \\
gZellner ALA &11 & 12  & 1  & 0.66 \\
Cox LASSO    &11 & 14  & 9  & 0.69 \\
AFT LASSO    & 6 & 1   & 1  & 0.51 \\
\hline
\end{tabular}
\end{center}
\caption{Colon cancer survival data. Number of selected variables within original $p=175$, original + 50 fake variables, overlap between these two sets, and concordance index (leave-one-out cross-validation)}
\label{tab:colon}
\end{table}

\cite{calon:2012} showed a strong association between the average expression of 172 genes related to fibroblasts 
and that of growth factor TGFB were associated with lower colon cancer survival (time until recurrence), in human patients in cancer stages 1-3.
We perform a deeper analysis of their data where we seek to identify which individual genes are associated with survival, in addition to the clinical variable tumor stage, for a total of $p=175$.
The data records survival times for $n=260$ patients. Out of these only 50 were observed, and the remaining 210 were censored, posing a challenging inference problem.

Table \ref{tab:cputime_survival} reports run times for the ALA- and LA-based analyses, for illustration both under a uniform prior on the model space, and under the Beta-Binomial prior. The ALA brought significant speed-ups, e.g. from 55 hours to 3.3-11 minutes under the uniform model prior.
Given that the computational exercise under the uniform prior is particularly considerable, we focus on this analysis and the ALA-based results, as the cost of the LA is impractical in this setting.

In empirical data it is hard to evaluate what method performs best in terms of model selection. To explore this issue, we first assessed the predictive ability of each method via the leave-one-out cross-validated concordance index \citep{harrell:1996}. Briefly, higher values of the index indicate a higher proportion of pairs of predicted survival times that matched the order of the observed survival times.
Cox-LASSO, gMOM-ALA and gZellner-ALA achieved similar indexes (Table \ref{tab:colon}), whereas for AFT-LASSO it was lower.

Next, to acknowledge that our goal is not prediction but model selection, we added 50 fake genes that do not truly have an effect on survival but are correlated with the original genes. Specifically, we selected the first 50 genes in the original data, and each of the 50 fake genes was obtained by adding standard Gaussian noise to the original genes. The average correlation between each fake versus original gene was 0.70. We then run all methods on the combined original data plus the 50 fake genes.
The results are in Table \ref{tab:colon}.
Interestingly, gMOM-ALA returned the same highest posterior probability model in both datasets,
which included genes ESM1, GAS1 and PDPN, agreeing with current theory on non-local priors helping control false discoveries.
In contrast the remaining methods returned a different result, in particular for gZellner-ALA and AFT-LASSO the overlap with the original analysis was poor.

To assess the biological plausibility of the results, according to {genecards.org} \citep{stelzer:2016}
ESM1 is related to endothelium disorders, growth factor receptor binding and gastric cancer networks,
GAS1 plays a role in growth and tumor suppression, and PDPN mediates effects on cell migration and adhesion, all of these are processes potentially associated with tumor growth and metastasis.
In fact ESM1 was also selected by gZellner-ALA and Cox-LASSO, and GAS1 by AFT-LASSO.
We remark that these results offer no guarantee that the gMOM-ALA solution matches better the unknown biological truth than other methods, but they illustrate that the use of ALA led to significant computational savings and its combination with non-local priors can help reduce false positives.

\bibliographystyle{plainnat}
\bibliography{references}

\end{document}